\documentclass[11pt,letterpaper]{article}
\usepackage{amsmath,amsfonts,amsthm,amssymb}
\usepackage{setspace}
\usepackage{fancyhdr}
\usepackage{lastpage}
\usepackage{extramarks}
\usepackage{xspace}
\usepackage{chngpage}
\usepackage{nicefrac}

\usepackage{tcolorbox}
\usepackage{enumerate}

\usepackage[ruled,noend,noline]{algorithm2e} \usepackage[font=small,labelfont=bf]{caption}
\usepackage[margin=1in]{geometry}
\usepackage{enumitem}
\usepackage{thmtools,thm-restate}
\usepackage{bbm}

\allowdisplaybreaks

\usepackage[dvipsnames]{xcolor}

\usepackage{wrapfig}

\title{\textbf{Online Allocation using Few Samples}}

\date{}
\author{}

\usepackage[usenames,dvipsnames]{xcolor}
\usepackage[backref=page,linktocpage=true,breaklinks,colorlinks,citecolor=PineGreen,linkcolor=Violet]{hyperref}
\usepackage[capitalise]{cleveref}

\newtheorem{Theorem}{Theorem}[section]
\newtheorem{Lemma}[Theorem]{Lemma}

\newtheorem{Definition}[Theorem]{Definition}
\newtheorem{Observation}[Theorem]{Observation}

\newtheorem{Claim}[Theorem]{Claim}
\newtheorem{Corollary}[Theorem]{Corollary}

\crefname{Theorem}{Theorem}{Theorems}
\crefname{Lemma}{Lemma}{Lemmas}
\crefname{Definition}{Definition}{Definitions}
\crefname{Claim}{Claim}{Claims}

\newcommand{\parta}{(\text{\uppercase\expandafter{\romannumeral1}})}
\newcommand{\partb}{(\text{\uppercase\expandafter{\romannumeral2}})}
\newcommand{\partc}{(\text{\uppercase\expandafter{\romannumeral3}})}
\newcommand{\be}{\mathtt{BE}}
\newcommand{\bet}{\mathtt{BE}_{-t}}

\newcommand{\opt}{\mathsf{Opt}}
\newcommand{\hopt}{\widehat{\opt}}

\newcommand{\unif}{\mathsf{Unif}}

\newcommand{\alg}{\mathsf{Alg}}
\newcommand{\algbbat}[2][\samples]{\mathsf{Alg}_{\blackbox}(#2)}
\newcommand{\algbb}{\algbbat[\samples]{p(1-2\eps)\B}}

\newcommand{\packingbb}[1][p(1+\eps)]{\mathsf{Pack}_{\blackbox}(#1 \Bpacking; \samples)}

\newcommand{\packingalg}{\mathsf{Pack}_{\alg}}
\newcommand{\coveringbb}[1][p(1-\eps)]{\mathsf{Cover}_{\blackbox}(#1 \Bcovering; \samples)}

\newcommand{\coveringalg}{\mathsf{Cover}_{\alg}}

\newcommand{\F}{\mathcal{F}}
\newcommand{\E}{\mathbb{E}}

\newcommand{\one}{\mathbf{1}}\newcommand{\1}{\one}\newcommand{\0}{\mathbf{0}}

\newcommand{\pr}{\mathbf{Pr}} 

\newcommand{\ignore}[1]{{}}
\newcommand{\R}{\mathbb{R}}
\newcommand{\calO}{\mathcal{O}}

\newcommand{\calG}{\mathcal{G}}

\newcommand{\D}{\mathcal{D}}
\newcommand{\bD}{\boldsymbol{\mathcal{D}}}

\newcommand{\eps}{\epsilon}

\newcommand{\IGNORE}[1]{}

\newcommand{\val}{v}
\newcommand{\bprice}{\pmb{\lambda}}

\newcommand{\B}{\mathbf{B}}
\newcommand{\calD}{\mathcal{D}}

\newcommand{\alloc}{\boldsymbol{a}}
\newcommand{\load}{\alloc}

\newcommand{\End}{\mathsf{end}}
\newcommand{\hist}{\mathcal{H}^{\mathsf{s}}}
\newcommand{\histend}{\mathcal{H}^{\mathsf{s}}_{\End}}

\newcommand{\blackbox}{\mathtt{RoBB}}

\newcommand{\typespace}[1][t]{\Gamma_{#1}}
\newcommand{\typespaceall}[1][i]{\Gamma}
\newcommand{\sample}{{\tilde \gamma}}
\newcommand{\online}{\gamma}
\newcommand{\packing}{\alloc^{\mathsf{p}}}
\newcommand{\covering}{\alloc^{\mathsf{c}}}
\newcommand{\Bpacking}{\B^{\mathsf{p}}}
\newcommand{\Bpackingj}{B^{\mathsf{p}}_j}
\newcommand{\Bpackingjt}{\tilde{B}^{\mathsf{p}}_j}
\newcommand{\Bcovering}{\B^{\mathsf{c}}}
\newcommand{\Bcoveringj}{B^{\mathsf{c}}_j}
\newcommand{\Bcoveringjt}{\tilde{B}^{\mathsf{c}}_j}

\newcommand{\sampleval}{{\tilde \val}}
\newcommand{\samplealloc}{{\tilde \alloc}}

\newcommand{\event}{\mathtt{Event}}
\newcommand{\bad}{\mathtt{Bad}}
\newcommand{\good}{\mathtt{Good}}

\newcommand{\decisionset}[1][t]{\Theta_{#1}}
\newcommand{\decisionsetsample}[1][t]{\tilde{\Theta}_{#1}}
\newcommand{\decisionsetall}[1][t]{\Theta}

\newcommand{\decision}[2][t]{\blackbox(#2 \mid \hist_{#1})}

\newcommand{\decisionh}[2][t]{\blackbox(#2 \mid #1)}
\newcommand{\decisionvar}{\theta}
\newcommand{\decisionsample}[1][\pi(i)]{\tilde{\decisionvar}_{#1}}
\newcommand{\decisiononline}[1][t]{\decisionvar_{#1}}

\newcommand{\decisiongeneric}[1][\pi(i)]{\hat{\decisionvar}_{#1}}
\newcommand{\tterm}{\tau_{\mathsf{term}}}

\newcommand{\xlpat}[2][t]{x_{#1,#2, \decisionvar}}
\newcommand{\xlp}[1][t]{x_{#1,\online_{#1}, \decisionvar}}

\newcommand{\xstarlpat}[2][t]{\xlpat[#1]{#2}^{*}}
\newcommand{\xip}[1][t]{z_{#1, \decisionvar}}
\newcommand{\xipdot}[1][t]{z_{#1, \cdot}}
\newcommand{\xstarip}[1][t]{\xip[#1]^{*}}
\newcommand{\xprimeip}[1][t]{\xip[#1]'}

\newcommand{\ylpbb}[2][t]{y_{#1,#2, \decisionvar}}
\newcommand{\valat}[1][t]{\val_{#1}(\decisionvar)}
\newcommand{\valsample}[1][\pi(i)]{\sampleval_{#1}(\decisionsample[#1])}
\newcommand{\valonline}[1][t]{\val_{#1}(\decisiononline[#1])}
\newcommand{\allocat}[1][t]{\alloc_{#1}(\decisionvar)}
\newcommand{\allocatj}[1][t]{a_{#1,j}(\decisionvar)}
\newcommand{\alloconline}[1][t]{\alloc_{#1}(\decisiononline[#1])}
\newcommand{\alloconlinej}[1][t]{a_{#1,j}(\decisiononline[#1])}
\newcommand{\allocsample}[1][\pi(i)]{\samplealloc_{#1}(\decisionsample[#1])}
\newcommand{\allocsamplej}[1][\pi(i)]{\tilde{a}_{#1,j}(\decisionsample[#1])}
\newcommand{\allocsampleat}[1][\pi(i)]{\samplealloc_{#1}(\decisionvar)}
\newcommand{\allocsampleatj}[1][\pi(i)]{\tilde{a}_{#1}(\decisionvar)}

\newcommand{\loadat}[1][t]{\allocat[#1]}
\newcommand{\loadatj}[1][t]{\allocatj[#1]}
\newcommand{\loadsampleatj}[1][t]{\allocsampleatj[#1]}

\newcommand{\loadatthisj}[2][t,j]{a_{#1}(#2)}

\newcommand{\loadonlinej}[1][t]{\alloconlinej[#1]}

\newcommand{\packingat}[1][t]{\packing_{#1}(\decisionvar)}
\newcommand{\packingatj}[1][t]{a^{\mathsf{p}}_{#1,j}(\decisionvar)}
\newcommand{\packingsampleat}[1][\pi(i)]{\tilde{\packing}_{#1}(\decisionvar)}
\newcommand{\packingsampleatj}[1][t]{\tilde{a}^{\mathsf{p}}_{#1,j}(\decisionvar)}

\newcommand{\packingsamplej}[1][\pi(i)]{\tilde{\packing}_{#1,j}(\decisionsample[#1])}
\newcommand{\packingonline}[1][t]{\packing_{#1}(\decisiononline[#1])}
\newcommand{\packingonlinej}[1][t]{a^{\mathsf{p}}_{#1,j}(\decisiononline[#1])}

\newcommand{\coveringat}[1][t]{\covering_{#1}(\decisionvar)}
\newcommand{\coveringatj}[1][t]{a^{\mathsf{c}}_{#1,j}(\decisionvar)}
\newcommand{\coveringsampleat}[1][\pi(i)]{\tilde{\covering}_{#1}(\decisionvar)}
\newcommand{\coveringsampleatj}[1][t]{\tilde{a}^{\mathsf{c}}_{#1,j}(\decisionvar)}

\newcommand{\coveringsamplej}[1][\pi(i)]{\tilde{\covering}_{#1,j}(\decisionsample[#1])}
\newcommand{\coveringonline}[1][t]{\covering_{#1}(\decisiononline[#1])}
\newcommand{\coveringonlinej}[1][t]{a^{\mathsf{c}}_{#1,j}(\decisiononline[#1])}

\newcommand{\instance}{\mathcal{I}}

\newcommand{\instanceonline}[1][t]{\instance^{\mathsf{r}}}

\newcommand{\LP}{\mathsf{LP}}
\newcommand{\LPrarandom}{\LP^{\mathsf{ra}}}
\newcommand{\LPrafluid}{\overline{\LP}^{\mathsf{ra}}}

\newcommand{\LPlb}{\mathsf{LP}^{\mathsf{lb}}}
\newcommand{\LPlbrelax}{\overline{\mathsf{LP}}^{\mathsf{lb}}}
\newcommand{\IPpc}{\mathsf{LP}^{\mathsf{pc}}}
\newcommand{\LPpc}{\overline{\mathsf{LP}}^{\mathsf{pc}}}

\newcommand{\sspi}{\textsf{SSPI}}
\newcommand{\ows}{$p$-\textsf{Sample}}
\newcommand{\psample}{\ows}

\newcommand{\previewat}[1][p]{$#1$-\textsf{Preview}}
\newcommand{\previewatmath}[1][p]{#1-\textsf{Preview}}
\newcommand{\preview}{\previewat}
\newcommand{\ro}{\textsf{RO}}
\newcommand{\onlHistIndText}{online-history independent}
\newcommand{\OnlHistIndText}{Online-History Independent}
\newcommand{\OnlHistInd}{\textsf{OnlineHI}}
\newcommand{\offHist}{offline sample-history}

\newcommand{\requests}{[n]}

\newcommand{\samples}{\mathcal{S}}

\newcommand{\samplesnot}{\mathcal{R}}
\newcommand{\samplesnotp}{\hat{\samplesnot}}
\newcommand{\nsamples}{s}

\newcommand{\hpi}{\hat{\pi}}

\newcommand{\optR}[2][\requests]{\opt_{#1}(#2)}
\newcommand{\optRlb}[1][\requests]{\opt_{#1}}
\newcommand{\optRlbfl}[1][\requests]{\overline{\opt}_{#1}}
\newcommand{\optfl}[1][\B]{\overline{\opt}(#1)}
\newcommand{\zlp}[1][t]{z_{#1, \decisionvar}}
\newcommand{\zhlp}[1][t]{\hat{z}_{#1, \decisionvar}}
\newcommand{\zstarlp}[1][t]{z^*_{#1, \decisionvar}}

\newcommand{\goodsampling}{\mathcal{E}_{\mathsf{s}}}
\newcommand{\goodload}[1][b]{\mathcal{E}_{\mathsf{\ell}}(#1)}
\newcommand{\goodloadcomp}[1][b]{\mathcal{E}^c_{\mathsf{\ell}}(#1)}

\begin{document}

\author{Matthew Faw\footnote{School of Computer Science / H. Milton Stewart School of Industrial and Systems Engineering / Algorithms and Randomness Center, Georgia Institute of Technology, Atlanta, USA. Email: mfaw3@gatech.edu. Supported in part by NSF award CCF-2440113.} \and  Sahil Singla\footnote{School of Computer Science, Georgia Institute of Technology, Atlanta, GA, USA. Email: ssingla@gatech.edu. Supported in part by NSF awards CCF-2327010 and CCF-2440113.}  \and  Yifan Wang\footnote{School of Computer Science, Georgia Institute of Technology, Atlanta, GA, USA. Email: ywang3782@gatech.edu. Supported in part by NSF awards CCF-2327010 and CCF-2440113.}}

\maketitle

\begin{abstract}
\smallskip

We study online allocation problems where $n$ requests over $m$ resources arrive in an adversarial order and must be served immediately and irrevocably. This framework captures both Online Resource Allocation, where the goal is to maximize value subject to resource budgets, and Online Load Balancing, where the goal is to minimize the makespan. We seek $(1\pm\epsilon)$-competitive algorithms in the large-budget or large-makespan regime.

\smallskip

 We consider a sampling model that generalizes the following two well-studied sampling models. In the Single-Sample Prophet Inequality ($\mathsf{SSPI}$) model, request $t$ is drawn from an unknown distribution $\mathcal{D}_t$, and the algorithm is given one independent sample from each $\mathcal{D}_t$ before the online phase. In the $p$-$\mathsf{Sample}$ model, the requests are adversarial, but a uniformly random $p$-fraction is revealed upfront as training data.

\smallskip

Although near-optimal algorithms are known in the easier random-order model ($\mathsf{RO}$), where the requests arrive in a uniformly random order, prior algorithms for $\mathsf{SSPI}$ and $p$-$\mathsf{Sample}$ were problem-specific and incurred substantially worse dependencies on $\epsilon$, $m$, and $n$. Our main contribution is a general framework that converts $\mathsf{RO}$ algorithms into algorithms for the $p$-$\mathsf{Preview}$ model, a model that generalizes both  $\mathsf{SSPI}$ and $p$-$\mathsf{Sample}$. As consequences, we obtain near-optimal bounds for Online Resource Allocation, generalized Online Load Balancing, and online mixed packing-covering problems in these adversarial-order sampling models, significantly improving the bounds of [Ghuge, Singla, Wang (STOC'25)] and [Gupta and Molinaro (SODA'26)].
\end{abstract}

\vspace{-1em}
{\small
\setcounter{tocdepth}{1}
\tableofcontents
}

\clearpage

\section{Introduction}
Online allocation is a fundamental problem in computer science and operations research, with applications including online advertising, combinatorial auctions, routing, and online packing/covering LPs. In this problem, a sequence of $n$ requests arrives over time and must be served immediately using $m$ resources; different ways of serving the same request may consume different amounts of the resources.
On the maximization side, the canonical example is \emph{Online Resource Allocation}, where each allocation option for a request has a corresponding \emph{value}, and the objective is to maximize the \emph{welfare}, i.e., the total value of the selected allocations, while never exceeding the allocation \emph{budget} of any of the $m$ resources. On the minimization side, the canonical example is \emph{Online Load Balancing},
where each request must be assigned to one of several feasible options, each inducing load on one or more resources. The objective there is to minimize the \emph{makespan}, i.e., the maximum load on any resource after serving all requests.

We aim to design $(1\pm\eps)$-competitive online allocation algorithms:  for Resource Allocation a $(1-\eps)$-approximation to the offline optimum value, and for Load Balancing, a $(1+\eps)$-approximation to the offline makespan.
Without any assumption on the arriving requests, however, such goals are unachievable. Indeed, even in standard special cases, Online Resource Allocation admits only an $\calO(\nicefrac{1}{n})$ worst-case competitive ratio, while Online Load Balancing has an $\Omega(\log(m))$ worst-case lower bound. Consequently, to obtain near-optimal guarantees, one must assume that the algorithm has some form of knowledge about the input. 

If the online arrivals are in a uniformly random order (secretary model), then an impressive line of work \cite{DevenurHayes-EC09,AWY-OR14,MR-MOR14,devanur2019near,KRTV-SICOMP18,agrawal2014fast,GM-MOR16}  obtains  $(1\pm\eps)$-competitive algorithms under natural \emph{large-budget} conditions; see also book chapters \cite[Chapter 6]{EIV-Book23} and \cite{GS-Book20} for background on large-budget Online Resource Allocation. In the general $m$-resource setting, for Online Resource Allocation each resource should have budget $\Omega\big(\eps^{-2} \log(m)\big)$, and for Online Load Balancing the optimal makespan should be $\Omega\big(\eps^{-2} \log({m}/{\delta})\big)$ when the algorithm fails with probability at most  $\delta$.
However, these algorithms rely on the uniformity of the arrival order. In an adversarial order, the early arrivals may be statistically unrelated to the later arrivals.
Hence, to handle non-uniform arrivals, the following two natural online  models have been studied.

\medskip
\noindent \textbf{(i) Prophet model.}
In this setting, each request $\online_t$ is drawn independently from a distribution $\calD_t$, where different distributions need not be identical.
With known distributions, solving the corresponding ex-ante/configuration LP relaxation and independently rounding its fractional solution yields $(1\pm\eps)$-competitive algorithms for both online resource allocation and online load balancing under the appropriate large-budget condition.
In many real applications, however, assuming complete knowledge of the distributions is unrealistic: a decision maker can only hope to learn about them from previous interactions or limited historical data.

This motivated Ghuge, Singla, and Wang \cite{GSW-STOC25} to consider Online Resource Allocation in the \emph{single-sample prophet inequality} model (\sspi{}). 
In this model, an adversary chooses unknown distributions $\D_1,\ldots,\D_n$; the algorithm receives one independent sample $\sample_t \sim \D_t$ for every $t\in[n]$, then must irrevocably serve a fresh request $\online_t\sim \D_t$ in the online phase.
\cite{GSW-STOC25} show that a single sample from each distribution already suffices to obtain a $(1-\eps)$-approximation for non-identical distributions with budgets $B_j \ge \Omega\big(\eps^{-6} \cdot\log^4(nm/\eps)\big)$.
Compared to the LP-rounding and random-order benchmarks, however, this leaves a polynomial gap in the dependence on $1/\eps$ and $\log m$, and also has an additional dependence on the time horizon $n$. 
Moreover, their techniques are tailored to the value maximization objective and do not directly yield analogous guarantees for minimization problems such as Load Balancing or Mixed Packing-Covering.

\paragraph{(ii) \ows{} model.} 
In this setting, the requests $\online_t$ are chosen adversarially, but a uniformly random $p$-fraction of the requests is revealed upfront to the online algorithm. Then the $n$ requests arrive online in the original adversarial order. 
This model originated in the study of online matching and secretary problems to overcome strong worst-case impossibility results \cite{KaplanNR20,KaplanNR22}. Argue, Frieze, Gupta, and Seiler \cite{ArgueF0S22} first studied this model for the restricted-assignment special case of online load balancing, obtaining a $(1+\epsilon)$-competitive algorithm when optimal makespan is  $\Omega\big(\eps^{-5} p^{-1} \cdot \log^3(m/\eps) \big)$.

More recently, Gupta and Molinaro \cite{GuptaM26} considered the general load balancing problem and its extension to multiple-choice mixed packing-covering in the \ows{} model. They obtain a  $(1+\epsilon)$-competitive algorithm when optimal makespan is  $\Omega\big(\eps^{-8} p^{-1} \cdot \log(n m K / \eps)^5\big)$, where $K$ is the number of processing options for each job. 
Compared to the LP-rounding and random-order benchmarks, this again leaves a polynomial gap in the dependence on $1/\eps$ and $\log m$, and also has an additional dependence on the time horizon $n$ and the number of options $K$. Moreover, their framework does not give the value-maximization  guarantees of online resource-allocation. 

The gaps present in these results motivate the following question addressed in this work:
\begin{quote}
    \emph{Can we obtain $(1\pm\eps)$-competitive online allocation algorithms in the \sspi{} and \ows{} models nearly-matching the large-budget requirements in the known-distribution LP-rounding and random-order settings?}
\end{quote}
\vspace{-3ex}

\subsection{Main Results}

In this paper, we resolve this question for the canonical maximization and minimization problems above.
Our first main result is for the single-sample model:

\begin{Theorem}[Main results for \sspi{} model]\label{thm:sspiInformal}
For every $\eps\in(0,1)$ and $\delta \in (0,\eps)$, the following guarantees hold in the \sspi{} model:
\begin{enumerate}
    \item For online resource allocation with $m$ resources, there is an algorithm whose expected welfare is at least a $(1-\calO(\eps))$-fraction of the expected offline optimum, provided that the budget of every resource $j$ satisfies $B_j = \Omega\big(\eps^{-2}\cdot \log({m}/{\eps}) \big)$.
    
    \item For online generalized load balancing on $m$ machines, there is an algorithm that, with probability at least $1-\delta$, outputs an allocation of makespan at most $(1+\calO(\eps))\opt$, where $\opt$ denotes the expected offline optimal makespan, provided that $\opt = \Omega\big(\eps^{-2}\cdot \log({m}/{\delta}) \big)$.
\end{enumerate}
\end{Theorem}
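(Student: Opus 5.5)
The plan is to reduce the \sspi{} model to a random-order style problem and then invoke a near-optimal \ro{} algorithm as a black box. The key observation is a symmetrization. For each $t$, the pair $(\sample_t,\online_t)$ consists of two i.i.d.\ draws from $\D_t$. So if we flip an independent fair coin for each $t$ to decide which element of the pair is labelled ``sample'' and which ``online'', the joint law is unchanged. Equivalently, the instance can be generated as follows: first fix a multiset of $2n$ requests (two per time step), then reveal a random half, one from each pair, upfront; the other half arrives online in adversarial order.

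This is a coupled analogue of the \ows{} model with $p=1/2$. I would define the $p$-\textsf{Preview} abstraction to capture exactly this: a ground set of requests, a random subset revealed in advance, and the rest arriving adversarially. I would then prove the theorem via a general conversion from \ro{} algorithms to $p$-\textsf{Preview}.

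\textbf{The conversion.} From the preview $\samples$, build a virtual instance. Interleave the sample requests in uniformly random order and simulate the \ro{} algorithm $\blackbox$ on them with budgets scaled by $p(1-2\eps)$ (resp.\ $p(1+\eps)$ for covering/load). When the real request $\online_t$ arrives, choose its decision by treating $\online_t$ as if it had been inserted at a uniformly random position of the simulated sample sequence. That is, sample a random prefix $\hist$ of the simulated history and apply $\decision{\online_t}$.

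The analysis is by exchangeability. Conditioned on the pair multiset, $\online_t$ and $\sample_t$ are interchangeable, so the expected value and expected resource consumption of serving $\online_t$ equal those of serving $\sample_t$ inside the \ro{} run. The \ro{} run on the samples achieves $(1-\eps)$ of $\opt$ of the sample half. Because the sample half and the online half are identically distributed, that optimum is at least $\approx p\cdot\E[\opt]$ after rescaling budgets, which gives the welfare bound in expectation.

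For feasibility, each real decision uses only the history, not other online arrivals. Hence, conditioned on the samples, the loads of the online requests are independent across $t$, and their means are bounded by $p$-scaled budgets. A Chernoff bound with $B_j=\Omega(\eps^{-2}\log(m/\eps))$, together with a union bound over $m$ resources, controls overflow. For resource allocation we then discard overflowing items, losing an $\eps$ fraction. For load balancing, the same bound with $\opt=\Omega(\eps^{-2}\log(m/\delta))$ gives makespan $(1+\calO(\eps))\opt$ with probability $1-\delta$.

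\textbf{Main obstacle.} The hardest step is the exchangeability claim. The \ro{} algorithm's decisions on $\sample_t$ depend on the entire sample set, including $\sample_t$ itself, whereas the real request $\online_t$ is served with $\sample_t$ present but $\online_t$ absent. Swapping a single element must therefore change the \ro{} run's behavior little. I would handle this by inserting the online request at a random position and comparing with the run on $\samples\setminus\{\sample_t\}\cup\{\online_t\}$. The two histories differ in one element only after the insertion point, so prefix decisions of the virtual run coincide in distribution. Making this rigorous, via a ``leave-one-out'' coupling for every $t$ simultaneously while keeping the per-step loads independent so that concentration still applies, is where I expect most of the work.
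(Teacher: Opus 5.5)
Your proposal has two genuine gaps.

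\textbf{Gap 1: the online decision rule breaks tangency.} You serve every $\online_t$ by inserting it at a uniformly random position of the simulated sample order, ignoring that $\online_t$ has its own twin $\sample_t$. With probability about $\tfrac12$ the prefix you use already contains $\sample_t$. Then $\blackbox(\online_t\mid\text{prefix})$ is not distributed like any decision the \ro{} run actually makes, since there the current request is never already in the history.

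For a black-box \ro{} algorithm nothing ties the two together. Modify any good \ro{} algorithm to also accept, when feasible, a request identical to one already in its history. On distinct requests this rule never fires in random order. But with point-mass $\D_t$ (so $\online_t=\sample_t$), your rule makes it fire on about half of the online requests, and zero-value requests placed first can exhaust the budgets before anything valuable arrives. So the leave-one-out stability you plan to prove does not come for free. Your comparison run on $\samples\setminus\{\sample_t\}\cup\{\online_t\}$ also breaks exactly when $\sample_t$ precedes the insertion point.

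The missing idea is to give $\online_t$ the policy the simulation used for its twin: $\decisiononline[t]=\decision[t]{\online_t}$, where $\hist_t$ is the simulated history just before $\sample_t$. Since $\hist_t$ is independent of the i.i.d.\ pair $(\sample_t,\online_t)$, the decisions $\decisiononline[t]$ and $\decisionsample[t]$ are identically distributed given $\hist_t$. The two decision sequences are then tangent, and no stability is needed.

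Even then, your claim that the online loads have conditional means bounded by the budgets is not automatic. The quantity $\sum_t\E[a_{t,j}(\decisiononline[t])\mid\histend]$ is random and matches the sample consumption only in expectation. You need a Freedman-type inequality for tangent sequences (\Cref{thm:tangenConcentration}, Property 3 of \Cref{lem:algConnectionProperties}) before Bernstein can be applied online. Separately, the detour through a ``$p=\tfrac12$'' model is unnecessary. In \sspi{} there are as many samples as online requests, so the simulation budget should be $(1-2\eps)\B$; scaling it by $\tfrac12$ would forfeit half the welfare.

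\textbf{Gap 2: value lost to budget exhaustion in part~1.} ``Discard overflowing items, losing an $\eps$ fraction'' does not follow from a small overflow probability. Values are unbounded, and the histories on which the online run overshoots may be exactly those on which it collects the most value. Also, after exhaustion all later requests are lost, not only the overflowing ones. Handling this correlation is the heart of the paper's proof, in three steps.
\begin{enumerate}
\item Let $\bad$ be the $\histend$-measurable event that $\sum_t\E[\alloconline[t]\mid\histend]\le(1-\eps)\B$ fails; Property 3 gives $\pr[\bad]\le\eps^2$.
\item Cauchy--Schwarz and the second-moment tangent bound give $\E[\1[E]\sum_t\alloconline[t]]\le2\sqrt{\pr[E]}\,\B$ for every $\histend$-measurable event $E$ (\Cref{lem:allocOnEvent}).
\item $\pr[E,\decisiononline[t]=\theta\mid\online_t]/(2\sqrt{\pr[E]})$ is feasible for the fluid LP, so the value collected on $\bad$ is at most $4\eps\,\E[\opt]$.
\end{enumerate}
On $\good$, the paper passes to the leave-one-out exhaustion event $\bet$ (step $t$ removed, budget $\B-\1$), which is independent of $(\online_t,\decisiononline[t])$ given $\histend$. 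This gives $\pr[\good\cap\be\mid\histend,\online_t,\decisiononline[t]]\le\eps$, so each request's value is not positively correlated with exhaustion. Without these steps the welfare bound is not established.

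The load-balancing part, by contrast, is essentially fine once the tangent construction and the martingale transfer are in place, since loads are bounded. You would still need the sample optimum to concentrate around $\E[\opt]$. If your \ro{} subroutine requires it, you would also need an estimate of $\opt$, which the paper obtains from half of the samples.
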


For online resource allocation, this matches the guarantee available under known distributions and in the random-order model, up to the logarithmic dependence on $\eps$. In fact, 
$\Omega(\eps^{-2}\log m)$ budget is information-theoretically necessary in general, even for known input distributions \cite{devanur2019near,GSW-STOC25}.
\Cref{thm:sspiInformal} improves the previous single-sample online resource-allocation guarantee of \cite{GSW-STOC25}, whose budget requirement was larger by a polynomial factor in $1/\eps$ and $\log m$, and also depended on time horizon $n$.
On the minimization side, the theorem gives analogous single-sample guarantees for generalized load balancing, a setting not covered by \cite{GSW-STOC25}.

Our second main result is for the \ows{} model.

\begin{Theorem}[Main Results for \ows{}]\label{thm:owsInformal}
For every $\eps\in(0,1)$, $\delta\in(0,\eps)$, and $p\in(0,1]$, the following guarantees hold in the $\ows{}$ model:

\begin{enumerate}
    \item For online resource allocation with $m$ resources, there is an algorithm whose expected welfare is at least a $(1-\calO(\eps))$-fraction of the expected offline optimum on the $n$ requests, provided that every resource budget satisfies $B_j = \Omega(p^{-1}\eps^{-2}\cdot\log({m}/{\eps}))$.

    \item For online generalized load balancing on $m$ machines, there is an online algorithm that, with probability at least $1-\delta$, outputs an allocation with makespan at most $(1+\calO(\eps))\opt$, where $\opt$ denotes the expected offline optimal makespan on the $n$ requests, provided that $\opt = \Omega(p^{-1}\eps^{-2} \cdot \log({m}/{\delta}))$. 
\end{enumerate}
\end{Theorem}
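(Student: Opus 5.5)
The plan is to get \Cref{thm:owsInformal} as a corollary of the general reduction the abstract announces, from random-order (\ro{}) algorithms to the \previewat{} model, together with the observation that \ows{} is a special case of \previewat{}. Concretely, I would first show how to simulate \ows{} inside \previewat{}. In \ows{} the sample set $\samples$ is a uniformly random $p$-fraction of the adversarial sequence. Each request then sits in the sample with probability $p$, and the unsampled requests, conditioned on $\samples$, are exchangeable with the sampled ones. This is the structure that makes $\samples$ a ``preview'' of the online requests. If the \previewat{} model is stated with independent Bernoulli$(p)$ inclusion, I would handle the fixed-size (hypergeometric) sample by a standard coupling, or note that concentration bounds for sampling without replacement are at least as strong.

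Second, I would apply the framework to a robust \ro{} black box $\blackbox$ and feed it scaled budgets. For resource allocation the black box runs with budgets $p(1-2\eps)\B$, as the notation $\algbb$ suggests. The idea is that the optimum restricted to the sample has value about $p\cdot\opt$ and consumes about $p\B$ of each resource. The algorithm uses decisions learned from the sample history $\hist$ on each online request. The key property is that any fixed decision rule induces on the online requests a load that concentrates around $(1/p)$ times its load on the samples, in the appropriate scaling. Since the sample has effective budget $pB_j$, the concentration step is Bernstein/Chernoff with mean of order $pB_j$. Requiring $pB_j=\Omega(\eps^{-2}\log(m/\eps))$ gives exactly $B_j=\Omega(p^{-1}\eps^{-2}\log(m/\eps))$. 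For load balancing I would run the analogous argument with makespan target $p(1+\eps)\opt$ on the sample and a union bound over $m$ machines with failure $\delta$. This yields $\opt=\Omega(p^{-1}\eps^{-2}\log(m/\delta))$.

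Third, I would plug in the known \ro{} guarantees: the $(1-\eps)$ algorithms for resource allocation under $\Omega(\eps^{-2}\log m)$ budgets, and the $(1+\eps)$ load-balancing algorithms under makespan $\Omega(\eps^{-2}\log(m/\delta))$. These are applied to the sample instance, whose parameters are scaled by $p$. I would then compare values: the expected value of $\opt$ on the sample is at least $p$ times the optimum on all $n$ requests, by restricting the optimal solution. In the other direction, the online value is at least $(1/p)$ times the sample value, up to $(1-\eps)$, again by the exchangeability coupling.

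The main obstacle is the adaptivity of the black box. Its decisions depend on the sample history, so they are not a fixed rule independent of the online requests, and a naive union bound over all decision rules would bring back dependence on $n$ or $K$. I would first try a pairing/martingale argument. Each online request $\online_t$ is paired with a sample request symmetric to it given the rest, and the decision used online is the one $\blackbox$ would have made on that sample given the history excluding it (the $\bet$-style leave-one-out). Loads then form a martingale difference sequence whose increments are bounded by one, and Freedman's inequality gives the concentration without any union bound over rules. Capping overflow by simply stopping once a budget is hit costs only an $\eps$-fraction in expectation.
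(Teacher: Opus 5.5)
Your architecture is the paper's. \ows{} is the \preview{} model with point-mass $\D_t$; the paper's \preview{} already draws a fixed-size uniform $\samples$, so no Bernoulli coupling is needed. The \ro{} black box is simulated on the samples in a uniformly random order with budgets $p(1-2\eps)\B$, its sample histories drive the online decisions, and martingale concentration in the simulated order transfers bounded loads. For load balancing that is essentially the whole argument.

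The genuine gap is in resource allocation, in your final sentence: ``capping overflow by stopping once a budget is hit costs only an $\eps$-fraction in expectation.'' Concentration tells you only that $\pr[\be]$ is small, but the loss is $\E[\sum_t\valonline\1[\be]]$. Values are unbounded, and the sample histories on which the learned policy over-consumes can be exactly those on which it collects large value. So a small $\pr[\be]$ does not give a small loss. This correlation is the main technical content of the paper's proof, and your proposal has no mechanism for it.

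The paper splits on the history-measurable event $\good=\{\sum_t\E[\alloconline\mid\histend]\le(1-\eps)\B\}$.
\begin{itemize}
\item On $\good\cap\be$: the tentative decisions are conditionally independent given $\histend$. Combined with the leave-one-out event $\bet$, this gives $\pr[\good\cap\be\mid\histend,\online_t,\decisiononline]\le\eps$. This part therefore costs at most $\eps\,\E[\sum_t\valonline]$, whatever the correlation.
\item On $\bad$, which has probability at most $\eps^2$: one bounds \emph{consumption} rather than value. Orthogonality of the martingale increments gives $\E[(p\sum_t\E[\alloconlinej\mid\histend]-\sum_i\allocsamplej)^2]\le 2\E[\sum_i\allocsamplej]\le 2pB_j$. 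Cauchy--Schwarz then gives $\E[\1[\bad]\sum_t\alloconlinej]\le 2\eps B_j$. Hence $\pr[\bad,\decisiononline=\theta\mid\online_t]/(2\eps)$ is feasible for the fluid LP relaxation, which bounds the value on $\bad$ by $4\eps\,\E[\optR{\B}]$.
\end{itemize}
Without some such argument, the $(1-\calO(\eps))$ welfare bound does not follow.

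Three further points need tightening.
\begin{itemize}
\item \emph{Which history the online rule uses.} It must use the \emph{prefix} history. For $t\in\samples$, take $\decisiononline=\decision[t]{\online_t}$, where $\hist_t$ is the state just before $\sample_t$ in the simulated order. A ``history excluding it'' is not a state the black box ever occupies, since later decisions depend on $\sample_t$. The $(1-p)n$ unsampled requests have no partner; for them take $\decisiononline=\decision[T_t]{\online_t}$, with $T_t\sim\unif(\samples)$ a uniformly random insertion position.
\item \emph{Why this gives tangency, and the two-stage concentration.} This choice makes each online decision tangent to the corresponding step of the simulation. It yields $p\sum_t\E[f(\decisiononline,\online_t)\mid\histend]=\sum_i\E[X_i\mid\histend]$, where $X_i$ is the online counterpart of the $i$-th simulated step. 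The concentration is then two-stage: the tangent martingale controls these conditional means, and independence given $\histend$ plus Bernstein controls the realized adversarial-order loads. It is not a single martingale over the online order.
\item \emph{Load balancing target and sample optimum.} The target $p(1+\eps)\opt$ is unknown. The paper estimates it from half of the samples and randomizes the packing budget to $(1+U\eps)\hopt$ with $U\sim\unif(0,1)$, which decouples the data-dependent target from the success event of the packing--covering run. Also, the lower bound on the sample optimum needs the scaled-and-truncated restriction with concentration of \Cref{lem:lpRescaling}. Plain restriction of the full optimum need not fit within $p(1-2\eps)\B$.
\end{itemize}
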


As a complementary set of results, we establish matching lower bounds for the main guarantees above.
In \Cref{sec:lower-bounds-new}, we also show that these bounds are information theoretically tight (up to logarithmic dependence on $\eps$) in $p, \eps, \log m$ for both online resource allocation and online load balancing. \Cref{thm:owsInformal} improves the prior \ows{} guarantees of \cite{GuptaM26}: for online generalized load balancing they required the optimal load to be polynomially larger in $\eps$ and $\log m$, and
we remove the dependence on both the time horizon $n$ and on the number of processing options $K$. On the maximization side, this gives the first near-optimal \ows{} guarantee for online resource allocation.

\medskip
\noindent \textbf{Pricing implications.} 
For many applications of online resource allocation, we desire \emph{item-pricing} algorithms: before each request arrives, the algorithm posts a price for each resource, and the request  takes the decision that maximizes the difference between its value and its price.
Additionally, in such  applications, the algorithm may not even have full samples from request distributions, and can only learn about a request through an item-pricing query as above.
In \Cref{subsec:PricingQuery}, we strengthen the online resource allocation results for the \sspi{} and \ows{} models stated above: when a rough estimator of $\opt$ is given, our algorithms can replace full observation of each request available in the offline learning phase with the outcome of a single pricing query to that request, while still achieving the same $(1-\eps)$-competitiveness guarantees as in our main results.

Taken together, our results in the \sspi{} and \ows{} models show that a small amount of offline information is sufficient to recover essentially the best guarantees known in the random-order model.

\subsection{Our Techniques}\label{sec:mainTechniques}

For simplicity, we first restrict our discussion to the \sspi{} model, where each online request $\online_t$ is drawn independently from $\calD_t$. 

\medskip
\noindent \textbf{\OnlHistIndText{} algorithms.} 
The starting point of our techniques is that ``\onlHistIndText{}'' allocation algorithms exist  with near-optimal performance: We call a policy \emph{\onlHistIndText{}} (\OnlHistInd{}) if, after the learning phase, its decision on request $t$ (before imposing any feasibility constraints) depends only on the current request $\online_t$ and on the \emph{\offHist{}}, but not on the \emph{online} history.
When request distributions are \emph{given}, $(1\pm \eps)$-competitive \OnlHistInd{} algorithms are known to exist: 
solve the LP relaxation offline and then use the fractional solution to independently round the action for request $\online_t$; feasibility is then enforced by the usual large-budget concentration/capping argument. This observation motivates the question: \emph{Given only one sample from each distribution, can we learn near-optimal \OnlHistInd{} allocation algorithms?}

To design such an \OnlHistInd{} allocation algorithm, we begin with a thought experiment. 
Suppose that, instead of coming in adversarial order, the requests came in a \emph{uniformly random order}.  As discussed earlier, online allocation in the random-order (denoted \ro{}) model is fairly well understood, and we know $(1\pm \eps)$-competitive algorithms under large-budget conditions \cite{KRTV-SICOMP18,GM-MOR16,agrawal2014fast}. 
A crucial feature of these \ro{} algorithms is that they do not require distributional knowledge: the multiset of requests may be adversarially chosen, and only its order is random.

\medskip
\noindent \textbf{Simulating \ro{} for an \OnlHistInd{} algorithm.}
To obtain an \OnlHistInd{} allocation algorithm from the single sample, our idea is to simulate a random order on the samples, and, when a request $\online_t$ arrives, simply make the allocation that the \ro{} algorithm would have made on the samples, if $\online_t$ replaced its sample $\sample_t$. More concretely, 
let $\{\sample_t\}_{t\in[n]}$ be the samples observed offline by the algorithm, and let $\pi$ be a uniformly random order of these samples. 
Now, if we run a \ro{} algorithm $\blackbox$ on the samples in order $\pi$, we get a sequence of sample requests $\sample_{\pi(i)}$ and corresponding decisions $\decisionsample[\pi(i)]$ made by $\blackbox$.
Let $\hist_{\pi(i)} = (\sample_{\pi(j)}, \decisionsample[\pi(j)])_{j<i}$ be the \offHist{} before processing sample $\sample_{\pi(i)}$ in this random order. Then we can express the decision of the \ro{} algorithm on sample $\pi(i)$ as:
\[
    \decisionsample = \decision[\pi(i)]{\sample_{\pi(i)}}
\]
Our proposal is to mimic this decision online with $\blackbox$ by using the \offHist{} in place of the online history. In particular, when request $\online_t$ arrives during the online phase, we ignore the online history and replace it with the \offHist{}  $\hist_t$ before the sample $\sample_t$ is processed in random order. Thus, we take the decision that $\blackbox$ \emph{would have made} on the sample, had $\online_t$ appeared offline instead of $\sample_t$:
\[
    \decisiononline = \decision[t]{\online_{t}}.
\]
Since $\hist_t$ is determined before $\sample_t$ is revealed in the simulated random order, it is independent of both $\sample_t$ and the fresh draw $\online_t$. 
Thus, conditioned on $\hist_t$, the sample decision $\decisionsample[t]$ and the online decision $\decisiononline$ have the same distribution.
Hence, the decisions on the sampled requests and the decisions on the actual requests form \textbf{tangent} sequences with respect to the learning histories, in the sense of \cite{KwapienW89,DelapenaG12}. This tangent relation is the bridge between the \ro{} execution on samples and the adversarial-order execution on the true requests.

Using this bridge, we can view the learning-phase process and the allocation-phase process as two martingales with matched conditional increments. 
Consequently, Freedman/Bernstein-type martingale inequalities imply that, for the consumption of a resource or the load placed on a machine, the online allocation phase is close to the learning phase.
This argument transfers bounded linear statistics, such as resource consumptions and machine loads, from the simulated \ro{} execution to the online execution.
For online load balancing, the idea is essentially sufficient to obtain our guarantees, since the load of each request on a resource is uniformly bounded. 

\medskip
\noindent \textbf{Handling correlations between value and budget exhaustion.}
For online resource allocation, however, the value associated with each request may be unbounded. Thus, the martingale concentration is not enough to transfer the \ro{} guarantees to this setting. 
The main challenge is to show that the event that our algorithm over-allocates is not (significantly) positively correlated with the value collected.
The tangent relation guarantees two useful facts: 
\begin{itemize}[noitemsep]
\item With high probability, the uncapped online execution does not consume too much of any resource.
\item The expected value of the uncapped online execution is exactly the same as the expected value of the \ro{} execution on the samples.
\end{itemize}
Thus, if resource consumption and value were independent, a simple argument would suffice: a small probability of running out of budget would immediately imply a proportional value loss.
However, in resource allocation these two quantities are correlated. The histories on which the algorithm is close to exhausting a resource may also be precisely the histories on which the algorithm obtains large value.

Our key additional idea is to control resource consumption on the bad histories rather than value directly. Using Cauchy--Schwarz and a second-moment bound on the deviation between the online allocation and the \ro{} allocation on the sample, we show that, for each resource $j$, the expected consumption on bad histories is at most
    $\calO\!\left(\sqrt{\pr[\text{bad}]}\, B_j\right)$.
Thus, dividing the bad-history decisions by this factor yields a feasible solution to the resource-allocation LP,
guaranteeing their value is bounded by the same factor times $\opt$.
By ensuring the concentration failure probability is $O(\eps^2)$, the loss then becomes $\calO(\eps\opt)$.  Thus, this second-moment argument is sufficient to preserve, in the online execution, a $(1-\eps)$-fraction of the \ro{}'s expected value on the sample.

\medskip
\noindent \textbf{Extensions to the \ows{} and \preview{} models.}
We establish our results in a setting we call the \preview{} model (\Cref{def:preview}), which generalizes the \sspi{} and \ows{} settings. Here, an adversary chooses $n$ request distributions $\D_1,\ldots,\D_n$, and requests are sampled independently $\online_t \sim \D_t$. Offline, a single sample $\sample_t \sim \D_t$ from a subset of the request distributions $t\in \samples$ are revealed, where $\samples$ is a uniformly-random $p$-sample of $[n].$

In the \preview{} model, the algorithm described above is no longer well-defined, since only the requests at times $t\in\samples$ have a sample pair; the remaining $(1-p)$-fraction have no sample information. However, there is a simple modification for requests $t\not\in\samples$: when request $\online_t$ arrives, we insert it into a uniformly random position of the sampled order and take the decision that $\blackbox$ would have made on $\online_t$ in this counterfactual execution.
Notice that we can express this decision as:
\[
    \decisiononline = \decision[T_t]{\online_t},
\]
where $T_t$ is the uniformly random position of $\online_t$ in this counterfactual sampled order.

To understand why this algorithm performs well in the allocation phase, it is helpful to compare it with a hypothetical \ro{} algorithm run on all $n$ requests. 
For a request processed in the online phase, the policy it would face in this hypothetical \ro{} execution is learned from a uniformly random prefix of the other requests.
In our \ows{} algorithm, the policy faced by the same request is learned from a random prefix of the sampled sequence. This prefix is shorter, because it comes only from the $p$-fraction of samples $\samples$ observed offline, but, up to this scaling,
 it is still a uniformly random subset of the relevant prefix of the instance. 
 Accordingly, we simulate the \ro{} algorithm on the sample with budgets and load thresholds scaled by $\alpha_p=\nicefrac{|\samples|}{n}=p$, the ratio of the number of samples to requests. The same tangent-sequence and martingale-transfer argument then compares the scaled sample execution to the online execution, yielding the same guarantees with the natural additional factor $\alpha_p$ in the required budget or makespan.

\subsection{Further Related Work}

\noindent \textbf{Sample-based online resource allocation.} 
A large body of work has studied sample-based online resource allocation problems. Early work of Azar, Kleinberg, and Weinberg \cite{AzarKW14} initiated the study of prophet inequalities with limited information, and subsequent work obtained sharp guarantees from samples in single-choice and related settings \cite{RubinsteinWW20,CorreaDFS19,correa2024sample,CristiZilliotto24}. 
Particularly relevant to our work is the line of algorithms based on greedy prices or greedy-ordered selection \cite{KorulaP09,CDFFLLP-SODA22,FOCS24-toapper}. These algorithms run an offline greedy procedure on samples, or on a sample/statistic mixture, and couple this greedy execution to the online process. Our reduction follows a similar high-level philosophy, but with a different offline process: rather than running a greedy algorithm, we run a random-order online algorithm on the samples and transfer its internal states to the adversarial-order online phase.

For the online resource allocation model studied in our paper, \cite{GSW-STOC25} first studied almost optimal online resource allocation through samples for heterogeneous agents. Recently, \cite{TW-26} studied the sample complexity of the dual pricing algorithm for the same online resource allocation model. Another line of research considers the online resource allocation problem with regret minimization objectives: \cite{BKMSW-ICML23} analyzed the single resource setting and provided a near-optimal algorithm which uses only a single sample. \cite{JLZ-MS25} investigated the generalization to multiple resources, which also focuses on how the Wasserstein distance between the learned empirical distribution and the true distribution impacts the regret.

\medskip
\noindent \textbf{Learning from samples in online algorithms.}
The \ows{} model and related sample-augmented online models were introduced to interpolate between worst-case online algorithms and random-order/stochastic models \cite{KaplanNR20,KaplanNR22,ArgueF0S22,gupta2024set}. In particular,  \cite{gupta2024set} also design a $p$-sample algorithm  using a random order algorithm as a subroutine; however, online set cover behaves very differently from online allocation, where our analysis crucially exploits properties of tangent sequences.
More broadly, our work is also related to data-driven algorithm design, where one uses historical instances to learn an algorithmic policy or tune an online algorithm \cite{Balcan-Chp20,balcan2021much}. A related template appears in work on online learning via offline algorithms \cite{DBLP:journals/corr/GolovinKS14,DBLP:conf/colt/RoughgardenW18,NiazadehGWSB21}, where structural properties of an offline algorithm, often greedy, are used to obtain online guarantees from samples. Our contribution is to show that random-order online algorithms themselves can serve as the offline learning primitive.

\medskip
\noindent \textbf{Online allocation and pricing.}
Online resource allocation has also been studied extensively in stochastic, adversarial, robust, and revenue-management models, often via primal-dual, re-solving, or pricing-based policies \cite{MSVV-JACM07,Mehta-Book13,agrawal2014fast,devanur2019near,balseiro2020dual,jasin2012resolving,jasin2014reoptimization,bumpensanti2020resolving,balseiro2023best,BGSZ-ITCS20,AGMS-SODA22}. In contrast to this literature, our main focus is on learning from extremely limited offline information---one sample per distribution in \sspi{} or a small random subset in \ows{}---while matching the large-budget guarantees of random-order algorithms.

\section{Model and Preliminaries}
\label{sec:model}

We are interested in Online Allocation problems both from the maximization and minimization perspectives. The following definition introduces the model we consider throughout the paper.

\begin{Definition}[Online Allocation problem]\label{def:onlineAlloc}
    An online allocation problem instance consists of a finite sequence of requests $(\online_t)_{t\in \requests}$ for $m$ resources. When request $\online_t \in \typespace[t]$ arrives, it must be satisfied immediately and irrevocably by a decision in $\decisionset$ (let $\decisionsetall$ and $\typespaceall$ denote the set of all possible decisions and requests, respectively). Selecting a decision $\decisiononline \in \decisionset$ generates an allocation $\alloconline \in [0,1]^m$ to potentially many of the $m$ resources. 
\end{Definition}
The Online Allocation problems we consider have additionally a problem-specific \emph{objective} to either maximize or minimize a function which depends on the allocation decisions. Additionally, in some settings, there may be \emph{budget constraints} on the set of feasible allocations. Throughout, the benchmark against which we compare an algorithm's performance is the \emph{hindsight optimal feasible allocation}. Our paper focuses on the following three canonical Online Allocation problems:

\begin{itemize}
    \item \textbf{Online Resource Allocation}: Here, requests are of the form $\online_t = (\val_t, \alloc_t, \decisionset)$. Selecting a decision $\decisiononline \in \decisionset$ generates value $\valonline \in \R_{\geq 0}$ and consumes $\alloconline \in [0,1]^m$ units of resources. We assume there is always a null action $\phi \in \decisionset$ which contributes no value and consumes no resources, i.e., $\val_t(\phi) = 0$ and $\alloc_t(\phi) = \0$. The $m$ resources have allocation budget constraints $\B = (B_j)_{j\in[m]}$, and the goal is to maximize the value collected while not exceeding the allocation budget $B_j$ on any resource $j\in[m]$.

    \item \textbf{Online Generalized Load Balancing}: Here, the online requests are of the form $\online_t = (\alloc_t, \decisionset)$. Selecting a decision $\decisiononline \in \decisionset$ generates an allocation load $\alloconline \in [0,1]^m$ on a subset of the $m$ resources. The goal is to minimize the \emph{makespan}, i.e., the maximum load on a resource after all requests have been satisfied.
    \item \textbf{Online Packing-Covering Multiple-Choice LPs}: This setting is (essentially) a generalization of the Online Generalized Load Balancing problem, where choosing a decision $\decisiononline \in \decisionset$ generates load $\alloconline = (\packingonline, \coveringonline)$, where $\packingonline, \coveringonline \in [0,1]^m$ correspond to the packing and covering loads, respectively. The objective is to (approximately) satisfy the packing and covering constraints $\Bpacking, \Bcovering\in \R_{\geq 0}^m$, i.e., $\sum_{t\in\requests} \packingonline \leq \Bpacking$ and $\sum_{t\in\requests} \coveringonline \geq \Bcovering$.
\end{itemize}

We study these online allocation problems in the following \preview{} setting. Here, each request is sampled independently from a distribution $\online_t \sim \D_t$. We consider the challenging setting where only a single sample from a $p$-fraction of the distributions is available offline:

\begin{Definition}[\preview{} model]\label{def:preview}
    In the \preview{} setting, parametrized by $p \in (0, 1]$, an adversary chooses a request distribution $\D_t$ for each $t \in [n]$. These requests are initially unknown to the learner. Offline, a subset $\samples \subseteq [n]$ of size $\nsamples = p\cdot n$ is sampled uniformly without replacement, and for each $t\in\samples$, an independent sample $\sample_t \sim \D_{t}$ is revealed. Then, online, independent requests $\online_t \sim \D_t$ for each $t\in[n]$ are revealed sequentially in order.
\end{Definition}

We note that the \previewat[1] captures the well-studied \emph{Single-Sample Prophet Inequality} setting, where a single sample from \emph{every} distribution is available offline.

\begin{Definition}[\sspi{} model]\label{def:sspi}
The single-sample prophet inequality (denoted \sspi{}) setting corresponds to the $\previewat[1]$ setting, where a single sample from every distribution $(\sample_1,\ldots,\sample_n) \sim \bD = \D_1 \times \ldots \times \D_n$ is revealed offline, then, online, requests $\online_t \sim \D_t$ are sampled independently then revealed in $t=1,\ldots,n$.
\end{Definition}

The \preview{} setting also captures the online \ows{} setting, where each request is chosen adversarially, and a $p$-fraction of the request sequence is revealed offline:

\begin{Definition}[\ows{} model]\label{def:ows}
The \ows{} setting, parametrized by $p \in (0,1]$, corresponds to the \preview{} setting, where each distribution $\D_t$ is a point-mass distribution corresponding to the adversarially chosen request $\online_t$. Offline, a subset of $\nsamples = p \cdot n$ uniformly random requests sampled without replacement is revealed. Then, online, all $[n]$ requests are revealed sequentially in their original order.\footnote{Our results can also be extended with minor modifications to the related model where only the unsampled indices $\samplesnot = [n]\setminus \samples$ arrive online, and $\opt$ depends only on the unsampled requests $\samplesnot = [n] \setminus \samples$.}
\end{Definition}

To obtain results in the \preview{} model, we will rely heavily on algorithms from the random-order model in which the request sequence is chosen adversarially, and the requests are presented in a uniformly random order:

\begin{Definition}[\ro{} model]\label{def:ro}
In the Random Order (denoted \ro{}) setting, an adversary chooses a request sequence $\online_1,\ldots,\online_n$. Then, online, the requests are presented in a uniformly random order.
\end{Definition}

Compared to the \preview{} setting, due to the random order of arrivals, no samples are needed to obtain $(1\pm\eps)$-competitive online algorithms for the \ro{} setting, since in expectation, the first $\eps$-fraction of requests contributes at
most an $\eps$-fraction of the $\opt$ solution. 
In particular, for Online Resource Allocation in the \ro{} model, \cite{KRTV-SICOMP18} gave an optimal LP-resolving based algorithm for the large budget setting of $B_j = \Omega(\eps^{-2}\cdot\log(m))$.
For Online Generalized Load Balancing in \ro{}, \cite{GM-MOR16} gave a $(1+\eps)$-competitive algorithm that succeeds with probability at least $1-\delta$ when the optimal makespan is $\Omega(\eps^{-2}\cdot\log(\nicefrac{m}{\delta}))$. Moreover, they also gave an algorithm for Online Packing-Covering Multiple-Choice LPs that satisfies the packing/covering constraints up to $(1\pm \eps)$ factors, assuming the constraints are $\Omega(\eps^{-2}\cdot\log(\nicefrac{m}{\delta}))$.

Our results for the \preview{} setting will use these \ro{} algorithms in a black-box manner. We use $\blackbox$ to denote a generic \ro{} algorithm which maps a request $\online$ with decision set $\decisionsetall$ and an interaction history $\hist$ of prior requests and decisions to a decision $\decisionvar$:
\begin{align*}
    \decisionvar = \decisionh[\hist]{\online} \in \decisionsetall.
\end{align*}
By definition, if the Online Allocation problem has constraints, the decision $\decisionvar$ must be feasible given the prior decisions in $\hist$.  We note that the decision $\decisionvar$ of $\blackbox$ may depend additionally on independent random coins in addition to the request $\online$ and $\histend$. For instance, the $\blackbox$ may first compute a fractional decision based on $\online$ and $\histend$, and then apply randomized rounding to convert it to an integral decision in $\decisionsetall$. For simplicity, we suppress the dependence on these independent coins in our notation.

\section{Online-History Independence via Random Order Algorithms}
\label{sec:generalFramework}

Recall from \Cref{sec:mainTechniques} that our goal is to design an \OnlHistInd{} algorithm, i.e., one whose decision, ignoring feasibility, depends only on the current request $\online_t$ and on the samples $(\sample_i)_{i\in\samples}$.
As discussed in \Cref{sec:model}, under the appropriate large-budget conditions, there exist $(1\pm\eps)$-competitive \ro{} algorithms for the online allocation problems studied in this paper.
Our plan is to use these \ro{} algorithms in a black-box manner to design \OnlHistInd{} algorithms with nearly matching competitive guarantees in the \preview{} setting. In particular, we aim to design online algorithms satisfying the following property:
\begin{Definition}[\OnlHistIndText{} Algorithm]\label{def:OnlHistInd}
An online allocation algorithm for the \preview{} setting is \emph{\onlHistIndText{}} ($\OnlHistInd{}$) if its tentative decision $\decisiononline$ (i.e., the decision before imposing feasibility) on request $\online_t$ is independent of all other requests $\online_{t'}$ and tentative decisions $\decisiononline[t']$, $t'\neq t$, conditioned on the samples $(\sample_i)_{i\in\samples}$ and any additional offline randomness.
\end{Definition}

Our main algorithmic idea to obtain $\OnlHistInd{}$ algorithms is to use a good \ro{} algorithm, denoted $\blackbox$, together with the sample to make decisions.
We first describe our approach in the \previewat[1] (i.e., \sspi{}) setting, since our algorithms are more intuitive there.
Here, each request $\online_t \sim \D_t$ has a \emph{paired, independent offline sample}, $\sample_t \sim \D_t$. 
We proceed as follows: (i)~First, sample a uniformly random ordering $\pi$ on the sample and run $\blackbox$ on the sample in this random order. Before processing sample $\pi(i)$, record the \offHist{} $\hist_{\pi(i)}$ generated by the previous samples and associated offline decisions. (ii)~Then, for online request $\online_t$, replace the online history with $\hist_t$, the \offHist{} immediately before processing $\sample_t$ in random order, and take the decision of $\blackbox$ on $\online_t$ given this history $\hist_t$.
In this way, we naturally obtain a \onlHistIndText{} algorithm.
Further, and crucially, the online tentative decision of \cref{alg:genericOnline} on request $\online_t$ is \emph{identically distributed} to the \ro{} algorithm's offline decision on sample $\sample_t$, conditioned on \offHist{} $\hist_t$.

More specifically, we convert a good \ro{} algorithm $\blackbox$ to an \OnlHistInd{} algorithm through the following two-phased procedure:

\paragraph{(1) Offline \ro{} simulation.}
First, before any online requests arrive, we select a uniformly random order $\pi$ of the samples $(\sample_i)_{i\in\samples}$, then run the $\blackbox$ algorithm on the sample in order $\pi$.
Before processing sample $\sample_{\pi(i)}$, we record the observation history $\hist_{\pi(i)}$ (i.e., the sample requests and decisions before processing $\sample_{\pi(i)}$). Since we will use $\blackbox$ algorithms that are $(1\pm\eps)$-competitive for \ro{}, the offline execution on the randomly ordered sample inherits the corresponding \ro{} guarantee. The goal is to use the decision policies $\decision[\pi(i)]{\cdot}$ obtained in this offline phase to obtain a good online algorithm. The details of this offline phase are described in \Cref{alg:genericOffline}.

\begin{algorithm}[H]
\caption{\textsc{Generating Offline History via \ro{} Simulation}}
\label{alg:genericOffline}
\KwIn{Black-box algorithm $\blackbox$, $\nsamples$ sampled indices $\samples \subseteq [n]$, samples $(\sample_t)_{t\in\samples}$}
Draw a permutation $\pi$ of $\samples$ uniformly at random, where $\pi(i)\in\samples$ denotes index of the $i$th sample processed in random order for each $i\in[\nsamples]$. \\
\For{$i = 1, \ldots, \nsamples$}
{
    Let $\hist_{\pi(i)} = (\sample_{\pi(j)}, \decisionsample[\pi(j)])_{j\in[i-1]}$ be the \offHist{} before processing $\sample_{\pi(i)}$.\\
Record $\decisionsample = \decision[\pi(i)]{\sample_{\pi(i)}}$, the $\blackbox$ decision on sample $\sample_{\pi(i)}$ in random order.\\
    Update the state of $\blackbox$ with $(\sample_{\pi(i)},\decisionsample[\pi(i)])$.\\
}
Let $\histend := (\pi(j), \sample_{\pi(j)}, \decisionsample[\pi(j)])_{j\in[\nsamples]}$ be the final \offHist{}.\\
\KwOut{$(\pi(i), \hist_{\pi(i)})_{i\in [\nsamples]}$}
\end{algorithm}

\paragraph{(2) Replacing online history with offline history for \OnlHistInd{} allocation.}
After the offline phase, we process online requests in adversarial order. 
To make an $\OnlHistInd{}$ allocation for request $\online_t$, we use $\blackbox$ to make tentative decisions as if the online history were replaced by the \offHist{} and the current request $\online_t$ appeared in the offline random-order. For requests $\online_t$ with a corresponding sample (i.e., $t\in\samples$), this corresponds to the tentative decisions if $\online_t$ appeared in the offline sample instead of $\sample_t$, i.e., $\decisiononline = \decision[t]{\online_t}$. For the remaining requests $t\not\in\samples$, this corresponds to the tentative decision if $\online_t$ had appeared at a uniformly random position in the sample ordering, i.e., $\decisiononline = \decision[T_t]{\online_t}$, where $T_t \sim \unif(\samples)$ are sampled independently. For feasibility-constrained problems, if the tentative decision is ever infeasible (e.g., in the resource allocation setting, if the online allocation ever exceeds budget $B_j$ for some resource $j\in[m]$), then the algorithm makes the null decision $\phi$ for every remaining time-step. The details of this procedure are given in \cref{alg:genericOnline}.

\begin{algorithm}[H]
\caption{\textsc{$\OnlHistInd{}$ Allocation via Offline History}}
\label{alg:genericOnline}
\KwIn{time horizon $n \geq 1$, $\nsamples$ sampled indices $\samples\subseteq [n]$, samples $(\sample_t)_{t\in\samples}$}
Run \Cref{alg:genericOffline} on samples $(\sample_t)_{t\in\samples}$ to obtain the sampling histories $(\pi(i), \hist_{\pi(i)})_{i\in[\nsamples]}$\\
\For{each online request $t\in \requests$}
{
Observe request $\online_t$ and record
        $\decisiononline = \begin{cases}
            \decision[t]{\online_t} & \text{if $t\in\samples$}\\
            \decision[T_t]{\online_t} & \text{if $t\not\in\samples$, where $T_t \stackrel{iid}{\sim} \unif(\samples)$}\\
        \end{cases}$\\
\uIf{$\alloconline$ is feasible given the prior allocations $(\alloconline[t'])_{t'<t}$}{
            Make decision $\decisiononline$.
    }\Else{
        Mark $\tterm = t$ and make the null decision $\phi$ for this and all future time-steps.
}
}
\end{algorithm}

To gain some intuition for why this approach might work, consider the online resource allocation problem in the \previewat[1] (i.e., \sspi{}) setting where $\samples = [n]$. Suppose we run \Cref{alg:genericOnline} with a black-box random-order algorithm $\blackbox$ with budget $(1-\calO(\eps))\B$. 

As noted above, conditioned on $\hist_t$, the offline decision
$\decisionsample[t]=\decision[t]{\sample_t}$ and the uncapped online proposal $\decisiononline[t]=\decision[t]{\online_t}$ are identically distributed, because $\sample_t,\online_t{\sim}\D_t$ and are independent of $\hist_t$. 
In particular, this implies that (i) the allocations $\allocsample[t]$ and $\alloconline$, as well as (ii) the corresponding values $\valsample[t]$ and $\valonline$, are identically distributed. Thus, if \Cref{alg:genericOnline} is never over-allocated, then the expected value collected by \Cref{alg:genericOnline}, $\E[\alg]$, would be the same as that collected by the $\blackbox$ algorithm, $\E[\algbbat[{[n]}]{(1-\calO(\eps))\B}]$, in \Cref{alg:genericOffline} (we establish the formal relation for the general \preview{} setting in Property 1 of \Cref{lem:algConnectionProperties} below). In particular, we would have that
\[
\E[\alg] = \E[\algbbat[{[n]}]{(1-\calO(\eps))\B}] \geq (1-\eps)\opt.
\]

In the \preview{} setting with $p\in(0,1)$, the samples and online requests are not paired: indeed, there are fewer samples than requests. Nevertheless, the revealed sample is a uniformly random subset of the full adversarial instance. This exchangeability, together with the scaling factor $p$, plays the role of the sample-online symmetry. This symmetry allows us to treat the \preview{} setting with similar ideas as the \sspi{} setting.

Note that the above bound holds with an extra assumption that \Cref{alg:genericOnline} is never over-allocated. To bypass this assumption, we face two main challenges:
\begin{enumerate}
    \item[C1.] \Cref{alg:genericOnline} may over-allocate online even though \Cref{alg:genericOffline} never does on the sample (by definition of $\blackbox$). 
    \item[C2.] In the Online Resource Allocation setting, the event that \Cref{alg:genericOnline} over-allocates could be positively correlated with the collected value, so that even if the probability of over-allocation were small, the expected value collected during over-allocation could be large.
\end{enumerate}

To resolve these two challenges, we bridge the performance of \cref{alg:genericOnline} on the online requests with the performance of the $\blackbox$ in \cref{alg:genericOffline} on the offline samples. The first step in connecting the performance of these two algorithms is observing that, by design, \cref{alg:genericOnline} is $\OnlHistInd{}$:

\begin{restatable}[\Cref{alg:genericOnline} is \onlHistIndText{}]{Observation}{restateObservationOblivious}\label{obs:oblivious}
In the \preview{} setting, \cref{alg:genericOnline} is $\OnlHistInd{}$. In particular, for every $t\in[n]$, the request $\online_t$ and tentative decision $\decisiononline$ are independent of $(\online_{t'}, \decisiononline[t'])_{t'\neq t}$, conditioned on the \offHist{}, $\histend$.
\end{restatable}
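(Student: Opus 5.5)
The plan is to write each tentative decision explicitly as a measurable function of a small set of mutually independent random objects, and then read off the conditional independence.

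For each $t$, the tentative decision is
\[
\decisiononline = F\bigl(\online_t,\ \hist_{\tau_t},\ \xi_t\bigr),
\]
where $\tau_t = t$ if $t\in\samples$ and $\tau_t = T_t$ otherwise. Here $\xi_t$ denotes the internal coins that $\blackbox$ uses when it is queried on the online request $\online_t$. We take these coins to be fresh for each online query and separate from the coins used in the offline simulation of \Cref{alg:genericOffline}.

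The first step is to note that every history $\hist_{\pi(i)}$ is a deterministic function of $\histend$. Indeed, $\histend$ records $\pi$, the samples, and the offline decisions, and $\hist_{\pi(i)}$ is just its length-$(i-1)$ prefix. Hence, conditioned on $\histend$ (together with $\samples$, which is offline randomness), the history $\hist_{\tau_t}$ is determined by $\tau_t$ alone.

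The second step is an independence structure argument. Under the \preview{} model, the following families are mutually independent: the online requests $(\online_t)_{t\in[n]}$, the positions $(T_t)_{t\notin\samples}$, the online coins $(\xi_t)_t$, and the pair $(\samples, \histend)$. The last of these depends only on the samples $\sample$, the permutation $\pi$, and the offline coins. Two points need care here:
\begin{itemize}
\item the online draws $\online_t\sim\D_t$ are independent of the offline samples, including $\sample_t$;
\item the $T_t$ are drawn i.i.d.\ from $\unif(\samples)$, independently of everything else given $\samples$.
\end{itemize}
Conditioning on $(\samples,\histend)$ therefore leaves the vectors $Z_t := (\online_t, T_t, \xi_t)$ mutually independent across $t$. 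Their distributions may depend on the conditioning, for example through $\samples$ for $T_t$, but the conditioning does not couple them.

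The conclusion then follows. Since $(\online_t, \decisiononline)$ is a measurable function of $Z_t$ and of the conditioning variables, and the $Z_t$ are conditionally independent, the pair $(\online_t,\decisiononline)$ is conditionally independent of $(\online_{t'},\decisiononline[t'])_{t'\ne t}$. This is exactly the \OnlHistInd{} property of \Cref{def:OnlHistInd}.

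The only real subtlety is bookkeeping about randomness. The tentative decision must not secretly depend on the online history, which is why tentative decisions are defined before the feasibility check; the capping at $\tterm$ is ignored. The rest is routine measure-theoretic factorization, so I expect no substantive obstacle.
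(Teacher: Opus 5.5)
Your proposal is correct and follows essentially the same argument as the paper. Both write $\decisiononline$ as a function of $\online_t$, the history $\hist_t$ or $\hist_{T_t}$ (a prefix of $\histend$), and fresh coins. Both then use that the requests are independent of the offline phase and that the $T_t$ are conditionally i.i.d.\ given $\histend$. One small phrasing fix: your list of ``mutually independent'' families should not include $(T_t)$ alongside $(\samples,\histend)$, since $T_t\sim\unif(\samples)$; what you actually need, and correctly state afterward, is that the $T_t$ are conditionally i.i.d.\ given $\samples$, which suffices.
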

\begin{proof}
    Note that $\decisiononline$ depends only on the current request $\online_t$, $\hist_t$ (if $t\in\samples$) or $\hist_{T_t}$ where $T_t\sim \unif(\samples)$ (if $t\not\in\samples$), and independent randomness. Thus, since the $(\online_t)_{t\in[n]}$ are jointly independent and independent of $(\sample_t)_{t\in\samples}$, and hence also $\histend$ in the \preview{} setting, and $(T_t)_{t\in[n]}$ are jointly independent conditionally on $\histend$, $(\online_t, \decisiononline)$ is independent of $(\online_s, \decisiononline[s])_{s\neq t}$ given $\histend$.
\end{proof}
Notice that, as a consequence of \cref{obs:oblivious} and Bernstein's inequality, the tentative allocations $\sum_{t\in[n]}\alloconline$ of \cref{alg:genericOnline} concentrate about its expectation conditioned on the \offHist{} $\histend$.

To complete the connection between the online decisions of \cref{alg:genericOnline} and the offline decisions of \cref{alg:genericOffline}, we relate the expected performance of the tentative decisions of \cref{alg:genericOnline} with the sample decisions of \cref{alg:genericOffline}. 
In the \previewat[1] (i.e., \sspi{}) setting, this connection follows from the observation that the tentative online decisions of \cref{alg:genericOnline}, $\decisiononline$, are \emph{tangent} to the offline sample decisions of \cref{alg:genericOffline}, $\decisionsample[t]$. That is, conditioned on the \offHist{} $\hist_t$ before processing sample $\sample_t$, the decisions $\decisiononline$ and $\decisionsample[t]$ are \emph{identically distributed}. Moreover, in the general \preview{} setting, we can establish a similar relation between the decisions of \cref{alg:genericOffline,alg:genericOnline}. The crucial consequences of this tangent relation are summarized in \cref{lem:algConnectionProperties}:
equality of expectations, a second-moment bound used for the value-on-bad-histories argument, and a high-probability deviation bound used to transfer resource consumptions and machine loads. We prove this result in \Cref{sec:appendix:tangentSeqFacts}.

\begin{restatable}{Lemma}{restateAlgConnectionProperties}\label{lem:algConnectionProperties}
    Take $\decisionsample[\pi(i)]$ to be the decision of \Cref{alg:genericOffline} on sample $\sample_{\pi(i)}$ in random order $\pi(1),\ldots,\pi(\nsamples)$, and $\decisiononline[t]$ the decision of \Cref{alg:genericOnline} on request $\online_t$.
    Then, in the \preview{} setting, we have
    \begin{enumerate}
        \item \emph{Expectation:} For any measurable function $f : \decisionsetall \times \typespaceall \to \R$,
        \begin{align*}
            \textstyle p\, \E[\sum_{t\in \requests} f(\decisiononline, \online_t)] = \E[\sum_{i\in[\nsamples]} f(\decisionsample, \sample_{\pi(i)})]
        \end{align*}
        \item \emph{Second-moment:} For any bounded measurable function $f : \decisionsetall \times \typespaceall \to [0,1]$,
        \begin{align*}
            \textstyle\E[(p \sum_{t\in \requests} \E[f(\decisiononline, \online_t) \mid \histend] - \sum_{i\in[\nsamples]} f(\decisionsample, \sample_{\pi(i)}))^2]
            \leq 2\E[\sum_{i\in[\nsamples]} f(\decisionsample, \sample_{\pi(i)})]
        \end{align*}
    
        \item \emph{High-probability deviation:} For any bounded measurable function $f : \decisionsetall \times \typespaceall \to [0,1]$ and $\eta, \delta \in (0,1)$, with probability at least $1-2\delta$,
        \begin{align*}
            \textstyle \left |p \sum_{t\in \requests} \E[f(\decisiononline, \online_t) \mid \histend] - \sum_{i\in[\nsamples]} f(\decisionsample, \sample_{\pi(i)}) \right |
            \leq   \frac{5\log(1/\delta)}{\eta} + \eta\sum_{i\in[\nsamples]} f(\decisionsample, \sample_{\pi(i)})
        \end{align*}
\end{enumerate}
\end{restatable}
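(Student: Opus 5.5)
The plan is to reduce all three properties to a single martingale comparison along the offline random-order execution. Work with the filtration $\mathcal{F}_i=\sigma(\hist_{\pi(i+1)}, \pi(1),\ldots,\pi(i))$, i.e., everything revealed in \Cref{alg:genericOffline} after processing the first $i$ samples, together with the set $\samples$. The key identity to establish first is the following. For each $i\in[\nsamples]$, let
\[
g_i := \E[f(\decisionsample,\sample_{\pi(i)})\mid \mathcal{F}_{i-1}].
\]
Then
\[
p\sum_{t\in\requests}\E[f(\decisiononline,\online_t)\mid\histend]=\sum_{i\in[\nsamples]} g_i .
\]

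To prove this identity, condition on $\histend$. For $t\notin\samples$, $\decisiononline=\decision[T_t]{\online_t}$ with $T_t$ uniform on $\samples$ and $\online_t\sim\D_t$ fresh, so its conditional expectation is $\frac1{\nsamples}\sum_{i}h_i(\D_t)$. Here $h_i(\D)$ denotes the expectation of $f(\blackbox(\gamma\mid\hist_{\pi(i)}),\gamma)$ over $\gamma\sim\D$. For $t\in\samples$ the contribution is $h_{\pi^{-1}(t)}(\D_t)$. On the other side, $\pi(i)$ is uniform over the not-yet-processed indices of $\samples$ given $\mathcal{F}_{i-1}$, and $\sample_{\pi(i)}$ is a fresh draw from $\D_{\pi(i)}$ independent of $\hist_{\pi(i)}$. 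This gives $g_i=\frac{1}{\nsamples-i+1}\sum_{t\in\samples\setminus\pi([i-1])}h_i(\D_t)$.

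These two expressions do not match pathwise. They match in expectation only after also averaging over the random set $\samples$, since $\samples$ is a uniform $p$-subset and $p n=\nsamples$. So I would use $\mathcal{F}_i$ that does \emph{not} reveal $\samples$ beyond $\pi(1..i)$. Then, given $\mathcal{F}_{i-1}$, $\pi(i)$ is uniform over $[n]\setminus\pi([i-1])$, so $g_i=\frac{1}{n-i+1}\sum_{t\notin\pi([i-1])}h_i(\D_t)$. The unsampled-time contributions then need to be handled via exchangeability.

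This bookkeeping is the step I expect to be the main obstacle. It is exact in the \sspi{} case, where $p=1$, the sums are paired, and $g_i=h_i(\D_{\pi(i)})$ equals the online term for $t=\pi(i)$. For general $p$, I would try first to define the online comparison quantity $G:=p\sum_t\E[f(\decisiononline,\online_t)\mid\histend]$ and show $\E[G]=\E[\sum_i g_i]$ via a symmetrization over $\samples$ and $T_t$. I would then show that $G-\sum_i g_i$ is itself a sum of bounded martingale-type increments whose conditional variance is dominated by $\sum_i g_i$, so that it can be folded into the estimates below with the stated constants.

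Given the identity, Property 1 follows immediately by taking expectations, since $\E[\sum_i f(\decisionsample,\sample_{\pi(i)})]=\E[\sum_i g_i]$ by the tower property. For Properties 2 and 3, set $X_i=f(\decisionsample,\sample_{\pi(i)})$ and $D_i=g_i-X_i$, a martingale difference sequence with $|D_i|\le1$ and $\E[D_i^2\mid\mathcal{F}_{i-1}]\le\E[X_i\mid\mathcal{F}_{i-1}]=g_i$, using $f\in[0,1]$. For Property 2, orthogonality of increments gives $\E[(\sum_i D_i)^2]=\sum_i\E[D_i^2]\le\E[\sum_i g_i]=\E[\sum_i X_i]$. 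The factor $2$ in the statement leaves room for the extra term from the $p<1$ bookkeeping above.

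For Property 3, apply Freedman's inequality in its exponential-supermartingale form to $\pm D_i$. With $\lambda=\eta$, the process $\exp(\lambda\sum D_i-(e^\lambda-1-\lambda)\sum g_i)$ is a supermartingale. This yields, with probability $1-\delta$ on each side, $|\sum D_i|\le \log(1/\delta)/\eta+\eta\sum_i g_i$. Finally, replace $\sum_i g_i$ by $\sum_i X_i$ using the same bound once more, namely $\sum g_i\le \sum X_i+\log(1/\delta)/\eta+\eta\sum g_i$, and rearrange. This absorbs the constants into $5\log(1/\delta)/\eta$ and gives the $1-2\delta$ failure probability.
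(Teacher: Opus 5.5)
Your proof of Properties 2 and 3 rests on the pathwise identity $G:=p\sum_{t\in\requests}\E[f(\theta_t,\gamma_t)\mid\histend]=\sum_i g_i$. As you note yourself, this identity is false for $p<1$, and the proposed repair does not close the gap. Write $Y_i:=f(\tilde\theta_{\pi(i)},\tilde\gamma_{\pi(i)})$ for what you called $X_i$.

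With your second filtration (the one that hides $\samples$), you can write $G=\sum_i Z_i$ with $Z_i:=p\,h_i(\D_{\pi(i)})+\frac1n\sum_{t\in\samplesnot}h_i(\D_t)$. A direct computation gives $\E[Z_i\mid\mathcal F_{i-1}]=g_i$, so Property 1 does follow from your symmetrization. However, $Z_i$ depends on $\samplesnot=[n]\setminus\{\pi(1),\dots,\pi(\nsamples)\}$, that is, on future indices, so $Z_i-g_i$ is not adapted. Consequently, orthogonality of increments is unavailable for the cross terms of $\E[(\sum_i(Z_i-Y_i))^2]$, and Freedman cannot be applied. Your Freedman step only controls $\sum_i g_i-\sum_i Y_i$, which is not the quantity in Properties 2 and 3.

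Nor is the factor $2$ slack reserved for this correction. In the paper it comes from $(X_i-Y_i)^2\le X_i+Y_i$ when both members of a tangent pair are random.

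A side remark on $p=1$: even there, $g_i=h_i(\D_{\pi(i)})$ requires $\pi(i)$ to be $\mathcal F_{i-1}$-measurable, for example by revealing $\pi$ at time $0$, which is not the filtration you wrote. With that filtration, your predictable-projection-plus-Freedman argument is a valid alternative in the \sspi{} case.

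The missing idea is to stop trying to write $G$ as a sum of predictable projections. Instead, build an adapted sequence $X_i$, tangent to $Y_i$, with $G=\sum_i\E[X_i\mid\histend]$. The paper does this by regenerating the offline phase sequentially (\Cref{alg:equivGenericOffline}):
\begin{itemize}
\item $\pi(i)$ is drawn uniformly from a remaining pool $U_i\subseteq[n]$.
\item A random set $I$ of iterations with $\pr[i\in I]=1-p$, independent of $\pi$, is fixed in advance.
\item For $i\in I$, an extra index $\hpi(i)$ is drawn uniformly from $U_i\setminus\{\pi(i)\}$ and removed from the pool, and $X_i:=f(\blackbox(\gamma_{\hpi(i)}\mid\hist_{\pi(i)}),\gamma_{\hpi(i)})$.
\item For $i\notin I$, $X_i:=f(\theta_{\pi(i)},\gamma_{\pi(i)})$.
\end{itemize}
Both indices are marginally uniform on $U_i$ and the requests are fresh, so $X_i$ and $Y_i$ have the same law given $\mathcal F_{i-1}$. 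Since $\hpi(i)$ is uniform on $\samplesnot$ given $(\histend,I)$, summing over $i$ yields $p\sum_{t\in\samples}\E[f(\theta_t,\gamma_t)\mid\histend]+p\sum_{t\in\samplesnot}\E[f(\theta_t,\gamma_t)\mid\histend]=G$ exactly.

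The remaining two properties then follow:
\begin{itemize}
\item Property 2 is Jensen, $\E[(\E[\sum_i(X_i-Y_i)\mid\histend])^2]\le\E[(\sum_i(X_i-Y_i))^2]$, together with orthogonality and $(X_i-Y_i)^2\le X_i+Y_i$.
\item Property 3 uses the exponential bound $\E[\exp(\frac{\eta}{5}\sum_i(X_i-(1+\eta)Y_i))]\le 1$. Its compensator $\eta\sum_i Y_i$ is $\histend$-measurable, which is what lets Jensen transfer the bound cleanly to $\sum_i\E[X_i\mid\histend]$.
\end{itemize}
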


Let us now briefly discuss how to use \Cref{lem:algConnectionProperties} to handle the Challenges mentioned above.
For Challenge 1, we apply Property 3 of \Cref{lem:algConnectionProperties} to the allocation function $\alloconlinej \in [0,1]$. This shows that the conditional expected online consumption of each resource is close to its consumption in the offline sample run. Then, by the $\OnlHistInd{}$ concentration inequality from \cref{obs:oblivious}, the realized online consumption is close to this conditional expectation.
Since $\blackbox$ never over-allocates (by definition), we can conclude that \Cref{alg:genericOnline} will not either with high probability.

For Challenge 2, using an LP relaxation characterization of the optimal allocation, we show that the expected \emph{value} collected on bad histories is negligible as long as the \emph{resource consumption} on those histories is small in expectation.
To bound this quantity, we use Property 2 of \Cref{lem:algConnectionProperties}, which shows that the second moment of the difference between the online and offline allocations is bounded by the expected offline allocation.
This second-moment bound, combined with Cauchy--Schwarz, bounds the value carried by bad histories and resolves Challenge 2.

\section{Applications to Resource Allocation}
\label{sec:resourceAlloc}

We now apply the framework from \Cref{sec:generalFramework} to  Online Resource Allocation. Recall that, in this setting, the seller receives $n$ online requests from buyers arriving in arbitrary order, for $m$ kinds of limited resources with budgets $\B=(B_1,B_2,\cdots,B_m)\in \R^m_{\geq 0}$. The request from each buyer $t \in \requests$ is denoted by $\online_t = (\val_t, \alloc_t, \decisionset[t])$, 
where choosing a decision $\decisiononline \in \decisionset$ generates value $\valonline \in\R_{\geq 0}$ for buyer $t$, while consuming at most one unit of each resource $\alloconline=(a_{t,1}(\decisiononline),\cdots,a_{t,m}(\decisiononline))\in[0,1]^m$.
Without loss of generality assume $B_j\leq n$ for each resource $j$. We always assume that there is a null decision $\phi\in\decisionset$ for each buyer, such that $\val_t(\phi)=0$ and $\alloc_t(\phi)=\mathbf{0}$.
The objective of the seller is to maximize social welfare 
$\sum_{t\in\requests}\valonline[t]$, while not exceeding the budget of any resource.

Our benchmark is the expected \emph{hindsight fractional optimum} $\E[\optR{\B}]$, where $\optR{\B}$ is the optimal value of the configuration linear program defined below on the realized requests. For each $t\in\requests$, let $\zlp[t]$ denote the fractional allocation corresponding to the decision $\decisionvar \in \decisionset$ for request $\online_t$. Then we can write the configuration LP as follows:

\begin{equation}\tag{$\LPrarandom$}\label{program:ora-lp}
    \begin{aligned}
        \optR{\B} ~:=~ &\text{maximize}  &&\textstyle\sum_{t\in\requests} \sum_{\decisionvar\in\decisionset[t]} \valat[t] \zlp[t]\\
        &\text{s.t.}  &&\textstyle\sum_{t\in\requests} \sum_{\decisionvar\in\decisionset[t]} \allocat[t] \zlp[t] \leq \B\\
        &\forall t\in \requests,  &&\textstyle \sum_{\decisionvar\in\decisionset[t]} \zlp[t] \leq 1\\
        &\forall t\in \requests, \decisionvar\in\decisionset[t],  && \zlp[t] \geq 0
    \end{aligned}
\end{equation}

Our main result of this section is that, given any good black-box \ro{} algorithm $\blackbox$, the value collected by \Cref{alg:genericOnline} in the \preview{} setting is a $(1-\eps)$-approximation to 
the appropriately scaled value of $\blackbox$ in the large-budget setting.

\begin{Theorem}
\label{thm:resourceAlloc}
    Suppose we run any black-box \ro{} algorithm $\blackbox$ for Online Resource Allocation offline in \Cref{alg:genericOffline} with budget $p (1-2\eps) \B$ for some $p \in (0,1]$ and $\eps \in (0,\nicefrac{1}{2}]$ on samples $\sample_1,\ldots,\sample_{\nsamples}$, where the budgets satisfy 
$B_j = \Omega(\log(\nicefrac{m}{\eps})\cdot \eps^{-2} p^{-1})$
    for all $j\in[m]$. Let $\algbb$ denote the value collected by $\blackbox$ on the sample. Then, if we run \Cref{alg:genericOnline} in the \preview{} setting, and let $\alg$ denote the value collected by \Cref{alg:genericOnline} online,
    \[
        \E[\alg] 
        ~\geq~ (1-\eps)p^{-1}\cdot\E[\algbb] - 4\eps \E[\opt_{\requests}(\B)] .
    \]
\end{Theorem}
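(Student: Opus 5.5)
Throughout, $U := \sum_{t\in\requests}\valonline$ denotes the uncapped online value and $\alg$ the value actually collected by \Cref{alg:genericOnline}. The plan is to compare $\alg$ with $U$ and then use Property 1 of \Cref{lem:algConnectionProperties} with $f(\decisionvar,\online)=\val(\decisionvar)$, which gives $p\,\E[U]=\E[\algbb]$. Since $\alg \ge U - \sum_{t\ge \tterm}\valonline$, and termination can only occur on a bad event, we have $\alg \ge U - U\cdot\one[\bad]$. Here $\bad$ is the event that some resource's uncapped online consumption $\sum_t \alloconlinej$ exceeds $B_j$. It therefore suffices to show $\E[U\one[\bad]] \le \eps p^{-1}\E[\algbb] + 4\eps\E[\opt_{\requests}(\B)]$, up to the precise constants.

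\textbf{Step 1 (bad event has probability $O(\eps^2)$).} Define the sample-history event $\goodsampling$ on which, for every $j$, Property 3 holds with $\eta=\eps/2$ and $\delta=\eps^2/m$. Because $\blackbox$ runs with budget $p(1-2\eps)\B$, the sample consumption is at most $p(1-2\eps)B_j$. On $\goodsampling$ this gives
\[
p\,\E\Big[\textstyle\sum_t \alloconlinej \,\Big|\, \histend\Big] \le (1+\eps/2)\,p(1-2\eps)B_j + 10\eps^{-1}\log(m/\eps^2).
\]
Hence the conditional mean is at most $(1-\tfrac{3}{2}\eps+O(\eps^2))B_j + O(\eps^{-1}p^{-1}\log(m/\eps))$, which is at most $(1-\eps)B_j$ by the budget assumption. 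By \cref{obs:oblivious}, the terms are conditionally independent given $\histend$, so Bernstein's inequality with $B_j=\Omega(\eps^{-2}\log(m/\eps))$ shows that each resource exceeds $B_j$ with conditional probability at most $\eps^2/m$. A union bound then gives $\pr[\bad]=O(\eps^2)$.

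\textbf{Step 2 (value on bad histories; main obstacle).} Values are unbounded and may be correlated with $\bad$, so I would not bound $\E[U\one[\bad]]$ directly. Instead I would pass to the conditional fractional solution: set $z_{t,\decisionvar}:=\pr[\decisiononline=\decisionvar,\,\bad\mid \online_t]$, or a suitable version conditioned on the history and on $\online_t$. This keeps $\E[U\one[\bad]]=\E[\sum_t\sum_\decisionvar \valat[t] z_{t,\decisionvar}]$. It is more convenient to work with a bad event $\mathcal{B}$ measurable with respect to $\histend$, namely $\goodsampling^c$, together with the conditional tail events, and handle the conditional-tail part via conditional independence. On $\mathcal{B}\in\sigma(\histend)$ the $\OnlHistInd{}$ structure makes $\E[\sum_t \valonline \one[\mathcal B]]$ equal to the value of the fractional solution $z_{t,\decisionvar}=\pr[\decisiononline=\decisionvar,\mathcal B\mid\online_t]$, with expected consumption $\E[\one[\mathcal B]\sum_t \alloconlinej]$. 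Applying Cauchy--Schwarz and Property 2 with $f=a_j$,
\[
\E\Big[\one[\mathcal B]\,p\textstyle\sum_t \E[\alloconlinej\mid\histend]\Big]
\le \E\Big[\one[\mathcal B]\sum_i \allocsamplej\Big] + \sqrt{\pr[\mathcal B]}\,\sqrt{2pB_j}.
\]
This is at most $pB_j\pr[\mathcal B] + \sqrt{2\pr[\mathcal B]\,pB_j} = O(\eps\, pB_j)$ when $\pr[\mathcal B]=O(\eps^2)$ and $pB_j\ge 1$. Consequently the expected consumption on $\mathcal B$ is $O(\eps)B_j$, and scaling $z$ by $1/O(\eps)$ yields a feasible solution of the LP relaxation for the expected instance. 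By concavity of $\opt_{\requests}(\cdot)$ in the budget and Jensen's inequality over the independent requests, its value is at most $O(\eps)\E[\opt_{\requests}(\B)]$. The delicate part is making this ex-ante LP argument rigorous, i.e.\ showing that the feasible solution is bounded by $\E[\optR{\B}]$ and not merely by the fluid optimum of the distributions. I would first try defining $z$ per realization of the $\online_t$ and proving the inequality for the expected LP value.

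\textbf{Step 3 (combining).} On $\goodsampling$, the remaining bad mass comes from the conditional Bernstein tail. Since $\pr[\text{tail}\mid\histend]\le\eps^2$ is deterministic, the same Cauchy--Schwarz/LP rounding argument applies conditionally, using independence across $t$, and costs another $O(\eps)\E[\opt]$. Tracking constants so the total loss is at most $4\eps\E[\opt_{\requests}(\B)]$ plus the $\eps p^{-1}\E[\algbb]$ slack, and using $p\E[U]=\E[\algbb]$, gives the stated bound.
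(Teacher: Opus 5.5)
Your architecture matches the paper's. You reduce to the uncapped value on budget exhaustion, split off a $\histend$-measurable bad event handled by Cauchy--Schwarz with Property 2 and LP scaling, and treat the remaining conditional tail separately. However, two steps do not go through as proposed.

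\textbf{The comparison with $\E[\optR{\B}]$.} You rightly flag this step as delicate, but the tool you suggest points the wrong way. Concavity of $\optR{\cdot}$ in the budget and Jensen give only the averaging direction $\E[\optR{\B}]\le\optfl$. Your scaled $z$ is feasible only in expectation, so its value is bounded by $\optfl$, and you need the reverse inequality $\optfl\le 2\E[\optR{\B}]$. Jensen cannot supply this, because $\optR{\cdot}$ is itself a random function of the requests, correlated with the realized consumption of $z$, and values are unbounded. This is a second genuine use of the large-budget assumption (\Cref{obs:fluidOptUpperBound}). Evaluate an optimal fluid solution for budget $(1-O(\eps))\B$ on the realized requests, and truncate at the first time $\tau$ some resource's cumulative consumption exceeds $B_j-1$. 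The event $\{\tau\ge t\}$ depends only on requests before $t$, hence is independent of $\online_t$. Bernstein then gives $\pr[\tau<n]\le\eps$, so truncation loses only an $\eps$ fraction of each request's expected value despite unbounded values. Defining $z$ per realization, as you propose to try, hits the same issue and needs this same truncation-plus-independence argument.

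\textbf{The conditional tail $\goodsampling\cap\be$.} This part cannot be handled by ``the same'' LP argument. Its feasibility step uses $\sum_\theta z_{t,\theta}(\online)=\pr[\event\mid\online_t=\online]=\pr[\event]$, which needs $\event$ to be $\histend$-measurable. But $\be$ depends on $\alloconlinej$ itself and may be positively correlated with $\valonline$. The decoupling you need is the paper's leave-one-out event $\bet$: drop request $t$ and lower the budgets to $B_j-1$. Then $\be\subseteq\bet$, and $\good\cap\bet$ is conditionally independent of $(\online_t,\decisiononline)$ given $\histend$. Bernstein gives $\pr[\good\cap\be\mid\histend,\online_t,\decisiononline]\le\eps$ (\Cref{lem:concentrationBudgetExhaustion}). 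By Property 1 the loss is then at most $\eps\,\E[\sum_t\valonline]=\eps p^{-1}\E[\algbb]$, with no LP step at all. This is exactly the source of the $(1-\eps)$ factor in the theorem. Charging the tail to $O(\eps)\E[\optR{\B}]$, as in your Step 3, would exceed the constant $4$. The history-measurable part already uses that constant up via $2\sqrt{\pr[\event]}\,\optfl\le 4\eps\,\E[\optR{\B}]$.
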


To obtain our online resource allocation results for the \preview{} setting, we combine this theorem with near-optimal \ro{} online algorithms in the large budget setting.

\begin{Lemma}[\cite{KRTV-SICOMP18}] \label{lem:RO_ORA}
     There exists an online resource allocation algorithm\footnotemark such that executing it on samples $(\sample_i)_{i\in\samples}$ in \ro{}  with budget $\alpha \B$, where $\alpha \in (0,1]$ and the budgets satisfy $\alpha B_j = \Omega(\log(\nicefrac{m}{\eps}) \cdot \eps^{-2})$ for all $j\in[m]$, returns an integral online allocation 
with expected value 
    \[
        \E[\algbbat[\samples]{\alpha \B}\mid (\sample_t)_{t\in\samples}]
        \geq
        (1-\eps)\optR[\samples]{\alpha\B},
    \]
    where the expectation is over the random arrival order and any possible randomness of the algorithm.
\end{Lemma}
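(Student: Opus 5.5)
Since the sample multiset $(\sample_t)_{t\in\samples}$ is independent of the random order $\pi$, I would prove the lemma by conditioning on the samples. After conditioning, the situation is exactly the \ro{} model of \Cref{def:ro}: an adversarially fixed instance with $\nsamples$ requests, presented in uniformly random order. It then suffices to invoke the random-order online packing LP algorithm of \cite{KRTV-SICOMP18} on this fixed instance with budget vector $\alpha\B$. Its guarantee is relative to the \emph{fractional} offline optimum of the LP on the presented requests, which here is precisely $\optR[\samples]{\alpha\B}$, the configuration LP \eqref{program:ora-lp} restricted to index set $\samples$ with budgets $\alpha\B$.

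The key steps, in order, are as follows.

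First, cast the problem as the online packing LP treated by \cite{KRTV-SICOMP18}. Each request is a block of columns, one per decision $\decisionvar\in\decisionset[t]$, with profit $\val_t(\decisionvar)$ and column $\alloc_t(\decisionvar)\in[0,1]^m$, together with the per-request constraint $\sum_{\decisionvar}\zlp[t]\le 1$. This multiple-choice structure is exactly the format of their one-pass LP-resolving algorithm, in which each arriving request chooses at most one option.

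Second, handle non-uniform budgets. I would rescale row $j$ by the factor $\min_{j'}\alpha B_{j'}/(\alpha B_j)\le 1$. This keeps all entries in $[0,1]$, leaves the feasible set unchanged, and makes all capacities equal to $B_{\min}:=\min_j \alpha B_j$. Alternatively, one can observe that their analysis is row-by-row and only needs each capacity to be large relative to the maximum entry.

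Third, check the large-budget condition. Their theorem requires each capacity to be $\Omega(\log(d/\eps)\,\eps^{-2})$ with $d\le m$ the column sparsity, and the hypothesis $\alpha B_j=\Omega(\log(m/\eps)\eps^{-2})$ supplies this. With this condition, their algorithm returns an integral solution that is always feasible for budgets $\alpha\B$ and has expected value at least $(1-\calO(\eps))$ times the fractional optimum. This uses their dynamic doubling-phase resolving: at geometric checkpoints they solve the LP on the observed prefix with slightly shrunken budgets, use the optimal fractional solution to randomly select an option for each new request, and reject selections that would violate the budget. Replacing $\eps$ by a constant fraction of itself yields exactly $(1-\eps)\optR[\samples]{\alpha\B}$.

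Finally, since the guarantee holds for every fixed multiset, it holds conditionally on $(\sample_t)_{t\in\samples}$, with the expectation taken over the arrival order and the algorithm's coins, as stated.

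The main obstacle I expect is the second and third steps: confirming that the cited guarantee really applies in the generality needed here. Three points need checking: arbitrary (possibly infinite) decision sets $\decisionset[t]$, heterogeneous budgets, and the $\log(m/\eps)$ rather than $\log m$ dependence. For possibly infinite decision sets, I would first restrict to the finitely many options in the support of an optimal basic solution, or note that their algorithm only queries an LP oracle. If the cited result is stated only for finitely many columns or uniform budgets, I would instead re-verify their concentration argument per resource. The required statement is a union bound over $m$ resources and $\calO(\log(1/\eps))$ phases, which is exactly what produces the $\log(m/\eps)$ term. This is where I expect the footnoted caveats in the lemma to enter.
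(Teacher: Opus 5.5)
Your reduction is correct and matches the paper, which gives no argument beyond citing \cite{KRTV-SICOMP18}. The point is that $\pi$ is drawn independently of $(\sample_t)_{t\in\samples}$, so conditionally on the samples the offline run is an \ro{} instance whose fractional benchmark is $\optR[\samples]{\alpha\B}$, and the hypothesis $\alpha B_j=\Omega(\eps^{-2}\log(m/\eps))$ implies the capacity-ratio condition of \cite{KRTV-SICOMP18}. One correction: that algorithm does not use doubling phases. Instead, after a short initial phase, it re-solves at every step the LP on the observed prefix (current request included) with proportionally scaled capacities, and rounds only the current request's fractional assignment; a prefix LP solution says nothing about unseen requests, the phase-based scheme you describe is essentially that of \cite{AWY-OR14}, which works through dual prices and needs budgets $\Omega(m\eps^{-2}\log(n/\eps))$, so your fallback rationale via a union bound over $m$ resources and $\calO(\log(1/\eps))$ phases would not reproduce the cited bound, and any pruning of infinite decision sets must use only the observed prefix, since the black box must stay genuinely online for the paper's later tangent argument.
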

\footnotetext{We note that \cite{agrawal2014fast,GM-MOR16,GSW-STOC25} also provide 
algorithms for online resource allocation in the \ro{} model under these budget assumptions. However, as written, they require either an estimate of the optimal allocation, or that each value is at most $\calO(\eps^2 \opt/\log(m))$. These conditions are not required in \cite{KRTV-SICOMP18}.}

We also need the well-known rescaling property of $\optR{\B}$; e.g., it appeared in \cite[Lemma 1]{KRTV-SICOMP18} under slightly different conditions. For completeness, we give its proof in \Cref{sec:appendix:oraDeferred}.

\begin{restatable}{Lemma}{restateLpRescaling}\label{lem:lpRescaling}
Let $\samples \subseteq [n]$ be a uniformly random subset of indices of size $s = p n$. Suppose that $B_j \geq p^{-1}\eps^{-2}(2 + 5 \log(\nicefrac{2m}{\eps}))$. Then,
  \[
        \E[\optR[\samples]{p(1-2\eps) \B}] \geq p (1-5\eps) \cdot \E[\optR{\B}].
  \]
\end{restatable}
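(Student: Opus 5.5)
Let $z^*$ be an optimal solution of \eqref{program:ora-lp} on the realized full instance $\requests$ with budget $\B$. I will restrict it to the random subset $\samples$ and scale it down so that it becomes feasible for $\optR[\samples]{p(1-2\eps)\B}$ with high probability. Concretely, let $X_j = \sum_{t\in\samples}\sum_{\decisionvar}\allocatj[t]\,\zstarlp[t]$ be the consumption of resource $j$ by the restricted solution. Since $\samples$ is a uniformly random $pn$-subset chosen independently of the requests, $\E[X_j\mid \online]=p\sum_t\sum_\decisionvar \allocatj[t]\zstarlp[t]\le pB_j$. Each summand $\sum_\decisionvar\allocatj[t]\zstarlp[t]$ lies in $[0,1]$, so sampling without replacement obeys the same Bernstein/Chernoff tail as sampling with replacement (Hoeffding's comparison). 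This gives
\[
\pr\bigl[X_j > p(1+\eps)B_j\bigr]\le \exp\bigl(-\eps^2 pB_j/3\bigr)\le \eps/(2m),
\]
using $pB_j\ge \eps^{-2}(2+5\log(2m/\eps))$. A union bound over $j$ shows that the good event $G=\{X_j\le p(1+\eps)B_j\ \forall j\}$ has $\pr[G^c]\le \eps/2$.

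On $G$, the scaled solution $\frac{1-2\eps}{1+\eps}z^*|_{\samples}$ is feasible for $\optR[\samples]{p(1-2\eps)\B}$, because $(1-2\eps)/(1+\eps)\cdot p(1+\eps)B_j = p(1-2\eps)B_j$. Writing $V=\sum_{t\in\samples}\sum_\decisionvar \valat[t]\zstarlp[t]$, we therefore get $\optR[\samples]{p(1-2\eps)\B}\ge (1-3\eps)V\,\mathbb{1}_G$. We have $\E[V\mid\online]=p\,\optR{\B}$, so what remains is to control $\E[V\mathbb{1}_{G^c}]$. Values are unbounded, so we cannot simply multiply by $\pr[G^c]$, and this is the step I expect to be the main obstacle.

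I would avoid a pure truncation argument on $V$ and instead use a fallback feasible solution on $G^c$. Even on $G^c$, scaling $z^*|_\samples$ by $\min_j pB_j(1-2\eps)/X_j$ is feasible. That alone is still hard to control, so the cleaner route is to decompose over each $t$. Let $v_t=\sum_\decisionvar\valat[t]\zstarlp[t]$. Then
\[
\E[V\mathbb{1}_{G^c}]=\sum_t v_t\,\pr[t\in\samples,\,G^c].
\]
Conditioned on $t\in\samples$, the remaining $pn-1$ indices form a uniform subset of $\requests\setminus\{t\}$, and the event $G^c$ is implied by the event that some $X_j^{-t}>p(1+\eps)B_j-1$. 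The same Chernoff and union bound, with $B_j$ large enough that the $-1$ is absorbed (this is where the ``$2+$'' in the hypothesis enters), gives $\pr[G^c\mid t\in\samples]\le\eps$ uniformly in $t$. Hence
\[
\E[V\mathbb{1}_{G^c}\mid\online]\le \eps p\,\optR{\B}.
\]

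Combining, $\E[\optR[\samples]{p(1-2\eps)\B}\mid\online]\ge(1-3\eps)(p-\eps p)\optR{\B}\ge p(1-4\eps)\optR{\B}$, and taking the expectation over requests yields the claimed $p(1-5\eps)$ bound with room to spare. The constants only need to be checked in the Chernoff exponent.
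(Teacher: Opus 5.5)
Your argument is correct and is essentially the paper's proof. Both restrict an optimal full-instance solution to $\samples$, write the lost value as $p\sum_t v_t\,\pr[\text{failure}\mid t\in\samples]$, and bound that conditional probability by $\eps$ via Bernstein without replacement over the other $\nsamples-1$ indices, with the unit slack absorbing $t$'s own consumption (the inclusion you actually use is $G^c\subseteq\{\exists j: X_j^{-t}>p(1+\eps)B_j-1\}$, the reverse of how your sentence phrases it); the only difference is cosmetic, since the paper starts from the optimum at budget $(1-2\eps)\B$, enforces feasibility for $p\B$ by truncating at the first budget-crossing time $\tau$ and then rescales by $1-2\eps$, whereas you zero out the solution off $G$ and rescale by $\frac{1-2\eps}{1+\eps}$ on $G$, and because the paper bounds $\pr[\tau<t\mid t\in\samples]$ by $\pr[\tau<n\mid t\in\samples]$, its estimate is exactly your all-or-nothing one.
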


Now combining \Cref{thm:resourceAlloc}  with the last two lemmas, 
we immediately obtain the desired guarantees for resource allocation in the \preview{} setting, and hence also the \sspi{} (\Cref{thm:sspiInformal}) and \ows{} (\Cref{thm:owsInformal}) settings.

\begin{Corollary}\label{obs:goodRO-ORA}
    If we run \Cref{alg:genericOnline} with $\blackbox$ being the \ro{} algorithm in \Cref{lem:RO_ORA}, instantiated with budget $p(1-2\eps)\B$ with each $B_j = \Omega(p^{-1}\eps^{-2}(1 + \log(\nicefrac{m}{\eps})))$, then
    \[
        \E[\alg] 
        ~\geq~ (1-\eps)^2 p^{-1}\cdot\E[\optR[\samples]{p(1-2\eps)\B}] - 4\eps \E[\optR{\B}] 
        ~\geq~ (1-\calO(\eps))\cdot\E[\optR{\B}].
    \]
\end{Corollary}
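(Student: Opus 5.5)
The corollary is a chaining of \Cref{thm:resourceAlloc}, \Cref{lem:RO_ORA}, and \Cref{lem:lpRescaling}. I will prove the first inequality first, then the second.

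\textbf{First inequality.} Take $\blackbox$ to be the algorithm of \Cref{lem:RO_ORA} with $\alpha = p(1-2\eps)$. Its hypothesis requires $\alpha B_j = \Omega(\eps^{-2}\log(\nicefrac{m}{\eps}))$. This holds because $1-2\eps$ is bounded below by a constant once $\eps$ is below a small absolute constant; larger $\eps$ can be absorbed into the $\calO(\eps)$ of the conclusion. Choosing the hidden constant in $B_j = \Omega(p^{-1}\eps^{-2}(1+\log(\nicefrac{m}{\eps})))$ large enough also makes the budget conditions of \Cref{thm:resourceAlloc} and \Cref{lem:lpRescaling} hold simultaneously. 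Indeed, $2+5\log(\nicefrac{2m}{\eps}) = \calO(1+\log(\nicefrac{m}{\eps}))$.

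\Cref{thm:resourceAlloc} then gives
\[
\E[\alg]\ge (1-\eps)p^{-1}\E[\algbb]-4\eps\E[\optR{\B}].
\]
Taking expectation over the samples in \Cref{lem:RO_ORA} (tower rule) gives
\[
\E[\algbb]\ge(1-\eps)\E[\optR[\samples]{p(1-2\eps)\B}].
\]
Substituting the second bound into the first yields the first displayed inequality.

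\textbf{Second inequality.} Apply \Cref{lem:lpRescaling} to get
\[
p^{-1}\E[\optR[\samples]{p(1-2\eps)\B}]\ge(1-5\eps)\E[\optR{\B}].
\]
Hence
\[
\E[\alg]\ge\bigl((1-\eps)^2(1-5\eps)-4\eps\bigr)\E[\optR{\B}]\ge(1-11\eps)\E[\optR{\B}],
\]
which is $(1-\calO(\eps))\E[\optR{\B}]$.

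\textbf{Points to check.} The only care needed is the parameter bookkeeping. First, the range $\eps\le\nicefrac12$ from \Cref{thm:resourceAlloc} must be respected; we restrict to small $\eps$, since otherwise the statement is trivial with a large enough constant in $\calO(\eps)$. Second, the random subset $\samples$ and the sample draws must match the law assumed in \Cref{lem:lpRescaling}. In the \preview{} model, $\samples$ is uniform of size $pn$ and $\sample_t\sim\D_t$ independently of each other, exactly as the realized requests are drawn. Therefore $\optR[\samples]{\cdot}$ evaluated on samples has the same distribution as evaluated on the corresponding subset of the realized requests, and the lemma applies. There is no genuine obstacle.
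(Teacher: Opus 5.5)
Your proposal is correct and matches the paper's route: the paper also obtains the corollary by chaining \Cref{thm:resourceAlloc} with \Cref{lem:RO_ORA} (at $\alpha=p(1-2\eps)$, averaged over the samples) and then \Cref{lem:lpRescaling}, and your arithmetic $(1-\eps)^2(1-5\eps)-4\eps\ge 1-11\eps$ is fine. One small refinement: absorbing large $\eps$ into $\calO(\eps)$ only handles the second inequality, not the first; for $\eps\ge 3-2\sqrt{2}$ the first inequality's right-hand side is nonpositive, because $\optR[\samples]{p(1-2\eps)\B}\le\optR[\samples]{p\B}$ and weak LP duality with the full-instance optimal prices gives $\E[\optR[\samples]{p\B}]\le p\,\E[\optR{\B}]$ (the paper also leaves this edge case implicit).
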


In the rest of the section, we focus on proving \Cref{thm:resourceAlloc} . The plan is to implement the strategy described in \Cref{sec:generalFramework}. 
Indeed, if there were no budget constraints, then Property 1 of \Cref{lem:algConnectionProperties} implies that the expected uncapped value collected online is the same as the expected value $\E[\algbb]$ collected by the \ro{} algorithm on the sample, up to the $p^{-1}$ scaling:
\begin{align*}
    \textstyle \E[\sum_{t\in \requests} \valonline]
    ~=~ p^{-1}\cdot\E[\sum_{i\in [\nsamples]} \valsample]
    ~=~ p^{-1}\cdot\E[\algbb].
\end{align*}

As a consequence, our high-level approach to prove the theorem is to (i) show that the probability of over-allocation is negligible (\Cref{lem:concentrationBudgetExhaustion}), and then (ii) show that the over-allocation event is not significantly positively correlated with the value collected online (\Cref{lem:ScaledAlgValue}). The heart of the proof lies in controlling this positive correlation.

\subsection{Proof of \cref{thm:resourceAlloc}}
As discussed above, if \Cref{alg:genericOnline} never violated the budget constraints, then the expected value collected by the algorithm would equal the value collected by the $\blackbox$ algorithm on the sample, scaled by $p^{-1}$. Thus, to reason about the value collected by the algorithm, we need to show the value collected under \emph{budget exhaustion} is not too large. We formally define budget exhaustion.

\begin{Definition}[Budget exhaustion]\label{def:budgetExhaustion}
    The \emph{budget exhaustion} event $\be$ is the event that one of the resources runs out of budget in the online execution of \Cref{alg:genericOnline}, i.e.,
    \begin{align*}
        \textstyle\be = \{\exists j\in[m] : \sum_{t\in\requests} \alloconlinej > B_j\}.
    \end{align*}
\end{Definition}

Notice that, by Property 1 of \Cref{lem:algConnectionProperties}, the value collected by \Cref{alg:genericOnline} satisfies

\begin{equation}\label{eq:valDecomp}
\begin{aligned}
    \E[\alg] 
    &\geq \textstyle\E[\sum_{t\in\requests}\valonline] - \E[\sum_{t\in \requests} \valonline \1[\be]]\\
    &= \textstyle p^{-1}\E[\algbb] - \E[\sum_{t\in \requests} \valonline \1[\be]].
\end{aligned}
\end{equation}
Thus, our objective is to upper bound the expected value on the budget exhaustion event $\be$.

Notice that $\be$ depends both on the offline samples and online requests. Instead of reasoning about $\be$ directly, we will instead reason about a related, simpler $\bad$ ``(pseudo) over-allocation'' event (and its complement, the $\good$ event) which depends only on the samples. 
\begin{Definition}[$\good$ (Pseudo) allocation event]\label{def:goodEvent}
    Define $\good$ to be the event such that the expected allocation of \Cref{alg:genericOnline} conditioned on the sampling history $\histend$ of \Cref{alg:genericOffline} is bounded, and $\bad$ to be the complement of $\good$, i.e.,
    \begin{align*}
        \textstyle\good = \left\{\sum_{t \in [n]} \E[\alloconline[t] \mid \histend] \leq (1 - \eps) \B\right\}
        \quad\text{and}\quad
        \bad = \good^c.
    \end{align*}
\end{Definition}

We use the $\good$ and $\bad$ events to partition the value collected under budget exhaustion $\be$ in two parts:
\begin{equation}\label{eq:valEarlyStopDecomp}
\begin{aligned}
    \textstyle\E[\sum_{t\in \requests} \valonline \1[\be]]
    &\leq\textstyle \E[\sum_{t\in \requests} \valonline \1[\bad]]
    + \E[\sum_{t\in \requests} \valonline \1[\good \cap \be]]\\
    &=\textstyle \E[\sum_{t\in \requests} \valonline \1[\bad]]
    + \E[\sum_{t\in \requests} \valonline \pr[\good \cap \be \mid \histend, \online_t, \decisiononline]],
\end{aligned}
\end{equation}
where the second equality follows since $\valonline$ is deterministic conditioned on the sampling history $\histend$ and the current request $\online_t$ and decision $\decisiononline$. We proceed by bounding the two terms in \eqref{eq:valEarlyStopDecomp} separately, where bounding the first term will be the main challenge of the proof.

\paragraph{Bounding Value collected on the $\bad$ event}
We focus first on bounding the first term of \eqref{eq:valEarlyStopDecomp}, which is the main challenge of the proof. We prove the following result in \cref{sec:scaledAvgValue}, and discuss the main steps below.
\begin{Lemma}
\label{lem:ScaledAlgValue}
    Let $\bad$ be the bad (pseudo) over-allocation event defined in \Cref{def:goodEvent}. Then,
    \[
    \textstyle\E [\1[\bad] \cdot \sum_{t\in \requests} \valonline] ~\leq~ 4 \eps \cdot \E[\opt_{\requests}(\B)].
    \]
\end{Lemma}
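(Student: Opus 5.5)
The plan is to follow the strategy sketched in \Cref{sec:mainTechniques}: instead of controlling the value on $\bad$ directly, we control the resource consumption on $\bad$, and then use LP feasibility of a rescaled fractional solution. Since $\bad$ is $\histend$-measurable, conditioning on $\histend$ gives
\[
\textstyle\E[\1[\bad]\sum_{t}\valonline] = \E\big[\1[\bad]\sum_{t}\E[\valonline\mid\histend]\big].
\]

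\textbf{Step 1: a fractional solution for the online instance.} Define $\zlp[t] := \pr[\decisiononline=\decisionvar \mid \histend, \online_t]\cdot \1[\bad]$. By \Cref{obs:oblivious}, the pair $(\online_t,\decisiononline)$ depends only on $\histend$, fresh independent randomness, and $\online_t$. Hence, on the realized online instance, $z$ satisfies the per-request constraints of \eqref{program:ora-lp}. Its value is $\1[\bad]\sum_t \E[\valonline\mid\histend,\online_t]$, and its expectation over $\online$ equals the left-hand side above. Its consumption of resource $j$ is $L_j := \1[\bad]\sum_t \E[\alloconlinej\mid\histend,\online_t]$.

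Let $\lambda := \max(1,\max_j L_j/B_j)$. Then $z/\lambda$ is feasible for \eqref{program:ora-lp}, so the value of $z$ is at most $\lambda\,\optR{\B}$ pointwise. The difficulty is that $\lambda$ and $\opt$ are correlated. To avoid this, I would instead use a scaling that is uniform over all outcomes. Consider
\[
\textstyle \bar z_{t,\cdot}(\online_t) = \E[\zlp[t]\mid \online_t],
\]
which averages over the sample histories. This is a fractional solution whose expected consumption of resource $j$ is $\E[L_j\mid \online]$. I would rather work with the expected LP (ex-ante relaxation), or more simply use concavity: $\optR{\cdot}$ is concave in $\B$ and positively homogeneous. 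Concretely, the expected value of $z$ is at most $\E[\optR{\mathbf{L}}]$, where $\mathbf{L}=(\max_j L_j/B_j)\B$. Using $\optR{c\B}\le \max(c,1)\optR{\B}$ would again couple the terms. The cleanest route is therefore an averaged LP: since the $\online_t$ are independent of $\histend$, the solution $\E_{\histend}[z]$ for fixed $\online$ is feasible for $\optR{\mathbf{c}}$ with $c_j=\E[L_j]$ (the expectation over $\histend$ alone). This gives the bound $\text{value}\le \max(1,\max_j \E[L_j]/B_j)\cdot\optR{\B}$. The remaining task is to show $\max_j \E[L_j]\le 4\eps B_j$, and then scale by $\B/\E[L]$.

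\textbf{Step 2: bounding $\E[L_j]=\E[\1[\bad]\, X_j]$, where $X_j=\sum_t\E[\alloconlinej\mid\histend]$.} This is the main obstacle. Write $pX_j = Y_j + \Delta_j$, where $Y_j=\sum_i \allocsamplej$ is the offline consumption of the sample run. Since $\blackbox$ runs with budget $p(1-2\eps)\B$, we have $Y_j\le p(1-2\eps)B_j$ deterministically. Then
\[
p\,\E[\1[\bad]X_j]\le p(1-2\eps)B_j\pr[\bad] + \E[\1[\bad]|\Delta_j|] \le pB_j\pr[\bad] + \sqrt{\pr[\bad]}\sqrt{\E[\Delta_j^2]}
\]
by Cauchy--Schwarz. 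Property 2 of \Cref{lem:algConnectionProperties} with $f=a_{\cdot,j}$ gives $\E[\Delta_j^2]\le 2\E[Y_j]\le 2pB_j$. Therefore
\[
\E[L_j]\le B_j\pr[\bad]+\sqrt{2\pr[\bad]B_j/p}.
\]

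\textbf{Step 3: bounding $\pr[\bad]$.} $\bad$ requires some $j$ with $pX_j>p(1-\eps)B_j$, i.e. $\Delta_j>\eps pB_j$. Apply Property 3 of \Cref{lem:algConnectionProperties} with $\eta=\eps/2$ and $\delta=\eps^2/(4m)$, then take a union bound over the $m$ resources. This yields $\pr[\bad]\le \eps^2/2$ once $pB_j\ge C\eps^{-2}\log(m/\eps)$. Substituting into Step 2, and using $B_j/p\le B_j^2$ because $pB_j\ge 1$, gives
\[
\E[L_j]\le \tfrac{\eps^2}{2}B_j + \eps B_j \le 2\eps B_j.
\]

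\textbf{Step 4: conclusion.} With $\E[L_j]\le 2\eps B_j$, the averaged solution scaled down by $1/(2\eps)$ is feasible for \eqref{program:ora-lp}. Hence the expected bad-history value is at most $2\eps\,\E[\optR{\B}]\le 4\eps\,\E[\optR{\B}]$.

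The delicate point is making Step 1 rigorous: the averaging over $\histend$ has to be done for a fixed online realization $\online$, and this relies on the independence of $\online$ from $\histend$. The subtlety is that the consumption bound in Step 2 is stated in expectation over both $\histend$ and $\online$, whereas LP feasibility is needed per $\online$. To handle this, I would instead apply the argument to the ex-ante (expected-instance) LP. That LP upper-bounds the per-instance quantities in expectation, and a standard sample-average/concentration argument at the large-budget scale relates it back to $\E[\optR{\B}]$. This accounts for the slack between the $2\eps$ obtained above and the $4\eps$ in the statement.
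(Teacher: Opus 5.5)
Your argument is the paper's. Steps 2--3 reproduce \Cref{lma:bad-tiny} and \Cref{lem:allocOnEvent}. Your averaged solution $\bar z_{t,\decisionvar}(\online_t)=\pr[\bad,\decisiononline=\decisionvar\mid\online_t]$ is exactly $\ylpbb[t]{\online_t}$ from \Cref{lem:scaledLPSol} with $\event=\bad$. The ``ex-ante LP'' of your last paragraph is \eqref{program:ora-ub-alt}. Two steps still need to be repaired.

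\textbf{Step 4 does not follow.} For a fixed online realization $(\online_t)_{t\in\requests}$, the consumption of $\bar z$ is $\E[L_j\mid(\online_t)_{t}]$. Step 2 controls this only on average over the requests, so $\bar z/(2\eps)$ need not be feasible for \eqref{program:ora-lp}. Drop the per-instance claim and assert feasibility directly for \eqref{program:ora-ub-alt}. Its budget constraint is itself an expectation over $\online_t$, so it matches your bound $\E[L_j]\le 2\eps B_j$ exactly.

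\textbf{The per-request constraint after scaling \emph{up} by $1/(2\eps)$ is never checked.} This is what the $\max(1,\cdot)$ in Step 1 was hiding. For the per-instance $z$ one has $\sum_{\decisionvar}z_{t,\decisionvar}=\1[\bad]$, so no scale-up is possible. For $\bar z$ the constraint holds only because $\histend$ is independent of $\online_t$, which gives $\sum_{\decisionvar}\bar z_{t,\decisionvar}(\online_t)=\pr[\bad\mid\online_t]=\pr[\bad]\le 2\eps$. State this explicitly, since it is the step that makes the $\eps$ factor appear.

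With both fixes you get $\E[\1[\bad]\sum_t\valonline]\le 2\eps\,\optfl$. The lower bound $\E[\optR{\B}]\ge(1-\eps)\optfl$ from \Cref{obs:fluidOptUpperBound} is the large-budget rounding step you allude to. Together with $\eps\le\nicefrac12$ it gives $\optfl\le 2\E[\optR{\B}]$, which is exactly how the paper obtains the constant $4$.
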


The main difficulty in establishing this claim comes from the fact that, although $\bad$ happens with small probability (an immediate consequence of Property 3 of \cref{lem:algConnectionProperties}), $\valonline$ could be arbitrarily large relative to $\E[\optR{\B}]$ and positively correlated with $\bad$. To handle this possibility, we begin by showing that the expected (tentative) \emph{allocation} on the $\bad$ event must be small. We show this by noting: (i) by construction, the $\blackbox$ running in \cref{alg:genericOffline} allocates at most $p(1-2\eps)\B$ units of resource on the sample, and (ii) the difference in allocations of the offline and online decisions in \cref{alg:genericOffline,alg:genericOnline} can be bounded using Cauchy-Schwarz together with the tangent connection between the online and offline decisions of \cref{alg:genericOffline,alg:genericOnline} (specifically, the second-moment bound in \cref{lem:algConnectionProperties}). We then translate this bound on the expected \emph{allocation} on the $\bad$ event to a bound on the expected \emph{value} on the bad event using the standard fluid characterization of $\E[\optR{\B}]$. The full proof is given in \cref{sec:scaledAvgValue}.

\paragraph{Bounding the Probability of Budget Exhaustion.}
To bound the second term in \eqref{eq:valEarlyStopDecomp}, it suffices to show that the probability of budget exhaustion $\be$ on the $\good$ event is small, even after conditioning on the sampling history $\histend$ and the current request $\online_t$.
\begin{Lemma}\label{lem:concentrationBudgetExhaustion}
    Recall the $\good$ (pseudo) allocation event from \Cref{def:goodEvent} and the budget exhaustion event $\be$ from \Cref{def:budgetExhaustion}. Then, for any run of \Cref{alg:genericOffline} generating a sampling history $\histend$, and any online request $\online_t$, $t\in\requests$,
    \begin{align*}
        \pr[\good \cap \be \mid \histend, \online_t, \decisiononline]
        \leq \eps.
    \end{align*}
\end{Lemma}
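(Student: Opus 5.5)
Since $\good$ is determined by $\histend$, it suffices to fix a realization of $\histend$ on which $\good$ holds (otherwise the probability is $0$) and bound $\pr[\be \mid \histend, \online_t, \decisiononline]$. By \Cref{obs:oblivious}, conditioned on $\histend$, the pairs $(\online_{t'}, \decisiononline[t'])_{t'\in\requests}$ are mutually independent. Hence additionally conditioning on $(\online_t,\decisiononline)$ leaves the other tentative allocations $\alloconlinej[t']$, $t'\neq t$, independent with their conditional-on-$\histend$ distributions unchanged. Moreover, the capped algorithm stops at the first infeasible tentative decision, so budget exhaustion for resource $j$ can only occur if the uncapped tentative sum $X_j := \sum_{t'\in\requests} \alloconlinej[t']$ exceeds $B_j$. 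It therefore suffices to bound $\pr[X_j > B_j \mid \histend, \online_t, \decisiononline]$ for each $j$ and take a union bound over $j\in[m]$.

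Fix $j$. Write $X_j = a_{t,j}(\decisiononline) + \sum_{t'\neq t} \alloconlinej[t']$. The first term is at most $1$ and deterministic under the conditioning. The remaining sum is a sum of independent $[0,1]$-valued variables with mean $\mu_j \le \sum_{t'} \E[\alloconlinej[t'] \mid \histend] \le (1-\eps)B_j$ on $\good$. We need the sum to exceed $B_j - 1 \ge (1-\eps)B_j + (\eps B_j - 1)$, a deviation of at least $\eps B_j/2$ once $B_j \ge 2/\eps$. Apply Bernstein's (or the multiplicative Chernoff) inequality, using variance at most the mean. Since $\mu_j \le B_j$, this gives $\pr \le \exp\!\big(-\tfrac{(\eps B_j/2)^2}{2(B_j + \eps B_j/6)}\big) \le \exp(-c\,\eps^2 B_j)$ for an absolute constant $c$. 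Monotonicity in the mean lets us replace $\mu_j$ by its upper bound, either via a standard dominance argument or by noting that Bernstein only needs an upper bound on mean and variance.

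Finally, by the budget assumption $B_j = \Omega(\log(\nicefrac{m}{\eps})\,\eps^{-2}p^{-1}) \ge C\eps^{-2}\log(m/\eps)$ with $C$ large, each term is at most $\eps/m$. The union bound over $m$ resources then yields the claimed $\eps$. The only delicate point, and the step I would check most carefully, is the independence claim after conditioning on $(\online_t,\decisiononline)$ together with $\histend$. This relies on the $T_{t'}$ being drawn independently given $\histend$ and on the fresh requests being independent of the samples, exactly as in \Cref{obs:oblivious}. Beyond that, the argument is routine concentration, with the $+1$ contribution of request $t$ absorbed into the slack $\eps B_j$.
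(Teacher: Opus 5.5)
Your proposal is correct and follows essentially the paper's argument. The paper also isolates request $t$, uses the \OnlHistInd{} property to remove the conditioning on $(\online_t,\decisiononline)$, and then applies Bernstein's inequality with the $\eps B_j$ slack from $\good$ plus a union bound over $j\in[m]$; the only cosmetic difference is that it absorbs the $+1$ contribution via the auxiliary event $\bet$ (drop $t$'s allocation, lower budgets to $\B-\1$, note $\be\subseteq\bet$) rather than splitting the sum directly as you do.
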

The main idea of the proof of \Cref{lem:concentrationBudgetExhaustion} is the following: By definition, whenever $\good$ and $\be$ are true, then the following two conditions are satisfied:
\begin{align*}
    \textstyle\sum_{t\in \requests} \E[\alloconline[t] \mid \histend] \leq (1-\eps) \B
    \quad\text{and}\quad
    \sum_{t\in \requests} \alloconlinej > B_j.
\end{align*}
Now, since \Cref{alg:genericOnline} is \OnlHistInd{} by \Cref{obs:oblivious}, the tentative decisions $\{\decisiononline\}_{t\in[n]}$ are independent conditioned on $\histend$. Thus, since $(\alloconline[t'])_{t\in\requests}$ are bounded in $[0,1]$, by Bernstein's inequality, $\sum_{t\in\requests}\alloconline[t]$ concentrates about $\sum_{t\in[n]}\E[\alloconline \mid \histend]$, conditioned on any \offHist{} $\histend$. Moreover, since the allocations are independent and bounded for all $t\in[n]$, conditioning on any single request $\online_t$ and decision $\decisiononline$ does not significantly change this concentration bound. As a consequence, we can conclude that $\good \cap \be$ must happen with small probability.
For the full proof of \Cref{lem:concentrationBudgetExhaustion}, refer to \Cref{sec:resourceAlloc-missing}.

\paragraph{Putting everything together.}

Using the above results, \Cref{thm:resourceAlloc} follows readily.
    Indeed, as we observed in \eqref{eq:valDecomp} and \eqref{eq:valEarlyStopDecomp}, the value collected by \Cref{alg:genericOnline} can be lower bounded as follows:
    \begin{align*}
        \textstyle\E[\alg] 
        &\textstyle\geq p^{-1}\E[\algbb] \\
        &\quad\textstyle- \E[\sum_{t\in \requests} \valonline \1[\bad]]
        - \E[\sum_{t\in \requests} \valonline \pr[\good \cap \be \mid \histend, \online_t, \decisiononline]],
    \end{align*}
    We use \cref{lem:ScaledAlgValue} to bound the second term above, and
    for the final term, we apply the conditional probability bound from \Cref{lem:concentrationBudgetExhaustion}, so that
    \begin{align*}
        \textstyle\E[\sum_{t\in \requests}\valonline[t]\pr[\good \cap \be \mid \histend, \online_t, \decisiononline]] 
        \leq \textstyle \eps \E[\sum_{t\in \requests} \valonline[t]]
        = \textstyle \eps p^{-1} \E[\algbb],
    \end{align*}
    where the last equality follows from Property 1 of \Cref{lem:algConnectionProperties}.
    Taken together, these bounds imply that
    \[
        \textstyle\E[\alg] 
        \geq (1-\eps)p^{-1}\E[\algbb] 
        - 4\eps\E[\optR{\B}],
    \]
    as claimed.

\subsection{Bounding Value on the $\bad$ event: Proof of \Cref{lem:ScaledAlgValue}}
\label{sec:scaledAvgValue}

We now turn our attention to establishing \cref{lem:ScaledAlgValue}, the main technical challenge in proving \cref{thm:resourceAlloc}. We begin by establishing that the probability of $\bad$ is negligible. Then, we show that the expected tentative \emph{allocations} on the $\bad$ event are at most $\calO(\eps) \B$. Using these results together with a fluid characterization of $\E[\optR{\B}]$ allows us to establish the result.

\subsubsection{Bounding the probability of $\bad$}
The first ingredient in proving \Cref{lem:ScaledAlgValue} is the following result, which shows that the probability of the $\bad$ event must be small. 
\begin{Lemma}
\label{lma:bad-tiny}
    Let $\bad$ be the event defined in \Cref{def:goodEvent}.
    Then, 
$\pr[\bad] \leq \epsilon^2$.
\end{Lemma}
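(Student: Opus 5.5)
We prove $\pr[\bad]\le\eps^2$ with a union bound over the $m$ resources, applying Property 3 of \Cref{lem:algConnectionProperties} to each resource separately.

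Fix $j\in[m]$ and take $f(\decisionvar,\online)=a_j(\decisionvar)\in[0,1]$, the consumption of resource $j$. On the sample side, $\blackbox$ runs with budget $p(1-2\eps)\B$ and never over-allocates. Hence deterministically
\[
\textstyle X_j:=\sum_{i\in[\nsamples]} \allocsamplej \le p(1-2\eps)B_j .
\]
Write $Y_j:=p\sum_{t\in\requests}\E[\alloconlinej\mid\histend]$. Property 3, applied with parameters $\eta$ and $\delta'$, gives with probability at least $1-2\delta'$
\[
\textstyle Y_j \le X_j + \frac{5\log(1/\delta')}{\eta} + \eta X_j \le (1+\eta)p(1-2\eps)B_j + \frac{5\log(1/\delta')}{\eta}.
\]

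The event $\bad$ holds exactly when some $j$ has $Y_j> p(1-\eps)B_j$. So it suffices that, for each $j$, the right-hand side above is at most $p(1-\eps)B_j$. Choose $\eta=\eps/2$. Then $(1+\eps/2)(1-2\eps)\le 1-\tfrac32\eps$, so the slack is at least $\tfrac{\eps}{2}pB_j$, and we need
\[
\textstyle \frac{10\log(1/\delta')}{\eps}\le \frac{\eps}{2}pB_j .
\]
Set $\delta'=\eps^2/(2m)$. The required condition is $pB_j\ge 20\eps^{-2}\log(2m/\eps^2)$, which is $\calO(\eps^{-2}\log(m/\eps))$ and therefore holds under the budget hypothesis of \Cref{thm:resourceAlloc} with a suitable constant in the $\Omega$.

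The union bound over the $m$ resources then gives
\[
\pr[\bad]\le m\cdot 2\delta' = \eps^2 .
\]

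The step most likely to cause trouble is the bookkeeping of constants. The main checks are that $\eta X_j$ is controlled multiplicatively through the deterministic cap $X_j\le p(1-2\eps)B_j$, and that the additive term $\log(1/\delta')/\eta$ is absorbed by the slack $\Theta(\eps pB_j)$. The $\log(1/\eps)$ inside the budget requirement comes precisely from taking $\delta'\propto\eps^2/m$.
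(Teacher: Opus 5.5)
Your proposal is correct and follows the paper's proof essentially step for step. You apply Property 3 of \Cref{lem:algConnectionProperties} per resource with failure parameter $\eps^2/(2m)$, use the deterministic sample cap $p(1-2\eps)B_j$ to absorb the $\eta$-term with $\eta=\eps/2$, absorb the additive $\calO(\eps^{-1}\log(m/\eps))$ term into the remaining $\tfrac{\eps}{2}pB_j$ slack, and union bound over the $m$ resources. Your constants also check out: your requirement $pB_j\ge 20\eps^{-2}\log(2m/\eps^2)$ is implied by the paper's assumption $B_j\ge 40\log(2m/\eps)/(p\eps^2)$.
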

\begin{proof}
    Since $\alloconlinej,\allocsamplej \in [0,1]$ for all $j\in[m]$, taking a union bound over the $m$ resources and applying \Cref{lem:algConnectionProperties} with $\delta\leftarrow \nicefrac{\delta}{m}$ implies that, with probability at least $1-2\delta$, for any $\eta\in(0,1)$,
    \[
        \textstyle p \sum_{t \in \requests} \E[\alloconline \mid \histend] 
        \leq \frac{5\log(\nicefrac{m}{\delta})}{\eta} + (1+\eta) \sum_{i \in [\nsamples]} \allocsample.
    \]
    Since $\blackbox$ allocates at most $p (1-2\eps)\B$ on the sample, we therefore have that
    \begin{align*}
        \textstyle \sum_{t \in \requests} \E[\alloconline[t] \mid \histend] 
        \leq \frac{5\log(\nicefrac{m}{\delta})}{p\eta} + (1-2\eps + \eta)\B.
    \end{align*}
    Thus, choosing $\eta = \nicefrac{\eps}{2}$ and $\delta = \nicefrac{\eps^2}{2}$, and using the fact that $B_j \geq \frac{40\log(\nicefrac{2 m}{\eps})}{p\eps^2}$, with probability at least $1-\eps^2$,
    \begin{align*}
        \textstyle \sum_{t \in \requests} \E[\alloconline \mid \histend] 
        \leq \frac{20\log(\nicefrac{2 m}{\eps})}{p\eps} + (1-2\eps+\nicefrac{\eps}{2})\B 
        \leq  \nicefrac{\eps}{2}\B + (1-2\eps+\nicefrac{\eps}{2})\B
        = (1-\eps)\B.
    \end{align*}
    Thus, by \Cref{def:goodEvent}, the above implies that $\pr[\bad] \leq \eps^2$, as claimed.
\end{proof}

\subsubsection{Bounding the expected allocation on the $\bad$ event}

Note that \Cref{lma:bad-tiny} alone is not sufficient to establish \Cref{lem:ScaledAlgValue}. This is because $\valonline$ can be arbitrarily large relative to $\E[\optR{\B}]$ 
and is correlated with the $\bad$ event.
Indeed, consider a problematic scenario when the value collected by \Cref{alg:genericOnline} is positively correlated with the $\bad$ (pseudo) over-allocation event. It could be the case that, even though $\pr[\bad] \leq \eps^2$, the value collected on this event could be very large, e.g., $\E[\1[\bad] \cdot \sum_{t\in \requests} \valonline] \gg \pr[\bad] \E[\sum_{t\in \requests} \valonline]$. 

Fortunately, we are able to show that this is not the case. The  following is the key result that shows that the expected \emph{allocation} on the $\bad$ event must be small. The main idea here is to (i) use the fact that, by definition, the $\blackbox$ running in \Cref{alg:genericOffline} allocates at most $p (1-2\eps)\B$ units of resource, and (ii) apply Cauchy-Schwarz together with the second-moment bound from Property 2 of \Cref{lem:algConnectionProperties} on the difference between the allocations made online and on the offline sample.
\begin{Lemma}\label{lem:allocOnEvent}
    Let $\event$ be any event depending only on the \offHist{} $\histend$. Then, the allocations made by \Cref{alg:genericOnline} satisfy the following guarantees:
    \begin{align*}
        \textstyle\E[\1[\event]\sum_{t\in\requests} \alloconline]
        \leq 2\sqrt{\pr[\event]}\B.
    \end{align*}
    In particular, since $\bad$ depends only on $\histend$, we have 
$\textstyle\E[\1[\bad]\sum_{t\in\requests} \alloconline]
        \leq 2\eps\B.$
\end{Lemma}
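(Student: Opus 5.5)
Fix a resource $j\in[m]$ and work coordinatewise. Write $X := p\sum_{t\in\requests}\E[\alloconlinej \mid \histend]$ for the scaled conditional expected online consumption, and $Y := \sum_{i\in[\nsamples]}\allocsamplej$ for the offline consumption of $\blackbox$ on the sample.

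The plan is to bound the left-hand side of the lemma via $X$, split $X$ into $Y$ plus an error, and control the error by Cauchy--Schwarz.

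\textbf{Step 1 (reduce to $\histend$-measurable quantities).} Since $\event$ is $\histend$-measurable, the tower property gives
\[
\E\big[\1[\event]\textstyle\sum_{t}\alloconlinej\big] = p^{-1}\,\E[\1[\event]\, X].
\]

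\textbf{Step 2 (split into offline consumption plus error).} Write $X = Y + (X-Y)$. The feasibility of $\blackbox$ with budget $p(1-2\eps)\B$ gives $Y\le p(1-2\eps)B_j\le pB_j$ deterministically. Hence
\[
\E[\1[\event]\,Y]\le \pr[\event]\,pB_j\le \sqrt{\pr[\event]}\,pB_j,
\]
using $\pr[\event]\le 1$.

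\textbf{Step 3 (control the error).} By Cauchy--Schwarz,
\[
\E[\1[\event](X-Y)] \le \sqrt{\pr[\event]}\,\sqrt{\E[(X-Y)^2]}.
\]
Apply Property 2 of \Cref{lem:algConnectionProperties} with $f(\decisionvar,\online)=a_j(\decisionvar)\in[0,1]$. This gives
\[
\E[(X-Y)^2]\le 2\E[Y]\le 2pB_j.
\]
The error term is therefore at most $\sqrt{\pr[\event]}\sqrt{2pB_j}$.

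\textbf{Step 4 (combine and absorb).} Putting the pieces together and dividing by $p$,
\[
\E\big[\1[\event]\textstyle\sum_t\alloconlinej\big]\le \sqrt{\pr[\event]}\big(B_j+\sqrt{2B_j/p}\big).
\]
To reach the constant $2$, one needs $\sqrt{2B_j/p}\le B_j$, i.e.\ $B_j\ge 2/p$. This follows from the large-budget assumption $B_j=\Omega(p^{-1}\eps^{-2}\log(m/\eps))$ of \Cref{thm:resourceAlloc}.

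\textbf{Step 5 (the $\bad$ case).} The ``in particular'' statement follows by taking $\event=\bad$, which is $\histend$-measurable by \Cref{def:goodEvent}. By \Cref{lma:bad-tiny}, $\sqrt{\pr[\bad]}\le\eps$, which gives the bound $2\eps\B$.

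\textbf{Main obstacle.} The key step is Step 3, applying Property 2 of \Cref{lem:algConnectionProperties}. It is essential that the second-moment bound is in terms of $\E[Y]$, which is $O(pB_j)$, rather than something like $n$. The resulting error, after dividing by $p$, is of order $\sqrt{B_j/p}$, and this is lower order than $B_j$ exactly in the large-budget regime. The only care needed is the constant bookkeeping in Step 4, which is routine.
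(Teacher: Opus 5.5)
Your proposal is correct and follows the paper's proof essentially step for step. Both arguments use the tower-rule reduction to $\histend$-conditional expectations, split the consumption into the sample term $Y\le p(1-2\eps)B_j$ plus an error, control the error with Cauchy--Schwarz and Property 2 of \Cref{lem:algConnectionProperties}, and absorb $\sqrt{2B_j/p}$ into $B_j$ via $B_j\ge 2/p$. The only cosmetic difference is that you bound $\pr[\event]\le\sqrt{\pr[\event]}$ up front and drop the $(1-2\eps)$ factor, whereas the paper factors out $\sqrt{\pr[\event]}$ at the end; the final constant $2$ is the same.
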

\begin{proof}
    For any resource $j\in[m]$, we can decompose the allocations made by \Cref{alg:genericOnline} for $j$ as follows: Using the fact that the request indices $\requests$ and $\event$ are deterministic given $\histend$,
    \begin{align*}
        \textstyle\E[\1[\event] \sum_{t\in\requests}\alloconlinej[t]]
        &= \textstyle\E[\1[\event] \sum_{t\in \requests}\E[\alloconlinej[t] \mid \histend]]\\
        &=\textstyle p^{-1}\E[\1[\event] \sum_{i\in[\nsamples]}\allocsamplej]\\
        &\quad+\textstyle p^{-1}\E[\1[\event] (p\sum_{t\in \requests}\E[\alloconlinej[t] \mid \histend] - \sum_{i\in[\nsamples]}\allocsamplej)]
    \end{align*}
    The first term in this decomposition depends on the allocations of $\blackbox$ on the sample, while the second depends on the difference between the allocations on the sample and those online.
    Now, since $\blackbox$ allocates at most $p(1-2\eps)\B$ units of resources deterministically on the sample by definition,
    \[
        \textstyle p^{-1}\E[\1[\event]\sum_{i\in[\nsamples]} \allocsamplej]
        ~\leq~ \pr[\event](1-2\eps)B_j .
    \]
    For the remaining term, we apply Cauchy-Schwarz, so that
    \begin{align*}
        &\textstyle\E[\1[\event] (p\sum_{t\in\requests}\E[\alloconlinej[t] \mid \histend] - \sum_{i\in[\nsamples]}\allocsamplej)]\\
        &\leq \textstyle\sqrt{\pr[\event] \E[(p\sum_{t\in \requests}\E[\alloconlinej[t] \mid \histend] - \sum_{i\in[\nsamples]}\allocsamplej)^2]}.
    \end{align*}
    Now, since $\allocsamplej, \alloconlinej \in [0,1]$, we may apply Property 2 in \Cref{lem:algConnectionProperties}, and again using the fact that $\blackbox$ allocates at most $p (1-2\eps)\B$ units of resource,
    \begin{align*}
        \textstyle\E[(p\sum_{t\in [n]}\E[\alloconlinej[t] \mid \histend] - \sum_{i\in[\nsamples]}\allocsamplej)^2]
        \leq 2\E[\sum_{i\in[\nsamples]}\allocsamplej]
        \leq 2 p(1-2\eps) B_j.
    \end{align*}
    Hence, combining these three inequalities, and 
    using the large-budget assumption, in particular $B_j\geq 2/p$ for all $j$, we conclude, denoting $q=\pr[\event]$:
    \begin{align*}
        \textstyle\E[\1[\event]\sum_{t\in\requests}\alloconlinej]
        \leq q B_j + p^{-1}\sqrt{q \cdot 2 p (1-2\eps)B_j}
        = \sqrt{q} B_j\left(\sqrt{q} + \sqrt{\frac{2(1-2\eps)}{p B_j}}\right)
        \leq 2\sqrt{q} B_j,
    \end{align*}
    as claimed.
\end{proof}

\subsubsection{Bounding the expected value on $\bad$ via a fluid LP characterization of OPT}

To conclude our proof of \Cref{lem:ScaledAlgValue}, we prove the following stronger claim: fix any $\event$ which depends only on the offline learning phase of \Cref{alg:genericOffline}, i.e., $\histend$. Then, we will show that
\begin{align}\label{eq:valueOnEvent}
\textstyle\E [\1[\event] \cdot \sum_{t\in \requests} \valonline] \leq 4 \sqrt{\pr[\event]} \cdot \E[\optR{\B}].
\end{align}
Notice that, since $\bad$ depends only on $\histend$ and $\pr[\bad]\leq \eps^2$ by \Cref{lma:bad-tiny}, the above inequality implies the claimed result. 

The main challenge in proving \eqref{eq:valueOnEvent} is that the values $\valonline$ can be arbitrarily large, and both $\event$ and $\valonline$ depend on the sampling history $\histend$. We thus need to ensure that, when the $\event$ occurs, the expected value collected is negligible, at most $\calO(\sqrt{\pr[\event]}\E[\optR{\B}])$. We establish this in two steps. 
First, we introduce a fluid upper bound $\optfl$ on $\E[\optR{\B}]$ which is tight up to $(1\pm\eps)$ factors in the large-budget regime. Then, we characterize the expected value collected by \Cref{alg:genericOnline} on $\event$ through a feasible solution to this LP \emph{with the constraints scaled by a factor $2\sqrt{\pr[\event]}$}. This characterization allows us to establish the claimed bound.

\paragraph{Fluid characterization of $\E[\optR{\B}]$}
In the \preview{} setting, the solution $\optR{\B}$ to \eqref{program:ora-lp} is a random variable where each request $\online_t = (\val_t, \alloc_t, \decisionset) \sim \D_t$ is random. 
Because of this source of randomness, it will be convenient to introduce the following fluid LP:

\begin{equation}\tag{$\LPrafluid$}\label{program:ora-ub-alt}
    \begin{aligned}
        \optfl ~:=~ &\text{maximize}  &&\textstyle\sum_{t\in[n]} \E_{\online_t}[\sum_{\decisionvar\in\decisionset[t]} \valat[t] \xlp[t]]\\
        &\text{s.t.}  &&\textstyle\sum_{t\in[n]} \E_{\online_t}[\sum_{\decisionvar\in\decisionset[t]} \allocat[t] \xlp[t]] \leq \B\\
        &\forall t\in [n], \online_t\in\typespace[t],  &&\textstyle \sum_{\decisionvar\in\decisionset[t]} \xlp[t] \leq 1\\
        &\forall t\in [n], \online_t\in\typespace[t], \decisionvar\in\decisionset[t],  && \xlp[t] \geq 0\\
    \end{aligned}
\end{equation}

It is folklore that $\optR{\B}$ and $\optfl$ are equivalent in the large budget regime, up to $(1\pm\eps)$ multiplicative factors. 

\begin{restatable}{Observation}{restateFluidOptUpperBound}\label{obs:fluidOptUpperBound}
    For any $\eps \in (0,1)$,
assuming $B_j = \Omega(\eps^{-2} \log(\nicefrac{m}{\eps}))$,
    we have that, assuming the requests are generated according to the \preview{} model:
    \begin{align*}
        \optfl
        \geq \E[\optR{\B}] 
        \geq (1-\eps) \optfl.
    \end{align*}
\end{restatable}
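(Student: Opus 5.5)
The plan is to prove the two inequalities separately.

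\emph{Upper bound} $\optfl \ge \E[\optR{\B}]$. Fix, for every realization of $(\online_1,\ldots,\online_n)$, an optimal solution $z^*$ of \eqref{program:ora-lp}. Define the candidate fluid solution
\[
\xlp[t] := \E\big[\zstarlp[t] \,\big|\, \online_t\big],
\]
where the expectation is over the other requests $(\online_{t'})_{t'\neq t}$. Since the requests are independent, this is a function of $\online_t$ alone. Per-request feasibility, $\sum_{\decisionvar} \xlp[t] \le 1$ and $\xlp[t]\ge 0$, is inherited pointwise. For the budget constraint, apply the tower rule:
\[
\textstyle \sum_t \E_{\online_t}\big[\sum_{\decisionvar} \allocat[t]\,\xlp[t]\big] = \E\big[\sum_t\sum_{\decisionvar} \allocat[t]\,\zstarlp[t]\big] \le \B .
\]
The same tower computation shows that the objective equals $\E[\optR{\B}]$, which gives the inequality.

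\emph{Lower bound} $\E[\optR{\B}] \ge (1-\eps)\optfl$. Take an optimal fluid solution $x$ and scale it to $x' = (1-\eps/2)x$. Given the realized requests, build a randomized integral solution as follows: independently for each $t$, choose decision $\decisionvar$ with probability $x'_{t,\online_t,\decisionvar}$, and choose $\phi$ otherwise.
\begin{itemize}
\item The consumptions of each resource $j$ are then independent $[0,1]$-valued variables with mean at most $(1-\eps/2)B_j$.
\item By Bernstein/Chernoff, using $B_j = \Omega(\eps^{-2}\log(m/\eps))$, each resource is over budget with probability at most $\eps^2/m$. A union bound makes the event $\mathcal{E}$ that some resource is violated have probability at most $\eps^2$.
\item On $\mathcal{E}^c$ the rounded solution is feasible for \eqref{program:ora-lp}, so $\optR{\B} \ge \sum_t \val_t(\theta_t)\1[\mathcal{E}^c]$. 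Taking expectations,
\[
\E[\optR{\B}] \ge (1-\eps/2)\optfl - \E\big[\textstyle\sum_t \val_t(\theta_t)\1[\mathcal{E}]\big].
\]
\end{itemize}

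The main obstacle is that the values are unbounded and may correlate with the violation event $\mathcal{E}$, which is the same issue as Challenge C2. Two routes handle it.

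\emph{First route: conditioning on one request.} Bound the loss term by
\[
\textstyle\sum_t \E\big[\val_t(\theta_t)\,\pr[\mathcal{E}\mid \online_t,\theta_t]\big].
\]
Conditioning on a single request shifts each load by at most $1$. Since $B_j \gg 1/\eps$, the conditional violation probability is still at most $\eps/2$, exactly as in \Cref{lem:concentrationBudgetExhaustion}. This gives a loss of at most $(\eps/2)\optfl$, hence the factor $(1-\eps/2)^2 \ge 1-\eps$.

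\emph{Second route: truncation.} Replace the rounded solution on $\mathcal{E}$ by its truncation to the prefix of requests that still fit within the budget. Then apply the same single-request conditioning argument to the per-request event that the prefix before $t$ already exceeds $B_j - 1$.

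I would write the first route. The care needed is to verify that the Bernstein bound for the sum excluding $t$ still gives probability at most $\eps/(2m)$ per resource under the stated budget assumption.
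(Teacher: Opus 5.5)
Your proof is correct. The upper bound is exactly the paper's averaging argument. For the lower bound you take a somewhat different route from the paper's sketch.

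\textbf{The paper's route.} The paper does not round. It plugs the scaled optimal fluid solution in directly as a fractional solution $\hat z_{t,\theta}=x_{t,\gamma_t,\theta}$ of \eqref{program:ora-lp}. It then truncates this solution after the first time $\tau$ at which the running consumption of some resource exceeds $B_j-1$. Request $t$ keeps its value iff $\tau\ge t$, and this event is determined by $\gamma_1,\dots,\gamma_{t-1}$ alone. It is therefore independent of $\gamma_t$, so the value--violation correlation you identify (the analogue of C2) disappears automatically. Bernstein is needed only to show that $\pr[\tau<n]$ is small.

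\textbf{Your route.} You round independently and discard the entire solution on any violation. Because discarding everything is cruder than truncating a suffix, you then need the leave-one-out event $\{\exists j:\sum_{t'\neq t}a_{t',j}(\theta_{t'})>B_j-1\}$ to decouple $v_t(\theta_t)$ from the violation event. This is the same device as in \Cref{lem:concentrationBudgetExhaustion}, and it goes through as you describe.

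\textbf{What each buys.} The paper's truncation gets exact independence with no conditioning trick. Your version needs the extra leave-one-out step. In exchange, your rounded solution is integral, so you obtain a slightly stronger statement that the paper does not need: the expected hindsight \emph{integral} optimum is already at least $(1-\eps)\optfl$.

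\textbf{Remark on your second route.} Once you truncate at the first exhaustion time, no single-request conditioning is required. The event that the prefix before $t$ is already exhausted is independent of $(\gamma_t,\theta_t)$ by construction.
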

The upper bound on $\E[\optR{\B}]$ follows immediately by the standard averaging argument (e.g., \cite[Lemma 2.1]{devanur2019near}), noting that if $\zstarlp$ is an optimal solution \ref{program:ora-lp}, then $\xlpat{\online} = \E[\zstarlp \mid \online_t = \online]$ is feasible for \ref{program:ora-ub-alt} with value $\E[\optR{\B}]$. The lower bound follows from standard contention resolution schemes/randomized rounding with alterations arguments for packing LPs (e.g., \cite{bansal2012solving,chekuri2014contentionRes,chekuri2020ellone}). The main idea is to use an optimal solution $\xlpat{\online}$ to \ref{program:ora-ub-alt}$((1-\calO(\eps))\B)$ to define a feasible solution $\zlp$ to \ref{program:ora-lp}$(\B)$ on requests $(\online_t)_{t\in[n]}$ by taking $\zhlp = \xlp$, truncating the terms after the time $\tau$ that allocation of $\zhlp$ exhausts the budget of one of the resources, i.e., $\zlp = \zhlp \1[t \leq \tau]$, and applying Bernstein concentration bounds to show that truncation happens with negligible probability (under the large-budgets assumption).

\paragraph{Expressing the value collected on $\event$ via \eqref{program:ora-ub-alt}.}
The key ingredient in the proof of \Cref{lem:ScaledAlgValue} is demonstrating an LP solution to a rescaled version of \ref{program:ora-ub-alt}$(\B)$ where the constraints are scaled by $\sqrt{\pr[\event]}$, and showing that this scaled solution is feasible. In particular, we will establish the following:

\begin{Lemma}\label{lem:scaledLPSol}
Let $\event$ be any event depending only on the \offHist{} $\histend$. Then, define, for all $t\in [n]$, $\online = (\val, \alloc, \decisionset[t])\in\typespace[t]$, $\decisionvar \in \decisionset[t]$:
\begin{align*}
\ylpbb[t]{\online} := \pr[\event, \decisiononline = \decisionvar \mid \online]
\end{align*}
Denote $q := \pr[\event]$. 
Then, assuming $q>0$, the variables $\nicefrac{\ylpbb[t]{\online}}{2\sqrt{q}}$ for $t\in[n]$, $\online\in\typespace[t]$, and $\decisionvar\in\decisionset[t]$ form a feasible solution to \ref{program:ora-ub-alt}$(\B)$.
In particular,
\begin{enumerate}
    \item[(a)] $\sum_{t\in[n]} \E[\sum_{\decisionvar\in\decisionset[t]} \allocat[t]\ylpbb[t]{\online_t}] \leq 2\sqrt{q}\cdot \B$.
    \item[(b)] $\sum_{\decisionvar\in\decisionset} \ylpbb[t]{\online_t} \leq q$ for all $t\in[n]$ and $\online_t\in\typespace[t]$.
\end{enumerate}
\end{Lemma}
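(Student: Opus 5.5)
The plan is to verify the two properties (a) and (b) directly from the definition of $\ylpbb[t]{\online}$. Feasibility of the rescaled variables $\nicefrac{\ylpbb[t]{\online}}{2\sqrt{q}}$ then follows by dividing through by $2\sqrt{q}$. For the per-request constraint we also need $q/(2\sqrt q)=\sqrt q/2\le 1$, which holds since $q\le 1$. Nonnegativity is immediate because the $y$'s are probabilities.

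For (b), fix $t$ and $\online_t=\online$. The events $\{\event,\decisiononline=\decisionvar\}$ for distinct $\decisionvar\in\decisionset[t]$ are disjoint, so
\[
\textstyle\sum_{\decisionvar}\ylpbb[t]{\online}=\pr[\event\mid\online_t=\online].
\]
The event $\event$ depends only on $\histend$, and in the \preview{} model $\histend$ is a function of the samples, $\pi$, and the offline coins, all of which are independent of the fresh request $\online_t$. Hence $\pr[\event\mid\online_t=\online]=\pr[\event]=q$.

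For (a), unfold the expectation over $\online_t\sim\D_t$. For each resource $j$,
\[
\textstyle\E_{\online_t}\bigl[\sum_{\decisionvar}\allocatj[t]\,\ylpbb[t]{\online_t}\bigr]=\E\bigl[\1[\event]\,\alloconlinej[t]\bigr].
\]
This identity holds because $\allocatj[t]$ is determined by $\online_t$ and $\decisionvar$, and we are summing $\alloc_t(\decisionvar)\,\pr[\event,\decisiononline=\decisionvar\mid\online_t]$ over $\decisionvar$ and then averaging over $\online_t$. Summing over $t\in\requests$ gives $\E[\1[\event]\sum_t\alloconline]$. By \Cref{lem:allocOnEvent}, which applies precisely to events depending only on $\histend$, this is at most $2\sqrt{q}\,\B$. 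Note that $\alloconline$ here is the \emph{tentative} allocation used in \Cref{lem:allocOnEvent}, and the definition of $y$ also uses the tentative decision $\decisiononline$, so the two quantities match.

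There is little real obstacle. The only care points are (i) the independence of $\online_t$ from $\histend$, used in (b), and (ii) checking that the tower-property rewriting in (a) is legitimate when decision sets are infinite. For (ii), I would phrase the sum over $\decisionvar$ as an integral against the conditional law of $\decisiononline$ given $\online_t$, or assume finite decision sets, as the LP notation implicitly does.
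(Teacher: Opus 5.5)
Your proposal is correct and follows essentially the same route as the paper. Part (b) uses disjointness of the decision events together with the independence of $\histend$ (hence $\event$) from the fresh request $\online_t$. Part (a) uses the tower rule to reduce the LP constraint to $\E[\1[\event]\sum_{t}\alloconline]$ and then applies \Cref{lem:allocOnEvent}. Your explicit checks of nonnegativity and of $\sqrt{q}/2\le 1$ for the per-request constraint supply small feasibility details that the paper leaves implicit.
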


Using \Cref{lem:scaledLPSol}, \Cref{lem:ScaledAlgValue} follows readily by the tower rule of expectations. Indeed, we first rewrite the expected value collected by allocations $\decisiononline[t]$ on $\event$ as:
\begin{align*}
    \textstyle\E[\1[\event] \sum_{t\in \requests} \valonline]
    &=  \textstyle\sum_{t\in[n]} \E[\E[\sum_{\decisionvar \in \decisionset[t]}\valat[t] \1[\event,\decisiononline = \decisionvar] \mid \online_t]]\\
    &=  \textstyle\sum_{t\in[n]} \E[\sum_{\decisionvar \in \decisionset[t]}\valat[t] \pr[\event, \decisiononline = \decisionvar \mid \online_t]]\\
    &=  \textstyle\sum_{t\in[n]} \E[\sum_{\decisionvar \in \decisionset[t]}\valat[t] \ylpbb[t]{\online_t}].
\end{align*}
Now, if $q=0$, notice that $\ylpbb[t]{\online_t}=0$ (since $\online_t$ is independent of $\histend$ and thus also $\event$), so the claimed result is immediate. Otherwise, if $q>0,$
combining these observations and applying \Cref{lem:scaledLPSol}, we conclude:
\begin{align*}
    \textstyle\E[\1[\event] \sum_{t\in\requests} \valonline]
    =  2\sqrt{q}\textstyle\sum_{t\in[n]} \E[\sum_{\decisionvar \in \decisionset[t]}\valat[t] \nicefrac{\ylpbb{\online_t}}{2\sqrt{q}} ]
    \leq 2\sqrt{q}\optfl.
\end{align*}
Then, using \Cref{obs:fluidOptUpperBound} to upper-bound $\optfl \leq 2\E[\optR{\B}]$ yields the claimed result.
Thus, we conclude by establishing \Cref{lem:scaledLPSol}.

\paragraph{Bounding the expected allocation of the LP solution: Proof of \Cref{lem:scaledLPSol}(a)}
Now, we start to prove \Cref{lem:scaledLPSol}. Begin by noticing, by definition of $\ylpbb[t]{\online_t}$, and applying the tower rule of expectations, we have that:
\begin{align*}
    \textstyle\sum_{t\in[n]}\E[\sum_{\decisionvar\in\decisionset[t]}\allocatj[t]\ylpbb{\online_t}]
    &= \textstyle\sum_{t\in[n]}\E[\sum_{\decisionvar\in\decisionset[t]}\allocatj[t]\E[\1[\event, t\in\requests,\decisiononline[t] = \decisionvar] \mid \online_t]]\\
    &= \textstyle\sum_{t\in[n]}\E[\1[\event] \sum_{\decisionvar\in\decisionset[t]}\allocatj[t]\1[t\in\requests,\decisiononline[t] = \decisionvar]]\\
    &= \textstyle\E[\1[\event] \sum_{t\in\requests}\alloconlinej[t]].
\end{align*}
Now, by \Cref{lem:allocOnEvent}, the above expression is bounded as
\begin{align*}
    \textstyle\sum_{t\in[n]}\E[\sum_{\decisionvar\in\decisionset[t]}\allocatj[t]\ylpbb{\online_t}]
    \leq 2\sqrt{\pr[\event]} B_j
    = 2\sqrt{q} B_j
\end{align*}
This establishes the claim.

\paragraph{Bounding the allocation probability of the LP solution: Proof of \Cref{lem:scaledLPSol}(b)}
We begin by noting, for any fixed $t\in[n]$ and $\online_t = (\val_t, \alloc_t, \decisionset[t])\in\typespace[t]$, since $\blackbox$ selects a single decision,
\begin{align*}
    \textstyle\sum_{\decisionvar\in\decisionset[t]} \ylpbb[t]{\online_t}
    = \sum_{\decisionvar\in\decisionset[t]} \pr[\event,\decisiononline = \decisionvar \mid \online_t]
    = \pr[\event \mid \online_t] .
\end{align*}
Now, in the \preview{} model, notice that $\histend$, and hence also $\event$, is independent of $\online_t$. Therefore, the above simplifies to
\begin{align*}
    \textstyle\sum_{\decisionvar\in\decisionset[t]} \ylpbb[t]{\online_t}
    \leq \pr[\event]
    \leq q,
\end{align*}
as claimed.

\subsection{Bounding probability of $\be$: Proof of \cref{lem:concentrationBudgetExhaustion}}
\label{sec:resourceAlloc-missing}

We show that, on the $\good$ (pseudo) allocation event, the probability of budget exhaustion $\be$ is small, even when we condition on the sample history $\histend$ of \Cref{alg:genericOffline}.

\begin{proof}[Proof of \Cref{lem:concentrationBudgetExhaustion}]
Since $\be$ depends on $\online_t$, it will be convenient to introduce the following modification to the $\be$ event, $\bet$, which is the event of budget exhaustion when the allocation $\alloconline$ at time $t$ is ignored and $\B$ is replaced by $\B-\1$:
    \begin{align*}
        \textstyle\bet = \{\exists j\in[m] : \sum_{t'\in \requests\setminus\{t\}} \alloconlinej[t'] > B_j - 1\}.
    \end{align*}
    Notice that, since $\alloconlinej \in [0,1]$, if $\be$ occurs, then so must $\bet$, i.e., $\be \subseteq \bet.$ Therefore, it suffices to bound the following probability:
    \begin{align*}
        \pr[\good \cap \be \mid \histend, \online_t, \decisiononline]
        \leq \pr[\good \cap \bet \mid \histend, \online_t, \decisiononline].
    \end{align*}
To do this, observe that $\good \cap \bet$ is independent of the current request $\online_t$ and decision $\decisiononline$ conditioned on $\histend$. This follows since $\good$ is deterministic given $\histend$ by \cref{def:goodEvent}, and $\bet$ is defined after removing the allocation at time $t$, and the decisions of the algorithm are \OnlHistInd{} (\Cref{obs:oblivious}). Thus,
\begin{align*}
    \pr[\good \cap \bet \mid \histend, \online_t, \decisiononline]
    = \pr[\good \cap \bet \mid \histend] .
\end{align*}
We conclude by showing that $\pr[\good \cap \bet \mid \histend] \leq \eps$. Indeed, recall that, by \Cref{def:goodEvent,def:budgetExhaustion}, $\good\cap\bet$ implies that, for some resource $j\in[m]$,
\begin{align*}
    \textstyle\sum_{t\in \requests} \E[\alloconline[t] \mid \histend] \leq (1-\eps) \B
    \quad\text{and}\quad
    \sum_{t'\in \requests\setminus\{t\}} \alloconlinej[t'] > B_j - 1.
\end{align*}
Therefore, under $\good\cap \bet$, for some $j\in[m]$, using the above and the fact that $\alloconlinej \in [0,1]$,
\begin{align*}
     \textstyle\sum_{t'\in \requests} \alloconlinej[t'] - \E[\alloconlinej[t'] \mid \histend]
     &\geq\textstyle \sum_{t'\in \requests\setminus\{t\}} \alloconlinej[t'] - \sum_{t'\in \requests}\E[\alloconlinej[t'] \mid \histend]\\
     &\geq\textstyle B_j - 1 - \sum_{t'\in \requests}\E[\alloconlinej[t'] \mid \histend]\\
     &\geq\textstyle \eps B_j - 1.
\end{align*}
Now, since $B_j \geq 16\log(\nicefrac{2m}{\eps})\cdot\eps^{-2}$ and $B_j \geq \sum_{t'\in \requests} \E[\alloconlinej[t'] \mid \histend]$ (since $\good$ holds), we conclude that $\good\cap\bet$ implies
\begin{align*}
     \textstyle\sum_{t'\in \requests} \alloconlinej[t'] - \E[\alloconlinej[t'] \mid \histend]
     &\textstyle\geq \frac{\eps}{2} \frac{16\log(\nicefrac{2m}{\eps})}{\eps^2} - 1 + \frac{\eps}{2} \sum_{t'\in \requests} \E[\alloconlinej[t'] \mid \histend]\\
     &\textstyle> \frac{6\log(\nicefrac{2 m}{\eps})}{\eps} + \frac{\eps}{2} \sum_{t'\in \requests} \E[\alloconlinej[t'] \mid \histend].
\end{align*}
Finally, using the fact that $(\alloconlinej)_{t\in\requests}$ are bounded on $[0,1]$ and independent conditioned on the \offHist{} $\histend$ (by the $\OnlHistInd{}$ property in \cref{obs:oblivious}), we may apply Bernstein's inequality (specifically, \cref{cor:concentrationWithoutReplacement}) to conclude:
\begin{align*}
    &\pr[\good \cap \be \mid \histend]\\
    &\leq\pr[\good \cap \bet \mid \histend]\\
    &\leq \textstyle\pr\Big[\exists j\in[m] : \sum_{t\in\requests} \alloconlinej[t] > \frac{6\log(\nicefrac{2 m}{\eps})}{\eps} + (1+\frac{\eps}{2}) \sum_{t\in\requests} \E[\alloconlinej[t] \mid \histend] \mid \histend \Big]
    \leq \eps.  \qedhere
\end{align*}    
\end{proof}

\subsection{Online Resource Allocation with Pricing Queries}
\label{subsec:PricingQuery}

We show that, for online resource allocation, instead of observing the full information in each sample $\sample_t$, \Cref{alg:genericOnline} can use only the response to one \emph{pricing query} per $\sample_t$. The pricing query model, initiated for large-supply online resource allocation by \cite{TW-26}, is motivated by \emph{item-pricing algorithms}: before each request arrives, an item-pricing algorithm posts a price for each resource, and the buyer responds by choosing her favorite decision, namely, the decision that maximizes her utility---the difference between her value and her payment.
Item-pricing algorithms are desirable for their ease of implementation and because they directly yield truthful online mechanisms. We give a formal definition of item-pricing algorithms:

\begin{Definition}[Item-pricing algorithm]
\label{def:item-pricing}
    An online resource allocation algorithm is an \emph{item-pricing algorithm} if, before each request $\online_t$ arrives, it posts a per-unit price vector $\bprice_t\in[0,\infty]^m$. The buyer then chooses a utility maximizing decision $\decisiononline[t]\in\arg\max_{\decisionvar\in\decisionset[t]}
        \left\{\valat[t]-\langle\bprice_t,\allocat[t]\rangle\right\}$,
    receives the allocation $\alloconline[t]$, and pays $\langle\bprice_t,\alloconline[t]\rangle$.  When multiple decisions have the same utility,  we assume buyer $t$'s (possibly randomized) tie-breaking rule is independent to other buyers.
\end{Definition}

We study scenarios, where we may only have pricing-query access to each sample, rather than access to the full sample. More specifically, pricing-query access is defined as follows:

\begin{Definition}[Pricing query]
    A \emph{pricing query} to a sample $\sample_t=(\sampleval_t,\samplealloc_t,\decisionsetsample[t])$ specifies a price vector $\bprice\in[0,\infty]^m$, and the sample returns a decision
$\decisionsample[t]\in\arg\max_{\decisionvar\in\decisionsetsample[t]}
        \left\{\sampleval_t(\decisionvar)-\langle\bprice,\samplealloc_t(\decisionvar)\rangle\right\}$,
    together with its resource consumption $\allocsample[t]$, with the tie-breaking rule specified by \Cref{def:item-pricing}. The query does not reveal the sample's valuations, including the value $\valsample[t]$ of the chosen decision. 
\end{Definition}

Our main result in this subsection is to show that, given an estimate of $\opt$,  the sample access in \Cref{thm:resourceAlloc} can be further reduced to only one pricing query per sample. To achieve this, it suffices to provide a near-optimal \ro{} online algorithm that relies only on pricing queries. Fortunately, such an RO algorithm is known.

\begin{Lemma}[Theorem 4.3 of \cite{GSW-STOC25}]
    \label{lma:ora-pricing-query}
There exists an online resource allocation algorithm with the following guarantee. Given an estimate $\widehat \opt$ satisfying $\widehat \opt \in [\optR[\samples]{\alpha\B}/\beta, \optR[\samples]{\alpha\B}]$, when run on samples $(\sample_i)_{i\in\samples}$ in \ro{} with budget $\alpha\B$, where $\alpha\in(0,1]$ and $\alpha B_j = \Omega(\log(\nicefrac{\beta m}{\eps})\cdot\eps^{-2})$ for all $j\in[m]$, the algorithm returns an integral online allocation
with expected value
    \[
        \E[\algbbat[\samples]{\alpha \B}\mid (\sample_t)_{t\in\samples}]
        \geq
        (1-\eps)\optR[\samples]{\alpha\B},
    \]
    where the expectation is over the random arrival order and any internal randomness of the algorithm. Furthermore, the algorithm is an item-pricing algorithm and uses only one pricing query per sample $\sample_i$.
\end{Lemma}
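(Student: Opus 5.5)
The plan is a dual-price (multiplicative-weights) random-order algorithm, in the spirit of Agrawal--Devanur, in which the only information used is the decision and consumption vector returned by each pricing query. The value of the chosen decision is never observed.

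\textbf{Algorithm.} Fix the scaled budget $\alpha\B$ and write $\hopt$ for the given estimate.
\begin{enumerate}
\item Maintain weights $w_{i,j}$ over $m$ ``resource experts'' plus one dummy expert.
\item Before sample $i$ arrives, post prices $\lambda_{i,j}=\kappa\cdot \frac{\hopt}{\alpha B_j}\cdot w_{i,j}/\sum_{j'}w_{i,j'}$, where $\kappa=\Theta(1)$ is a constant to be tuned (possibly using a guess of $\hopt$ refined within a factor $\beta$).
\item Issue the single pricing query to $\sample_i$ with these prices and observe its consumption $\allocsample[i]$.
\item Feed each expert $j$ the gain $\tilde a_{i,j}(\decisionsample[i])\cdot \frac{s}{\alpha B_j}$ and update by Hedge with learning rate $\eps$.
\item Stop, taking the null action forever after, once any resource reaches $\alpha B_j$.
\item Spend the first $\eps s$ samples as a ``burn-in'' in which we accept nothing. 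This costs only an $\eps$-fraction of $\optR[\samples]{\alpha\B}$ in expectation by random order.
\end{enumerate}
The query itself is the item-pricing decision, so the algorithm is item-pricing and uses one query per sample.

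\textbf{Analysis.} Let $x^*$ be the optimal fractional solution of \ref{program:ora-lp} on the sample with budget $\alpha\B$. At step $i$ the remaining samples are a uniformly random subset, so $x^*$ restricted to sample $\pi(i)$ gives in expectation value $\approx \optR[\samples]{\alpha\B}/s$ and consumption $\approx \alpha\B/s$. The buyer maximizes $\val-\langle\bprice_i,\alloc\rangle$. Comparing its choice against $x^*$ gives, per step,
\[
\E[\tilde v_i-\langle\bprice_i,\tilde\alloc_i\rangle\mid \text{past}]\ \ge\ \frac{\optR[\samples]{\alpha\B}}{s}-\Big\langle\bprice_i,\frac{\alpha\B}{s}\Big\rangle\ \ge\ \frac{\opt-\kappa\hopt}{s}.
\]
Summing over steps, the total value is at least $\sum_i\langle\bprice_i,\tilde\alloc_i\rangle+(1-\kappa)\opt$, up to martingale error. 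The Hedge regret bound, with gains in $[0,s/(\alpha B_j)]$, says that $\sum_i\langle\bprice_i,\tilde\alloc_i\rangle$ is at least $\kappa\hopt\,(\max_j \text{normalized consumption of } j)$ minus a regret of order $\eps\hopt\cdot\log(m)/(\eps^2\alpha B_j)$. Hence either no resource is exhausted and the value is at least $(1-\calO(\eps))\opt$, or some resource is nearly exhausted and the price payments alone certify value $\gtrsim \kappa\hopt$.

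The estimate enters here. Because $\hopt\ge\opt/\beta$, a constant multiplicative slack in $\kappa$ across a $\log\beta$-size grid costs only a $\log\beta$ factor inside the union/regret bound, which explains the $\log(\beta m/\eps)$ in the budget condition. Freedman's inequality then converts the expectations to high-probability bounds: the increments are bounded and the conditional variances are bounded by $\alpha B_j/s$ per step. This requires $\alpha B_j=\Omega(\eps^{-2}\log(\beta m/\eps))$.

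\textbf{Main obstacle.} The hardest step is that values are unobserved and may be unbounded, so we cannot track the objective or compare the algorithm's value directly. I would handle this by never measuring value. The buyer's best-response inequality against $x^*$ lower-bounds the value implicitly through the Lagrangian, and only consumption enters the Hedge feedback. Some care is needed with the stopping time, since value collected after budget exhaustion is lost. I would handle this with the standard argument that, with probability $1-\eps$, no resource exceeds $(1-\eps)\alpha B_j$ before the last $\eps s$ steps, given the prices are correctly scaled by $\hopt$.
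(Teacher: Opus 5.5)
The paper gives no proof of this lemma; it is imported from \cite{GSW-STOC25}. So your argument has to stand on its own, and it breaks at the one point that makes the statement nontrivial: turning an estimate accurate only to within a factor $\beta$ into a $(1-\eps)$-approximation while paying only $\log\beta$ in the budget. Write $\opt:=\optR[\samples]{\alpha\B}$. Your prices are $\kappa\hopt/(\alpha B_j)$ times a point of the simplex, so they never exceed $\kappa\hopt/(\alpha B_j)$, and nothing in the algorithm can correct an underestimate.

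Here is a concrete failure. Take one resource with budget $b=\alpha B$, and a sample with $b$ requests of value $H$ and $N\gg b$ requests of value $L=2\kappa H/\beta$. Each request consumes one unit, and $\hopt=\opt/\beta$. Then $\opt=bH$, and every posted price is at most $\kappa H/\beta<L$. So every queried buyer purchases until the budget runs out, and in random order you collect at most about $bL+b^2H/N=(2\kappa/\beta+b/N)\opt$. For the constant $\beta=20\kappa$ this is at most $(1/10+b/N)\opt$, however large $b$ is. Your own dichotomy shows the same thing: in the exhausted case, the payments certify only $\kappa\hopt\ge\kappa\opt/\beta$.

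The alternatives you mention do not help either. Choosing $\kappa\approx\beta$, so that $\kappa\hopt\ge\opt$ as the Agrawal--Devanur argument needs, fails because the Hedge error scales with the price level; the standard analysis then requires $\alpha B_j\gtrsim\beta^2\eps^{-2}\log m$. The ``$\log\beta$-size grid'' is not an implementable algorithm. With one query per sample and a single allocation you cannot run several scales in parallel, and since values are never observed you cannot identify the right scale afterwards.

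The missing idea is a price \emph{level} learned online from the only feedback available, namely consumption. For instance, you could use unnormalized exponential prices for each resource. Start them at the level given by $\hopt$ and multiply by $e^{\Theta(\eps)}$ per unit consumed beyond the pace $\alpha B_j/s$. Rising by a factor $\mathrm{poly}(\beta m/\eps)$ then costs only $O(\eps^{-1}\log(\beta m/\eps))$ units. That is an $O(\eps)$-fraction of $\alpha B_j$ exactly when $\alpha B_j=\Omega(\eps^{-2}\log(\beta m/\eps))$, which is the natural source of that term in the hypothesis. You would still have to show two things: little value is lost while prices are below the right level, and prices overshoot only once a resource is nearly exhausted.

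Separately, your accounting does not close even for $\beta=1$. Your bound is value $\ge\sum_i\langle\bprice_i,\tilde\alloc_i\rangle+\opt-\kappa\hopt$. In the non-exhausted case this gives nothing near $(1-\eps)\opt$ once $\kappa\hopt$ is comparable to $\opt$, which the exhausted case requires. Agrawal--Devanur run Hedge on the centered gains $\tilde a_{i,j}/(\alpha B_j)-1/s$ with a zero-gain dummy expert. That way the payments cancel $\sum_i\langle\bprice_i,\alpha\B/s\rangle$ up to regret. With your uncentered gains this cancellation is lost; those gains are also scaled by $s$, which is far too large for Hedge with learning rate $\eps$.
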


Then, as an immediate corollary, we obtain the desired $(1-\epsilon)$-competitive algorithms in the \preview{} setting, as well as the \sspi{} (\Cref{thm:sspiInformal}) and \ows{} (\Cref{thm:owsInformal}) settings, via an item-pricing algorithm using only one pricing query per sample:

\begin{Corollary}\label{cor:ora-pricing-query}
    Suppose we are given an estimate $\hopt \in [\optR[\samples]{p(1-2\eps)\B}/\beta, \optR[\samples]{p(1-2\eps)\B}]$ which is independent of the samples $(\sample_t)_{t\in\samples}$. Run \Cref{alg:genericOnline} with $\blackbox$ being the \ro{} algorithm in \Cref{lma:ora-pricing-query}, instantiated with budget $p(1-2\eps)\B$, where each $B_j = \Omega(p^{-1}\eps^{-2}(1 + \log(\nicefrac{\beta m}{\eps})))$. Then, using only one pricing query per sample, we have
    \[
        \E[\alg]
        ~\geq~ (1-\eps)^2 p^{-1}\cdot\E[\optR[\samples]{p(1-2\eps)\B}] - 4\eps \E[\optR{\B}]
        ~\geq~ (1-\calO(\eps))\cdot\E[\optR{\B}].
    \]
\end{Corollary}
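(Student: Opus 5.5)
This is a direct instantiation of \Cref{thm:resourceAlloc} with the pricing-query \ro{} algorithm of \Cref{lma:ora-pricing-query} as $\blackbox$, with $\alpha = p(1-2\eps)$. There are two things to establish: \Cref{alg:genericOnline} still uses one pricing query per sample, and the chain of inequalities holds.

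\textbf{Query access.} In \Cref{alg:genericOffline}, the $\blackbox$ of \Cref{lma:ora-pricing-query} is an item-pricing algorithm. At each step $i$ it posts a price $\bprice$ that depends only on $\hist_{\pi(i)}$ (and $\hopt$). It then needs only the response $(\decisionsample, \allocsample)$ to update its state. Hence \Cref{alg:genericOffline} issues exactly one pricing query per sample, and $\hist_{\pi(i)}$ consists of query outcomes only. Online, the tentative decision $\decision[t]{\online_t}$ (or $\decision[T_t]{\online_t}$) is just buyer $t$'s utility-maximizing response to the price that $\blackbox$ would post given that history. So \Cref{alg:genericOnline} is itself an item-pricing algorithm, posting price $\infty$ after $\tterm$ to force $\phi$.

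\textbf{Preserving the tangent structure.} Buyers' tie-breaking is independent across buyers by \Cref{def:item-pricing}. Also, $\hopt$ is independent of the samples, so we may condition on it throughout as fixed offline randomness. Under this conditioning, \Cref{obs:oblivious} and \Cref{lem:algConnectionProperties} go through verbatim. Therefore \Cref{thm:resourceAlloc} applies, giving
\[
\E[\alg]\ge (1-\eps)p^{-1}\E[\algbb]-4\eps\E[\optR{\B}].
\]

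\textbf{The inequality chain.} By \Cref{lma:ora-pricing-query}, applied conditionally on the samples with $\alpha=p(1-2\eps)$, we have $\E[\algbb]\ge(1-\eps)\E[\optR[\samples]{p(1-2\eps)\B}]$. Its hypothesis is $\alpha B_j=\Omega(\eps^{-2}\log(\beta m/\eps))$, which follows from the assumed $B_j=\Omega(p^{-1}\eps^{-2}(1+\log(\beta m/\eps)))$ since $\eps\le 1/2$. This gives the first inequality. For the second, apply \Cref{lem:lpRescaling}, whose budget condition is implied by ours. This yields
\[
(1-\eps)^2p^{-1}\cdot p(1-5\eps)\E[\optR{\B}]-4\eps\E[\optR{\B}]=(1-\calO(\eps))\E[\optR{\B}].
\]

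\textbf{Main obstacle.} The only delicate point is the conditioning. The guarantee of \Cref{lma:ora-pricing-query} needs $\hopt$ to be within a factor $\beta$ of $\optR[\samples]{p(1-2\eps)\B}$, which is a sample-dependent quantity. At the same time, the tangent argument needs $\hopt$ not to correlate with the samples beyond being fixed offline randomness. The stated independence assumption resolves this: conditioned on $\hopt$, the samples keep their distribution, the range condition holds by hypothesis, and \Cref{thm:resourceAlloc} is applied to this conditional world before averaging over $\hopt$.
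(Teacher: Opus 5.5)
Your proposal is correct and matches the paper's argument. The paper treats this as an immediate corollary: it plugs the pricing-query algorithm of \Cref{lma:ora-pricing-query} into \Cref{thm:resourceAlloc}, with the sample-independent $\hopt$ acting as fixed offline randomness, and then applies \Cref{lem:lpRescaling}, exactly as in \Cref{obs:goodRO-ORA}.

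One side claim of yours is slightly off, though it is not needed for the stated bound. Unmodified \Cref{alg:genericOnline} is not literally item-pricing, because at step $\tterm$ it voids the buyer's choice after seeing it. This is why the paper instead posts infinite prices as soon as some resource has $B_j-1$ units allocated, and remarks that the analysis survives this change.
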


We remark that, to ensure that the algorithm in \Cref{cor:ora-pricing-query} is an item-pricing algorithm, we should modify \Cref{alg:genericOnline} as follows: instead of voiding an action that would cause budget exhaustion, we should terminate allocation, or equivalently set the price of every item to infinity, as soon as at least $B_j-1$ units of some resource $j$ have been allocated. The proof of \Cref{thm:resourceAlloc} still holds with minor adjustments. We also note that assuming access to an estimate of $\opt$ is necessary in the pricing query model, since learning an estimate given only polynomially many pricing queries is impossible (see \cite[Section 3]{TW-26}).

\section{Applications to Load Balancing and Mixed Packing-Covering}

We now show how to apply the framework from \Cref{sec:generalFramework} to Packing-Covering Multiple-Choice LPs. Then, we show how to apply these results to the Generalized Load Balancing problem.

\subsection{Packing-Covering Multiple-Choice LPs}
\label{sec:packingCovering}

We show that our framework from \Cref{sec:generalFramework} further applies to Online Packing-Covering Multiple-Choice LPs. Recall that, in this setting, we are given packing and covering constraints $\Bpacking, \Bcovering \in \R_{\geq 0}^m$, and $n$ requests arrive in an arbitrary order. Each request $\online_t = (\alloc_t, \decisionset)$ must be served by some decision $\decisionvar \in \decisionset$, which generates a load $\alloconline = (\packingonline, \coveringonline)$, where $\packingonline, \coveringonline \in [0,1]^m$ correspond to the packing and covering loads, respectively. Unlike in the Resource Allocation setting of \cref{sec:resourceAlloc}, we do not require that a null decision $\phi$ with $\packing_t(\phi) = \0 = \covering_t(\phi)$ be feasible. The goal is to approximately satisfy the packing and covering constraints $\Bpacking, \Bcovering$ after all requests have been served, i.e., to find an approximate online integral solution to the following program:
\begin{equation}\tag{$\IPpc$}\label{program:opc-lb}
    \begin{aligned}
        &\text{find}  &&\textstyle (\xip[t])_{t\in \requests, \decisionvar\in\decisionset[t]}\\
        &\text{s.t. } \textit{Packing}: &&
        \textstyle\sum_{t\in\requests}
        \sum_{\decisionvar\in\decisionset[t]}
        \packingat[t]\xip[t]
        \leq \Bpacking\\
        &\textit{Covering}: &&
        \textstyle\sum_{t\in\requests}
        \sum_{\decisionvar\in\decisionset[t]}
        \coveringat[t]\xip[t]
        \geq \Bcovering\\
        &\forall t\in \requests,  &&\textstyle \sum_{\decisionvar\in\decisionset[t]} \xip[t] = 1\\
        &\forall t\in \requests, \decisionvar\in\decisionset[t],  && \xip[t] \in [0,1]
    \end{aligned}
\end{equation}
In particular, we want to, with high probability, find an \emph{$\calO(\eps)$-feasible} integral solution: a solution to \ref{program:opc-lb} which satisfies the packing constraints $(1+\calO(\eps))\Bpacking$ and covering constraints $(1-\calO(\eps))\Bcovering$.

Recall that, in the \preview{} model, each sample $\sample_t$ and request $\online_t$ is sampled independently from distribution $\D_t$. Thus, the feasibility of \ref{program:opc-lb} depends on the realized requests. Because of this, it will be convenient to introduce the following stochastic LP relaxation:
\begin{equation}\tag{$\LPpc$}\label{program:opc-relax}
    \begin{aligned}
        &\text{find}  &&\textstyle (\xlp[t])_{t\in \requests, \online_t \in \typespace \decisionvar\in\decisionset[t]}\\
        &\text{s.t. } \textit{Packing}: &&
        \textstyle\sum_{t\in\requests}
        \E_{\online_t}[\sum_{\decisionvar\in\decisionset[t]}
        \packingat[t]\xlp[t]]
        \leq \Bpacking\\
        &\textit{Covering}: &&
        \textstyle\sum_{t\in\requests}
        \E_{\online_t}[\sum_{\decisionvar\in\decisionset[t]}
        \coveringat[t]\xlp[t]]
        \geq \Bcovering\\
        &\forall t\in \requests, \online_t \in \typespace,  &&\textstyle \sum_{\decisionvar\in\decisionset[t]} \xlp[t] = 1\\
        &\forall t\in \requests, \online_t \in \typespace, \decisionvar\in\decisionset[t],  && \xlp[t] \in [0,1]
    \end{aligned}
\end{equation}

We show that, given any black-box \ro{} algorithm $\blackbox$ which approximately satisfies its constraints with high probability, the allocation of \Cref{alg:genericOnline} in the \preview{} setting will also approximately satisfy the constraints with high probability. 

\begin{Theorem}
\label{thm:packingCovering}
    Suppose we run any black-box \ro{} algorithm $\blackbox$ for Online Packing-Covering Multiple-Choice LPs offline in \Cref{alg:genericOffline} on samples $(\sample_i)_{i\in\samples}$ with packing constraint $p(1+\eps)\Bpacking$ and covering constraint $p(1-\eps)\Bcovering$, where $p\in(0,1]$.
    If we run \Cref{alg:genericOnline} online in the \preview{} setting (with no feasibility cap imposed), then 
    for any $\eps,\delta\in(0,1)$, the resulting packing and covering allocations $\packingalg$ and $\coveringalg$ satisfy, coordinatewise,    
    with probability at least $1-\delta$,
    \begin{align*}
        \packingalg
        &\leq (1+\eps) p^{-1}\packingbb
        + 3\eps^{-1}(3 + 10 p^{-1})\log(\nicefrac{8 m}{\delta})\\
        \coveringalg
        &\geq (1-\eps) p^{-1}\coveringbb
        -3\eps^{-1}(3 + 10 p^{-1})\log(\nicefrac{8 m}{\delta}),
    \end{align*}
    where $\packingbb$ and $\coveringbb$ are the sample packing and covering allocations of the offline run of \cref{alg:genericOffline}.
\end{Theorem}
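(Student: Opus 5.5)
The plan is a two-step transfer, applied coordinatewise to every resource $j\in[m]$ and separately to packing and covering. The first step compares the realized online load with its conditional expectation given $\histend$. The second step compares that conditional expectation with the scaled offline load of $\blackbox$ on the sample, using Property 3 of \Cref{lem:algConnectionProperties}. A union bound over the $2m$ coordinates and the two steps then gives total failure probability $\delta$. This is where the $\log(\nicefrac{8m}{\delta})$ comes from: each of the $4m$ sub-events gets probability $\nicefrac{\delta}{4m}$, and Property 3 has a $2\delta'$ failure, so $\delta' = \nicefrac{\delta}{8m}$.

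\textbf{Step 1 (online concentration).} Fix $j$ and let $X_t=\packingonlinej$ (respectively $\coveringonlinej$), which lies in $[0,1]$. Here no feasibility cap is imposed, so $\packingalg_j=\sum_t X_t$ exactly. By \Cref{obs:oblivious}, the $X_t$ are independent conditioned on $\histend$. Let $\mu_j=\sum_t\E[X_t\mid\histend]$. The multiplicative Bernstein bound (the form used as \cref{cor:concentrationWithoutReplacement} in the proof of \Cref{lem:concentrationBudgetExhaustion}) gives, with conditional probability at least $1-\nicefrac{\delta}{4m}$,
\[
\sum_t X_t\le(1+\eta)\mu_j+\tfrac{c\log(\nicefrac{8m}{\delta})}{\eta}, \qquad \sum_t X_t\ge(1-\eta)\mu_j-\tfrac{c\log(\nicefrac{8m}{\delta})}{\eta},
\]
for a small absolute constant $c$. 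Integrating over $\histend$ makes this unconditional.

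\textbf{Step 2 (tangent transfer).} Apply Property 3 of \Cref{lem:algConnectionProperties} with $f(\decisionvar,\online)=a^{\mathsf{p}}_j(\decisionvar)$ (respectively the covering coordinate), $\delta'=\nicefrac{\delta}{8m}$, and parameter $\eta$. This gives
\[
\bigl|p\,\mu_j-\packingbb_j\bigr|\le \tfrac{5\log(\nicefrac{8m}{\delta})}{\eta}+\eta\,\packingbb_j .
\]
Dividing by $p$ yields $\mu_j\le(1+\eta)p^{-1}\packingbb_j+5p^{-1}\eta^{-1}\log(\nicefrac{8m}{\delta})$, and symmetrically $\mu_j\ge(1-\eta)p^{-1}\coveringbb_j-5p^{-1}\eta^{-1}\log(\nicefrac{8m}{\delta})$ for covering.

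\textbf{Combining.} Chain the two steps with $\eta=\nicefrac{\eps}{3}$. For packing, the multiplicative factor is $(1+\eta)^2\le 1+\eps$. The additive term is $(1+\eta)\cdot 5p^{-1}\eta^{-1}\log+c\,\eta^{-1}\log$, which is at most $3\eps^{-1}(3+10p^{-1})\log(\nicefrac{8m}{\delta})$ for suitable $c$. For covering, the factor is $(1-\eta)^2\ge1-\eps$, and the additive term is handled the same way, using only $1-\eta\le1$ on the $p^{-1}$ term.

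\textbf{Main obstacle.} The only delicate point is bookkeeping. Step 1 is stated in terms of $\mu_j$, which is a random function of $\histend$, so the Bernstein bound must be the self-normalized multiplicative form rather than one requiring a deterministic variance bound. The constants must also be matched to $3\eps^{-1}(3+10p^{-1})$; if my $c$ is off, I would adjust the choice of $\eta$ accordingly.
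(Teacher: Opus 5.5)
Your proposal is correct and follows the paper's proof essentially step for step. You use Property 3 of \Cref{lem:algConnectionProperties} (failure $\nicefrac{\delta}{2}$ over all $2m$ coordinates) to tie the scaled sample loads to $\sum_t \E[\cdot \mid \histend]$, conditional Bernstein via \Cref{obs:oblivious} and \cref{cor:concentrationWithoutReplacement} (another $\nicefrac{\delta}{2}$) for the realized loads, and $\eta=\nicefrac{\eps}{3}$. The corollary fixes your constant at $c=3$, so the additive term is at most $(9\eps^{-1}+20p^{-1}\eps^{-1})\log(\nicefrac{8m}{\delta})$, which is within the stated bound.
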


To obtain our results for the \preview{} setting, we combine this theorem with the following \ro{} algorithm for Online Multi-Choice Packing-Covering LPs from \cite{GM-MOR16}.

\begin{Lemma}[\cite{GM-MOR16}]\label{lem:packingCoveringRO}
    Fix any $\alpha, \beta > 0$, and
    suppose \ref{program:opc-lb} is feasible on samples $(\sample_i)_{i\in\samples}$ with packing and covering constraints $\alpha \Bpacking$, $\beta \Bcovering$. Then, there exists an online packing-covering multiple-choice LP algorithm such that, for an absolute constant $\eps_0\leq 1$ and any $\eps \in (0,\eps_0)$ and $\delta\in(0,\eps)$, with probability at least $1-\delta$, it returns an integral online allocation which is an approximate solution to \ref{program:opc-lb} assuming $\alpha \Bpackingj \wedge \beta\Bcoveringj = \Omega(\eps^{-2}\cdot\log(\nicefrac{m}{\delta}))$ for all $j\in [m]$, i.e.,
    \begin{align*}
        \packingbb[\alpha] \leq (1+\eps)\alpha\Bpacking 
        \quad \text{and} \quad
        \coveringbb[\beta] \geq (1-\eps)\beta\Bcovering,
    \end{align*}
    where the probability is over the random arrival order and any possible randomness of the algorithm.
\end{Lemma}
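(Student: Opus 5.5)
Since this lemma is imported from \cite{GM-MOR16}, the plan is to check that their random-order guarantee applies verbatim to the offline run in \Cref{alg:genericOffline}, and to record the structure of their argument so the parameters can be tracked. The sample multiset $(\sample_i)_{i\in\samples}$ is fixed by conditioning, and \Cref{alg:genericOffline} presents it in a uniformly random order $\pi$. This is exactly the \ro{} model of \Cref{def:ro} on an adversarial instance of $\nsamples$ requests. The hypothesis that \ref{program:opc-lb} is feasible on the sample with right-hand sides $\alpha\Bpacking,\beta\Bcovering$ is the offline feasibility assumption of \cite{GM-MOR16}. After dividing row $j$ by $\alpha\Bpackingj$ (respectively $\beta\Bcoveringj$), every right-hand side equals $1$ and every coefficient lies in $[0,1/(\alpha\Bpackingj\wedge\beta\Bcoveringj)]$. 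So the large-constraint hypothesis becomes ``coefficients are at most $\eps^2/\Theta(\log(m/\delta))$,'' which is their condition.

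To make the dependence on $m,\delta$ transparent, I would reprove the guarantee with an exponential-potential (multiplicative-weights) algorithm. Let $P_j(i)$ and $C_j(i)$ denote the normalized packing and covering loads after $i$ steps. Use the potential
\[
\Phi_i=\textstyle\sum_j e^{\eta(P_j(i)-(1+\eps) i/\nsamples)}+\sum_j e^{\eta((1-\eps) i/\nsamples-C_j(i))},
\]
with $\eta=\Theta(\eps^{-1}\log(m/\delta))$ in normalized units. At each step, choose the option in $\decisionsetsample[\pi(i)]$ that minimizes the first-order increase of $\Phi$.

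Fix a fractional feasible solution $z^*$ of the scaled \ref{program:opc-lb}. The greedy choice does at least as well as the randomized choice $z^*_{\pi(i)}$ on the current request. Hence the conditional increase of $\Phi$ is at most $\Phi_{i-1}$ times a first-order term, plus a second-order term controlled by the small coefficients. In random order, the current request is uniform over the not-yet-seen items. The expected contribution of $z^*$ on it is therefore the remaining packing budget divided by the remaining count, and similarly for covering, which matches the drift $(1\pm\eps)/\nsamples$ up to a deviation. Summing over steps gives $\E[\Phi_{\nsamples}]\le 2m\cdot\mathrm{poly}$. Markov's inequality then yields $\Phi_{\nsamples}\le 2m/\delta$ with probability at least $1-\delta$, and taking logarithms gives $P_j\le 1+\calO(\eps)$ and $C_j\ge 1-\calO(\eps)$ after rescaling $\eps$.

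The main obstacle is the sampling-without-replacement step. The remaining items are not i.i.d., and the fraction of $z^*$'s load they carry drifts from the proportional share. I would handle this as \cite{GM-MOR16} does. Apply the without-replacement Bernstein bound (\cref{cor:concentrationWithoutReplacement}) to the $z^*$-loads of the first $i$ items, union bounded over $j$ and over a geometric grid of $i$. This shows that, for all $i$ simultaneously and with probability at least $1-\delta/2$, the unseen items retain a $(1\pm\eps)$-proportional share of the packing and covering mass. The potential argument is then run on that good event, and it is there that the assumption $\alpha\Bpackingj\wedge\beta\Bcoveringj=\Omega(\eps^{-2}\log(m/\delta))$ is consumed. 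Integrality needs no extra work, because the algorithm always picks a single option.
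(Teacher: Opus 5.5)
\textbf{Gap 1: the verbatim application of \cite{GM-MOR16}.} The step that fails is your opening claim that their random-order guarantee applies verbatim. Their Theorem~4.1 is proved for the full-simplex program, $\sum_{\decisionvar} z_{t,\decisionvar}\le 1$, with a null option available. Here, every request must be served by exactly one genuine option, $\sum_{\decisionvar} z_{t,\decisionvar}=1$.

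Two parts of their argument use that slack. Their reduction to load balancing gives each request a null option with load $2/n$ on the covering coordinates. Their final phase scales the solution down by $1-\calO(\eps)$ to make packing exactly feasible. An output that leaves requests partially unserved cannot in general be completed without breaking the packing bound, and normalizing rows does not touch this.

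Handling this mismatch is the entire content of the paper's proof. It removes the null option. It adds a $(2m+1)$-st coordinate on which every option has load $1/n$, which pins the optimal makespan at $\lambda^*=1$. It checks $(M,2)$-well-boundedness, so their Theorem~3.1 applies as a black box. It drops the final rescaling, accepting the $(1+\eps)$ packing violation the lemma allows, and then rounds with Bernstein.

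\textbf{Gap 2: the random-order step of your own reproof.} Your multiplicative-weights route would sidestep the first issue, since the greedy always picks a genuine option and $z^*_{\pi(i)}$ is a distribution over options. However, its key random-order step is false as stated: the unseen items do not keep a $(1\pm\eps)$-proportional share of $z^*$'s mass for all $i$.

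For example, suppose that on some row half the items carry normalized $z^*$-load $2/\nsamples$ and half carry $0$. When one item remains, its share is $0$ or $2/\nsamples$, never $(1\pm\eps)/\nsamples$. More generally, with $r$ items left the deviation can be of order $\sqrt{(r/\nsamples)M\log(m/\delta)}+M\log(m/\delta)$, where $M=\eps^2/\Theta(\log(m/\delta))$. This exceeds $\eps r/\nsamples$ once $r/\nsamples$ drops below a constant.

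So the drift error must be tracked additively and summed over $i$. Let $D_j(i)$ be the seen $z^*$-mass on row $j$ minus $i/\nsamples$; the per-step error is $|D_j(i-1)|/(\nsamples-i+1)$. The sum must take a maximum over $j$ inside, because the potential weights move.

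Bounding that sum by $\calO(\eps)$ with no dependence on $\nsamples$ is the delicate maximal-inequality part of the analysis, and your tools do not give it. Your union bound over a geometric time grid already costs $\log\log\nsamples$. The additive Bernstein term, summed against $1/(\nsamples-i+1)$, costs a further factor logarithmic in $\nsamples$. The hypothesis $\Omega(\eps^{-2}\log(m/\delta))$ permits neither.

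Since you defer exactly this step to \cite{GM-MOR16}, you are back to applying their theorem, which needs the modifications described above.
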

We note that, while the \ro{} result \cite[Theorem 4.1]{GM-MOR16} is stated for Packing-Covering Multiple-Choice LPs with ``full simplex'' constraints (i.e., the constraints $\sum_{\decisionvar\in\decisionset}\xip \leq 1$ instead of $\sum_{\decisionvar\in\decisionset}\xip = 1$ in \ref{program:opc-lb}), their result for LPs with the simplex constraint as in \ref{program:opc-lb} follows almost immediately from their arguments. We discuss the required modifications in \cref{sec:packingCoveringRO}.

We also need the following elementary result that guarantees, if \ref{program:opc-relax} is feasible, then with high probability,  \ref{program:opc-lb} on the samples $(\sample_t)_{t\in\samples}$ is feasible after budget rescaling. 
\begin{restatable}{Lemma}{restatePackingCoveringRescaling}\label{lem:packingCoveringRescaling}
    Suppose \ref{program:opc-relax} with packing and covering constraints $\Bpacking, \Bcovering$ is feasible.
    Let $(\sample_t)_{t\in\samples}$ be the $p$-sample in the \preview{} model.
    Then, for any $\eps\in(0,1), \delta\in(0,1)$, assuming $\Bpackingj \wedge \Bcoveringj \geq 192 p^{-1}\eps^{-2} \log(\nicefrac{8 m}{\delta})$ for all $j\in [m]$, with probability at least $1-\delta$, there is a feasible solution to \ref{program:opc-lb} on samples $(\sample_t)_{t\in\samples}$ with packing and covering constraints $p(1+\eps)\Bpacking$ and $p(1-\eps)\Bcovering$.
\end{restatable}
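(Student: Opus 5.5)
Let $x^*=(\xlpat{\online})$ be a feasible solution to \ref{program:opc-relax} with constraints $\Bpacking,\Bcovering$. We round it independently on the samples: for each $t\in\samples$, draw a decision $\decisionsample[t]\in\decisionsetsample[t]$ with $\pr[\decisionsample[t]=\decisionvar\mid\sample_t]=\xlpat{\sample_t}$. This gives an integral point satisfying the simplex equalities of \ref{program:opc-lb} on $(\sample_t)_{t\in\samples}$. It then suffices to show that, with probability at least $1-\delta$, for every $j$ the packing load $\sum_{t\in\samples}\tilde a^{\mathsf p}_{t,j}(\decisionsample[t])$ is at most $p(1+\eps)\Bpackingj$ and the covering load is at least $p(1-\eps)\Bcoveringj$. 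Since an integral solution is in particular a fractional one, this proves the lemma.

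\textbf{Expectations.} Fix $j$ and let $\mu^{\mathsf p}_t:=\E_{\online_t}[\sum_{\decisionvar}\packingatj[t]\xlpat{\online_t}]\in[0,1]$, and define $\mu^{\mathsf c}_t$ analogously. The random set $\samples$ is a uniform $s=pn$ subset, independent of the samples and of the rounding coins. Hence $\E[\sum_{t\in\samples}\tilde a^{\mathsf p}_{t,j}(\decisionsample[t])]=p\sum_{t}\mu^{\mathsf p}_t\le p\Bpackingj$, and similarly the covering expectation is at least $p\Bcoveringj$.

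\textbf{Concentration.} The randomness has two layers. First, $\samples$ is drawn without replacement. Conditioned on $\samples$, the summands $\tilde a_{t,j}(\decisionsample[t])\in[0,1]$ are independent across $t\in\samples$, with means $\mu_t$. I would apply Bernstein in two stages.
\begin{itemize}
\item[(i)] Sampling without replacement: $\sum_{t\in\samples}\mu_t$ concentrates around $p\sum_t\mu_t$. Use the paper's \cref{cor:concentrationWithoutReplacement}, or Hoeffding's comparison of without-replacement and with-replacement sampling combined with Bernstein, since each $\mu_t\in[0,1]$.
\item[(ii)] Rounding: conditioned on $\samples$, Bernstein for independent $[0,1]$ variables gives concentration of the realized sum around $\sum_{t\in\samples}\mu_t$.
\end{itemize}
Both stages give a multiplicative form: deviation at most $\frac{\eps}{3}M+\frac{c\log(8m/\delta)}{\eps}$, where $M$ is the relevant mean. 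This is where the bound $B\ge 192p^{-1}\eps^{-2}\log(8m/\delta)$ enters, since it makes $c\eps^{-1}\log(8m/\delta)\le\frac{\eps}{3}pB$.

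\textbf{Packing and covering separately.} For packing, the mean $p\sum_t\mu^{\mathsf p}_t$ may be much smaller than $pB$. To handle this, either pad the mean up to $pB$ in the Bernstein variance bound, or use the version of the inequality stated as $X\le(1+\eta)\E X+O(\log(1/\delta)/\eta)$. Either way the packing load is at most $(1+\eps)p\Bpackingj$. For covering, the mean is at least $p\Bcoveringj$, so the lower tail gives a load of at least $(1-\eps)p\Bcoveringj$. A subtlety is that the covering mean could be far larger than $pB$. The multiplicative form $X\ge(1-\eta)\E X-O(\log/\eta)$ handles this, because $(1-\eta)\E X\ge(1-\eta)pB$.

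\textbf{Union bound and main obstacle.} There are $2m$ constraints and two stages each, for $4m$ events with two tails. Allocating failure probability $\delta/(4m)$ to each (or $\delta/(8m)$ per tail) gives the $\log(8m/\delta)$ term. The main obstacle is stage (i). The without-replacement Bernstein must be applied to weights $\mu_t$ that depend on the whole distribution, which is fine because they are deterministic. The step needing care is the constant bookkeeping: the deviations of both stages must sum to at most $\eps pB$, which requires choosing $\eta=\eps/3$ in each stage.
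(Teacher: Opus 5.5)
Your proposal is correct and is essentially the paper's argument: evaluate the fluid solution at the sampled types, apply Bernstein once for the independent draws and once for the without-replacement choice of $\samples$ (with $\eta$ a constant fraction of $\eps$), and union bound over the $4m$ tail events; with $\eta=\eps/3$ the constants do close under the $192\,p^{-1}\eps^{-2}\log(8m/\delta)$ assumption. The differences are inessential. The paper does not round, since the program only requires $z_{t,\theta}\in[0,1]$ and it uses $x^*_{t,\tilde\gamma_t,\theta}$ directly. It also orders the two stages the other way (independent draws over all of $[n]$ first, then sampling without replacement conditional on those draws), whereas your order (without replacement on the deterministic means $\mu_t$, then independence conditional on $\samples$) has the small advantage of never referring to samples at unsampled indices.
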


\begin{proof}
    Let $(\xstarlpat{\online})_{t\in[n], \online \in \typespace, \decisionvar \in \decisionset}$ be a feasible solution to \ref{program:opc-relax} with packing and covering constraints $\Bpacking, \Bcovering$. Then, let $\sample_t \sim \D_t$ for all $t\in[n]$, and $\samples \subseteq [n]$ be a uniformly random subset of size $\nsamples = p n$. We construct a feasible solution $\xip$ to \ref{program:opc-lb} on samples $(\sample_t)_{t\in\samples}$ and constraints $p(1+\eps)\Bpacking$ and $p(1-\eps)\Bcovering$ as follows: First, define $\xprimeip = \xstarlpat{\sample_t}$ for all $t\in [n]$ and $\decisionvar\in\decisionsetsample$, the decision set associated with $\sample_t$. Then, define $\xip = \xprimeip$ for all $t\in\samples$ and $\decisionvar \in \decisionsetsample$. Our argument proceeds in two steps: (i) first, we will show that $\xprimeip$ is feasible for \ref{program:opc-lb} on $(\sample_t)_{t\in[n], \decisionvar \in \decisionsetsample}$ with packing and covering constraints $(1+\calO(\eps))\Bpacking$ and $(1-\calO(\eps))\Bcovering$ with high probability. Then (ii) conditioned on the feasibility of $\xprimeip$, we will show that $\xip$ is feasible for \ref{program:opc-lb} on $(\sample_t)_{t\in\samples}$ with packing and covering constraints $p(1+\eps)\Bpacking$ and $p(1-\eps)\Bcovering$.

    First, by feasibility of $\xstarlpat{\online}$, we trivially have that $\xprimeip \in [0,1]$ and $\sum_{\decisionvar \in \decisionsetsample} \xprimeip = 1$. Thus, it remains to reason about the packing and covering constraints. Now, since the samples $(\sample_t)_{t\in[n]}$ are independent in the \preview{} model and identically distributed to the requests $(\online_t)_{t\in[n]}$, and $\packingsampleat[t], \coveringsampleat[t] \in [0,1]$, we may apply Bernstein's inequality (specifically, \cref{cor:concentrationWithoutReplacement}) to conclude that, for any $\eta, \delta\in(0,1)$, with probability at least $1-\nicefrac{\delta}{2}$, the following inequalities hold simultaneously for all $j\in[m]$
    \begin{align*}
        \sum_{t\in[n]} \sum_{\decisionvar\in\decisionsetsample} \packingsampleatj \xprimeip
        &\leq \frac{3\log(\nicefrac{8m}{\delta})}{\eta} + (1 + \eta)\sum_{t\in[n]} \E_{\online_t}\left[\sum_{\decisionvar\in\decisionset} \packingatj \xstarlpat{\online_t}\right]
        \leq \frac{3\log(\nicefrac{8m}{\delta})}{\eta} + (1 + \eta)\Bpackingj\\
        \sum_{t\in[n]} \sum_{\decisionvar\in\decisionsetsample} \coveringsampleatj \xprimeip
        &\geq - \frac{3\log(\nicefrac{8m}{\delta})}{\eta} + (1 - \eta)\sum_{t\in[n]} \E_{\online_t}\left[\sum_{\decisionvar\in\decisionset} \coveringatj \xstarlpat{\online_t}\right]
        \geq -\frac{3\log(\nicefrac{8m}{\delta})}{\eta} + (1 - \eta)\Bcoveringj.
    \end{align*}
    Now, conditioned on $(\sample_t)_{t\in[n]}$, $\samples$ is a uniformly random subset of size $\nsamples=pn$. Thus, we apply Bernstein's inequality for sampling without replacement (\cref{cor:concentrationWithoutReplacement}) so that, with probability at least $1-\nicefrac{\delta}{2}$, for all $j\in[m]$,
    \begin{align*}
        \sum_{t\in\samples} \sum_{\decisionvar\in\decisionsetsample} \packingsampleatj \xip
        &\leq \frac{3\log(\nicefrac{8m}{\delta})}{\eta} + (1 + \eta) p\sum_{t\in[n]} \sum_{\decisionvar\in\decisionsetsample} \packingsampleatj \xprimeip\\
        \sum_{t\in\samples} \sum_{\decisionvar\in\decisionsetsample} \coveringsampleatj \xip
        &\geq -\frac{3\log(\nicefrac{8m}{\delta})}{\eta} + (1 - \eta) p\sum_{t\in[n]} \sum_{\decisionvar\in\decisionsetsample} \coveringsampleatj \xprimeip.
    \end{align*}
    Combining these two inequalities, choosing $\eta = \nicefrac{\eps}{4}$, and using the fact that $\Bpackingj \wedge \Bcoveringj \geq 192\log(\nicefrac{8m}{\delta}) \eps^{-2} p^{-1}$, we have that, with probability at least $1-\delta$, for all $j\in[m]$,
    \begin{align*}
        \sum_{t\in\samples} \sum_{\decisionvar\in\decisionsetsample} \packingsampleatj \xip
        &\leq \frac{12\log(\nicefrac{8m}{\delta})}{\eta} + (1 + 3\eta) p \Bpackingj
        \leq (1 + 4\eta) p \Bpackingj = (1+\eps)p \Bpackingj\\
        \sum_{t\in\samples} \sum_{\decisionvar\in\decisionsetsample} \coveringsampleatj \xip
        &\geq -\frac{6\log(\nicefrac{8m}{\delta})}{\eta} + (1 - 2\eta) p\Bcoveringj
        \geq (1 - 3\eta) p\Bcoveringj \geq (1-\eps) p \Bcoveringj,
    \end{align*}
    as claimed.
\end{proof}

Now, combining \cref{thm:packingCovering} with \cref{lem:packingCoveringRO,lem:packingCoveringRescaling}, we immediately obtain the desired guarantees for the \preview{} setting (and hence also in the \sspi{} and \ows{} settings):

\begin{Corollary}\label{cor:packingCovering}
    Suppose \ref{program:opc-relax} with packing and covering constraints $\Bpacking, \Bcovering$ is feasible.
    In the setting of \cref{thm:packingCovering}, suppose we run \cref{alg:genericOnline} with $\blackbox$ being the \ro{} algorithm in \cref{lem:packingCoveringRO} instantiated with packing and covering constraints $p(1+\eps)\Bpacking$ and $p(1-\eps)\Bcovering$. Then, for an absolute constant $\eps_1\leq 1$ and any $\eps \in (0,\eps_1)$ and $\delta\in (0,\eps)$, with probability $1-\delta$, assuming $\Bpackingj \wedge \Bcoveringj = \Omega(p^{-1}\eps^{-2}\log(\nicefrac{m}{\delta})),$ \cref{alg:genericOnline} finds an integral $\calO(\eps)$-feasible solution to \ref{program:opc-lb} on requests $(\online_t)_{t\in[n]}$, i.e.,
    \[
        \packingalg \leq (1+\calO(\eps))\Bpacking 
        ~\text{and}~ 
        \coveringalg \geq (1-\calO(\eps))\Bcovering.
    \]
\end{Corollary}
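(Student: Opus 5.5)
The plan is to combine three failure events by a union bound, each with probability at most $\delta/3$. We apply \cref{lem:packingCoveringRescaling} with $\delta \leftarrow \delta/3$. Under the large-budget assumption (with a suitably large hidden constant), with probability at least $1-\delta/3$ the program \ref{program:opc-lb} on the samples $(\sample_t)_{t\in\samples}$ is feasible with packing constraint $p(1+\eps)\Bpacking$ and covering constraint $p(1-\eps)\Bcovering$.

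Conditioned on the sample, and hence on this feasibility, we invoke \cref{lem:packingCoveringRO} with $\alpha = p(1+\eps)$, $\beta = p(1-\eps)$ and failure probability $\delta/3$. Its hypothesis $\alpha\Bpackingj \wedge \beta\Bcoveringj = \Omega(\eps^{-2}\log(\nicefrac{m}{\delta}))$ follows from $\Bpackingj\wedge\Bcoveringj = \Omega(p^{-1}\eps^{-2}\log(\nicefrac{m}{\delta}))$. So, with probability at least $1-\delta/3$ over the random order $\pi$ and the internal coins, the offline run satisfies
\[
\packingbb \leq (1+\eps)^2 p\Bpacking
\quad\text{and}\quad
\coveringbb \geq (1-\eps)^2 p\Bcovering .
\]
Note that \cref{alg:genericOffline} draws $\pi$ uniformly at random, so the lemma's random-order guarantee applies directly to this run.

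Next we apply \cref{thm:packingCovering} with $\delta\leftarrow\delta/3$. With probability at least $1-\delta/3$,
\[
\packingalg \leq (1+\eps)p^{-1}\packingbb + 3\eps^{-1}(3+10p^{-1})\log(\nicefrac{24m}{\delta}),
\]
and symmetrically for $\coveringalg$. Substituting the previous bounds gives $\packingalg \leq (1+\eps)^3\Bpacking + \calO(p^{-1}\eps^{-1}\log(\nicefrac{m}{\delta}))$. Since $\Bpackingj = \Omega(p^{-1}\eps^{-2}\log(\nicefrac{m}{\delta}))$, the additive term is at most $\eps\Bpackingj$. Here we use $3+10p^{-1}\leq 13p^{-1}$ and choose the hidden constant large enough. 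Hence $\packingalg \leq (1+\calO(\eps))\Bpacking$, and likewise $\coveringalg \geq ((1-\eps)^3 - \eps)\Bcovering = (1-\calO(\eps))\Bcovering$. The union bound over the three events gives total failure probability at most $\delta$.

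The main care point is the conditioning and ordering of the probability spaces. \cref{lem:packingCoveringRO} holds for any fixed feasible sample, so integrating over the samples is fine. \cref{thm:packingCovering} is stated unconditionally, over all randomness, so it combines directly. The other thing to watch is the requirement $\eps<\eps_0$ from \cref{lem:packingCoveringRO}, which we handle by taking $\eps_1 \le \eps_0$. Finally, $\delta<\eps$ together with $\log(\nicefrac{24m}{\delta}) = \calO(\log(\nicefrac{m}{\delta}))$ keeps all logarithmic factors in the stated form.
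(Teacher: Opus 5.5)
Your proposal is correct and is essentially the paper's argument. The paper obtains this corollary by directly combining \cref{lem:packingCoveringRescaling} (feasibility of the rescaled program on the sample), \cref{lem:packingCoveringRO} (the offline random-order run then satisfies $(1+\eps)^2 p\Bpacking$ and $(1-\eps)^2 p\Bcovering$) and \cref{thm:packingCovering} (transfer to the online allocation), and your union bound at level $\delta/3$ per event, absorbing the additive $\calO(p^{-1}\eps^{-1}\log(\nicefrac{m}{\delta}))$ term into $\eps\Bpacking$ and $\eps\Bcovering$ via the large-budget assumption, simply writes that combination out explicitly.
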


We now present the proof of \Cref{thm:packingCovering}. 
The proof proceeds in two stages: first, use the fact that the online decisions of \Cref{alg:genericOnline} are \OnlHistInd{} by \Cref{obs:oblivious} together with Bernstein's inequality to conclude that the packing and covering loads concentrate about their expectation conditioned on the sample history $\histend$. Then, apply Property 3 from \Cref{lem:algConnectionProperties} to conclude that the online packing and covering loads concentrate about those of the $\blackbox$.

\begin{proof}[Proof of \Cref{thm:packingCovering}]
    By \Cref{lem:algConnectionProperties}, for any $\eta,\delta\in(0,1)$, with probability at least $1-\nicefrac{\delta}{2}$, for every $j\in[m]$, since $\packingatj[t], \coveringatj[t] \in [0,1]$, we have the following two inequalities:
    \begin{align*}
        \sum_{t\in \requests} \E[\packingonlinej \mid \histend] 
        &\leq \frac{5\log(\nicefrac{8 m}{\delta})}{p\eta}
        +\frac{(1+\eta)}{p} \sum_{i\in[\nsamples]} \packingsamplej[\pi(i)]\\
        \sum_{t\in \requests} \E[\coveringonlinej \mid \histend] 
        &\geq -\frac{5\log(\nicefrac{8 m}{\delta})}{p\eta}
        +\frac{(1-\eta)}{p} \sum_{i\in[\nsamples]} \coveringsamplej[\pi(i)].
    \end{align*}
    Further, since the allocations $\packingonlinej, \coveringonlinej \in [0,1]$ and are independent across time conditioned on the \offHist{} $\histend$ (by the $\OnlHistInd{}$ property, \cref{obs:oblivious}), we may apply Bernstein's inequality (specifically, \cref{cor:concentrationWithoutReplacement}) to conclude that, with probability at least $1-\nicefrac{\delta}{2}$, for all $j\in [m]$,
    \begin{align*}
    \sum_{t\in\requests}\packingonlinej[t]
        &\leq \frac{3\log(\nicefrac{8 m}{\delta})}{\eta}
        +(1+\eta)
        \sum_{t\in\requests}
        \E[\packingonlinej[t]\mid\histend]\\
        \sum_{t\in\requests}\coveringonlinej[t]
        &\geq -\frac{3\log(\nicefrac{8 m}{\delta})}{\eta}
        +(1-\eta)
        \sum_{t\in\requests}
        \E[\coveringonlinej[t]\mid\histend].
    \end{align*}

Hence all four inequalities hold
simultaneously with probability at least $1-\delta$. Taking
$\eta=\eps/3$ gives
\begin{align*}
    \packingalg
    &\leq (1+\eps)p^{-1}\packingbb + 3\eps^{-1}(3 + 10 p^{-1})\log(\nicefrac{8 m}{\delta})\\
    \coveringalg
    &\geq (1-\eps)p^{-1}\coveringbb - 3\eps^{-1}(3 + 10 p^{-1})\log(\nicefrac{8 m}{\delta})
\end{align*}
as claimed.
\end{proof} 
\subsection{Generalized Load Balancing}

Recall that, in this setting, there are $m$ machines, and $n$ job requests arrive online in arbitrary order. Each request $\online_t = (\alloc_t, \decisionset)$ must be served by some decision $\decisionvar \in \decisionset$, which generates an allocation load $\alloconline \in [0,1]^m$ on a subset of the $m$ resources. The goal is to minimize the \emph{makespan}, i.e., the maximum load $\sum_{t\in\requests} \alloconlinej$ among all machines $j\in[m]$. 

Our benchmark is the \emph{hindsight fractional optimum} $\E[\optRlb{}]$, where $\optRlb{}$ is the optimal value of the configuration linear program defined below on the requests. For each $t\in\requests$, let $\xip$ denote the fractional allocation corresponding to the decision $\decisionvar\in\decisionset$ for request $\online_t$. Then, we can write the configuration LP as:

\begin{equation}\tag{$\LPlb$}\label{program:olb-lb}
    \begin{aligned}
        \optRlb ~:=~ &\text{minimize}  &&\textstyle \left\|\sum_{t\in \requests}\sum_{\decisionvar\in\decisionset[t]}  \loadat[t]\xip[t]\right\|_{\max}\\
        &\text{s.t. }\forall t\in \requests,  &&\textstyle \sum_{\decisionvar\in\decisionset[t]} \xip[t] = 1\\
        &\forall t\in \requests, \decisionvar\in\decisionset[t],  && \xip[t] \in [0,1]\\
    \end{aligned}
\end{equation}
Here, we use the notation $\|v\|_{\max} = \max_{j\in[m]} v_j$.
In this section, we show how to obtain $(1+\calO(\eps))$-competitive algorithms for Online Generalized Load Balancing in the \preview{} model as a corollary of our results for Online Packing-Covering Multiple-Choice LPs.

\begin{Theorem}
\label{thm:loadBalancing}
There is an algorithm for Online Generalized Load Balancing in the \preview{} model such that, for an absolute constant $\eps_2 \leq 1$ and any $\eps \in (0, \eps_2)$ and $\delta\in(0,\eps)$, if $\E[\optRlb] = \Omega(p^{-1}\eps^{-2} \log(\nicefrac{m}{\delta}))$, then with probability at least $1-\delta$, it finds an integral allocation with makespan $\alg$ satisfying:
\[
    \alg \leq (1+\calO(\eps))\E[\optRlb].
\]
\end{Theorem}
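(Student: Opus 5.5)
We reduce Generalized Load Balancing to the pure-packing special case of Online Packing-Covering Multiple-Choice LPs and invoke \Cref{cor:packingCovering}. Concretely, set $\Bcovering = \0$ (so the covering constraints are vacuous, and the covering part of each load is identically zero). Set $\Bpacking = L\cdot\1$, where $L := \E[\optRlb]$ is the target makespan. The algorithm is \Cref{alg:genericOnline} run with the \ro{} algorithm of \Cref{lem:packingCoveringRO}, instantiated with packing budget $p(1+\eps)L\1$ on the samples. Any integral solution with packing load at most $(1+\calO(\eps))\Bpacking$ has makespan at most $(1+\calO(\eps))L$, which is exactly the claim. The large-budget condition $\Bpackingj = \Omega(p^{-1}\eps^{-2}\log(m/\delta))$ is precisely the hypothesis on $\E[\optRlb]$. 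The covering side of \Cref{lem:packingCoveringRO} is trivial when $\Bcovering=\0$; if needed, we can instead use the pure-packing version of the \cite{GM-MOR16} argument.

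It remains to check the hypothesis of \Cref{cor:packingCovering}, namely that \ref{program:opc-relax} with $\Bpacking = L\1$ is feasible. We use the averaging argument already used for \Cref{obs:fluidOptUpperBound}. Let $z^*(\online_1,\dots,\online_n)$ be an optimal (measurably selected) solution to \ref{program:olb-lb} on the realized requests, and define $x_{t,\online,\decisionvar} := \E[z^*_{t,\decisionvar} \mid \online_t = \online]$. Each $x_{t,\online,\cdot}$ is a probability vector on $\decisionset[t]$. Moreover, for each $j$,
\[
\textstyle\sum_t \E_{\online_t}\big[\sum_{\decisionvar} a_{t,j}(\decisionvar) x_{t,\online_t,\decisionvar}\big] = \E\big[\sum_t\sum_\decisionvar a_{t,j}(\decisionvar) z^*_{t,\decisionvar}\big] \le \E[\optRlb] = L,
\]
since each coordinate of the load is at most the max coordinate. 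Hence \ref{program:opc-relax} is feasible with $\Bpacking = L\1$, and \Cref{lem:packingCoveringRescaling}, as used inside \Cref{cor:packingCovering}, supplies sample-feasibility with high probability.

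The one subtle point, and the main obstacle, is that the algorithm must know $L = \E[\optRlb]$ to set the budget. I would first state the result assuming $L$ is known, which makes the argument above complete. To remove this assumption, I would estimate $L$ from the sample. By \Cref{lem:packingCoveringRescaling}, the optimal fractional makespan $\widehat{L}$ of \ref{program:olb-lb} on the samples is at most $p(1+\eps)L$ with probability $1-\delta$. For the reverse direction, a Bernstein/without-replacement argument in the style of \Cref{lem:lpRescaling} should give $\widehat{L}\ge p(1-\eps)L$ with high probability in the large-$L$ regime. Proving this lower bound directly is awkward because $\optRlb$ is a min-max; I would handle it by LP duality, writing the makespan as $\max_{w\in\Delta_m}\sum_t\min_\decisionvar\langle w,a_t(\decisionvar)\rangle$ and applying concentration over a net of $w$, or via a mean-independence argument. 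Alternatively, one can avoid estimation altogether by running a doubling/guessing scheme over budget levels, where the \cite{GM-MOR16} algorithm detects infeasibility. With the estimate $\widehat{L}/p$ in hand, we set $\Bpacking := (1+\calO(\eps))\widehat{L}/p\cdot\1$, which lies in $[L,(1+\calO(\eps))L]$. The estimate depends only on the samples, and the proofs of \Cref{thm:packingCovering} and \Cref{lem:packingCoveringRO} condition on the samples, so this data-dependent choice of budget is harmless. A union bound over the failure events, with $\delta$ rescaled by a constant, then finishes the proof.
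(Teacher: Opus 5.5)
Your known-$L$ argument is fine and matches the paper's feasibility check for the fluid relaxation. One small fix: $\Bcovering=\0$ violates the hypothesis $\beta\Bcoveringj=\Omega(\eps^{-2}\log(m/\delta))$ of \Cref{lem:packingCoveringRO}, whose reduction divides by $\Bcoveringjt$; the paper instead gives every decision covering load $1$ and sets $\Bcovering=n\1$. The real gaps are in removing knowledge of $L=\E[\optRlb]$, which depends on the unknown $\D_t$.

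\textbf{The lower tail is unproved.} You need $\optRlb[\samples]\ge p(1-\eps)L-\calO(\eps^{-1}\log(m/\delta))$, and the route you suggest fails quantitatively. A standard $\eps$-net of $\Delta_m$ has $(1/\eps)^{\Theta(m)}$ points, so a union bound over it puts $m\log(1/\eps)$ rather than $\log m$ into the makespan requirement. The missing idea, used in \Cref{lem:optMakespanConcentration}, is to pass through the fluid value $\optRlbfl$ and use one fixed witness on each side.
\begin{itemize}
\item \emph{Sample versus fluid.} Lower-bounding a maximum needs only one $w$. Take $y^*$, an optimal dual of the fluid LP; it is deterministic. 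Then $\sum_{t\in\samples}g_t(y^*,\sample_t)$, with $g_t(y,\sample_t)=\min_{\theta}\langle y,\tilde{\alloc}_t(\theta)\rangle\in[0,1]$, concentrates around $p\,\optRlbfl$ by two applications of \Cref{cor:concentrationWithoutReplacement}, with no union bound.
\item \emph{Fluid versus $L$.} You still need $\E[\optRlb]\le(1+\eta)\optRlbfl+\calO(\eta^{-1}\log m)$, an expected maximum versus a maximum of expectations. This is exactly where a dual net would be needed. The paper proves it on the primal side: plug a fixed optimal fluid primal solution into the realized LP, union-bound over only the $m$ coordinates, and integrate that upper tail.
\end{itemize}
Your doubling alternative does not fill the hole. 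The \ro{} algorithm of \Cref{lem:packingCoveringRO} gives no certificate of infeasibility. Moreover, choosing the budget according to how the sample run went is again a data-dependent choice of the kind discussed next.

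\textbf{A sample-dependent budget is not harmless.} The claim that the proofs ``condition on the samples'' is false for \Cref{thm:packingCovering}. Only \Cref{lem:packingCoveringRO} is conditional on the sample multiset. The transfer to the online run goes through \Cref{lem:tangent}, which needs the rule $\blackbox(\cdot\mid\hist_{\pi(i)})$ used at step $i$ to be $\F_{i-1}$-measurable. That is what makes the fresh sample $\sample_{\pi(i)}$ and the online request $\online_{\pi(i)}$, which are i.i.d.\ given $\F_{i-1}$, receive identically distributed decisions. If the budget is a function of all samples, the rule at step $i$ depends on $\sample_{\pi(i)}$ but not on $\online_{\pi(i)}$. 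The tangent relation then breaks, and with it all of \Cref{lem:algConnectionProperties}.

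The paper's fix in \cref{alg:loadBalancingViaPackingCovering} has three parts: build $\hopt$ from half the sample, run \Cref{alg:genericOnline} on the other half, and use the randomized budget $(1+U\eps)\hopt$ with an independent $U\sim\unif(0,1)$. On $\goodsampling$ this budget lies in the deterministic window $[L,(1+4\eps)L]$, so
\[
\pr\bigl[\goodsampling\cap\goodloadcomp[(1+U\eps)\hopt]\bigr]\le\frac{1}{\eps L}\int_{L}^{(1+4\eps)L}\pr\bigl[\goodloadcomp[b]\bigr]\,\mathrm{d}b\le\frac{\delta}{2},
\]
where each fixed budget $b\ge L$ is covered by \Cref{cor:packingCovering}. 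This averaging over fixed budgets is what legitimizes a data-dependent budget. Rounding $\hopt$ up to a geometric grid and union-bounding over the $\calO(1)$ grid points in the window would work equally well.
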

\paragraph{Algorithm.} Our algorithm follows the standard template for converting an algorithm for Online Packing Multiple-Choice LPs to an Online Generalized Load Balancing algorithm (see, e.g., \cite{GuptaM26}). The main idea is to use part of the sample to estimate the expected optimal makespan $\E[\optRlb]$, and the remaining part of the sample to run the Packing-Covering version of \cref{alg:genericOnline} with this estimate as the packing constraint. We state the algorithm for a generic estimator $\hopt$ which will be chosen shortly.

\begin{algorithm}[H]
\caption{\textsc{Online Generalized Load-Balancing via Packing-Covering LPs}}
\label{alg:loadBalancingViaPackingCovering}
\KwIn{time horizon $n \geq 1$, $\nsamples$ sampled indices $\samples\subseteq [n]$, samples $(\sample_t)_{t\in\samples}$}
    Let $\samples_1$ and $\samples_2$ be the first and second $\nicefrac{p n}{2}$ sample indices of $\samples$, respectively.
    
    Compute estimator $\hopt$ of $\E[\optRlb]$ based on $\samples_1$.\\

    Run \cref{alg:genericOnline} for packing-covering LPs as specified in \cref{cor:packingCovering} on the following instance:
    \begin{itemize}
        \item Sample $U\sim\unif(0,1)$, and let the packing and covering constraints be $\Bpackingj = (1+ U \eps)\hopt$ and $\Bcoveringj = n$ $\forall j\in[m]$.
        \item For each load balancing request $\online_t = (\load_t, \decisionset)$, define the associated packing-covering request $\packingat = \loadat$ and $\coveringatj = 1$ for all $j\in[m], \decisionvar \in \decisionset$ (and similarly for the samples $(\sample_i)_{i\in\samples_2}$).
        \item Feed the samples $\samples_2$ offline and online requests to \cref{alg:genericOnline}, and take the decisions $\decisiononline$.
    \end{itemize}
\end{algorithm}
Notice that the covering constraints in the above LPs will be trivially satisfied (since all covering allocations are $\coveringonlinej = 1$). Since the packing constraint upper-bounds the load on each resource, finding feasible solutions to both LPs yields an allocation with makespan at most $(1+\eps)\hopt$. Thus, we need an estimator which (i) is close to $\E[\opt]$ so that the computed makespan is small, yet (ii) sufficiently large so that, with high probability, the constructed LPs will be feasible.

To obtain such an estimator of $\E[\optRlb]$, we use the well-known fact that $\optRlb[\samples]$ concentrates about an appropriately-scaled version of $\E[\optRlb]$; similar versions have appeared, e.g., in \cite[Lemma 5.2]{GuptaM26}. Since the proof is standard, we defer it to \cref{sec:loadBalancingConcentration}.

\begin{restatable}{Lemma}{restateOptMakespanConcentration}\label{lem:optMakespanConcentration}
    Let $\optRlb[\samples]$ denote the optimal fractional value to \ref{program:olb-lb} on samples $(\sample_i)_{i\in\samples}$, and similarly $\optRlb$ the optimal fractional value on requests $(\online_t)_{t\in[n]}$. If the samples and requests are generated according to \preview{}, then, for any $\eps, \delta\in (0,1)$, with probability at least $1-\delta$,
    \[
        |\optRlb[\samples] - p \E[\optRlb] |
        \leq p\eps\E[\optRlb] + 144 \eps^{-1} \log(\nicefrac{8 m}{\delta}).
    \]
\end{restatable}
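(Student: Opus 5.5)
Both directions go through an intermediate quantity: the fractional load-balancing optimum $\optRlb'$ on \emph{fresh samples} $\sample_t\sim\D_t$ for all $t\in[n]$. I will show (i) $\optRlb[\samples]$ is close to $p\,\optRlb'$, and (ii) $\optRlb'$ concentrates around $\E[\optRlb'] = \E[\optRlb]$. The equality in (ii) holds because $\optRlb'$ and $\optRlb$ have the same law. Each step is a two-sided comparison that uses a fractional solution for one LP to build a fractional solution for the other, and then applies Bernstein's inequality (\cref{cor:concentrationWithoutReplacement}) coordinatewise with a union bound over the $m$ machines. I allot failure probability $\delta/4$ to each of the four one-sided bounds. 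Throughout I use the elementary inequality $\sqrt{2xL}\le \eta x + L/(2\eta)$ in the form "deviation $\le \eta\cdot$mean $+\,O(\log(m/\delta)/\eta)$", exactly as in the proof of \cref{lem:packingCoveringRescaling}.

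\textbf{Upper bounds.} Let $x^*$ be an optimal solution of the fluid version of \ref{program:olb-lb}: it minimizes the makespan of expected loads, with $x_{t,\online,\decisionvar}$ depending on the realized request. Its value $\overline{\opt}$ satisfies $\overline{\opt}\le \E[\optRlb]$, by the averaging argument of \cref{obs:fluidOptUpperBound}: set $x_{t,\online,\decisionvar}=\E[z^*_{t,\decisionvar}\mid\online_t=\online]$. Plug $x^*$ into the samples.
\begin{itemize}
\item By independence of the $\sample_t$ and Bernstein, with high probability $\optRlb'\le (1+\eta)\overline{\opt} + O(\log(m/\delta)/\eta)$.
\item Conditioned on the samples, restricting to the uniformly random $\samples$ and applying Bernstein without replacement gives $\optRlb[\samples]\le (1+\eta)p\,\optRlb' + O(\log(m/\delta)/\eta)$.
\end{itemize}
Combining the two and choosing $\eta=\Theta(\eps)$ gives $\optRlb[\samples]\le p(1+\eps)\E[\optRlb] + O(\eps^{-1}\log(m/\delta))$.

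\textbf{Lower bounds.} This is the harder direction, because a solution on $\samples$ does not obviously extend to $[n]$. I will avoid extending it and instead use duality. The LP \ref{program:olb-lb} equals
\[
\max_{w\in\Delta_m}\ \sum_t \min_{\decisionvar}\ \langle w,\loadat\rangle .
\]
For the comparison between the sample LP and $\optRlb'$, I would use a Lipschitz/bounded-difference argument. Changing a single request changes $\optRlb$ by at most $1$, since loads lie in $[0,1]$. A plain bounded-difference inequality, however, gives deviation $\sqrt{n}$, not $\sqrt{\E\opt}$.

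I therefore plan a self-bounding argument. $\optRlb$ is a \emph{sublinear}, monotone, $1$-Lipschitz function of the request set: removing a request never increases it, and adding one increases it by at most $1$. Moreover, $\sum_t(\optRlb-\optRlb^{(-t)})\le \optRlb$. This holds because, with an optimal dual $w$, each difference is at most that request's dual contribution $\min_\decisionvar\langle w, \alloc_t(\decisionvar)\rangle$, and these contributions sum to $\optRlb$. Such functions are self-bounding, so Boucheron--Lugosi--Massart concentration gives two-sided deviation $O(\sqrt{\E[\optRlb]\log(1/\delta)}+\log(1/\delta))$. This handles (ii), and the Bernstein-type form then yields the $\eta$-split.

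For (i), $\optRlb[\samples]$ as a function of which indices are included is handled by the same self-bounding property for sampling without replacement. Alternatively, I would treat $\samples$ as Bernoulli($p$) inclusions, for which the same concentration applies. In addition, $\E[\optRlb[\samples]]\ge p\,\E[\optRlb]$: project the dual $w$ from the full instance onto the sub-instance, so that the sample keeps in expectation a $p$-fraction of the dual value.

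\textbf{Main obstacle.} The hard step is this lower tail, specifically the inequality $\E[\optRlb[\samples]]\gtrsim p\,\E[\optRlb]$ combined with concentration. The difficulty is that the optimal dual is correlated with which requests appear in the sample. I would first try to fix $w$ as the optimal dual of the full request instance, which is independent of $\samples$ given the samples, and lower-bound $\optRlb[\samples]\ge\sum_{t\in\samples}\min_\decisionvar\langle w,\tilde\alloc_t(\decisionvar)\rangle$. This reduces the question to Bernstein-type concentration of a sum of $[0,1]$ terms whose total is $\optRlb'$. That reduction avoids self-bounding for step (i), leaving self-bounding only for (ii).

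Constants are then tuned to reach the stated $144\eps^{-1}\log(8m/\delta)$.
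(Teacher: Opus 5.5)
Your proof is correct. The upper-bound half is the same as the paper's: plug an optimal fluid solution into fresh samples $\sample_t$ for all $t\in[n]$, then apply Bernstein for the independent draws and Bernstein without replacement for $\samples$.

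The lower bound is where you take a genuinely different route.

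\textbf{The paper's route.} It fixes $y^*$, an optimal dual of the fluid relaxation $\LPlbrelax$. Since $y^*$ is deterministic, $\optRlb[\samples]\ge\sum_{t\in\samples}g_t(y^*,\sample_t)$ is a sum of $[0,1]$ terms, and two plain Bernstein steps bring it down to $p(1-\eta)^2\optRlbfl$. The averaging argument only gives $\optRlbfl\le\E[\optRlb]$. To close the gap in the other direction, the paper integrates the $p=1$ case of the upper tail, obtaining $\E[\optRlb]\le(1+\eta)\optRlbfl+O(\eta^{-1}\log m)$.

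\textbf{Your route.} You take the random optimal dual of the full fresh-sample instance, which is independent of $\samples$. One Bernstein-without-replacement step then gives $\optRlb[\samples]\ge(1-\eta)p\,\opt'-O(\eta^{-1}\log(1/\delta))$, with no fluid LP at all; here $\opt'$ is your optimum on fresh samples. The price is that you then need a lower tail for $\opt'$ itself. You obtain it correctly from the self-bounding property: $\opt'$ is monotone and $1$-Lipschitz, and the dual gives $\sum_t(\opt'-\opt'_{-t})\le\opt'$, where $\opt'_{-t}$ drops request $t$. The Boucheron--Lugosi--Massart lower tail $\pr[f\le\E f-u]\le e^{-u^2/(2\E f)}$ then applies, and with $\E[\opt']=\E[\optRlb]$ and $\sqrt{2xL}\le\eta x+L/(2\eta)$ it yields the multiplicative form.

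\textbf{What each buys.} Your route uses a heavier off-the-shelf concentration inequality. In exchange it gives a cleaner lower tail, with constants far below $144$ and no $\log m$. The paper's route stays entirely within Bernstein plus minimax, and as a by-product shows that the fluid relaxation is tight in expectation.

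Two small remarks. Your first suggestions for step (i), self-bounding under sampling without replacement or Bernoulli inclusions, are unnecessary and would need extra justification; the fixed-dual reduction you settle on is the right one. That reduction also removes the correlation obstacle you flag, because the dual depends only on the samples and not on which indices land in $\samples$.
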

\cref{lem:optMakespanConcentration} allows us to construct estimators $\hopt_\ell$ satisfying the two desired properties listed above. We are thus now prepared to prove \cref{thm:loadBalancing}.

\begin{proof}[Proof of \cref{thm:loadBalancing}]
We establish the result for \cref{alg:loadBalancingViaPackingCovering}.
Let $\samples_1$ and $\samples_2$ be the first and second $\nicefrac{p n}{2}$ samples from the \preview{} respectively.
Notice that $\samples_1$ and $\samples_2$ are each (marginally) a \previewatmath[$\nicefrac{p}{2}$] for requests in $[n]$.
In particular, we run \cref{alg:loadBalancingViaPackingCovering} with estimator:
\begin{align}\label{eq:optMakespanEst}
    \textstyle \hopt = \frac{1}{1-\nicefrac{\eps}{3}}((\nicefrac{p}{2})^{-1}\optRlb[\samples_1] + 144 (\nicefrac{p}{2})^{-1}(\nicefrac{\eps}{3})^{-1}\log(\nicefrac{16 m}{\delta})). 
\end{align}

Now, let $\goodsampling = \{\E[\optRlb] \leq \hopt \leq (1+2\eps)\E[\optRlb]\}$ be the good sampling event depending on samples in $\samples_1$, and notice $\pr[\goodsampling] \geq 1-\nicefrac{\delta}{2}$ by \cref{lem:optMakespanConcentration}, since $\E[\optRlb] = \Omega(p^{-1}\eps^{-2}\log(\nicefrac{m}{\delta}))$. 
For $b\geq 0$, let $\goodload[b]$ denote the event that \cref{alg:loadBalancingViaPackingCovering} run with packing constraints $\Bpackingj = b$ for all $j\in[m]$ finds an online allocation with makespan at most $(1+c \eps)b$ for an absolute constant $c>0$.

Notice that, whenever $b \geq \E[\optRlb]$, then the packing-covering LP relaxation \ref{program:opc-relax} defined in \cref{alg:loadBalancingViaPackingCovering} is feasible: Indeed, if we define $\xlpat{\online} = \E[\xstarip \mid \online_t = \online]$, where $(\xstarip)_{t\in[n], \decisionvar\in\decisionset}$ is an optimal solution to \ref{program:olb-lb} on requests $(\online_t)_{t\in[n]}$, then the covering constraints are trivially satisfied, and the packing constraints are satisfied since
\begin{align*}
    \textstyle\forall j \in [m]:
    \sum_{t\in[n]} \E_{\online_t}[\sum_{\decisionvar\in\decisionset}\packingatj \xlp]
    = \E[\sum_{t\in[n]}\sum_{\decisionvar\in\decisionset} \loadonlinej \xstarip]
    \leq \E[\optRlb] \leq b.
\end{align*}
Thus, since $\E[\optRlb] = \Omega(p^{-1}\eps^{-2}\log(\nicefrac{m}{\delta}))$ and $(\sample_{t})_{t\in\samples_2}$ is a \previewat[\nicefrac{p}{2}], when \cref{alg:loadBalancingViaPackingCovering} is run with any fixed packing constraint $b\geq \E[\optRlb]$, by \cref{cor:packingCovering}, $\pr[\goodload[b]] \geq 1-\nicefrac{\delta}{8}$.
Now, notice that the packing constraints from \cref{alg:loadBalancingViaPackingCovering} are $\Bpackingj \sim \unif[\hopt, (1+\eps)\hopt]$. Thus, we can decompose
\begin{align*}
    \textstyle\pr[\goodsampling \cap \goodloadcomp[(1+U\eps)\hopt]]
    = \E\left[\1[\goodsampling] \int_{\hopt}^{(1+\eps)\hopt} \frac{\1[\goodloadcomp[b]]}{\eps \hopt} ~ \mathrm{d}b\right]
    \leq \int_{\E[\optRlb]}^{(1+4\eps)\E[\optRlb]} \frac{\pr[\goodloadcomp[b]]}{\eps \E[\optRlb]}  ~ \mathrm{d}b
    \leq \frac{\delta}{2},
\end{align*}
where the first inequality uses the definition of $\goodsampling$, and the second inequality uses the fact that $\pr[\goodloadcomp[b]] \leq \nicefrac{\delta}{8}$ for all $b\geq \E[\optRlb]$. Combining the above inequality with our observation that $\pr[\goodsampling^c] \leq \nicefrac{\delta}{2}$, it follows that $\pr[\goodsampling^c \cup \goodloadcomp[(1+U\eps)\hopt]] = \pr[\goodsampling^c] + \pr[\goodsampling\cap \goodloadcomp[(1+U\eps)\hopt]] \leq \delta$.
By construction, the packing allocations correspond to the request loads, i.e., $\packingat = \loadat$. Therefore, we conclude that, whenever $\goodsampling \cap \goodload$ hold (i.e., with probability at least $1-\delta$), the makespan of \cref{alg:loadBalancingViaPackingCovering} satisfies:
\[
    \textstyle\alg 
    \leq (1+U \eps)\hopt
    = (1+\calO(\eps)) \E[\optRlb] .     \qedhere
\]
\end{proof}

\newcommand{\lbDim}{z}
\newcommand{\lbN}{N}
\newcommand{\lbBudget}{B}
\newcommand{\lbNumJobs}{n}
\newcommand{\lbNumGroups}{G}
\newcommand{\lbNumHiddenGroups}{M}
\newcommand{\lbA}[1]{A_{#1}}
\newcommand{\lbB}[1]{B_{#1}}
\newcommand{\lbACount}[1]{a_{#1}}
\newcommand{\lbBCount}[1]{b_{#1}}
\newcommand{\lbChoice}[1]{q_{#1}}
\newcommand{\lbOptChoice}[1]{q_{#1}^{*}}
\newcommand{\lbHighCount}[1]{X_{#1}}
\newcommand{\lbTouchedSet}[1]{\mathcal T_{#1}}
\newcommand{\lbTouchedImbalance}[1]{W_{#1}}
\newcommand{\lbMachineLoad}[1]{L_{#1}}
\newcommand{\lbCapacity}[1]{c_{#1}}
\newcommand{\lbOccupancy}[1]{X_{#1}}
\newcommand{\lbForcedCount}[1]{Z_{#1}}
\newcommand{\lbAlgRA}{\mathsf{Alg}_{\mathsf{ra}}}
\newcommand{\lbOptRA}{\mathsf{Opt}_{\mathsf{ra}}}
\newcommand{\lbAlgLB}{\mathsf{Alg}_{\mathsf{lb}}}
\newcommand{\lbOptLB}{\mathsf{Opt}_{\mathsf{lb}}}

\section{Lower Bounds}
\label{sec:lower-bounds-new}

In this section, we show that the dependencies on $m$, $\epsilon$, and $p$ in our positive results are necessary.  For online resource allocation, we focus on the \psample{} model.  For the \sspi{} model, the known-i.i.d. construction of \cite{devanur2019near} already gives an $\Omega(\log m\cdot\epsilon^{-2})$ budget lower bound, even when the common arrival distribution is known to the algorithm.  The same construction can be benchmarked against the expected hindsight fractional optimum by retaining the realization-wise benchmark used in its proof.  Our \psample{} construction uses the same hypercube geometry, which also appears in \cite{AWY-OR14,GSW-STOC25}.  We prove the following theorem.

\begin{Theorem}[$p$-sample resource allocation]
\label{thm:lb-psample-ra}
For every $p\in(0,1/2]$, $\epsilon\in(0,10^{-3}]$,  sufficiently large $m$, and every randomized fractional online resource-allocation algorithm, there exists a fixed deterministic instance in the \psample{} model such that its common resource budget satisfies $\lbBudget=\Theta(\log m/(p\epsilon^2))$, and
\begin{equation*}
    \mathbb E[\lbAlgRA]
    \le
    (1-10^{-4}\epsilon)\,\lbOptRA.
\end{equation*}
Here, $\lbOptRA$ is the hindsight integral optimum on all $\lbNumJobs$ online requests.  The expectation is over the uniformly random $p$-sample and the algorithm's random coins.
\end{Theorem}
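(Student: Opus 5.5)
We will build a hard instance from the hypercube construction of \cite{devanur2019near,AWY-OR14}, and adapt it so that the sample reveals too little about a hidden parameter. Fix $\lbDim=\Theta(\log m)$ coordinates and identify resources with pairs (subsets) indexed by sign patterns, so that $m=2^{\Theta(\lbDim)}$ suffices. Requests come in $\lbNumGroups$ groups; each group has a hidden sign vector. It contains two request types $\lbA{}$ and $\lbB{}$ whose counts $\lbACount{}$ and $\lbBCount{}$ differ by a hidden amount of order $\sqrt{\lbBudget}$. The common budget is $\lbBudget=\Theta(\log m/(p\eps^2))$.

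The key point is that in the $p$-sample model, the sampled counts of each type are hypergeometric with standard deviation $\Theta(\sqrt{p\lbBudget})=\Theta(\sqrt{\log m}/\eps)$. A hidden imbalance of order $\eps\cdot p\lbBudget$ in the sample therefore has signal-to-noise ratio $O(1/\sqrt{\log m})$ per coordinate. The imbalance is arranged so that on the full instance this is an $\eps$-fraction of $\lbBudget$ and it decides which option is optimal. The adversarial order then presents all $\lbA{}$-type requests first and all $\lbB{}$-type requests last. Any online choice made during the $\lbA{}$ phase must commit to allocations before the imbalance is revealed.

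The steps, in order, are as follows.

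\textbf{Step (i): instance and value of $\lbOptRA$.} Specify the instance and compute $\lbOptRA$ exactly. The hindsight optimum routes each request to the resource matching the hidden sign pattern and fills every budget, earning value $\Theta(\lbNumJobs)$.

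\textbf{Step (ii): reduction to per-coordinate guessing.} Show that any algorithm's loss is bounded below by a sum over coordinates. A coordinate contributes whenever the algorithm's allocation during the first phase is misaligned with the hidden sign. Misalignment either wastes budget, costing $\Omega(\eps\lbBudget)$ value, or causes overflow that forces rejection of later high-value requests. The hypercube geometry makes each resource aggregate $\Theta(\lbDim)$ independent coordinate guesses. A constant fraction of wrong guesses then produces a $\Theta(\sqrt{\lbDim})$-standard-deviation load fluctuation, which is $\Omega(\eps\lbBudget)$ relative to a budget calibrated at $\eps^{-2}\lbDim/p$.

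\textbf{Step (iii): information bound.} Bound the posterior accuracy of each coordinate's sign given the $p$-sample. Compare the hypergeometric sample-count distributions under the two signs and bound their total variation distance by $O(\eps\sqrt{p\lbBudget}/\sqrt{\ldots})$, which is at most a small constant. This uses a Hellinger or chi-square bound for hypergeometric distributions, valid because $p\le 1/2$ keeps the without-replacement variance $\Theta(p\lbBudget)$. Consequently every algorithm errs on each coordinate with probability at least a constant, say $0.4$, independently across coordinates since the groups are disjoint.

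\textbf{Step (iv): conclusion.} Combine Steps (ii) and (iii) with Yao's principle. Randomize over the hidden signs, argue that some fixed deterministic instance achieves at least the average loss, and track constants so that the loss is at least $10^{-4}\eps\,\lbOptRA$.

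I expect Step (ii) to be the main obstacle. The difficulty is making rigorous that constant-probability per-coordinate errors translate into an $\Omega(\eps)$ fraction of lost value for a fractional algorithm that may hedge. I would handle this as in \cite{devanur2019near}: show that hedging, meaning fractional splitting, costs value linearly in the split, while committing incurs the anti-concentration loss. A convexity argument then shows that the best mix still loses $\Omega(\eps\lbBudget)$ per resource. The anti-concentration of the sum of $\lbDim$ independent error indicators would come from Berry--Esseen or Paley--Zygmund.
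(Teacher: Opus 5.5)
Your architecture matches the paper's: hypercube geometry, a hidden per-coordinate parameter the $p$-sample cannot resolve, and averaging over the hidden parameters to fix a deterministic instance. However, the quantitative core fails in two places.

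First, Step (iii) is calibrated at the scale of the whole budget instead of a single coordinate. Every resource carries load from all $z=\log_2 m$ coordinates, and feasibility on all $2^z$ resources is exactly $\sum_\ell \max\{a_\ell,b_\ell\}\le B$ (\Cref{obs:ra-hypercube}). So in a symmetric construction each coordinate's requests can occupy only $\Theta(B/z)=\Theta(1/(p\eps^2))$ units, and its sampled counts fluctuate by $\Theta(\sqrt{pB/z})=\Theta(1/\eps)$, not $\Theta(\sqrt{pB})$. With the quantities as you state them, the signal-to-noise ratio is $\eps pB/\sqrt{pB}=\eps\sqrt{pB}=\Theta(\sqrt{\log m})$, not $O(1/\sqrt{\log m})$. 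The sample would then reveal each hidden sign except with probability $m^{-\Omega(1)}$, so your constant per-coordinate error probability does not follow. Your two stated sizes of the imbalance, $\sqrt B$ and $\eps B$, also disagree with each other. The correct calibration is a per-coordinate hidden shift of $\Theta(1/(p\eps))$, i.e.\ an $\eps$-fraction of $B/z$, whose sampled signal $\Theta(1/\eps)$ is comparable to the noise.

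Second, with that calibration, Step (ii) as you describe it loses exactly the factor the theorem is about. If per-coordinate errors of size $\Theta(1/(p\eps))$ only combined into a $\Theta(\sqrt z)$ fluctuation, the total loss would be $\Theta(\sqrt z/(p\eps))=\Theta(\eps B/\sqrt{\log m})$. The point of the hypercube is that the worst resource takes the larger side of every coordinate simultaneously, so per-coordinate losses add \emph{linearly}. You need neither independence across coordinates (which fails anyway for a fixed-size sample) nor anticoncentration of a sum; linearity of expectation plus a per-coordinate bound suffices.

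Making that per-coordinate loss exact and linear requires a concrete value structure, which Step (i) does not supply. (Exact linearity also disposes of fractional hedging without any convexity argument.) Also, ``routing each request to the matching resource'' is not the single-minded hypercube model. The paper's construction is:
\begin{enumerate}
\item Value-$4$ requests of type $B_\ell$ pin each coordinate's $A_\ell$-capacity at $N/(2p)$, with $N=\Theta(\eps^{-2})$.
\item A fixed, known batch of $\sqrt N/(2p)$ value-$2$ requests of type $A_\ell$ arrives first, so arrivals leak nothing before commitment.
\item The hidden information sits in $2N$ macro-groups of $1/(2p)$ type-$A_\ell$ requests, each of value $1$ or $3$.
\end{enumerate}
Revealing everything after the first-stage choice $q_\ell$ makes the loss exactly $|q_\ell-q^*_\ell|$. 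Anticoncentration over the untouched macro-groups then gives $\E|q_\ell-q^*_\ell|\ge\sqrt N/(3840p)$, which sums to $\Omega(z\sqrt N/p)=\Omega(\eps B)$.

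Your hypergeometric-noise idea could replace the macro-groups inside this construction. A two-point prior on the number of value-$3$ requests, with the two values separated by $c\sqrt N/p$ for a small constant $c$, is masked by sampling noise when $p\le 1/2$. But this works only after both fixes above.
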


For load balancing, we give lower bounds both for the known i.i.d.\,and the \psample{} models.

\begin{Theorem}[Known i.i.d. load balancing]
\label{thm:lb-one-sparse-iid}
For every $\epsilon\in(0,10^{-3}]$ and all sufficiently large $m$ and every randomized fractional online load-balancing algorithm, there exists an online load balancing instance with $m$ resources and $n$ requests with known i.i.d. distribution such that \begin{equation*}
    \mathbb E[\lbAlgLB]
    \ge
    (1+10^{-4}\epsilon)\,\lbOptLB,
\end{equation*}
where $\lbOptLB$ is the expected hindsight integral optimum on all $\lbNumJobs$ realized requests, and $\lbOptLB=\Theta(\log m/\epsilon^2)$. 
\end{Theorem}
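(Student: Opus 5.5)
We construct a hard known-i.i.d. instance by transferring the hypercube construction of \cite{devanur2019near} (used for resource allocation) to load balancing, in the "1-sparse" regime where every option of a job loads a single machine. We take $m=2^{\lbDim}$ machines, indexed by $\{0,1\}^{\lbDim}$, where $\lbDim=\Theta(\log m)$. There are $\lbDim$ coordinate types, and we draw $n$ jobs i.i.d. from a known distribution. A job of type $i$ carries a hidden half-cube label: it may be placed on any machine whose $i$-th coordinate matches its label, i.e.\ on machines in $\lbA{i}$ or in $\lbB{i}$, with unit load. Rather than allow an arbitrary machine within a half-cube, the job chooses one "direction" $\lbChoice{t}\in\{0,1\}$. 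The resulting load on machine $x$ is then determined by how many jobs of each coordinate chose the side containing $x$.

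The parameters are arranged so that the counts $\lbACount{i}, \lbBCount{i}$ of jobs of type $i$ with each label have mean $\Theta(\log m/\epsilon^2)/\lbDim$ and fluctuate by a relative $\Theta(\epsilon)$. The hindsight optimum sees the realized imbalances $\lbTouchedImbalance{i}$ and balances them exactly, giving $\lbOptLB=\Theta(\log m/\epsilon^2)$. An online algorithm, whose choice for job $t$ may depend only on past jobs, cannot anticipate the future imbalance.

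The key steps, in order, are as follows.
\begin{enumerate}
\item Compute $\lbOptLB$ through the fractional configuration LP, using the symmetry of the cube. Show that its expectation is $(1\pm O(\epsilon^2))$ times the average load. This is because the optimum can route the excess of each coordinate to whichever side is underloaded.
\item Show that for any online policy there is, for each coordinate $i$, a committed quantity $\lbHighCount{i}$. This quantity differs from its hindsight-optimal value by $\Omega(\sqrt{\text{count}})=\Omega(\epsilon\cdot\text{mean})$ with constant probability. The reason is that the final labels of the jobs arriving in the last constant fraction of the sequence are independent of the earlier choices, so a standard anti-concentration (binomial variance) argument applies.
\item Aggregate over coordinates. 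The load on machine $x$ is a sum, over $i$, of the per-coordinate errors with signs determined by $x$. By choosing $x$ to agree with the sign of every error (the hypercube geometry), the maximum machine load is at least the mean plus $\sum_i|\text{error}_i|$. Because the $\lbDim=\Theta(\log m)$ errors are roughly independent, this sum is $\Omega(\lbDim\cdot\epsilon\cdot\text{mean}/\lbDim)$ in expectation. Taking the expectation gives $\E[\lbAlgLB]\ge(1+c\epsilon)\lbOptLB$, and tuning the constants yields $c\ge10^{-4}$.
\end{enumerate}

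We expect step 3 to be the main obstacle. The difficulty is twofold. First, the per-coordinate errors must add up coherently, rather than canceling, on a single machine; this is where picking the sign-matched vertex is essential. Second, the optimum must be able to avoid this error on every machine simultaneously, which requires the slack in step 1 to be $O(\epsilon^2)$, not $\Theta(\epsilon)$.

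For step 2, the first thing to try is a reduction to a one-dimensional game. Fixing a coordinate, the algorithm sequentially assigns $\pm1$ tokens with the goal of matching a final random-walk endpoint, and any online strategy incurs expected error $\Omega(\sqrt{N})$. Because the algorithm is randomized and fractional, we apply Yao's principle over the known distribution, and this reduction suffices.
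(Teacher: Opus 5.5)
There is a genuine gap in your proposal, in two places.

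\textbf{The construction is not coherent.} Suppose the jobs really are 1-sparse, so each option puts unit load on a single machine of a half-cube. Then the load on machine $x$ is not a sum over coordinates of per-coordinate side counts, and the sign-matched vertex of step 3 has nothing to act on. Indeed, a half-cube for coordinate $i'$ meets both halves of every other coordinate $i$, so every job can be used to rebalance every other coordinate.

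The aggregation you describe is really the hypercube capacity identity from the paper's resource-allocation lower bound. That identity needs \emph{dense} requests that load an entire half-cube. With dense requests the makespan is exactly $\sum_i\max\{a_i,b_i\}$, and aggregation is plain linearity of expectation, so no independence is needed. Since the model allows dense loads, this reading could in principle prove the theorem as stated, though not the sparse version the paper stresses.

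However, your description both gives each job a fixed half-cube label and lets it choose a direction. You never say which jobs are forced, which are flexible, or in what proportion, and that proportion is where the whole argument lives.

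\textbf{Step 2 fails under known i.i.d.\ arrivals.} Let $N=\Theta(\epsilon^{-2})$ be the per-coordinate count. Flexible jobs are interleaved with forced ones, including throughout the last constant fraction of the sequence. So the fact that late labels are independent of earlier choices does not give $\Omega(\sqrt{N})$ error. The greedy rule that sends each flexible job to the currently lighter side makes the per-coordinate imbalance a random walk with restoring drift $\phi$ (the flexible frequency), which ends at $O(1/\phi)$ in expectation. For constant $\phi$ the total online loss is then only $O(\epsilon^2)\cdot\mathsf{Opt}$.

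Your one-dimensional game gives $\Omega(\sqrt{N})$ only if the tokens are committed before the walk is seen. That is the structure of the $p$-sample instance, where flexible macro-groups arrive first, not of i.i.d.\ arrivals.

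The missing idea is a tuning. Take $\phi=\Theta(\epsilon)$, and look only at a suffix of about $\phi^{-2}$ jobs per coordinate. In that suffix the flexible jobs are outnumbered by the forced fluctuation, even for an algorithm shown the whole suffix at once.

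This also shows that step 1 is mis-specified.
\begin{itemize}
\item Demanding $O(\epsilon^2)$ slack for the optimum forces $\phi\gtrsim\epsilon\sqrt{\log(1/\epsilon)}$. At that point greedy loses only $O(\epsilon/\sqrt{\log(1/\epsilon)})\cdot\mathsf{Opt}$, which is too small.
\item With $\phi=c\epsilon$, the optimum itself loses roughly $e^{-\Theta(c^2)}\epsilon\cdot\mathsf{Opt}$. What you must show is that the online loss, $\Omega(\epsilon/c)\cdot\mathsf{Opt}$, dominates this.
\end{itemize}

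\textbf{How the paper does it.} The paper avoids the hypercube altogether. Each request is flexible (it may go to any machine) with probability $\epsilon$, and is otherwise exclusive to a uniformly random machine, with $n=100m\epsilon^{-2}\log m$.
\begin{itemize}
\item Bernstein's inequality plus a union bound give $\mathsf{Opt}=\lceil n/m\rceil$ with high probability.
\item The paper then reveals the last $3m\log m/(50\epsilon^2)$ requests to the algorithm at once.
\item The residual capacities sum to the suffix length, so an averaging argument finds at least $\epsilon m/4$ machines with small residual capacity.
\item A binomial tail lower bound, combined with negative correlation of the exceedance indicators, forces some machine to overflow by $\log m/(25\epsilon)$ with probability at least $1/2$.
\end{itemize}
There the $\log m$ comes from a maximum over $\Theta(\epsilon m)$ machines, not from $\log m$ coordinates.
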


\begin{Theorem}[$p$-sample load balancing]
\label{thm:lb-one-sparse-psample}
For every $p\in(0,1/2]$, $\epsilon\in(0,10^{-3}]$, and all sufficiently large $m$ and every randomized fractional online load-balancing algorithm, there exists an online load balancing instance in the \psample{} model such that 
\begin{equation*}
    \lbOptLB
    =
    \Theta(p^{-1}\eps^{-2}\log(m))
    \qquad\text{and}\qquad
    \mathbb E[\lbAlgLB]
    \ge
    (1+10^{-3}\epsilon)\,\lbOptLB.
\end{equation*}
Here, $\lbOptLB$ is the hindsight integral optimum on all $\lbNumJobs$ online requests, and the expectation is over the uniformly random $p$-sample and the algorithm's random coins.
\end{Theorem}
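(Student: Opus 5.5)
The plan is to use Yao's principle: construct a random family of deterministic \psample{} instances and show that every deterministic fractional online algorithm suffers the claimed loss in expectation over the family and the sample. The instances use $1$-sparse loads, so each option of a job puts unit load on a single machine. Each option below names one machine, meaning the job adds load $1$ to that machine and nothing elsewhere.

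\textbf{Construction.} Partition the $m$ machines into $\lbNumGroups=\Theta(m)$ disjoint pairs $(u_g,w_g)$. Group $g$ receives two kinds of jobs.
\begin{enumerate}
\item[(i)] First, $\lbN$ \emph{flexible} jobs arrive, each of which may be placed on either $u_g$ or $w_g$.
\item[(ii)] Then $2\lbN$ \emph{rigid} jobs arrive. Of these, $\lbACount{g}$ can only go to $u_g$ and $\lbBCount{g}=2\lbN-\lbACount{g}$ can only go to $w_g$, where $\lbACount{g}-\lbN$ is drawn independently for each $g$ from a symmetric distribution on $[-\lbN/2,\lbN/2]$.
\end{enumerate}
In every instance the offline optimum uses the flexible jobs to balance each pair exactly, so $\lbOptLB=3\lbN/2$ deterministically.

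\textbf{Information and loss.} The algorithm learns about $\lbACount{g}$ only from the $p$-sample. Conditioned on the group sizes, the sampled count of $u_g$-rigid jobs is hypergeometric, and the observed counts across groups are nearly independent; I would make this exact by Poissonization, or by arguing conditionally on the sampled multiset sizes. Since $p\le 1/2$, a constant fraction of the rigid jobs are unseen. I will choose the prior to be Gaussian-like with standard deviation $\Theta(\sqrt{\lbN/p})$, truncated. Then the posterior of $\lbACount{g}-\lbN$ given the sample has variance $\Theta(\lbN/p)$, and it satisfies a two-sided tail lower bound $\pr[\,|\text{posterior deviation}|\ge \lambda\sqrt{\lbN/p}\,]\ge e^{-O(\lambda^2)}$ for $\lambda$ up to a large constant times $\sqrt{\log m}$. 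All flexible decisions of group $g$ are made before any of its rigid jobs arrive, so the split $f_g$ chosen by the algorithm is measurable with respect to the sample and the earlier flexible arrivals. Its final makespan on pair $g$ is therefore at least $3\lbN/2+\tfrac12\,|\lbACount{g}-\lbN-(\text{target implied by } f_g)|$.

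\textbf{Taking the maximum over groups.} Conditioned on the sample and all flexible decisions, the hidden imbalances are independent across groups, by the product prior together with Poissonization. Each group independently exceeds its committed split by $c\sqrt{\log m\cdot \lbN/p}$ with probability at least $m^{-1/2}$, once $c$ is chosen small enough in the tail bound. Hence with probability $1-(1-m^{-1/2})^{\Theta(m)}=1-o(1)$ some group incurs excess load $\Omega(\sqrt{\lbN\log m/p})$. Setting $\lbN=\Theta(\log m/(p\eps^2))$ makes this excess at least $10^{-2}\eps\lbN$, which gives $\E[\lbAlgLB]\ge(1+10^{-3}\eps)\lbOptLB$ with $\lbOptLB=\Theta(p^{-1}\eps^{-2}\log m)$. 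The requirement $\eps\le 10^{-3}$ ensures that the truncation at $\pm\lbN/2$ lies far outside the relevant tail range, and that feasibility issues do not arise.

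\textbf{Main obstacle.} The step I expect to be hardest is the posterior anti-concentration. The sample is drawn uniformly without replacement from all $n$ jobs, which couples the groups. It also reveals the sampled flexible jobs, and the sampled rigid counts are only hypergeometric rather than binomial. I would first Poissonize the sample (include each job independently with probability $p$) and show that this changes the relevant probabilities by $o(1)$. I would then compute the posterior of $\lbACount{g}$ given a binomial observation under the Gaussian-like prior, obtaining a lower-tail density bound via a local central limit theorem on the unseen $(1-p)$-fraction. The remaining steps are the routine max-of-independent argument above.
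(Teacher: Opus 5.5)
Your approach is correct in outline but takes a genuinely different route from the paper.

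\textbf{The paper's construction.} It uses $G=100m\epsilon^{-2}\log m$ labeled macro-groups of $1/(2p)$ identical requests. Each macro-group is independently flexible (action set $[m]$) with probability $\epsilon$, and otherwise forced onto a uniformly random machine; all flexible macro-groups arrive first. The sample has only $G/2$ requests, so it touches at most half the macro-groups. The machines of the untouched forced macro-groups stay i.i.d.\ uniform given everything the algorithm has seen, so no posterior computation is needed. The cost falls on the maximization side: flexible load can go to any of the $m$ machines. The paper therefore needs an averaging argument over residual capacities to find $\Omega(\epsilon m)$ tight machines, a binomial upper-tail lower bound, and the negative-correlation second-moment bound for multinomial occupancies.

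\textbf{What your pairs buy and what they cost.} Disjoint pairs make the maximization side trivial: each pair commits separately, the per-pair loss is explicit, and pairs are exactly independent. All the work moves into a Bayesian anti-concentration lemma at $\lambda\asymp\sqrt{\log m}$ posterior standard deviations. That lemma is true and provable roughly as you suggest. The map $a\mapsto\binom{a}{c}\binom{2N-a}{k-c}$ is log-concave, so with a Gaussian prior the posterior is log-concave with curvature $\Theta(p/N)$ on the relevant window, for typical samples. This gives $e^{-O(\lambda^2)}$ posterior mass at distance $\lambda\sqrt{N/p}$ beyond any sample-measurable target.

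\textbf{A way to avoid the posterior lemma.} You could borrow the paper's trick inside each pair. Let the rigid jobs come in labeled blocks of $1/(2p)$ identical jobs, each block independently all-$u_g$ or all-$w_g$. Then the untouched blocks give an unseen imbalance that is a Rademacher sum of scale $\Theta(\sqrt{N/p})$, independent of the sample. The price is that the optimum equals $3N/2$ only on a high-probability balanced event.

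\textbf{Two details to fix.}
First, the sample reveals indices. Your Yao distribution must therefore place the $u_g$-rigid jobs at a uniformly random subset of the $2N$ rigid slots. With a fixed order, the sampled positions would locate the $u/w$ boundary to within $O(1/p)\ll\sqrt{N/p}$, so the hypergeometric count would not be the only information.
Second, Poissonization is unnecessary. The sampled index set is independent of the instance, so conditioning on it already makes the groups exactly independent with hypergeometric likelihoods.
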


We remark that a crucial property of the hard instances for \Cref{thm:lb-one-sparse-iid} and \Cref{thm:lb-one-sparse-psample} is that the requests  are \emph{sparse}, i.e., each request is ``unit-demand'' where satisfying a request (in any possible way) only adds load to at most one resource. This is in contrast to the hard instances for the online resource allocation problem, where the requests are  \emph{dense}, i.e., each request puts load on $\Theta(m)$ resources simultaneously. Indeed, this property of dense requests is necessary for obtaining the $\log m$ factor in the supply lower bound for online resource allocation problems, as \cite{KRTV-SICOMP18} shows that when each request is interested in at most $s$ resources, the $\log m$ dependency can be further improved to $\log s$. Therefore, our lower bounds \Cref{thm:lb-one-sparse-iid} and \Cref{thm:lb-one-sparse-psample} rule out the possibilities of such improvements for generalized load balancing problem.

\subsection{Resource Allocation Lower Bound in the $\psample$ Model}
\label{sec:lb-psample-ra}

In this subsection, we give the proof of \Cref{thm:lb-psample-ra}. Throughout the proofs, we assume without loss of generality that $1/(2p)$ is an integer, and $m$ is a power of $2$.

\paragraph{The Hypercube Geometry.} We begin with introducing some basic facts we need for the hard instance. These facts mainly follow the hard instance constructions in \cite{devanur2019near}.

Let $\lbDim:=\log_2m$. Since we assume $m$ is a power of $2$, $\lbDim$ must be an integer. We identify $m = 2^{\lbDim}$ active resources with the bit strings in $\{0,1\}^{\lbDim}$.  For every $\ell\in[\lbDim]$, we define two request types:
\begin{itemize}
    \item a request of type $\lbA{\ell}$ requests all active resources whose $\ell$-th bit is $1$;
    \item a request of type $\lbB{\ell}$ requests all active resources whose $\ell$-th bit is $0$.
\end{itemize}
Thus, $\lbA{\ell}$ and $\lbB{\ell}$ are request types, not resources.  Every request is single-minded and receives its value only when allocated one copy of every requested resource.

\begin{Observation}[Hypercube capacity identity]
\label{obs:ra-hypercube}
Suppose that an allocation serves $\lbACount{\ell}$ units of type $\lbA{\ell}$ and $\lbBCount{\ell}$ units of type $\lbB{\ell}$, where the quantities may be fractional.  Its maximum consumption over the active resources is
    $\sum_{\ell=1}^{\lbDim}
    \max\{\lbACount{\ell},\lbBCount{\ell}\}$.
\end{Observation}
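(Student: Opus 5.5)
Fix an allocation that serves amounts $\lbACount{\ell}$ of type $\lbA{\ell}$ and $\lbBCount{\ell}$ of type $\lbB{\ell}$ for $\ell\in[\lbDim]$. The plan is to write the consumption of each active resource explicitly and then maximize over all $m=2^{\lbDim}$ bit strings.

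A request of type $\lbA{\ell}$ uses one unit of every active resource whose $\ell$-th bit is $1$. A request of type $\lbB{\ell}$ uses one unit of every active resource whose $\ell$-th bit is $0$. So the resource indexed by $x\in\{0,1\}^{\lbDim}$ has consumption
\[
    L(x) \;=\; \sum_{\ell=1}^{\lbDim} \bigl( x_\ell\,\lbACount{\ell} + (1-x_\ell)\,\lbBCount{\ell} \bigr).
\]
For each $\ell$, exactly one of the two types at coordinate $\ell$ touches $x$. Hence the $\ell$-th summand is either $\lbACount{\ell}$ or $\lbBCount{\ell}$, and it is determined by the single bit $x_\ell$.

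Next I maximize over $x$. Since $L(x)$ is a sum of terms, each depending on a different coordinate, the maximum over the product set $\{0,1\}^{\lbDim}$ splits coordinatewise:
\[
    \max_{x} L(x) \;=\; \sum_{\ell=1}^{\lbDim} \max_{x_\ell\in\{0,1\}}\bigl(x_\ell\,\lbACount{\ell}+(1-x_\ell)\,\lbBCount{\ell}\bigr) \;=\; \sum_{\ell=1}^{\lbDim}\max\{\lbACount{\ell},\lbBCount{\ell}\}.
\]
To see that this value is attained, take $x^*$ with $x^*_\ell=1$ exactly when $\lbACount{\ell}\ge\lbBCount{\ell}$. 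Fractional quantities cause no problem, because $L$ is linear in the served amounts.

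There is no real obstacle here. The only point that needs care is that every active resource is touched by exactly one of $\lbA{\ell}$ and $\lbB{\ell}$ for each $\ell$; this is what makes the consumption separable across coordinates, so the maximum of the sum equals the sum of the maxima.
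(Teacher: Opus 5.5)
Your proposal is correct and follows the same approach as the paper: write the load on resource $x$ as $\sum_{\ell:x_\ell=1}\lbACount{\ell}+\sum_{\ell:x_\ell=0}\lbBCount{\ell}$, then choose each bit $x_\ell$ to pick the larger of $\lbACount{\ell}$ and $\lbBCount{\ell}$. Your version is slightly more explicit than the paper's, since it states both the upper bound (each summand is at most the maximum) and attainment at $x^*$.
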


\begin{proof}
The load on resource $x\in\{0,1\}^{\lbDim}$ is $\sum_{\ell:x_\ell=1}\lbACount{\ell}+\sum_{\ell:x_\ell=0}\lbBCount{\ell}$.  Choosing each bit $x_\ell$ to select the larger of $\lbACount{\ell}$ and $\lbBCount{\ell}$ proves the observation.
\end{proof}

Now, we are ready to prove \Cref{thm:lb-psample-ra}.

\paragraph{The hard instance.} For simplicity of the notations, we define $\lbN:=(2\lceil(20\epsilon)^{-1}\rceil)^2$. Then, we have $\lbN=\Theta(\epsilon^{-2})$. The instance contains $m = 2^z$ resources, and the supply of each resource is $\lbBudget:=\lbDim\lbN/(2p)$, i.e., we have $\lbBudget=\Theta(\log m/(p\epsilon^2))$. For the arriving requests, we create the following three groups for every $\ell\in[\lbDim]$:
\begin{itemize}
    \item \emph{Group 1.} There are $\sqrt{\lbN}/(2p)$ requests of type $\lbA{\ell}$, each with value $2$.
    \item \emph{Group 2.} There are $2\lbN$ labeled macro-groups, each containing $1/(2p)$ identical requests of type $\lbA{\ell}$.  Independently for every macro-group, all its requests have value $1$ or all have value $3$, each with probability $1/2$.
    \item \emph{Group 3.} There are $\lbN/(2p)$ requests of type $\lbB{\ell}$, each with value $4$.
\end{itemize}

In the hard instance, the requests from Group 1 arrive first. Then, requests from Group 2 and Group 3 arrive in arbitrary order. Intuitively, if the $\ell$-th macro-group of Group 2 has value $1$, then the type $A_\ell$ requests from Group 1 should be accepted; otherwise these Group 1 requests need to be rejected. For the optimal offline algorithm, it can learn the values of all macro-groups from Group 2 in advance; however, in the \psample{} model, the online algorithm can only learn this information via $p$-fraction of the samples. We will later show that this information cannot be accurately learned for every group, and therefore every online algorithm must incur an $\Theta(\epsilon) \cdot \lbOptRA$ loss.

 We start from bounding the order of $\lbOptRA$.

\begin{Claim}
\label{claim:psample-ra-opt}
For every realization of the Group-2 values, the hindsight fractional optimum has an integral optimal solution that serves all $\lbN/(2p)$ requests of type $\lbB{\ell}$ and exactly $\lbN/(2p)$ requests of type $\lbA{\ell}$ for every $\ell$. Thus, the hindsight fractional and integral optima coincide. Moreover, we have $5\lbBudget\le\lbOptRA\le7\lbBudget$.
\end{Claim}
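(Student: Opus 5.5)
The argument splits by coordinate $\ell$. By Observation~\ref{obs:ra-hypercube}, a fractional allocation serving $\lbACount{\ell}$ units of type $\lbA{\ell}$ and $\lbBCount{\ell}$ units of type $\lbB{\ell}$ is feasible if and only if $\sum_{\ell}\max\{\lbACount{\ell},\lbBCount{\ell}\}\le \lbBudget = \lbDim\lbN/(2p)$. The objective also separates across $\ell$. So the hindsight LP is a knapsack-type problem: choose a level $c_\ell=\max\{\lbACount{\ell},\lbBCount{\ell}\}$ for each $\ell$ with $\sum_\ell c_\ell\le \lbBudget$. Given $c_\ell$, the best value for coordinate $\ell$ is obtained by serving $\min\{c_\ell,\lbN/(2p)\}$ type-$\lbB{\ell}$ requests and $\min\{c_\ell,\#\lbA{\ell}\}$ type-$\lbA{\ell}$ requests, greedily in decreasing order of value. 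Call this value $g_\ell(c_\ell)$.

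Next we compute $g_\ell$. The type-$\lbA{\ell}$ supply is $\sqrt{\lbN}/(2p)+2\lbN/(2p)\ge \lbN/(2p)$. Sorted by value, it has value-$3$ requests (counting $k_\ell/(2p)$, where $k_\ell$ is the number of value-$3$ macro-groups), then $\sqrt{\lbN}/(2p)$ requests of value $2$, then value-$1$ requests. Hence on $[0,\lbN/(2p)]$ the marginal value of increasing $c_\ell$ is $4$ (from $\lbB{\ell}$) plus at least $1$ (from $\lbA{\ell}$), so at least $5$. Beyond $\lbN/(2p)$ the $\lbB{\ell}$ supply is exhausted, and the marginal value is at most $3$. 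The function $g_\ell$ is therefore concave and piecewise linear, and its slope drops from at least $5$ to at most $3$ at the breakpoint $\lbN/(2p)$. Because the budget equals exactly $\lbDim\cdot\lbN/(2p)$, setting $c_\ell=\lbN/(2p)$ for all $\ell$ is optimal. Any reallocation moving mass $\delta$ from one coordinate to another loses at least $5\delta$ and gains at most $3\delta$. This exchange argument is the crux, and it should be done with the concave-knapsack structure stated explicitly.

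At this optimum each coordinate serves all $\lbN/(2p)$ type-$\lbB{\ell}$ requests and exactly $\lbN/(2p)$ type-$\lbA{\ell}$ requests, namely the top-valued ones. We still have to confirm that the optimal solution can be taken integral. The breakpoints involved, $k_\ell/(2p)$, $\sqrt{\lbN}/(2p)$, and $\lbN/(2p)$, are integers, because $1/(2p)$ is an integer and $\sqrt{\lbN}$ is an integer by the choice of $\lbN$. The greedy choice therefore serves whole requests, so the fractional and integral optima coincide.

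Finally, the value bounds. Per coordinate the value is $4\cdot\lbN/(2p)$ from $\lbB{\ell}$ plus between $1\cdot \lbN/(2p)$ and $3\cdot\lbN/(2p)$ from the type-$\lbA{\ell}$ requests served. Summing over $\ell$ gives $5\lbDim\lbN/(2p)\le \lbOptRA\le 7\lbDim\lbN/(2p)$, which is $5\lbBudget\le\lbOptRA\le7\lbBudget$. The main point requiring care is the slope comparison: we must verify the marginal values exactly at the breakpoint, so that no coordinate benefits from exceeding $\lbN/(2p)$.
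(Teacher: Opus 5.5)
Your proof is correct, but it organizes the argument differently from the paper. The paper proceeds by successive local improvements on $(a_\ell,b_\ell)$. If some $a_\ell>N/(2p)$, it trades type-$A_\ell$ mass of value at most $3$ for missing value-$4$ Group-3 mass. Once every $a_\ell\le N/(2p)$, it adds all Group-3 requests. Finally, it raises each $a_\ell$ to $N/(2p)$.

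You instead collapse each coordinate to the level $c_\ell=\max\{a_\ell,b_\ell\}$ and fill both types greedily at a fixed level. This gives a separable problem $\max\sum_\ell g_\ell(c_\ell)$ subject to $\sum_\ell c_\ell\le zN/(2p)$. Every $g_\ell$ has slope at least $5$ below the common breakpoint $N/(2p)$ and at most $3$ above it.

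Your route gives a one-line global optimality certificate in place of a chain of modifications. The slope bounds yield $g_\ell(c)\le g_\ell(N/(2p))+4\,(c-N/(2p))$ for every $c\ge 0$. Any multiplier in $[3,5]$ works here, and concavity is not even needed. Summing over $\ell$ and using $\sum_\ell c_\ell\le zN/(2p)$ shows the equal profile is optimal. Writing this inequality out is the cleanest way to make your ``moving mass $\delta$'' sentence rigorous.

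The paper's route is more elementary, since it never introduces $g_\ell$. For integrality, only $N/(2p)\in\mathbb{Z}$ matters: taking the top $N/(2p)$ unit requests of type $A_\ell$ always selects whole requests. So you do not need the integrality of the other breakpoints $k_\ell/(2p)$ and $\sqrt{N}/(2p)$.
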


\begin{proof}
Write $\lbACount{\ell}$ and $\lbBCount{\ell}$ for the accepted masses of the two types.  By \Cref{obs:ra-hypercube}, feasibility is equivalent to $\sum_\ell\max\{\lbACount{\ell},\lbBCount{\ell}\}\le\lbBudget$.

First, no optimal allocation needs $\lbACount{\ell}>\lbN/(2p)$.  If this occurs, capacity can be freed by dropping type-$\lbA{\ell}$ mass of value at most $3$ and used to accept missing Group-3 mass of value $4$; such missing mass must exist because the total capacity is $\lbDim\lbN/(2p)$.  Once $\lbACount{\ell}\le\lbN/(2p)$ for all $\ell$, every missing Group-3 request can be added without violating feasibility: either it does not increase the corresponding maximum, or the sum of the maxima is strictly below $\lbBudget$.  Thus, all Group-3 requests may be accepted.  With $\lbBCount{\ell}=\lbN/(2p)$, increasing any $\lbACount{\ell}<\lbN/(2p)$ remains feasible and strictly increases value.  This proves the structural statement. For every $\ell$, after fixing the accepted type-$\lbA{\ell}$ mass to be $\lbN/(2p)$, accepting the $\lbN/(2p)$ highest-value requests of this type is integral and weakly increases the value. Therefore, an optimal fractional allocation can be chosen to be integral, and the two optima coincide.

The Group-3 requests contribute $4\lbBudget$.  The accepted type-$\lbA{\ell}$ requests contribute between $\lbBudget$ and $3\lbBudget$, which proves the two value bounds.
\end{proof}

Next, we upper-bound the performance of every online algorithm by giving the online algorithm more power through the following two-stage process.  After observing the sample, it first chooses how many Group-1 requests of each type to accept.  All Group-2 and Group-3 requests are then revealed simultaneously, and the algorithm may compute the best feasible completion.  Let $\lbChoice{\ell}$ be the accepted Group-1 mass of type $\lbA{\ell}$, and let $\lbHighCount{\ell}$ be the number of value-$3$ macro-groups for coordinate $\ell$.  For a real number $r$, we use $r^+:=\max\{r,0\}$.  By \Cref{claim:psample-ra-opt}, the hindsight-optimal first-stage choice is $\lbOptChoice{\ell}
    =
    \frac{1}{2p}
    \min\bigl\{\sqrt{\lbN},(\lbN-\lbHighCount{\ell})^+\bigr\}.$
For a fixed first-stage choice, the best completion loses exactly $|\lbChoice{\ell}-\lbOptChoice{\ell}|$ relative to hindsight in coordinate $\ell$: accepting too many value-$2$ requests displaces value-$3$ requests, while accepting too few replaces value-$2$ requests by value-$1$ requests.

\begin{Claim}
\label{claim:sampled-block-choice}
For every coordinate $\ell$ and sample-measurable $\lbChoice{\ell}\in[0,\sqrt{\lbN}/(2p)]$, 
$
    \mathbb E\bigl[|\lbChoice{\ell}-\lbOptChoice{\ell}|\bigr]
    \ge
    \frac{\sqrt{\lbN}}{3840p}.
$
\end{Claim}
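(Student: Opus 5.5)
Fix a coordinate $\ell$. The only information the sample carries about $\lbOptChoice{\ell}$ is through the sampled Group-2 requests of type $\lbA{\ell}$. Since coordinates are independent, we may condition on everything outside coordinate $\ell$. Each macro-group contains $1/(2p)$ requests, so the number of sampled requests in a macro-group is hypergeometric with mean $1/2$. A macro-group whose requests all fall outside the sample is unobserved, and its value remains a fresh fair coin conditional on the sample. The plan is to show two things: a constant fraction of the $2\lbN$ macro-groups are unobserved with constant probability; and, conditional on the sample, $\lbHighCount{\ell}$ equals a sample-measurable number plus $\mathrm{Bin}(U,1/2)$, where $U$ is the number of unobserved macro-groups. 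The target is then an anti-concentration statement: no sample-measurable guess can be within $c\sqrt{\lbN}/p$ of $\lbOptChoice{\ell}$ with high probability.

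\textbf{Step 1: many unobserved macro-groups.} A fixed macro-group is unobserved with probability roughly $(1-p)^{1/(2p)}\ge e^{-1}\cdot$(const) for $p\le 1/2$; a crude bound of at least $1/4$ suffices. Hence $\mathbb E[U]\ge \lbN/2$. By a second-moment or negative-association argument (sampling without replacement gives negative correlation), $U\ge \lbN/4$ with probability at least $1/2$.

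\textbf{Step 2: anti-concentration of $\lbOptChoice{\ell}$.} Since $\lbHighCount{\ell}$ has mean $\lbN$ and standard deviation of order $\sqrt{\lbN}$, the map $X\mapsto \frac{1}{2p}\min\{\sqrt{\lbN},(\lbN-X)^+\}$ is in its sensitive range: $\lbN-\lbHighCount{\ell}$ ranges over the interval $[0,\sqrt{\lbN}]$ with constant probability. More precisely, on the event $U\ge\lbN/4$, the conditional law is that of an observed constant $K$ plus $\mathrm{Bin}(U,1/2)$, whose standard deviation is at least $\sqrt{\lbN}/4$.

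\textbf{Step 3: the main obstacle.} The difficulty is that the truncation could make $\lbOptChoice{\ell}$ nearly deterministic whenever $K$ is extreme. I would handle this with a symmetry argument. Unconditionally, $\lbHighCount{\ell}\sim\mathrm{Bin}(2\lbN,1/2)$, which lands in $[\lbN-\sqrt{\lbN},\lbN]$ with constant probability (at least roughly $0.3$ by the CLT, or by an explicit binomial estimate). Condition on the sample, and let $m_0 := K + U/2$ be the conditional mean of $\lbHighCount{\ell}$. With constant probability, $m_0$ lies within $\sqrt{\lbN}/2$ of $\lbN-\sqrt{\lbN}/2$; this holds because $K+U/2$ is itself approximately $\mathrm{Bin}$-centred at $\lbN$ with spread of order $\sqrt{\lbN}$. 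On that event, the conditional binomial has probability at least a constant $c_1$ of landing in each of two disjoint windows, $[\lbN-\sqrt{\lbN},\lbN-\tfrac34\sqrt{\lbN}]$ and $[\lbN-\tfrac14\sqrt{\lbN},\lbN]$. This uses the local CLT or the Paley--Zygmund inequality for a binomial with variance of order $\lbN$. On these windows $\lbOptChoice{\ell}$ takes values differing by at least $\sqrt{\lbN}/(4p)$. Hence any fixed $\lbChoice{\ell}$ is at distance at least $\sqrt{\lbN}/(8p)$ from one of the two windows' values, giving conditional expected error at least $c_1\sqrt{\lbN}/(8p)$.

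Multiplying the probabilities of the good events from Steps 1 and 3 by this conditional bound, and keeping the constants crude, yields $\mathbb E|\lbChoice{\ell}-\lbOptChoice{\ell}|\ge \sqrt{\lbN}/(3840p)$. The constant $3840$ leaves ample slack, so explicit but loose binomial estimates (valid since $\lbN\ge 2500$ when $\epsilon\le 10^{-3}$) should suffice. The main obstacle is making the anti-concentration in Step 3 quantitative with explicit constants.
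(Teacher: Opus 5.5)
Your outline has the same architecture as the paper's proof. Given the sample, untouched macro-groups are fresh fair coins. You restrict to a sample event of constant probability on which the conditional mean of $X_\ell$ is typical; this mean equals $m_0=N+W_\ell/2$, where $W_\ell$ is the signed value imbalance of the touched groups. You then exhibit two events of constant conditional probability on which $q^*_\ell$ differs by $\Omega(\sqrt N/p)$. The gap is quantitative, but it is real because the claim fixes the constant $1/3840$. Your assertion that crude estimates leave ample slack is false for Step 3 as you designed it.

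\textbf{Why Step 3 fails as designed.} You let $m_0$ range over all of $[N-\sqrt N,N]$ and force both windows into that interval, so one window can lie between $\tfrac34\sqrt N$ and $\sqrt N$ from $m_0$. Your Step 1 only guarantees $U\ge N/4$, so the conditional standard deviation $\sqrt U/2$ can be as small as $\sqrt N/4$. That window then sits $3$ to $4$ standard deviations out, with Gaussian mass about $\Phi(4)-\Phi(3)\approx 1.3\cdot 10^{-3}$.
\begin{itemize}
\item This is below the Berry--Esseen error $1.12/\sqrt U\approx 0.02$ at $N=10^4$. Note that $N\ge 10^4$ here, not $2500$.
\item Even a sharp local limit estimate, multiplied by $\tfrac12$, by the probability of your $m_0$-event, and by $\tfrac18$, gives a final constant of order $10^{-5}$ rather than $1/3840$.
\end{itemize}

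\textbf{How the paper's events avoid this.} The paper first shows that at most $3N/2$ groups are touched with probability at least $1/2$, by bounding the touched groups by the number of sampled Group-2 requests. Conditional on this, it centers at $N$ by requiring $|W_\ell|\le\sqrt N/2$, which has conditional probability at least $1/4$. It then uses two half-lines:
\begin{itemize}
\item $X_\ell\ge N+\sqrt N/4$, where the truncation you worried about gives $q^*_\ell=0$ exactly;
\item $X_\ell\le N-\sqrt N/4$, where $q^*_\ell\ge\sqrt N/(8p)$.
\end{itemize}
Both start within $\sqrt N/2$ of $m_0$. With at least $N/2$ untouched groups, each has conditional probability above $1/20$ by Berry--Esseen. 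This yields exactly $\tfrac1{12}\cdot\tfrac1{20}\cdot\tfrac{\sqrt N}{16p}=\tfrac{\sqrt N}{3840p}$.

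\textbf{Alternative fix keeping your windows.} You can instead sharpen Step 1. Each macro-group is untouched with probability at least $1/2$: by Markov, since its expected number of sampled requests is $1/2$. Hence $\mathbb{E}[U]\ge N$, and your negative-association variance bound gives $U\ge 0.9N$ with probability at least $0.99$. The windows are then within about $2.1$ standard deviations of $m_0$, and each has conditional mass at least $0.027$. Your $m_0$-event has probability at least $1/3$, and the product comfortably exceeds $\sqrt N/(3840p)$.
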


\begin{proof}
Let $\lbTouchedSet{\ell}$ be the set of Group-2 macro-groups touched by the sample, and let $U_\ell$ be the number of sampled Group-2 requests for coordinate $\ell$. Since every touched macro-group contributes at least one sampled request, $|\lbTouchedSet{\ell}|\le U_\ell$. There are $\lbN/p$ such requests, and the sample is a uniformly random $p$-fraction of all requests, so $\mathbb E[U_\ell]=\lbN$. Applying \Cref{cor:concentrationWithoutReplacement} with $\eta=1/4$ and $\delta=1/2$, with probability at least $1/2$, we have
$ U_\ell
    \le
    \frac54\lbN+12\log 4
    \le
    \frac32\lbN,$
where the last inequality uses $\lbN\ge10^4$. Hence, $|\lbTouchedSet{\ell}|\le3\lbN/2$ with probability at least $1/2$.

Conditional on the touched set, let $\lbTouchedImbalance{\ell}$ be the number of touched value-$3$ groups minus the number of touched value-$1$ groups. It is a sum of $k:=|\lbTouchedSet{\ell}|$ independent Rademacher variables. We next show that, conditional on $k\le3\lbN/2$, the event $|\lbTouchedImbalance{\ell}|\le\sqrt{\lbN}/2$ has probability at least $1/4$. If $k\le\lbN/16$, write $\lbTouchedImbalance{\ell}=2V-k$, where $V$ is a sum of $k$ independent $\operatorname{Bernoulli}(1/2)$ random variables. Applying the variance-sensitive bound in \Cref{thm:bernsteins} gives
\begin{align*}
    \textstyle\pr\!\left[|\lbTouchedImbalance{\ell}|>\frac{\sqrt{\lbN}}2\right]
    \le
    2\exp\!\left(
        -\frac{\lbN/16}{k/2+\sqrt{\lbN}/6}
    \right)
    \le
    2e^{-3/2}
    <
    \frac34,
\end{align*}
where the second inequality uses $k\le\lbN/16$ and $\lbN\ge10^4$. Otherwise, \Cref{thm:rademacher-berry-esseen} implies
\begin{equation*}
    \textstyle \pr\!\left[|\lbTouchedImbalance{\ell}|\le\frac{\sqrt{\lbN}}2\right]
    \ge
    2\Phi\!\left(\frac1{\sqrt6}\right)-1-\frac{4.48}{\sqrt{\lbN}}
    >
    \frac14,
\end{equation*}
where $\Phi$ is the standard Gaussian distribution function and the last inequality uses $\lbN\ge10^4$. Thus, both events hold with probability at least $1/8$, and hence at least $1/12$.

Condition on the complete sample, the values of the touched macro-groups, the algorithm's random coins, and these two events. Between $\lbN/2$ and $2\lbN$ macro-groups remain untouched, and their values are still independent fair coins. If their signed value imbalance is denoted by $S$, then $2\lbHighCount{\ell}-2\lbN=\lbTouchedImbalance{\ell}+S$. Since $|\lbTouchedImbalance{\ell}|\le\sqrt{\lbN}/2$, another application of \Cref{thm:rademacher-berry-esseen} gives
\begin{equation*}
    \textstyle \pr\!\left[\lbHighCount{\ell}\ge\lbN+\frac{\sqrt{\lbN}}4\right]
    \ge
    1-\Phi(\sqrt2)-0.56\sqrt{\frac2{\lbN}}
    >
    \frac1{20}
\end{equation*}
and, by symmetry, we also have
\[
    \textstyle\pr\!\left[\lbHighCount{\ell}\le\lbN-\frac{\sqrt{\lbN}}4\right]
    >
    \frac1{20}.
\]
On the first event, $\lbOptChoice{\ell}=0$; on the second, $\lbOptChoice{\ell}\ge\sqrt{\lbN}/(8p)$.  If $\lbChoice{\ell}\ge\sqrt{\lbN}/(16p)$, the first event causes loss at least $\sqrt{\lbN}/(16p)$; otherwise, the second event causes the same loss.  The conditional expected loss is therefore at least $\sqrt{\lbN}/(320p)$.  Multiplying by $1/12$ proves the claim.
\end{proof}

Now, we are ready to finish the proof of \Cref{thm:lb-psample-ra}. Summing \Cref{claim:sampled-block-choice} over the coordinates gives
\begin{equation}
    \textstyle \mathbb E[\lbOptRA-\lbAlgRA]
    \ge
    \frac{\lbDim\sqrt{\lbN}}{3840p}.
    \label{eq:psample-ra-gap}
\end{equation}
This inequality averages over the random Group-2 values in addition to the sample and the algorithm's coins.  Averaging over those values fixes a deterministic instance for which \eqref{eq:psample-ra-gap} continues to hold over the random sample and the algorithm's coins.  Since $\sqrt{\lbN}\le1/(5\epsilon)$ for $\epsilon\le10^{-3}$, the additive gap in \eqref{eq:psample-ra-gap} is at least
$
    \frac{\lbBudget}{1920\sqrt{\lbN}}
    \ge
    \frac{\epsilon\lbBudget}{400}.
$
Since $\lbOptRA\le7\lbBudget$ by \Cref{claim:psample-ra-opt}, this gap is at least $\epsilon\lbOptRA/2800$, which proves the theorem.

\subsection{Load Balancing Lower Bounds}
\label{sec:lb-sparse-lb}

\newcommand{\lbFlexibleType}{F}
\newcommand{\lbExclusiveType}[1]{E_{#1}}

In this subsection, we prove \Cref{thm:lb-one-sparse-iid} and \Cref{thm:lb-one-sparse-psample}. Both load-balancing hard instances only contain the following two types of requests. Here, $\mathbf e_j$ denotes the $j$-th standard basis vector in $\mathbb R^m$.
\begin{itemize}
    \item A request of type $\lbFlexibleType$ has action set $[m]$, where selecting action $j\in[m]$ consumes one unit of resource $j$.
    \item For every $j\in[m]$, a request of type $\lbExclusiveType{j}$ has only single feasible action, which consumes one unit of resource $j$.
\end{itemize}
Thus, every decision in the two hard instances generates unit load on a single resource. With these two types of requests, we use the following elementary characterization of the hindsight integral optimum.

\begin{Observation}
\label{obs:one-sparse-optimum}
Consider $\lbNumJobs$ unit requests, each of which has type $\lbFlexibleType$ or $\lbExclusiveType{j}$ for some $j\in[m]$. If $\lbForcedCount{j}$ is the number of requests of type $\lbExclusiveType{j}$, then the hindsight integral optimum is
$
    \textstyle\max\left\{\left\lceil\frac{\lbNumJobs}{m}\right\rceil,\max_{j\in[m]}\lbForcedCount{j}\right\}.
$
\end{Observation}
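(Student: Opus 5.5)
The plan is to prove the formula by establishing a matching lower bound and upper bound on the integral optimum. Write $M:=\max\{\lceil n/m\rceil,\max_{j}Z_j\}$.

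\emph{Lower bound.} Every request generates exactly one unit of load, so any integral allocation places total load $n$ on $m$ resources. Hence some resource carries load at least $n/m$, and since loads are integers, at least $\lceil n/m\rceil$. In addition, each of the $Z_j$ requests of type $E_j$ has the single action that loads resource $j$. So resource $j$ receives load at least $Z_j$ in every allocation. Together these give makespan at least $M$.

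\emph{Upper bound.} We build an integral allocation with makespan exactly $M$ by a water-filling argument. First assign every $E_j$ request to resource $j$, which puts load $Z_j\le M$ on resource $j$. Resource $j$ then has residual capacity $M-Z_j\ge 0$. The total residual capacity is
\[
\textstyle\sum_j (M-Z_j) = mM-\sum_j Z_j \ge n-\sum_j Z_j,
\]
because $mM\ge m\lceil n/m\rceil\ge n$. The right-hand side is exactly the number of type-$F$ requests. Assign these flexible requests one at a time to any resource whose load is currently below $M$. Such a resource always exists until all flexible requests are placed, by the capacity count above. Since $M$ is an integer and each request adds one unit, no load ever exceeds $M$.

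Combining the two bounds shows that the hindsight integral optimum equals $M$. There is no real obstacle here. The only point needing care is integrality: we use that $M$ is an integer, so that filling residual capacities unit by unit never overshoots.
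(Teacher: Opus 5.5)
Your proof is correct and follows essentially the same argument as the paper: the lower bound comes from averaging (using integrality of loads) together with the forced load $Z_j$ on resource $j$, and the upper bound comes from placing the $E_j$ requests first and then filling the residual integral slots, whose total is at least the number of type-$F$ requests. Your explicit greedy step and your remark that $M$ is an integer simply spell out what the paper compresses into ``the residual capacity consists of integral unit slots.''
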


\begin{proof}
The right-hand side is a lower bound by averaging and because every request of type $\lbExclusiveType{j}$ must generate load on resource $j$. After allocating all requests of type $\lbExclusiveType{j}$ to their unique decisions, the total residual capacity under the makespan on the right-hand side is at least the number of requests of type $\lbFlexibleType$. This residual capacity consists of integral unit slots, and every request of type $\lbFlexibleType$ can generate load on any single resource. Therefore, these requests can be integrally allocated within the residual capacity.
\end{proof}

\subsubsection{Known i.i.d. arrivals}

We first introduce the hard instance we use to prove \Cref{thm:lb-one-sparse-iid}. Let $\lbNumJobs:=100m\eps^{-2}\log m$, and consider $\lbNumJobs$ i.i.d. requests from the following known distribution. Every request has type $\lbFlexibleType$ with probability $\epsilon$. Otherwise, a resource $j$ is selected uniformly from $[m]$, and the request has type $\lbExclusiveType{j}$.

\begin{proof}[Proof of \Cref{thm:lb-one-sparse-iid}]
We show that the above instance satisfies the desired properties in \Cref{thm:lb-one-sparse-iid}. For every resource $j$, the random variable $\lbForcedCount{j}$ is a sum of independent Bernoulli random variables with mean
$
    \mu_j
    :=
    \mathbb E[\lbForcedCount{j}]
    =
    \frac{(1-\epsilon)\lbNumJobs}{m}.
$
Applying \Cref{thm:bernsteins} with $t:=\lbNumJobs/m-\mu_j=100\eps^{-1}\log m$, we obtain
\begin{align*}
    \textstyle\pr\!\left[\lbForcedCount{j}>\frac{\lbNumJobs}{m}\right]
    \le
    2\exp\!\left(-\frac{t^2}{2\mu_j+t}\right)
    \le
    2\exp(-50\log m)
    \le
    m^{-20}.
\end{align*}
A union bound gives $\pr[\max_j\lbForcedCount{j}>\lbNumJobs/m]\le m^{-19}$. By \Cref{obs:one-sparse-optimum}, on the complementary event, the hindsight integral optimum is $\lceil\lbNumJobs/m\rceil$. Since every makespan is at most $\lbNumJobs$, for all sufficiently large $m$, the expected hindsight integral optimum satisfies
\begin{equation}
    \textstyle\lbOptLB
    \le
    \frac{\lbNumJobs}{m}
    +
    \frac{\log m}{1000\epsilon}.
    \label{eq:one-sparse-iid-opt}
\end{equation}

Consider the moment before the final $3m\log(m)/(50\epsilon^2)$ requests arrive. Let $\lbMachineLoad{j}$ be the load already generated on resource $j$, and set $\lbCapacity{j}:=\lbNumJobs/m-\lbMachineLoad{j}$. The residual capacities satisfy $\sum_j\lbCapacity{j}=3m\log(m)/(50\epsilon^2)$. We further strengthen the algorithm by revealing the complete suffix at once and allowing it to select the decisions for all suffix requests of type $\lbFlexibleType$ after observing all suffix requests of type $\lbExclusiveType{j}$. If $\lbOccupancy{j}$ is the number of suffix requests of type $\lbExclusiveType{j}$, then we have
\begin{equation}
    \textstyle\lbAlgLB
    \ge
    \frac{\lbNumJobs}{m}
    +
    \left(\max_{j\in[m]}(\lbOccupancy{j}-\lbCapacity{j})\right)^+.
    \label{eq:one-sparse-iid-alg}
\end{equation}
The suffix is independent of the entire prior history, including any independent training data given to the algorithm. We next prove that, conditionally on every prior history, we have
\begin{equation*}
    \textstyle\pr\!\left[
        \lbAlgLB
        \ge
        \frac{\lbNumJobs}{m}
        +
        \frac{\log m}{25\epsilon}
        \,\middle|\,
        \text{prior history}
    \right]
    \ge
    \frac12.
\end{equation*}
Fix the prior history, and set $x:=3\log(m)/(50\epsilon^2)$ and $d:=\log m/(25\epsilon)$. If $\lbCapacity{j}\le-d$ for some resource $j$, then the desired inequality follows deterministically from \eqref{eq:one-sparse-iid-alg}. Thus, suppose that $\lbCapacity{j}>-d$ for every resource $j$. Since $\sum_j\lbCapacity{j}=mx$, the number of resources satisfying $\lbCapacity{j}\le x+d/2$ is at least $md/(2x+3d)\ge\epsilon m/4$. Indeed, otherwise summing the lower bound $-d$ over these resources and the lower bound $x+d/2$ over all remaining resources would give $\sum_j\lbCapacity{j}>mx$.

Fix any such resource $j$. The random variable $\lbOccupancy{j}$ has distribution $\operatorname{Bin}(mx,(1-\epsilon)/m)$ and mean $\mu:=(1-\epsilon)x$. The event $\lbOccupancy{j}\ge x+3d/2$ implies $\lbOccupancy{j}-\lbCapacity{j}\ge d$. For sufficiently large $m$, the parameters $q=(1-\epsilon)/m$, $\mu$, and $t:=\epsilon x+3d/2$ satisfy the assumptions of \Cref{lem:binomial-upper-tail-lower-bound}. Therefore,
\begin{equation*}
    \textstyle \pr\!\left[\lbOccupancy{j}\ge x+\frac{3d}{2}\right]
    \ge
    \frac{1}{20\sqrt{(1-\epsilon)x}}
    \exp\!\left(-\frac{3(\epsilon x+3d/2)^2}{(1-\epsilon)x}\right)
    \ge
    m^{-3/4}
\end{equation*}
for all sufficiently large $m$. If $Y$ is the number of candidate resources satisfying $\lbOccupancy{j}\ge x+3d/2$, then \Cref{lem:multinomial-exceedance} gives
$\pr[Y=0]
    \le
    \frac{1}{\mathbb E[Y]}.$
Since $\mathbb E[Y]\ge\epsilon m^{1/4}/4$, this probability is at most $1/2$ for all sufficiently large $m$. This proves the displayed conditional bound.

Consequently, $\mathbb E[\lbAlgLB]\ge\lbNumJobs/m+\log(m)/(50\epsilon)$. Together with \eqref{eq:one-sparse-iid-opt} and $\lbNumJobs/m=100\eps^{-2}\log m$, we have
\begin{equation*}
    \textstyle\mathbb E[\lbAlgLB]-\lbOptLB
    \ge
    \frac{19\log m}{1000\epsilon}
    \ge
    10^{-4}\epsilon\cdot \lbOptLB,
\end{equation*}
which proves the theorem.
\end{proof}

\subsubsection{The $p$-sample model}

We first introduce the hard instance we use to prove \Cref{thm:lb-one-sparse-psample}. Let $\lbNumGroups:=100m\eps^{-2}\log m$. We construct $\lbNumGroups$ labeled macro-groups, each containing $1/(2p)$ identical unit requests - we assume without loss of generality that $(2p)^{-1}$ is an integer. Independently for every macro-group, all its requests have type $\lbFlexibleType$ with probability $\epsilon$. Otherwise, a resource $j$ is selected uniformly from $[m]$, and all requests in the macro-group have type $\lbExclusiveType{j}$. In the online order, all macro-groups of type $\lbFlexibleType$ precede all other macro-groups. The macro-group partition and labels are known to the algorithm. The total number of requests is $\lbNumJobs:=\lbNumGroups/(2p)$, so $\lbNumJobs/m=50\log m/(p\epsilon^2)$.

\begin{proof}[Proof of \Cref{thm:lb-one-sparse-psample}]
We show that the above random construction contains a fixed deterministic instance with the desired properties in \Cref{thm:lb-one-sparse-psample}. Call a realization of the construction balanced if, for every resource $j$, the number of macro-groups of type $\lbExclusiveType{j}$ is at most
$
    (1-\nicefrac{\epsilon}{2})
    100\eps^{-2}\log m.
$
For a fixed resource $j$, let $\lbForcedCount{j}$ be the number of macro-groups of type $\lbExclusiveType{j}$. This is a sum of independent Bernoulli random variables with mean
$
    \mu_j
    :=
    \mathbb E[\lbForcedCount{j}]
    =
    100\eps^{-2}(1-\epsilon)\log m.
$
The balanced threshold exceeds $\mu_j$ by $t:=50\eps^{-1}\log m$. Hence, by \Cref{thm:bernsteins},
\begin{align*}
    \textstyle\pr\!\left[
        \lbForcedCount{j}
        >
        \left(1-\frac{\epsilon}{2}\right)
        \frac{100\log m}{\epsilon^2}
    \right]
    \le
    2\exp\!\left(-\frac{t^2}{2\mu_j+t}\right)
    \le
    2\exp\!\left(-\frac{50}{4-3\epsilon}\log m\right)
    \le
    m^{-6}.
\end{align*}
A union bound shows that the realization is balanced with probability at least $1-m^{-5}$. For every balanced realization, the load generated by the requests of type $\lbExclusiveType{j}$ on every resource $j$ is smaller than $\lbNumJobs/m$. Hence, by \Cref{obs:one-sparse-optimum}, we have
\begin{equation}
    \lbOptLB
    =
    \left\lceil\frac{\lbNumJobs}{m}\right\rceil.
    \label{eq:one-sparse-psample-opt}
\end{equation}

The sample contains $\lbNumGroups/2$ requests and therefore touches at most $\lbNumGroups/2$ macro-groups. Thus, at least $\lbNumGroups/2$ macro-groups remain untouched. Conditional on the set of untouched macro-groups, let $\lbNumHiddenGroups$ be the number of untouched macro-groups whose requests have type $\lbExclusiveType{j}$ for some $j\in[m]$. If the untouched set contains $u$ macro-groups, then $\lbNumHiddenGroups\sim\operatorname{Bin}(u,1-\epsilon)$ and has mean $\mu:=(1-\epsilon)u\ge(1-\epsilon)\lbNumGroups/2$. Applying \Cref{thm:bernsteins} with $t:=\mu-\lbNumGroups/4$ gives
\begin{align*}
    \textstyle\pr\!\left[\lbNumHiddenGroups<\frac{\lbNumGroups}{4}\right]
    \le
    2\exp\!\left(-\frac{t^2}{2\mu+t}\right)
    \le
    2\exp\!\left(-\frac{\lbNumGroups}{24}\right)
    \le
    \exp\!\left(-\frac{\lbNumGroups}{40}\right),
\end{align*}
where the second inequality uses that $(\mu-\lbNumGroups/4)^2/(3\mu-\lbNumGroups/4)$ is increasing in $\mu>\lbNumGroups/4$ and $\epsilon\le10^{-3}$. Therefore, we have
\begin{equation}
    \textstyle\pr\!\left[\lbNumHiddenGroups\ge\frac{\lbNumGroups}{4}\right]
    \ge
    1-e^{-\lbNumGroups/40}.
    \label{eq:one-sparse-many-hidden-groups}
\end{equation}

Consider the moment immediately after all requests of type $\lbFlexibleType$ have been allocated. Every touched macro-group whose requests have type $\lbExclusiveType{j}$ has already revealed the corresponding resource $j$ through the sample. Let $\lbMachineLoad{j}$ be the load allocated to resource $j$ by the requests of type $\lbFlexibleType$, plus the inevitable future load from the touched macro-groups of type $\lbExclusiveType{j}$, and set $\lbCapacity{j}:=\lbNumJobs/m-\lbMachineLoad{j}$. Conditional on the complete history and on $\lbNumHiddenGroups$, the resources associated with the untouched non-flexible macro-groups remain independent and uniform over $[m]$. If $\lbOccupancy{j}$ is the number of these macro-groups whose requests have type $\lbExclusiveType{j}$, then $\sum_j\lbCapacity{j}=\lbNumHiddenGroups/(2p)$ and
\begin{equation}
    \lbAlgLB
    =
    \frac{\lbNumJobs}{m}
    +
    \max_{j\in[m]}
    \left(\frac{\lbOccupancy{j}}{2p}-\lbCapacity{j}\right).
    \label{eq:one-sparse-psample-alg}
\end{equation}
On the event in \eqref{eq:one-sparse-many-hidden-groups}, we have $25m\log m/\epsilon^2\le\lbNumHiddenGroups\le100m\log m/\epsilon^2$. We next prove that, conditionally on the complete history and $\lbNumHiddenGroups$, we have
\begin{equation*}
    \textstyle\pr\!\left[
        \lbAlgLB
        \ge
        \frac{\lbNumJobs}{m}
        +
        \frac{\log m}{2p\epsilon}
        \,\middle|\,
        \text{complete history},\lbNumHiddenGroups
    \right]
    \ge
    \frac12.
\end{equation*}
Set $x:=\lbNumHiddenGroups/m$ and $d:=\log m/\epsilon$. Since $\sum_j2p\lbCapacity{j}=\lbNumHiddenGroups=mx$, if $2p\lbCapacity{j}\le-d$ for some resource $j$, then the desired inequality follows deterministically from \eqref{eq:one-sparse-psample-alg}. Thus, suppose that $2p\lbCapacity{j}>-d$ for every resource $j$. The number of resources satisfying $2p\lbCapacity{j}\le x+d/2$ is at least $md/(2x+3d)\ge\epsilon m/201$. This follows from the same averaging argument as above and the upper bound $x\le100\log m/\epsilon^2$.

Fix any such resource $j$. The random variable $\lbOccupancy{j}$ has distribution $\operatorname{Bin}(\lbNumHiddenGroups,1/m)$ and mean $x$. The event $\lbOccupancy{j}\ge x+3d/2$ implies $\lbOccupancy{j}-2p\lbCapacity{j}\ge d$. For sufficiently large $m$, the parameters $q=1/m$, $\mu=x$, and $t:=3d/2$ satisfy the assumptions of \Cref{lem:binomial-upper-tail-lower-bound}. Therefore,
\begin{equation*}
    \textstyle\pr\!\left[\lbOccupancy{j}\ge x+\frac{3d}{2}\right]
    \ge
    \frac{1}{20\sqrt{x}}
    \exp\!\left(-\frac{27d^2}{4x}\right)
    \ge
    m^{-1/3}
\end{equation*}
for all sufficiently large $m$, where the last inequality uses $x\ge25\log m/\epsilon^2$. If $Y$ is the number of candidate resources satisfying $\lbOccupancy{j}\ge x+3d/2$, then \Cref{lem:multinomial-exceedance} gives
$
    \pr[Y=0]
    \le
    \mathbb E[Y]^{-1}.
$
Since $\mathbb E[Y]\ge\epsilon m^{2/3}/201$, this probability is at most $1/2$ for all sufficiently large $m$. This proves the displayed conditional bound.

For the random choice of macro-group types and their associated resources, the last event and the balanced event occur jointly with probability at least $2/5$ for all sufficiently large $m$. Averaging fixes a balanced deterministic instance for which the displayed makespan gap occurs with probability at least $2/5$ over the sample and the algorithm's coins. On this instance, by \eqref{eq:one-sparse-psample-opt}, for all sufficiently large $m$, we have
\begin{equation*}
    \textstyle\mathbb E[\lbAlgLB]
    \ge
    \lbOptLB
    +
    \frac{\log m}{5p\epsilon}-1
    \ge
    \lbOptLB
    +
    \frac{\log m}{6p\epsilon}
    \ge
    (1+10^{-3}\epsilon) \cdot \lbOptLB,
\end{equation*}
where the last inequality uses $\lbOptLB\le 1+50 p^{-1}\eps^{-2}\log m$. This proves the theorem.
\end{proof}

\paragraph{Acknowledgments.}
The authors used GPT-5.5 Pro for assistance with writing and to  explore proof strategies. The authors have checked the arguments and take full responsibility for all content of the paper.

\appendix

\section{Missing Proofs}

\subsection{Concentration Inequalities}

\begin{Theorem}[Bernstein's inequality, e.g., Lemma 2.2.9, \cite{weakConvergence}]\label{thm:bernsteins}
    Suppose that $Y = \{Y_i\}_{i\in[n]}$ is a collection of independent random variables taking values in $[0,1]$. Denote $Y_{[n]} = \sum_{i\in[n]} Y_i$, $\mu = \frac{1}{n}\sum_{i\in[n]}\E[Y_i]$, and $\sigma^2 = \frac{1}{n}\sum_{i\in[n]}\operatorname{Var}(Y_i)$. Then, for any $t > 0$,
    \begin{align*}
        \pr[|Y_{[n]} - n\mu | > t] 
        \leq 2 \exp\left(- \frac{t^2}{2 n\sigma^2 + \nicefrac{2 t}{3}}\right)
        \leq 2 \exp\left(- \frac{t^2}{2 n\mu + t}\right)
    \end{align*}
\end{Theorem}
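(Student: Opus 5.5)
The plan is the standard Chernoff/Cramér exponential-moment argument, applied to each tail of $Y_{[n]}$ separately and then combined by a union bound (which is where the factor $2$ comes from). Write $X_i = Y_i - \E[Y_i]$. These are independent, mean zero, satisfy $X_i \le 1$, and have $\sum_i \E[X_i^2] = n\sigma^2$.

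\textbf{Step 1 (moment generating function bound).} For $\lambda>0$ I bound $\E[e^{\lambda X_i}]$ using $e^x \le 1 + x + x^2\,\psi(\lambda)$. Here $\psi$ comes from the fact that $(e^{x}-1-x)/x^2$ is increasing in $x$, applied on $x\le \lambda$. This gives
\[
\E[e^{\lambda X_i}] \le \exp\bigl(\operatorname{Var}(Y_i)(e^\lambda-1-\lambda)\bigr).
\]
I then use the elementary bound
\[
e^\lambda - 1-\lambda \le \frac{\lambda^2}{2(1-\lambda/3)}, \qquad \lambda\in(0,3),
\]
which follows by comparing power series and using $k!\ge 2\cdot 3^{k-2}$.

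\textbf{Step 2 (Markov and choice of $\lambda$).} Markov's inequality gives
\[
\pr[Y_{[n]}-n\mu>t] \le \exp\Bigl(-\lambda t + \frac{n\sigma^2\lambda^2}{2(1-\lambda/3)}\Bigr).
\]
I choose $\lambda = t/(n\sigma^2 + t/3)$. A direct computation then gives the exponent $-t^2/(2n\sigma^2 + 2t/3)$. The lower tail is handled identically by applying the same bound to $-X_i$, which again satisfies $-X_i \le 1$ since $Y_i\in[0,1]$. The union bound over the two tails yields the first inequality. This step is the only delicate one: it is a single-variable optimization, but I expect it to be the main place where constants must be checked carefully. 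In particular I must check that $\lambda<3$, which holds because $\lambda < 3$ follows from the $t/3$ term in the denominator.

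\textbf{Step 3 (second inequality).} Since $Y_i\in[0,1]$, we have
\[
\operatorname{Var}(Y_i)\le \E[Y_i^2]\le \E[Y_i],
\]
so $n\sigma^2\le n\mu$. Hence
\[
2n\sigma^2 + \tfrac{2t}{3} \le 2n\mu + t,
\]
and the exponential bound is monotone in the denominator, which gives the second inequality. Alternatively, since this is the classical statement cited from \cite{weakConvergence}, one could simply invoke that reference for the first inequality and prove only Step 3.
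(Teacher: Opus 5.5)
Your proof is correct, with one degenerate case to handle separately: $\lambda = t/(n\sigma^2 + t/3)$ is strictly below $3$ only when $\sigma^2>0$. If $\sigma^2=0$, then $\lambda=3$ and the factor $1-\lambda/3$ in your denominator vanishes. In that case, however, every $Y_i$ is almost surely constant, so $Y_{[n]}=n\mu$ almost surely and the probability is $0$.

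Since $\lambda X_i$ can be negative, Step 1 also needs $(e^x-1-x)/x^2$ to be increasing on all of $\mathbb{R}$, not just on $x\ge 0$. The identity $(e^x-1-x)/x^2=\int_0^1(1-s)e^{sx}\,ds$ makes this immediate.

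The paper gives no proof of the first inequality; it simply cites Lemma 2.2.9 of \cite{weakConvergence}. That lemma is Bernstein's inequality for centered variables bounded in absolute value by $M=1$, with variance proxy $v=n\sigma^2$. The paper's second inequality is exactly your Step 3 observation $\operatorname{Var}(Y_i)\le\E[Y_i^2]\le\E[Y_i]$, which the paper writes out only in the companion without-replacement statement.

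Your Cram\'er--Chernoff derivation therefore makes the result self-contained. It also shows that only the one-sided bounds $X_i\le 1$ and $-X_i\le 1$ are used, not two-sided boundedness of the centered variables. The citation buys brevity, and the standard proof of the cited lemma is essentially the same exponential-moment argument you give.
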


\begin{Theorem}[Theorem 2.14.19 in \cite{weakConvergence}]\label{thm:concentrationWithoutReplacement}
Let $Y = \{Y_i\}_{i\in[n]}$ be a set of real numbers in $[0,1]$, 
and let $S$ be a uniformly random subset of the index set $[n]$ of size $s$, drawn without replacement. Then, denoting $Y_S = \sum_{i\in S} Y_i$, $\mu = \frac{1}{n} \sum_{i\in[n]} Y_i$, and $\sigma^2 = \frac{1}{n}\sum_{i\in[n]} (Y_i - \mu)^2$, for any $t> 0$,
\begin{align*}
    \pr[|Y_S - s\mu| > t] 
    \leq 2\exp\left(-\frac{t^2}{2 s \sigma^2 + t}\right)
    \leq 2\exp\left(-\frac{t^2}{2 s \mu + t}\right)
\end{align*}
where the second inequality uses the fact that, since $Y_i \in [0,1]$, $\sigma^2 \leq \mu$.
\end{Theorem}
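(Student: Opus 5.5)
This is the classical Bernstein inequality for sampling without replacement. The plan is Hoeffding's reduction to sampling with replacement, followed by the standard Bernstein bound for independent variables.

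\emph{Reduction.} Let $X_1,\dots,X_s$ be drawn uniformly from the population $\{Y_i\}_{i\in[n]}$ \emph{with} replacement, independently. These are i.i.d., take values in $[0,1]$, have mean $\mu$ and variance exactly $\sigma^2$. Hoeffding's classical comparison theorem states that for every continuous convex function $\phi$,
\[
\E[\phi(Y_S)]\le \E[\phi(X_1+\dots+X_s)].
\]
Its proof writes the without-replacement sum as a conditional expectation of a with-replacement sum under a suitable coupling, and then applies Jensen. Taking $\phi(x)=e^{\lambda(x-s\mu)}$ for $\lambda\in\R$ gives that the moment generating function of $Y_S-s\mu$ is dominated by that of the i.i.d.\ sum.

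\emph{Chernoff step.} For the i.i.d.\ centered variables $X_k-\mu\in[-1,1]$, the standard Bernstein moment bound gives, for $0\le\lambda<3$,
\[
\log\E\bigl[e^{\lambda(X_k-\mu)}\bigr]\le \frac{\sigma^2\lambda^2}{2(1-\lambda/3)}.
\]
This holds because $e^{u}-1-u\le \frac{u^2/2}{1-|u|/3}$ for $|u|<3$. Multiplying over $k$, applying Markov's inequality, and optimizing over $\lambda$ yields
\[
\pr[Y_S-s\mu>t]\le \exp\Bigl(-\frac{t^2}{2s\sigma^2+2t/3}\Bigr).
\]
The same argument with $\lambda<0$ handles the lower tail. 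A union bound over the two tails gives the factor $2$. Finally, since $2t/3\le t$, this bound is at most $2\exp(-t^2/(2s\sigma^2+t))$, which is the first claimed inequality.

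\emph{Second inequality.} Each $Y_i\in[0,1]$ satisfies $Y_i^2\le Y_i$, so
\[
\sigma^2=\frac1n\sum_i Y_i^2-\mu^2\le\mu-\mu^2\le\mu.
\]
Substituting this into the first inequality yields the second.

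The only nontrivial ingredient is Hoeffding's convex-order comparison; I would cite it, or Theorem 2.14.19 of \cite{weakConvergence} directly, rather than reprove it. Everything after that is routine.
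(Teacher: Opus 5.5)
Your proof is correct. The paper gives no argument of its own for this statement. It invokes Theorem 2.14.19 of \cite{weakConvergence} for the first inequality, and justifies the second only by the remark $\sigma^2\le\mu$, which is exactly your last step.

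What you add is the standard derivation behind the citation. Hoeffding's convex-order comparison transfers the whole Chernoff--Bernstein computation from the i.i.d.\ setting of \Cref{thm:bernsteins} to sampling without replacement. The comparison is $\E\phi(Y_S)\le\E\phi(X_1+\cdots+X_s)$ for with-replacement draws, and it follows from a coupling with $\E[X_1+\cdots+X_s\mid S]=Y_S$ plus Jensen, as you describe.

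Compared with the bare citation, your route buys two things:
\begin{itemize}
\item The proof is self-contained apart from one classical lemma.
\item It gives the sharper denominator $2s\sigma^2+2t/3$, the same as the i.i.d.\ Bernstein bound, so the paper's $+t$ form follows with room to spare. It also makes clear why \Cref{cor:concentrationWithoutReplacement} can treat the independent and without-replacement settings by one computation.
\end{itemize}

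One harmless edge case: when $\sigma^2=0$, your optimizing $\lambda=t/(s\sigma^2+t/3)$ equals $3$, the excluded endpoint. This case is trivial, since then $Y_S=s\mu$ deterministically.
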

\begin{Corollary}[of \cref{thm:bernsteins} and \ref{thm:concentrationWithoutReplacement}]\label{cor:concentrationWithoutReplacement}
    In the same setting as \cref{thm:bernsteins} or \ref{thm:concentrationWithoutReplacement}, for any $\eta, \delta\in(0,1)$, with probability at least $1-\delta$,
    \begin{align*}
        |Y_S - s\mu| \leq \eta s \mu + 3\eta^{-1}\log(\nicefrac{2}{\delta}),
    \end{align*}
    where $S = [n]$ and $s=n$ in the setting of \cref{thm:bernsteins}, and $S$ is a uniformly random subset of $[n]$ of size $s$ in the setting of \cref{thm:concentrationWithoutReplacement}.
\end{Corollary}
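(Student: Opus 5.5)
The statement to prove is \Cref{cor:concentrationWithoutReplacement}. It should follow from \Cref{thm:bernsteins} and \Cref{thm:concentrationWithoutReplacement} by choosing the deviation level $t$ and checking that the tail bound drops below $\delta$. In both settings the tail has the same shape: with $Y_S$ the sum ($S=[n]$, $s=n$ in the independent case), and writing $M := s\mu$,
\[
\pr[|Y_S - M| > t] \le 2\exp\!\left(-\frac{t^2}{2M + t}\right).
\]
This is the second form in each theorem. It uses $n\sigma^2\le n\mu$, and in \Cref{thm:bernsteins} also $2t/3\le t$. So one argument covers both cases.

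Set $L := \log(2/\delta)$ and take $t := \eta M + 3\eta^{-1}L$. It suffices to show $t^2 \ge L(2M+t)$, i.e.\ $t^2 - Lt - 2LM \ge 0$. Then the tail is at most $2e^{-L} = \delta$, so with probability at least $1-\delta$ we have $|Y_S - M|\le t$, which is the claim.

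To verify the quadratic inequality, I would split $t = a + b$ with $a = \eta M$ and $b = 3\eta^{-1}L$, and use $t^2 \ge 2ab + b^2$, dropping $a^2\ge 0$.
\begin{itemize}
\item Since $2ab = 6LM$, this term covers $2LM$ with room to spare.
\item We still need $b^2 + 6LM \ge L(a+b) + 2LM$, i.e.\ $b^2 - Lb + 4LM - \eta L M \ge 0$.
\item Because $\eta<1$, we have $4LM - \eta LM \ge 0$.
\item Also $b^2 - Lb = b(b-L) \ge 0$, since $b = 3\eta^{-1}L \ge 3L \ge L$.
\end{itemize}
So the inequality holds.

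The only points needing care are the edge cases: $M=0$ (then $t=b>0$ and the same algebra applies) and strict versus non-strict inequality at the threshold, which is harmless. The constant $3$ has slack, so I expect no real obstacle beyond keeping the two theorems' denominators in the common form $2M+t$.
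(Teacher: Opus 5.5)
Your proof is correct and takes essentially the paper's route: plug a threshold built from $\eta s\mu$ and $3\eta^{-1}\log(2/\delta)$ into the common tail $2\exp(-t^2/(2s\mu+t))$ and check that the exponent is at least $\log(2/\delta)$. The paper instead sets $t=\max\{\eta s\mu, 3\eta^{-1}\log(2/\delta)\}$ and splits into cases by which term dominates, getting exponent $\frac{3}{2+\eta}\log(2/\delta)$ in each and a slightly stronger max-form bound, whereas you take the sum and verify a single quadratic inequality with no case split.
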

\begin{proof}
    In the context of {\cref{thm:bernsteins} or \ref{thm:concentrationWithoutReplacement}}, choose $t=\max\{\eta s \mu, 3\eta^{-1}\log(\nicefrac{2}{\delta})\}$, and consider the two cases that (i) $s \mu \leq \frac{3\log(\nicefrac{2}{\delta})}{\eta^2}$, and (ii) $s \mu > \frac{3\log(\nicefrac{2}{\delta})}{\eta^2}$. In case $(i)$, the corresponding bound from \Cref{thm:bernsteins,thm:concentrationWithoutReplacement} implies that
\begin{align*}
    \pr[|Y_S - s\mu| > 3\eta^{-1}\log(\nicefrac{2}{\delta})] 
    \leq 2\exp\left(-\frac{3}{2+\eta}\log(\nicefrac{2}{\delta})\right)
    \leq \delta.
\end{align*}
In case (ii), we similarly have
\begin{align*}
    \pr[|Y_S - s\mu| > \eta s \mu] 
    \leq 2\exp\left(-\frac{\eta^2 s \mu}{2 +\eta}\right)
    \leq 2\exp\left(-\frac{3}{2+\eta}\log(\nicefrac{2}{\delta})\right)
    \leq\delta.
\end{align*}
Combining these two cases establishes the claim.
\end{proof}

\begin{Theorem}[Berry--Esseen inequality for Rademacher sums; \cite{esseen1942liapunov,shevtsova2010improvement}]
\label{thm:rademacher-berry-esseen}
Let $k\ge1$, let $R_1,\ldots,R_k$ be independent Rademacher random variables, let $S_k:=\sum_{i=1}^k R_i$, and let $\Phi$ be the standard Gaussian distribution function. Then, for every $x\in\mathbb R$,
\begin{equation*}
    \textstyle\left|
        \pr\!\left[\frac{S_k}{\sqrt{k}}\le x\right]-\Phi(x)
    \right|
    \le
    \frac{0.56}{\sqrt{k}}.
\end{equation*}
Consequently, for every $a\ge0$,
\begin{align*}
    \textstyle\pr[|S_k|\le a]
    \ge
    2\Phi\!\left(\frac{a}{\sqrt{k}}\right)-1-
    \frac{1.12}{\sqrt{k}} \quad \text{and} \quad 
    \pr[S_k\ge a]
    \ge
    1-\Phi\!\left(\frac{a}{\sqrt{k}}\right)-
    \frac{0.56}{\sqrt{k}}.
\end{align*}
\end{Theorem}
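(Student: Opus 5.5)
The first display is the classical Berry--Esseen theorem, and we would not reprove it. We specialize the sharpened constant of \cite{shevtsova2010improvement} to Rademacher summands. For i.i.d.\ summands with mean $0$, variance $\sigma^2$ and third absolute moment $\rho=\E|R_1|^3$, the uniform distance between the law of $S_k/(\sigma\sqrt{k})$ and $\Phi$ is at most $C\rho/(\sigma^3\sqrt{k})$, and one may take $C\le 0.56$. For Rademacher variables $\sigma^2=1$ and $\rho=1$, so this gives exactly the bound $0.56/\sqrt{k}$. The only thing to check here is that the cited constant applies to the i.i.d.\ case with this normalization; we expect no real obstacle.

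The ``consequently'' part follows from the main inequality plus care with strict versus non-strict inequalities. Fix $a\ge 0$ and set $x=a/\sqrt{k}$. We write
\begin{align*}
\pr[|S_k|\le a]=\pr\!\left[\tfrac{S_k}{\sqrt{k}}\le x\right]-\pr\!\left[\tfrac{S_k}{\sqrt{k}}< -x\right].
\end{align*}
We bound the two terms separately.
\begin{itemize}
\item The first term is at least $\Phi(x)-0.56/\sqrt{k}$, directly from the main inequality.
\item For the second term, the strict event is contained in the non-strict one, so
\begin{align*}
\pr\!\left[\tfrac{S_k}{\sqrt{k}}<-x\right]\le\pr\!\left[\tfrac{S_k}{\sqrt{k}}\le -x\right]\le \Phi(-x)+\tfrac{0.56}{\sqrt{k}}.
\end{align*}
\end{itemize}
Subtracting, and using the Gaussian symmetry $\Phi(-x)=1-\Phi(x)$, yields $2\Phi(x)-1-1.12/\sqrt{k}$.

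For the one-sided tail, we use
\begin{align*}
\pr[S_k\ge a]=1-\pr[S_k<a]\ge 1-\pr\!\left[\tfrac{S_k}{\sqrt{k}}\le x\right]\ge 1-\Phi(x)-\tfrac{0.56}{\sqrt{k}},
\end{align*}
again by the main inequality.

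The only subtle point is the discreteness of $S_k$. The Kolmogorov-distance bound controls $\pr[\,\cdot\le x]$ and not $\pr[\,\cdot<x]$. We handle this by monotone inclusion, as above, rather than by taking limits. This inclusion step is the one to be careful about; everything else is arithmetic.
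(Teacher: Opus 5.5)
Your proposal is correct and follows the same route as the paper, which gives no proof beyond citing the Berry--Esseen theorem with Shevtsova's constant $0.56$ (valid even for non-identically distributed summands, hence certainly for Rademacher sums with $\rho=\sigma=1$) and treats the two tail bounds as the immediate consequences you derive. As a minor aside, since $\Phi$ is continuous, the supremum bound also controls $\pr[S_k/\sqrt{k}<x]$ via left limits, so the strict-versus-non-strict issue is not a real obstruction, though your inclusion argument handles it correctly.
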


\begin{Lemma}[A binomial upper-tail lower bound]
\label{lem:binomial-upper-tail-lower-bound}
Let $X\sim\operatorname{Bin}(r,q)$ and let $\mu:=rq$. If $q\le1/4$, $\mu\ge2$, and $2\le t\le\mu/2$, then
\begin{equation*}
    \textstyle\pr[X\ge\mu+t]
    \ge
    \frac{1}{20\sqrt{\mu}}
    \exp\!\left(-\frac{3t^2}{\mu}\right).
\end{equation*}
\end{Lemma}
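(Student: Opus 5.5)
The plan is to bound the tail from below by a single point mass. Set $k:=\lceil \mu+t\rceil$ and use
\[
\pr[X\ge \mu+t]\ \ge\ \pr[X=k]=\binom{r}{k}q^k(1-q)^{r-k}.
\]
This is then estimated by the standard entropy form of Stirling's bound. We write $a:=k/r$ and let $D(a\,\|\,q)$ denote the Bernoulli Kullback--Leibler divergence.

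\textbf{Range of $k$.} We first check that $1\le k\le r-1$, which the Stirling estimate needs.
\begin{itemize}
\item From $t\le \mu/2$ we get $k\le \tfrac32\mu+1$.
\item From $q\le 1/4$ we get $r=\mu/q\ge 4\mu\ge 8$.
\end{itemize}
Together these give $k\le \tfrac38 r+1\le r-1$. Also $k\ge \mu+t\ge 4$.

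\textbf{Pmf in entropy form.} For $1\le k\le r-1$ we use the classical bound $\binom{r}{k}\ge \sqrt{\tfrac{r}{8k(r-k)}}\,e^{rH(k/r)}$. This gives
\[
\pr[X=k]\ \ge\ \sqrt{\frac{r}{8k(r-k)}}\;\exp\bigl(-r\,D(a\,\|\,q)\bigr).
\]

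\textbf{Prefactor.} Since $r/(r-k)\ge 1$, the prefactor is at least $1/\sqrt{8k}$. Using $k\le \tfrac32\mu+1\le 2\mu$ (as $\mu\ge 2$), we get $1/\sqrt{8k}\ge 1/(4\sqrt{\mu})$, which is at least $1/(20\sqrt{\mu})$.

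\textbf{Exponent.} This is the step needing the most care, to land exactly on the constant $3$. Bound the divergence by the $\chi^2$-divergence:
\[
D(a\,\|\,q)\le \frac{(a-q)^2}{q(1-q)}.
\]
Multiplying by $r$ gives
\[
rD\ \le\ \frac{(k-\mu)^2}{\mu(1-q)}\ \le\ \frac{4}{3}\cdot\frac{(k-\mu)^2}{\mu},
\]
where the last step uses $q\le 1/4$. The rounding costs at most one unit, so $k-\mu\le t+1\le \tfrac32 t$ because $t\ge 2$. Hence $(k-\mu)^2\le \tfrac94 t^2$, and therefore $rD\le 3t^2/\mu$.

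Combining the prefactor and exponent bounds yields
\[
\pr[X\ge \mu+t]\ \ge\ \frac{1}{4\sqrt{\mu}}\,e^{-3t^2/\mu}\ \ge\ \frac{1}{20\sqrt{\mu}}\,e^{-3t^2/\mu}.
\]
The only places the hypotheses enter are:
\begin{itemize}
\item $t\ge 2$, to absorb the ceiling in $k$;
\item $q\le 1/4$, to control $1-q$ and to ensure $k\le r-1$;
\item $t\le \mu/2$ and $\mu\ge 2$, for the prefactor.
\end{itemize}
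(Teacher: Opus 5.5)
Your proposal is correct and takes essentially the same route as the paper. Both lower-bound the tail by the point mass at $k=\lceil\mu+t\rceil$, write it in entropy form via a Stirling-type estimate, bound $D(a\|q)$ by $(a-q)^2/(q(1-q))$, and absorb the ceiling using $k-\mu\le t+1\le 3t/2$. The only cosmetic differences are that you cite the classical bound $\binom{r}{k}\ge\sqrt{r/(8k(r-k))}\,e^{rH(k/r)}$ directly and check $1\le k\le r-1$ explicitly, whereas the paper derives the prefactor $1/(4\sqrt{ra(1-a)})$ from explicit factorial bounds.
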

\begin{proof}
Let $k:=\lceil\mu+t\rceil$ and $a:=k/r$. The factorial bounds
\begin{equation*}
    \textstyle\sqrt{2\pi}\,h^{h+1/2}e^{-h}
    \le
    h!
    \le
    e\,h^{h+1/2}e^{-h}
    \qquad(h\ge1)
\end{equation*}
imply
\begin{equation*}
    \textstyle\pr[X=k]
    \ge
    \frac{1}{4\sqrt{ra(1-a)}}
    \exp\!\left(-rD(a\|q)\right),
\end{equation*}
where $D(a\|q):=a\log(a/q)+(1-a)\log((1-a)/(1-q))$ is the binary relative entropy. Since $t\le\mu/2$, $t\ge2$, and $q\le1/4$, we have $k\le2\mu$, $a\le1/2$, and
\begin{align*}
    \textstyle r D(a\|q)
    ~\le~
    \frac{(k-\mu)^2}{\mu(1-q)}
    ~\le~
    \frac{3t^2}{\mu},
\end{align*}
where the first inequality uses $D(a\|q)\le(a-q)^2/(q(1-q))$, and the second uses $k-\mu\le t+1\le3t/2$. Moreover, $ra(1-a)\le k\le2\mu$, so $1/(4\sqrt{ra(1-a)})\ge1/(20\sqrt{\mu})$. Combining these bounds with $\pr[X\ge\mu+t]\ge\pr[X=k]$ proves the claim.
\end{proof}

\begin{Lemma}[Multinomial exceedance indicators]
\label{lem:multinomial-exceedance}
Let $(X_0,X_1,\ldots,X_m)$ be the occupancy counts from $r$ independent trials with cell probabilities $(q_0,q_1,\ldots,q_m)$. For any $J\subseteq[m]$ and thresholds $(a_j)_{j\in J}$, define
\begin{equation*}
    \textstyle Y:=\sum_{j\in J}\mathbf 1[X_j\ge a_j].
\end{equation*}
Then, we have $\operatorname{Var}(Y)\le\E[Y]$. Consequently, whenever $\E[Y]>0$, we have $\pr[Y=0]\le\frac{1}{\E[Y]}.$
\end{Lemma}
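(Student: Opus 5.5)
We need to prove Var(Y) ≤ E[Y], i.e. the indicators $I_j=\mathbf 1[X_j\ge a_j]$ are pairwise nonpositively correlated. After that, Chebyshev gives $\pr[Y=0]\le \pr[|Y-\E Y|\ge \E Y]\le \operatorname{Var}(Y)/\E[Y]^2\le 1/\E[Y]$.

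The plan is to expand $\operatorname{Var}(Y)=\sum_{j\in J}\operatorname{Var}(I_j)+\sum_{j\neq k}\operatorname{Cov}(I_j,I_k)$. Since $I_j$ is an indicator, $\operatorname{Var}(I_j)\le \E[I_j]$, so the diagonal terms sum to at most $\E[Y]$. It therefore suffices to show $\operatorname{Cov}(I_j,I_k)\le 0$ for distinct $j,k$, i.e.
\[
\pr[X_j\ge a_j,\,X_k\ge a_k]\le \pr[X_j\ge a_j]\,\pr[X_k\ge a_k].
\]

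The key step, and the only real content, is this negative correlation of two increasing events in distinct multinomial cells. I would condition on $X_j$: given $X_j=x$, the count $X_k$ is distributed as $\operatorname{Bin}(r-x,\,q_k/(1-q_j))$ (trivially zero if $q_j=1$). This law is stochastically decreasing in $x$, since fewer remaining trials yield a binomial that is stochastically smaller by coupling. Hence $g(x):=\pr[X_k\ge a_k\mid X_j=x]$ is nonincreasing in $x$, while $f(x):=\mathbf 1[x\ge a_j]$ is nondecreasing. By Chebyshev's sum (Harris) inequality for one real variable, $\E[f(X_j)g(X_j)]\le \E[f(X_j)]\,\E[g(X_j)]$, and the right side equals $\pr[X_j\ge a_j]\pr[X_k\ge a_k]$. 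This gives the claim.

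Edge cases are routine: thresholds $a_j\le 0$ make $I_j\equiv 1$, which has zero covariance with anything, and cell $0$ simply absorbs leftover probability. Alternatively, one could cite negative association of multinomial counts (Joag-Dev–Proschan), but the conditioning argument above is self-contained. I expect this monotone-conditioning step to be the main obstacle; the rest is bookkeeping.
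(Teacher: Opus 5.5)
Your proposal is correct and follows essentially the same route as the paper. Both arguments condition on one cell count so that the other is $\operatorname{Bin}(r-x, q_k/(1-q_j))$. Both then apply the covariance inequality for a nondecreasing and a nonincreasing function of the same variable to get $\operatorname{Cov}(I_j,I_k)\le 0$, and finish with $\operatorname{Var}(Y)\ge \pr[Y=0]\,\E[Y]^2$, which is your Chebyshev step.
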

\begin{proof}
Fix distinct coordinates $i,j\in J$. If $q_i=1$, all other counts are deterministic and the claim is immediate. Otherwise, conditional on $X_i=h$, $X_j$ has distribution
$
    \operatorname{Bin}\!\left(r-h,\nicefrac{q_j}{(1-q_i)}\right).
$
Hence, $h\mapsto\pr[X_j\ge a_j\mid X_i=h]$ is nonincreasing, while $h\mapsto\mathbf 1[h\ge a_i]$ is nondecreasing. For any random variable $H$, any nondecreasing function $f$, and any nonincreasing function $g$, an independent copy $H'$ gives
\begin{equation*}
    2\operatorname{Cov}(f(H),g(H))
    =
    \E[(f(H)-f(H'))(g(H)-g(H'))]
    \le0.
\end{equation*}
It follows that the indicators $\mathbf 1[X_i\ge a_i]$ and $\mathbf 1[X_j\ge a_j]$ have nonpositive covariance. Therefore,
\begin{equation*}
    \textstyle\operatorname{Var}(Y)
    \le
    \sum_{j\in J}\operatorname{Var}(\mathbf 1[X_j\ge a_j])
    \le
    \sum_{j\in J}\E[\mathbf 1[X_j\ge a_j]]
    =
    \E[Y].
\end{equation*}
Finally,
\begin{equation*}
    \operatorname{Var}(Y)
    =
    \E[(Y-\E[Y])^2]
    \ge
    \pr[Y=0]\,\E[Y]^2,
\end{equation*}
which proves the probability bound.
\end{proof}

\subsection{Properties of Tangent Sequence: Proof of \Cref{lem:algConnectionProperties}}
\label{sec:appendix:tangentSeqFacts}

Let us begin by recalling the setup of the \preview{} model. The samples and requests are generated as follows: let $(\sample_1,\ldots,\sample_n) \sim \bD = \D_1 \times \ldots \times \D_n$ and $(\online_1,\ldots,\online_n) \sim \bD$ be sampled independently, and let $\samples \subseteq [n]$ be a uniformly random subset of $\nsamples = p\cdot n$ indices. Offline, $\nsamples$ samples $(\sample_t)_{t\in\samples}$ are revealed to the algorithm. Then, online, the requests $(\online_t)_{t\in[n]}$ are revealed sequentially in the original order. Let $\samplesnot = [n] \setminus \samples$ denote the set of unsampled request indices, and notice that $\samplesnot$ is a uniformly random subset of size $(1-p)n$. By construction, both $\samples$ and $\samplesnot$ are measurable with respect to the sampling history $\histend$.

Instead of viewing the sample indices $\samples$, the corresponding samples $(\sample_t)_{t\in\samples}$, and the random ordering $\pi$ as being chosen before running $\blackbox$ on the sample as in \cref{alg:genericOffline}, we can equivalently view these quantities as being sampled iteratively, as in \cref{alg:equivGenericOffline}:

\begin{algorithm}
\caption{\textsc{Equivalent Offline Random-Order Simulation}}
\label{alg:equivGenericOffline}
\KwIn{Black-box algorithm $\blackbox$, number of samples $\nsamples$, total number of requests $n$}
Let $K$ be a random variable in $\{\lfloor (1-p)\nsamples \rfloor, \lceil (1-p)\nsamples \rceil\}$ such that $\E[K]=(1-p)\nsamples$.\\
Given $K$, let $I$ be a uniformly random subset of $[\nsamples]$ of size $K$.\\
Initialize $U_1 = [n]$\\
\For{$i = 1,\ldots,\nsamples$}
{
    Sample $\pi(i) \sim \unif(U_i)$, take $\hist_{\pi(i)} = (\sample_{\pi(j)}, \decisionsample[\pi(j)])_{j<i}$, and sample $\sample_{\pi(i)}, \online_{\pi(i)} \stackrel{iid}{\sim}\D_{\pi(i)}$\\
    Record $\decisionsample = \decision[\pi(i)]{\sample_{\pi(i)}}$ and $\decisiononline[\pi(i)] = \decision[\pi(i)]{\online_{\pi(i)}}$\\
    \uIf{$i\in I$}{
        Sample $\hpi(i) \sim \unif(U_i \setminus \{\pi(i)\})$ and sample $\online_{\hpi(i)} \sim \D_{\hpi(i)}$\\
        Record $\decisiongeneric[i] = \decision[\pi(i)]{\online_{\hpi(i)}}$\\
        Take $U_{i+1}=U_i \setminus \{\pi(i), \hpi(i)\}$\\
    }\Else{
        Take $U_{i+1}=U_i \setminus \{\pi(i)\}$\\
    }
}
Take $\samples = \{\pi(i)\}_{i\in[\nsamples]}$, $\samplesnotp = \{\hpi(i)\}_{i\in I}$, $\samplesnot = [n]\setminus\samples$, and $\histend=\sigma\{(\pi(i), \sample_{\pi(i)}, \decisionsample)\}$\\
\KwOut{$(\pi(i), \hist_{\pi(i)})_{i\in [\nsamples]}$}
\end{algorithm}

Let us record a few useful observations:
\begin{Observation}\label{obs:equivOffline}
    The samples generated by \cref{alg:equivGenericOffline} satisfy the \preview{} assumptions, and the permutation $\pi$, samples $(\sample_{\pi(i)})_{i\in[\nsamples]}$, sample decisions $\decisionsample$, and sample histories are identically distributed to those in \cref{alg:genericOffline}. The online samples $\online_{\pi(i)}$ and decisions $\decisiononline[\pi(i)]$ for each $i\in[\nsamples]$ are also identically distributed to the corresponding samples and decisions at iterations satisfying $t=\pi(i)$ in \cref{alg:genericOnline}.

    Moreover, for any $i\in [\nsamples]$, $\pr[i\in I] = 1-p$.
\end{Observation}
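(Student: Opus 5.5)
The plan is a symmetry (relabeling) argument for the index process, followed by a conditional-law argument for the samples, histories and decisions, and finally a direct computation for $\pr[i\in I]$. Before starting, we check that the algorithm is well defined. Each iteration removes at most two indices, and exactly $\nsamples+K$ indices are removed in total. Since $K\le\lceil(1-p)\nsamples\rceil$ and $(1-p)\nsamples\le(1-p)n=n-\nsamples$, with $n-\nsamples$ an integer, we get $K\le n-\nsamples$. Hence $\nsamples+K\le n$, so every set from which an index is drawn is nonempty.

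\textbf{Step 1: $\pi$ is a uniformly random injective sequence.} Fix any permutation $\sigma$ of $[n]$ and relabel every index $t$ as $\sigma(t)$, leaving $K$ and $I$ unchanged since they do not depend on labels. Every index choice in \cref{alg:equivGenericOffline} is uniform over a current set $U_i$ or $U_i\setminus\{\pi(i)\}$, and relabeling maps these sets to the corresponding relabeled sets. Therefore the joint law of $(\pi(1),\ldots,\pi(\nsamples))$, together with the $\hpi$'s, is invariant under $\sigma$. The symmetric group acts transitively on injective $\nsamples$-tuples, so a permutation-invariant law on such tuples must be uniform. It follows that $\samples=\{\pi(i)\}_{i\in[\nsamples]}$ is a uniform $\nsamples$-subset of $[n]$, and that, conditioned on $\samples$, the order $\pi$ is a uniform permutation of $\samples$. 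This is exactly the law of $(\samples,\pi)$ in the \preview{} model combined with \cref{alg:genericOffline}.

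\textbf{Step 2: samples, histories and sample decisions.} Conditioned on the index sequence, each $\sample_{\pi(i)}$ is a fresh independent draw from $\D_{\pi(i)}$. Each $\online_{\pi(i)}$ and each $\online_{\hpi(i)}$ is also a fresh draw, independent of all samples. We then generate the decisions inductively over $i$. The history $\hist_{\pi(i)}$ is built from $(\sample_{\pi(j)},\decisionsample[\pi(j)])_{j<i}$, and the decision $\decisionsample[\pi(i)]=\decision[\pi(i)]{\sample_{\pi(i)}}$ is produced by the same $\blackbox$ state updates, using independent coins, as in \cref{alg:genericOffline}. By induction on $i$, the joint law of $(\pi(j),\sample_{\pi(j)},\decisionsample[\pi(j)])_{j\le i}$ agrees with that of \cref{alg:genericOffline}. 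Together with Step 1, this also shows that the samples satisfy the \preview{} assumptions.

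\textbf{Step 3: online requests and decisions.} In \cref{alg:genericOnline}, for $t=\pi(i)\in\samples$, the request $\online_t\sim\D_t$ is independent of the entire offline run, and $\decisiononline=\decision[t]{\online_t}$ is computed with independent coins. In \cref{alg:equivGenericOffline}, $\online_{\pi(i)}$ is likewise drawn independently of all samples and histories, and it is fed to the same map $\decision[\pi(i)]{\cdot}$. Conditioned on $\histend$, the pairs $(\online_{\pi(i)},\decisiononline[\pi(i)])$ therefore have the same conditional law in both algorithms. Combined with Step 2, the joint laws agree. The only point needing care is this conditional independence: in \cref{alg:equivGenericOffline}, $\online_{\pi(i)}$ never enters the $\blackbox$ state, because the state is updated only with the sample pairs $(\sample_{\pi(i)},\decisionsample[\pi(i)])$.

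\textbf{Step 4: $\pr[i\in I]$.} Given $K$, the set $I$ is a uniform $K$-subset of $[\nsamples]$, so $\pr[i\in I\mid K]=K/\nsamples$. Taking expectations gives
\[
\pr[i\in I]=\frac{\E[K]}{\nsamples}=\frac{(1-p)\nsamples}{\nsamples}=1-p.
\]

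I expect Step 1 to be the main obstacle. The sets $U_i$ also exclude the $\hpi$ indices, so uniformity of $\pi$ is not immediate from the sequential construction and must come from the relabeling-invariance argument.
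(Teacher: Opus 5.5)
Your proposal is correct and follows essentially the same route as the paper's proof. Both use the same counting check $\nsamples+K\le n$ for well-definedness, a relabeling-invariance argument for the uniformity of $\pi$, independence of the fresh draws for the samples, histories and decisions, and $\pr[i\in I]=\E[K]/\nsamples=1-p$. Your version is slightly more careful than the paper's in one place: you state the invariance for the joint law of $(\pi,\hpi)$ and add the transitivity remark, whereas the paper asserts that the law of $\pi(i+1)$ is determined by $I$ and $\pi(1),\ldots,\pi(i)$ alone, even though $U_{i+1}$ also excludes the earlier $\hpi$ indices.
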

\begin{proof}
    Notice that sampling $\pi(i)$ and $\hpi(i)$ as defined in \cref{alg:equivGenericOffline} is always well-defined. Indeed, since $\nsamples = p n$ is an integer, and thus also $(1-p)n$, and $p \leq 1$,
    \begin{align*}
        \nsamples + |I|
        = \nsamples + K
        \leq \nsamples + \lceil (1-p)\nsamples \rceil
        = \nsamples + \lceil p(1-p)n \rceil
        \leq \nsamples + (1-p)n 
        = \nsamples + n - \nsamples 
        = n.
    \end{align*}
    As a consequence, $\pi(1),\ldots,\pi(\nsamples)$ must be $\nsamples$ distinct samples from $[n]$, and $(\hpi(i))_{i\in I}$ similarly must be a set of $K\leq n-\nsamples$ distinct samples from $[n] \setminus \samples$.

    Then, the fact that $\samples = \{\pi(1),\ldots,\pi(\nsamples)\}$ are sampled uniformly without replacement from $[n]$ follows by symmetry: relabeling the elements of $[n]$ with any permutation $\sigma$ does not change the distribution of $K$, $I$, or $\pi(1)\sim \unif([n])$. Thus, since the distribution of $\pi(i+1)$ is completely determined by $[n]$, $I$, and $\pi(1),\ldots,\pi(i)$, it follows that the distribution of $\pi(1),\ldots,\pi(\nsamples)$ and of $\sigma(\pi(1)),\ldots,\sigma(\pi(\nsamples))$ is the same, and thus must be a uniformly random size $\nsamples$ sample without replacement from $[n]$.

    As a consequence, since $\sample_{\pi(i)}, \online_{\pi(i)} \stackrel{iid}{\sim}\D_{\pi(i)}$ are sampled independently, the resulting $\decisionsample$, $\decisiononline[\pi(i)]$ and all sample histories are identically distributed in \cref{alg:genericOffline,alg:equivGenericOffline}.

    The second statement follows since, for any $i\in[\nsamples]$,
    \begin{align*}
        \textstyle(1-p)\nsamples 
        = \E[K]
        = \E[|I|]
        = \sum_{i'\in[\nsamples]} \pr[i'\in I]
        = \nsamples \pr[i\in I],
    \end{align*}
    where the last equality follows since $I$ is sampled uniformly, and thus by symmetry, $\pr[i\in I]=\pr[i'\in I]$ for all $i, i'\in[\nsamples]$.
\end{proof}

\paragraph{Connecting the offline \ro{} and online decisions via a tangent sequence.}

In order to establish the claims in \cref{lem:algConnectionProperties}, we need to connect the decisions made online by \cref{alg:genericOnline} to the decisions made offline in random order in \cref{alg:genericOffline}. Using the equivalence from \cref{obs:equivOffline}, we can reason instead about the execution of \cref{alg:equivGenericOffline}. This allows us to construct the following sequence of random variables, which is the key to connecting the performance of \cref{alg:genericOffline,alg:genericOnline}.
\begin{Definition}\label{def:tangent}
Let $f : \decisionsetall \times \typespaceall \to \R$ be any measurable function.
For any iteration $i\in[\nsamples]$ of \cref{alg:equivGenericOffline}, let $\sample_{\pi(i)}, \online_{\pi(i)} \sim \D_{\pi(i)}$ be the sample and observation generated by \cref{alg:equivGenericOffline}, and let $\decisionsample, \decisiononline[\pi(i)]$ the associated decisions. If $i\in I$, let $\online_{\hpi(i)}\sim \D_{\hpi(i)}$ and $\decisiongeneric[i]$ be as in \cref{alg:equivGenericOffline}.
Then, let
\begin{align*}
    X_i = \begin{cases}
        f(\decisiononline[\pi(i)], \online_{\pi(i)}) & \text{if $i \not\in I$}\\
        f(\decisiongeneric[i], \online_{\hpi(i)}) & \text{if $i \in I$}
    \end{cases}
    \quad\text{and}\quad
    Y_i = f(\decisionsample, \sample_{\pi(i)})
\end{align*}
Let $\F_i$ be the filtration corresponding to all observations of \cref{alg:equivGenericOffline} at the end of step $i$, where $\F_0 = \sigma\{K, I\}$.
\end{Definition}
The key observation is that the random variables $(X_i, Y_i)$ from \cref{def:tangent} satisfy the following tangent relation:
\begin{Lemma}\label{lem:tangent}
    In the setting of \cref{def:tangent}, the random variables $(X_i, Y_i)$ are adapted to $\F_i$, and the distributions of $X_i$ and $Y_i$ conditioned on $\F_{i-1}$ are equal. In particular,
\[
        \E[X_i \mid \F_{i-1}] = \E[Y_i \mid \F_{i-1}].
  \]
    Moreover, taking $\histend$ to be the sample history as defined in \cref{alg:genericOffline,alg:equivGenericOffline},
    \begin{align*}
        \textstyle\sum_{i\in[\nsamples]}\E[X_i \mid \histend]
        = p \sum_{t\in [n]} \E[f(\decisiononline, \online_t) \mid \histend].
    \end{align*}
\end{Lemma}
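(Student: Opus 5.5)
The plan is to prove the two claims of \Cref{lem:tangent} separately. The first (adaptedness and equality of conditional laws) follows from a one-step look at iteration $i$ of \cref{alg:equivGenericOffline}. The second (the identity after conditioning on $\histend$) follows from a symmetry computation that matches the $X_i$'s with the online decisions of \cref{alg:genericOnline}, using \cref{obs:equivOffline}.

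\textbf{Step 1: tangency.} Adaptedness is immediate, since $\F_i$ contains everything recorded up to the end of iteration $i$; this includes $\pi(i)$, $\hpi(i)$, the drawn samples and requests, and the decisions, and $X_i,Y_i$ are functions of these. For the distributional claim, fix $\F_{i-1}$. The history $\hist_{\pi(i)}=(\sample_{\pi(j)},\decisionsample[\pi(j)])_{j<i}$ is $\F_{i-1}$-measurable: it depends only on past samples and decisions, not on the label $\pi(i)$. Whether $i\in I$ is also $\F_{i-1}$-measurable, since it is $\F_0$-measurable. Call this history $h$ and the current unused set $U_i$.
\begin{itemize}
\item $Y_i$: draw $\pi(i)\sim\unif(U_i)$, then $\sample\sim\D_{\pi(i)}$, and output $f(\decisionh[h]{\sample},\sample)$ using fresh internal coins.
\item $X_i$ when $i\notin I$: the same recipe, with $\online_{\pi(i)}$ in place of $\sample_{\pi(i)}$. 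The two are equal in law because the draws are i.i.d.\ from $\D_{\pi(i)}$.
\item $X_i$ when $i\in I$: the index $\hpi(i)$ is uniform on $U_i\setminus\{\pi(i)\}$ with $\pi(i)$ uniform on $U_i$, so $\hpi(i)$ is marginally uniform on $U_i$. The request $\online_{\hpi(i)}\sim\D_{\hpi(i)}$ is then fed to the same policy $\decisionh[h]{\cdot}$. So $X_i$ again has the law above.
\end{itemize}
In all cases $\mathcal L(X_i\mid\F_{i-1})=\mathcal L(Y_i\mid\F_{i-1})$, and the equality of conditional expectations follows.

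\textbf{Step 2: the conditional-sum identity.} For $t\in\samples$, write $g(t):=\E[f(\decisiononline[t],\online_t)\mid\histend]$. For $u\in\samplesnot$, write $g(u):=\E[f(\decision[T_u]{\online_u},\online_u)\mid\histend]=\frac1{\nsamples}\sum_{i}\E[f(\decision[\pi(i)]{\online_u},\online_u)\mid\histend]$, using $T_u\sim\unif(\samples)$ independent of everything else. I would establish two symmetry facts:
\begin{enumerate}
\item[(a)] $I$ is independent of $\histend$. Given $I$, the sequence $\pi$ is a uniformly random ordered $\nsamples$-subset of $[n]$, as in the relabeling argument of \cref{obs:equivOffline}. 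Samples and decisions are then generated identically regardless of $I$. Hence $\pr[i\in I\mid\histend]=1-p$.
\item[(b)] Conditional on $\histend$ and on $i\in I$, the index $\hpi(i)$ is uniform on $\samplesnot$, and $\online_{\hpi(i)}$ is a fresh draw independent of $\histend$. This holds because, given $I$, the sequence $(\pi(1),\hpi(\cdot),\ldots)$ is a uniformly random sequence of distinct indices. By exchangeability, conditioning on the $\pi$-coordinates leaves each $\hpi(i)$ uniform over the complement $[n]\setminus\samples$.
\end{enumerate}
Given (a) and (b), the computation is as follows. The indices $i\notin I$ contribute $\sum_i p\,g(\pi(i))=p\sum_{t\in\samples}g(t)$. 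The indices $i\in I$ contribute
\begin{align*}
(1-p)\sum_{i}\frac{1}{|\samplesnot|}\sum_{u\in\samplesnot}\E[f(\decision[\pi(i)]{\online_u},\online_u)\mid\histend]
=\frac{(1-p)\nsamples}{(1-p)n}\sum_{u\in\samplesnot}g(u)=p\sum_{u\in\samplesnot}g(u).
\end{align*}
Adding the two contributions gives $p\sum_{t\in[n]}g(t)$, which is the claimed identity. Here \cref{obs:equivOffline} identifies $\decisiononline[\pi(i)]$ with the online decision of \cref{alg:genericOnline} at time $t=\pi(i)$.

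\textbf{Main obstacle.} The delicate step is (b), together with (a). The sampling of $\pi(i+1)$ depends on which $\hpi$ indices were removed earlier, so one must check that conditioning on $\histend$ does not bias the $\hpi(i)$'s toward particular unsampled indices. I would handle this via the relabeling-invariance argument: for any permutation $\sigma$ of $[n]$ fixing $\samples$ pointwise, the joint law of $(\pi,\hpi,\text{samples})$ is invariant. This forces the conditional law of $\hpi(i)$ given $\histend$ to be uniform on $\samplesnot$.
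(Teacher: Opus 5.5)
Your proposal is correct and follows essentially the same route as the paper. For tangency you use the marginal uniformity of $\hpi(i)$ on $U_i$ and the fact that $\sample_{\pi(i)},\online_{\pi(i)}$ are i.i.d.\ from $\D_{\pi(i)}$ and fed to the same policy; for the identity you use the tower-rule split into the $i\notin I$ and $i\in I$ terms, relying on $I$ being independent of the sample history and $\hpi(i)$ being uniform on $\samplesnot$ given $\histend$ and $I$. One point to state explicitly: the $i\notin I$ term needs $I$ to be independent of $(\histend,\online_{\pi(i)},\decisiononline[\pi(i)])$ jointly, not only of $\histend$, and your argument for (a) gives this verbatim (the paper states it in this joint form).
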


We note that tangent sequences satisfy the following concentration inequality, which will be useful in establishing \cref{lem:algConnectionProperties}.

\begin{Theorem}\label{thm:tangenConcentration}
Suppose that $X_t, Y_t\in[0,1]$ for $t\in[n]$ are two sequences of random variables adapted to $\F_t$ such that $\E[Y_t \mid \F_{t-1}] = \E[X_t \mid \F_{t-1}]$. Then, for any $\sigma$-algebra $\calG$ such that $Y_t$ is measurable w.r.t. $\calG$ for all $t$, for any $\eta, \delta\in(0,1)$, with probability at least $1-\delta$,
    \begin{align*}
        \textstyle|\sum_{t\in[n]} (\E[X_t \mid \calG] - Y_t)| \leq  \frac{5\log(\nicefrac{2}{\delta})}{\eta} + \eta \sum_{t\in[n]} Y_t
    \end{align*}
\end{Theorem}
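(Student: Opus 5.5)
Two martingales drive the argument: $M_k=\sum_{t\le k}(X_t-Y_t)$, and the projected sequence $\sum_t(\E[X_t\mid\calG]-X_t)$. I would bound each with a Freedman-type inequality and combine. Since $\E[X_t-Y_t\mid\F_{t-1}]=0$ and $|X_t-Y_t|\le 1$, $M_k$ is a martingale with bounded increments. Its conditional variance satisfies
\[
\E[(X_t-Y_t)^2\mid\F_{t-1}]\le \E[X_t+Y_t\mid\F_{t-1}]=2\E[Y_t\mid\F_{t-1}],
\]
using $X_t,Y_t\in[0,1]$ and tangency. Freedman's inequality in its exponential-supermartingale form ($\exp(\lambda M_k-(e^\lambda-1-\lambda)V_k)$ is a supermartingale, with $\lambda\asymp\eta$) then gives, with probability $1-\delta/2$,
\[
\Big|\sum_t (X_t-Y_t)\Big|\le \frac{c\log(2/\delta)}{\eta}+\eta\sum_t \E[Y_t\mid\F_{t-1}].
\]

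The compensator $\sum_t\E[Y_t\mid\F_{t-1}]$ must then be replaced by the realized $\sum_t Y_t$. I would apply the same exponential supermartingale to $Y_t-\E[Y_t\mid\F_{t-1}]$, whose conditional variance is at most $\E[Y_t\mid\F_{t-1}]$. This gives $\sum_t\E[Y_t\mid\F_{t-1}]\le 2\sum_tY_t+O(\log(1/\delta))$ with high probability. That handles $|\sum_t(X_t-Y_t)|$.

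The statement, however, concerns $\E[X_t\mid\calG]$ rather than $X_t$, and this is where I expect the main obstacle. Conditioning a high-probability bound on $\calG$ does not directly give a bound on conditional expectations. I would avoid the detour through realized $X$ entirely and work with the exponential moment directly. For fixed $\lambda$, put $Z=\exp\big(\lambda\sum_t(X_t-Y_t)-\psi(\lambda)\sum_t(X_t+Y_t)\big)$, where $\psi(\lambda)=e^\lambda-1-\lambda$. This is the terminal value of a supermartingale with $\E Z\le 1$; in particular $\E[\E[Z\mid\calG]]\le1$. Write $Z=e^{W}$ with $W=\sum_t w(X_t)-\sum_t w'(Y_t)$, where $w(x)=(\lambda-\psi(\lambda))x$ and $w'(y)=(\lambda+\psi(\lambda))y$. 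Since the $Y_t$ are $\calG$-measurable, Jensen's inequality gives
\[
\exp\big(\E[W\mid\calG]\big)\le \E[e^{W}\mid\calG].
\]
This yields
\[
\lambda(1-O(\lambda))\sum_t\E[X_t\mid\calG]-\lambda(1+O(\lambda))\sum_tY_t\le \log\E[Z\mid\calG].
\]
By Markov's inequality, $\log\E[Z\mid\calG]\le\log(2/\delta)$ with probability $1-\delta/2$. Dividing by $\lambda$ and absorbing the $O(\lambda)\sum_t\E[X_t\mid\calG]$ term into the left-hand side (using $\lambda\le1/2$) gives the upper tail
\[
\sum_t(\E[X_t\mid\calG]-Y_t)\le \eta\sum_tY_t+\frac{c\log(2/\delta)}{\eta}.
\]

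The lower tail follows from the symmetric supermartingale with $\lambda\to-\lambda$, which uses $\psi(-\lambda)\le\lambda^2/2$. One more Jensen and Markov step, plus a union bound, gives the two-sided claim. What remains is tracking constants to reach $5\log(2/\delta)/\eta$; I would choose $\lambda=\eta/2$ and check that the absorption factors stay within $5$.
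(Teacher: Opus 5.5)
Your architecture matches the paper's. You prove an exponential moment bound $\E[e^{W}]\le 1$ for a $W$ linear in the $X_t$ and $Y_t$, apply conditional Jensen given $\calG$ (valid because the $Y_t$ are $\calG$-measurable), then Markov, then a union bound over the two tails. The paper uses $W=\frac{\eta}{5}\sum_t\big(X_t-(1+\eta)Y_t\big)$, which puts the entire correction on $Y_t$, so Jensen gives the target inequality with no absorption step. Your exponent $(\lambda-\psi(\lambda))X_t-(\lambda+\psi(\lambda))Y_t$ plus absorption also works, and with $\lambda=\eta/2$ your constants do fit under $5$.

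The gap is that the step carrying all the content, $\E[Z]\le 1$, is asserted rather than proved, and it is not Freedman's supermartingale. Freedman's process is $\exp(\lambda M_k-\psi(\lambda)V_k)$ with the \emph{predictable} variation $V_k=\sum_{t\le k}\E[(X_t-Y_t)^2\mid\F_{t-1}]$. Your $Z$ uses the \emph{realized} sum $\sum_t(X_t+Y_t)$, which is what your $\calG$-Jensen step needs. That substitution is not automatic. With the realized $(X_t-Y_t)^2$ in place of $X_t+Y_t$ it actually fails, e.g.\ for increments equal to $-1$ with small probability $q$ and to $q/(1-q)$ otherwise. It holds here only because $X_t,Y_t\ge 0$.

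The one-step argument you need is exactly the paper's proof, adapted to your exponent:
\begin{itemize}
\item For $\lambda\in(0,1/2]$ put $a=\lambda-\psi(\lambda)\ge 0$ and $b=\lambda+\psi(\lambda)$.
\item Convexity gives $e^{aX_t}\le 1+X_t(e^a-1)$ and $e^{-bY_t}\le 1+Y_t(e^{-b}-1)$, and both right-hand sides are nonnegative.
\item In the product of these two bounds, the cross term $X_tY_t(e^a-1)(e^{-b}-1)$ is nonpositive.
\item Tangency, with $\mu_t=\E[X_t\mid\F_{t-1}]=\E[Y_t\mid\F_{t-1}]$, then gives
\[
\E\big[e^{aX_t-bY_t}\mid\F_{t-1}\big]\le 1+2\mu_t\big(e^{-\psi(\lambda)}\cosh\lambda-1\big)\le 1,
\]
because $\log\cosh\lambda\le\lambda^2/2\le\psi(\lambda)$.
\end{itemize}
Iterating over $t$ yields $\E[Z]\le 1$.

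Your lower tail also needs repair. Replacing $\lambda$ by $-\lambda$ with correction $\psi(-\lambda)$ does not give a supermartingale. The same computation reduces the one-step check to $\cosh\lambda\le e^{\psi(-\lambda)}$. This is false for small $\lambda>0$, since $\psi(-\lambda)-\log\cosh\lambda=-\lambda^3/6+O(\lambda^4)$. The bound is attained, so the failure is real, when $(X_t,Y_t)$ is supported on $\{(1,0),(0,1),(0,0)\}$ with equal means. Instead, swap the roles of $X_t$ and $Y_t$ (the hypotheses are symmetric) and use $\exp\big(\lambda\sum_t(Y_t-X_t)-\psi(\lambda)\sum_t(X_t+Y_t)\big)$. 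Jensen and Markov then give $(\lambda-\psi(\lambda))\sum_tY_t-(\lambda+\psi(\lambda))\sum_t\E[X_t\mid\calG]\le\log(2/\delta)$ directly, with no absorption needed.
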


First, we will show how to use \cref{lem:tangent} to prove \cref{lem:algConnectionProperties}. Then, we give the proof of \cref{lem:tangent,thm:tangenConcentration}.

\begin{proof}[Proof of \cref{lem:algConnectionProperties}]
We note that Property 1 follows immediately from \cref{lem:tangent}, the tower rule of expectations, and linearity of expectations:
\begin{align*}
    \textstyle\sum_{i\in[\nsamples]}\E[f(\decisionsample, \sample_{\pi(i)})]
    = \sum_{i\in[\nsamples]}\E[Y_i]
    = \sum_{i\in[\nsamples]}\E[X_i]
    = \sum_{i\in[\nsamples]}\E[\E[X_i \mid \histend]]
    &= p\sum_{t\in[n]}\E[f(\decisiononline,\online_t)].
\end{align*}
For Property 2, we begin by noting that, applying \cref{lem:tangent}, the fact that $Y_i = f(\decisionsample, \sample_{\pi(i)})$ is $\histend$-measurable, and Jensen's inequality:
\begin{align*}
    \textstyle\E[(p\sum_{t\in[n]} \E[f(\decisiononline, \online_t) \mid \histend] - \sum_{i\in[\nsamples]}f(\decisionsample, \sample_{\pi(i)}))^2]
    &=\textstyle\E[(\sum_{i\in[\nsamples]} (\E[X_i \mid \histend] - Y_i))^2]\\
    &=\textstyle\E[(\E[\sum_{i\in[\nsamples]} (X_i - Y_i)\mid \histend] )^2]\\
    &\leq\textstyle\E[(\sum_{i\in[\nsamples]} (X_i - Y_i) )^2].
\end{align*}
Now, since we assume $f$ is bounded on $[0,1]$, $X_i, Y_i \in [0,1]$. Thus, since $X_i - Y_i$ are $\F_{i}$-measurable and $\E[X_i - Y_i \mid \F_{i-1}]=0$ by \cref{lem:tangent}, $X_i-Y_i \in [\pm 1]$ is a martingale difference sequence. Thus, we conclude that
\begin{align*}
    \textstyle\E[(\sum_{i\in[\nsamples]} (X_i - Y_i) )^2]
    = \E[\sum_{i\in[\nsamples]} (X_i - Y_i)^2]
    \leq \E[\sum_{i\in[\nsamples]} X_i + Y_i]
    &=\textstyle 2\sum_{i\in[\nsamples]} \E[Y_i]\\
    &=\textstyle 2\sum_{i\in[\nsamples]} \E[f(\decisionsample, \sample_{\pi(i)})],
\end{align*}
where we used the fact that $\E[X_i]=\E[Y_i]$ by \cref{lem:tangent}.

For Property 3, assuming $f$ is bounded on $[0,1]$, by \cref{lem:tangent} and \cref{thm:tangenConcentration}, with probability at least $1-2\delta$,
\begin{align*}
    \textstyle|p \sum_{t\in[n]} \E[f(\decisiononline, \online_t) \mid \histend] - \sum_{i\in[\nsamples]} f(\decisionsample, \sample_{\pi(i)})|
    &= \textstyle|\sum_{i\in[\nsamples]} (\E[X_i \mid \histend] - Y_i)|\\
    &\leq\textstyle \frac{5\log(\nicefrac{1}{\delta})}{\eta} + \eta \sum_{i\in[\nsamples]} Y_i\\
    &=\textstyle \frac{5\log(\nicefrac{1}{\delta})}{\eta} + \eta \sum_{i\in[\nsamples]} f(\decisionsample, \sample_{\pi(i)}). \qedhere
\end{align*}
\end{proof}

\paragraph{Establishing the tangent property and concequences: Proof of \cref{lem:tangent,thm:tangenConcentration}.}

\begin{proof}[Proof of \cref{lem:tangent}]
Notice that $X_i = f(\decisiononline[\pi(i)], \online_{\pi(i)})\1[i\not\in I] + f(\decisiongeneric[i], \online_{\hpi(i)})\1[i\in I]$ and $Y_i = f(\decisionsample, \sample_{\pi(i)})$ are $\F_i$-measurable, since all relevant random variables are sampled by the end of iteration $i$ of \cref{alg:equivGenericOffline}.
Now, conditioned on $i\in I$ and $\F_{i-1}$, $\hpi(i)$ and $\pi(i)$ are both (marginally) uniform on $U_i$, so, since the decisions are \OnlHistInd{} by \cref{obs:oblivious}, $f(\decisiongeneric[i], \online_{\hpi(i)})$ and $f(\decisiononline[\pi(i)], \online_{\pi(i)})$ are (conditionally) identically distributed. Thus, by definition of $X_i$, $X_i$ and $f(\decisiononline[\pi(i)], \online_{\pi(i)})$ are identically distributed, conditioned on $\F_{i-1}$. Therefore, since $\online_{\pi(i)}, \sample_{\pi(i)} \stackrel{iid}{\sim} \D_{\pi(i)}$ conditioned on $\F_{i-1}$, and again applying the \OnlHistInd{} property, $f(\decisionsample, \sample_{\pi(i)})$ and $f(\decisiononline[\pi(i)], \online_{\pi(i)})$ are identically distributed conditionally on $\F_{i-1}$. It thus follows that $X_i$ and $Y_i = f(\decisionsample, \sample_{\pi(i)})$ are identically distributed conditionally on $\F_{i-1}$, as claimed.

As a consequence,
\begin{align*}
    \E[X_i \mid \F_{i-1}]
    &= \E[f(\decisiononline[\pi(i)], \online_{\pi(i)}) \1[i\not\in I] + f(\decisiongeneric[i],\online_{\hpi(i)}) \1[i\in I] \mid \F_{i-1}]\\
    &= \E[f(\decisiononline[\pi(i)], \online_{\pi(i)}) \mid \F_{i-1}]\\
    &= \E[f(\decisionsample[\pi(i)], \sample_{\pi(i)}) \mid \F_{i-1}]
    = \E[Y_i \mid \F_{i-1}].
\end{align*}

For the final claim, begin by decomposing, using the tower rule of expectations:
\begin{align*}
    \textstyle\sum_{i\in[\nsamples]}\E[X_i \mid \histend]
    &=\textstyle \sum_{i\in[\nsamples]}\E[f(\decisiononline[\pi(i)], \online_{\pi(i)})\1[i\not\in I] \mid \histend]
    +\E[f(\decisiongeneric[i], \online_{\hpi(i)})\1[i\in I] \mid \histend]\\
    &=\textstyle \sum_{i\in[\nsamples]}\E[f(\decisiononline[\pi(i)], \online_{\pi(i)})\pr[i\not\in I \mid \online_{\pi(i)}, \decisiononline[\pi(i)], \histend] \mid \histend]\\
    &\quad\textstyle+\sum_{i\in[\nsamples]}\E[\E[f(\decisiongeneric[i], \online_{\hpi(i)}) \mid I, \histend]\1[i\in I] \mid \histend]
\end{align*}
For the first term, observe that
\begin{align*}
    \pr[i\not\in I \mid \online_{\pi(i)}, \decisiononline[\pi(i)], \histend]
    = \pr[i\not\in I]
    = p,
\end{align*}
since, by construction of \cref{alg:equivGenericOffline}, conditioned on any realization of $I$, the sample indices $\pi(1),\ldots,\pi(\nsamples)$ are sampled uniformly without replacement from $[n]$, so $(\pi(i))_{i\in[\nsamples]}$ is independent of $I$, and thus also the sample history $\histend$, online request $\online_{\pi(i)}$, and decision $\decisiononline[\pi(i)] = \decision[\pi(i)]{\online_{\pi(i)}}$ are independent of $I.$

For the second term, observe that, for any non-empty $I$, $\hpi(i)$ is uniform on $\samplesnot$ conditioned on $\histend$ and $I$, so on the event that $i\in I$:
\begin{align*}
    \E[f(\decisiongeneric[i], \online_{\hpi(i)}) \mid I, \histend]
    &=\textstyle \frac{1}{|\samplesnot|} \sum_{t\in\samplesnot} \E[f(\decision[\pi(i)]{\online_t}, \online_{t}) \mid I, \histend]\\
    &=\textstyle \frac{1}{|\samplesnot|} \sum_{t\in\samplesnot} \E[f(\decision[\pi(i)]{\online_t}, \online_{t}) \mid \histend],
\end{align*}
where the second equality follows since $I$ is independent of $\histend$, $(\online_t)_{t\in\samplesnot}$, and any additional randomness of $\blackbox$. Thus, since $\pr[i\in I \mid \histend] = \pr[i\in I] = 1-p$ by \cref{obs:equivOffline},
\begin{align*}
    \textstyle\sum_{i\in[\nsamples]}\E[X_i \mid \histend]
    &=\textstyle p\sum_{i\in[\nsamples]}\E[f(\decisiononline[\pi(i)], \online_{\pi(i)})\mid \histend]\\
    &\quad\textstyle+\sum_{i\in[\nsamples]}\pr[i\in I \mid \histend] \frac{1}{|\samplesnot|}\sum_{t\in\samplesnot}\E[f(\decision[\pi(i)]{\online_t}, \online_{t}) \mid \histend]\\
    &=\textstyle p\sum_{i\in[\nsamples]}\E[f(\decisiononline[\pi(i)], \online_{\pi(i)})\mid \histend]\\
    &\quad\textstyle+(1-p)\frac{\nsamples}{|\samplesnot|}\sum_{t\in\samplesnot}\E[\frac{1}{\nsamples}\sum_{i\in[\nsamples]}f(\decision[\pi(i)]{\online_t}, \online_{t}) \mid \histend]\\
    &=\textstyle p\sum_{i\in[\nsamples]}\E[f(\decisiononline[\pi(i)], \online_{\pi(i)})\mid \histend]\\
    &\quad\textstyle+p\sum_{t\in\samplesnot}\E[f(\decision[T_t]{\online_t}, \online_{t}) \mid \histend],
\end{align*}
where the last line follows since $|\samples|=\nsamples = p n$, $|\samplesnot| = (1-p)n$, and $T_t \sim \unif(\samples)$ (note that we treat the terms summing over $\samplesnot$ as $0$ in the case of $\samplesnot = \emptyset$, i.e., when $p=1$). Thus, using the definition of $\decisiononline$ for $t\in [n]\setminus\samples$ from \cref{alg:genericOnline}, we conclude that
\begin{align*}
    \textstyle\sum_{i\in[\nsamples]}\E[X_i \mid \histend]
    = p \sum_{t\in[n]} \E[f(\decisiononline, \online_t) \mid \histend],
\end{align*}
as claimed.
\end{proof}

\begin{proof}[Proof of \cref{thm:tangenConcentration}]

    This result follows by a similar proof as, e.g., \cite[Theorem 1]{beygelzimer2011contextual}. We begin by establishing that, for any $\eta \in (-1,1)$ and $c=\nicefrac{1}{5}$,
    \begin{align}\label{eq:concentrationMainStepAlt}
        \textstyle\E[\exp(c\eta \sum_{t\in[n]} \left( X_t - (1+\eta)Y_t \right))] \leq 1.
    \end{align}
    Given \eqref{eq:concentrationMainStepAlt}, the claim follows easily. Indeed, note that, by Jensen's inequality, and since $Y_t$ is measurable w.r.t. $\calG$,
    \begin{align*}
        \textstyle\E[\exp(c\eta \sum_{t\in[n]} (\E[X_t \mid \calG ] - (1+\eta)Y_t))]
        &=\textstyle\E[\exp(\E[c\eta \sum_{t\in[n]} \left( X_t - (1+\eta)Y_t \right) \mid \calG ] )]\\
        &\leq\textstyle\E[\E[\exp(c\eta \sum_{t\in[n]} \left( X_t - (1+\eta)Y_t \right) ) \mid \calG ]] ~\leq~ 1.
    \end{align*}
    Thus, we conclude that
    \begin{align*}
        \pr\Big[\sum_{t\in[n]} \E[X_t \mid \calG] - Y_t > \frac{\log(\nicefrac{1}{\delta})}{c\eta} + \eta \sum_{t\in[n]} Y_t\Big] \leq \delta.
    \end{align*}
    Repeating the same argument with $\eta$ replaced with $-\eta$, we similarly conclude
    \begin{align*}
        \pr\Big[\sum_{t\in[n]} \E[X_t \mid \calG] - Y_t < -\frac{\log(\nicefrac{1}{\delta})}{c\eta} - \eta \sum_{t\in[n]} Y_t \Big] \leq \delta.
    \end{align*}
    Thus, we focus on establishing \eqref{eq:concentrationMainStepAlt}. Using the fact that, by Jensen's inequality, for $z\in[0,1]$, $e^{\alpha z} \leq z e^\alpha + (1-z)$, we can bound:
    \begin{align*}
        \E[\exp(c\eta (X_t - (1+\eta)Y_t)) \mid \F_{t-1}]
        &\leq \E[(1 + X_t(\exp(c\eta) - 1))(1 + Y_t(\exp(-(1+\eta)c \eta) - 1)) \mid \F_{t-1}]\\
        &\leq \E[1 + X_t(\exp(c\eta) - 1) + Y_t(\exp(-(1+\eta)c\eta) - 1) \mid \F_{t-1}]\\
        &= 1 + \mu_t[(\exp(c\eta) - 1) + (\exp(-(1+\eta)c\eta) - 1)],
    \end{align*}
    where the second inequality follows since $X_t Y_t \in [0,1]$ and $(\exp(c\eta) - 1)(\exp(-(1+\eta)c\eta) - 1) \leq 0$ for $\eta \in (-1,1)$, and the equality follows since $\E[X_t \mid \F_{t-1}] = \mu_t = \E[Y_t \mid \F_{t-1}]$. Therefore, since $e^x \leq 1 + x + x^2$ for $x\leq 1$ and $\mu_t \geq 0$, the above is bounded as
    \begin{align*}
        \E[\exp(c\eta (X_t - (1+\eta)Y_t)) \mid \F_{t-1}]
        &\leq 1 + \mu_t[c\eta + c^2 \eta^2 - (1+\eta)c\eta + (1+\eta)^2 c^2 \eta^2]\\
        &= 1 - c \eta^2 \mu_t[1 - (2 + 2\eta + \eta^2)c]\\
        &\leq 1,
    \end{align*}
    where the last inequality used the fact that $\eta\leq 1$ and $c=\nicefrac{1}{5}$. Therefore, since $X_t, Y_t$ are $\F_t$-measurable, we can apply the above inequality recursively to conclude:
    \begin{align*}
       & \E[\exp(c\eta \sum_{t\in[n]}(X_t - (1+\eta)Y_t))]\\
        & \quad =\E[\exp(c\eta \sum_{t\in[n-1]}(X_t - (1+\eta)Y_t))\E[\exp(c\eta (X_n - (1+\eta)Y_n)) \mid \F_{n-1}]]
        \leq 1,
    \end{align*}
    as claimed.
\end{proof}

\subsection{Proof of \cref{lem:lpRescaling}}
\label{sec:appendix:oraDeferred}

The result essentially follows from Lemma 1 of \cite{KRTV-SICOMP18}. However, the statement of their result requires that $p \geq 2\sqrt{\frac{1+\log(m)}{B_{\min}}}$. While minor modifications to their argument can remove this requirement, we choose to prove this result here for completeness.

Throughout, we will assume without loss of generality that $s\geq 1$, since otherwise $s = p n = 0$, so either $p$ or $n$ is $0$ and the claim is trivially true. Additionally, in the arguments below, we condition on a fixed realization of $n$ requests $(\online_t)_{t\in[n]}$, and take expectations \emph{only} with respect to the uniformly random sample $\samples \subseteq [n]$ of size $\nsamples = p n$. Our final claim will then hold by taking expectations over these $n$ realized requests.

First, notice that:
\begin{align}\label{eq:scaledSampleLPVal}
    \optR[\samples]{p(1-2\eps)\B} \geq (1-2\eps)\optR[\samples]{p \B},
\end{align}
since for any optimal solution $(\zlp)_{t\in\samples, \decisionvar \in \decisionset}$ to \ref{program:ora-lp}$(p\B; \samples)$, $((1-2\eps)\zlp)_{t\in\samples,\decisionvar\in\decisionset}$ is feasible for \ref{program:ora-lp}$(p(1-2\eps)\B; \samples)$ and has value $(1-2\eps)\optR[\samples]{p\B}$. Thus, we will focus on lower bounding $\E[\optR[\samples]{p \B}]$.

Let $(\zstarlp)_{t\in\requests, \decisionvar\in\decisionset[t]}$ be an optimal solution to \ref{program:ora-lp}$((1-2\eps)\B; [n])$. We use this solution to define a feasible solution $(\zlp[i])_{i\in\samples, \decisionvar\in\decisionset[i]}$ to \ref{program:ora-lp}$(p \B; \samples)$ 
\begin{align*}
    \forall i \in \samples : 
    \zhlp[i] = \zstarlp[i]
    \quad\text{and}\quad
    \zlp[i] = \zstarlp[i] \1[i \leq \tau],
\end{align*}
where $\tau$ is the first time $\zhlp[i]$ saturates the budget of one of the resources, i.e.,
\begin{align*}
    \textstyle\tau = n \wedge \min\{i > 0 : \sum_{\substack{\ell\in\samples\\ \ell\leq i}}\sum_{\decisionvar\in\decisionset[\ell]} \allocatj[\ell] \zhlp[\ell] > p B_j - 1 \text{ for some } j\in[m]\}.
\end{align*}
First, notice that $\zlp[i]$ is feasible for \ref{program:ora-lp}$(p \B; \samples)$, since, by feasibility of $\zstarlp[i]$,
\begin{align*}
 \textstyle   \forall i\in\samples : 
    \sum_{\decisionvar\in\decisionset[i]} \zlp[i]
    ~\leq~ \sum_{\decisionvar\in\decisionset[i]} \zstarlp[i]
    ~\leq~ 1
\end{align*}
and, by construction of $\tau$ together with the fact that $\allocatj \in [0,1]$,
\begin{align*}
 \textstyle   \sum_{i\in \samples}\sum_{\decisionvar \in \decisionset[i]} \allocat[i] \zlp[i]
    ~=~ \sum_{i\in \samples: i\leq\tau}\sum_{\decisionvar \in \decisionset[i]} \allocat[i] \zstarlp[i]
    ~\leq~ p \B,
\end{align*}
so $\zlp$ is indeed feasible for \ref{program:ora-lp}$(p \B; \samples)$. Moreover, notice that the expected value this LP solution is:
\begin{equation}\label{eq:scaledLpValue}
\begin{aligned}
    \textstyle\E[\sum_{i\in\samples}\sum_{\decisionvar\in\decisionset[i]} \valat[i] \zlp[i]]
    &= \textstyle \sum_{t\in\requests}\sum_{\decisionvar\in\decisionset} \valat \zstarlp \pr[t\in\samples, \tau \geq t]\\
    &= \textstyle p\sum_{t\in\requests}\sum_{\decisionvar\in\decisionset} \valat \zstarlp \pr[\tau \geq t \mid t\in\samples],
\end{aligned}
\end{equation}
where, in the second line, we used the fact that $\pr[t\in\samples] = \nicefrac{\nsamples}{n} = p$ for all $t\in\requests$.
Thus, we conclude by upper-bounding $\pr[\tau < t \mid t\in\samples]$. Notice that we can upper bound this quantity as follows: For every $t\in \requests$, let $A_{t,j}^* = \sum_{\decisionvar\in\decisionset} \allocatj \zstarlp$. Then, we have
\begin{align*}
    \textstyle\pr[\tau < t \mid t\in \samples]
    \leq \pr[\tau < n \mid t\in \samples]
    &\leq\textstyle \sum_{j\in[m]}\pr[\sum_{i\in\samples} A_{i,j}^* > p B_j - 1 \mid t\in \samples]\\
    &\leq\textstyle \sum_{j\in[m]}\pr[\sum_{i\in\samples\setminus\{t\}} A_{i,j}^* > p B_j - 2 \mid t\in \samples],
\end{align*}
Where the last inequality follows from the fact that $A_{i,j}^* \in [0,1]$.
Now, conditioned on $t\in\samples$, the remaining $\nsamples - 1$ elements of $\samples$ are uniform over $[n]\setminus\{t\}$. Thus, by \Cref{cor:concentrationWithoutReplacement}, together with the fact that $\sum_{t\in\requests}A_{t,j}^*\leq (1-2\eps)B_j$ (by feasibility of $\zstarlp$), we have that, with probability at least $1-\nicefrac{\eps}{m}$, conditioned on $t\in\samples$, for any $j\in[m]$
\begin{align*}
    \sum_{i\in\samples\setminus\{t\}} A_{i,j}^* 
    \leq (1+\eps) \frac{\nsamples - 1}{n-1} \sum_{i\in\requests\setminus\{t\}} A_{i,j}^* + \frac{5\log(\nicefrac{2m}{\eps})}{\eps}
    &\leq (1+\eps)(1-2\eps) \frac{\nsamples-1}{n-1} B_j + \frac{5\log(\nicefrac{2m}{\eps})}{\eps}\\
    &\leq (1-\eps) p B_j + \frac{5\log(\nicefrac{2m}{\eps})}{\eps}\\
    &\leq p B_j - 2
\end{align*}
where, in the second line, we used the fact that $1\leq \nsamples \leq n$, and thus $\nicefrac{(\nsamples-1)}{(n-1)} \leq \nicefrac{\nsamples}{n}=p$, and the final line used the fact that $\eps p B_j \geq 2 + \frac{5\log(\nicefrac{2m}{\eps})}{\eps}$. Combining this observation with the previous, we conclude that
\[
    \textstyle\pr[\tau < t \mid t\in \samples]
    \leq \eps.
\]
Therefore, using \eqref{eq:scaledLpValue}, we conclude that
\begin{align*}
    \textstyle\E[\optR[\samples]{p\B} ]
    \geq\textstyle \E[\sum_{i\in\samples}\sum_{\decisionvar\in\decisionset[i]} \valat[i] \zlp[i]]
    &\geq\textstyle (1-\eps)p \sum_{t\in\requests}\sum_{\decisionvar\in\decisionset} \valat \zstarlp\\
    &=\textstyle (1-\eps)p \optR{(1-2\eps)\B}\\
    &\geq\textstyle (1-\eps)(1-2\eps)p \optR{\B},
\end{align*}
where the final inequality follows from the observation that, if $\zlp'$ is an optimal solution for \ref{program:ora-lp}$(\B; [n])$, then $(1-2\eps)\zlp'$ is feasible for \ref{program:ora-lp}$((1-2\eps)\B; [n])$.
Combining this with \eqref{eq:scaledSampleLPVal}, we conclude that
\begin{align*}
    \textstyle\E[\optR[\samples]{p(1-2\eps)\B} ]
    \geq\textstyle (1-\eps)(1-2\eps)^2 p \optR{\B}
    &\geq\textstyle (1-5\eps) p \optR{\B},
\end{align*}
as claimed.

\subsection{Proof of \cref{lem:packingCoveringRO}}
\label{sec:packingCoveringRO}

Here, we describe the modifications to the proof of Theorem 4.1 from \cite{GM-MOR16} to obtain the guarantee stated in \cref{lem:packingCoveringRO}. While their algorithm handles more general Packing-Covering Multiple-Choice LPs with an objective, we will focus on the simpler feasibility version of the problem, as this is the version needed for our result. We outline below how to modify their arguments, which hold for LPs with ``full simplex'' constraints (i.e., the constraints $\sum_{\decisionvar\in\decisionset} \xip \leq 1$ instead of $\sum_{\decisionvar\in\decisionset} \xip = 1$ in \ref{program:opc-lb}) to instead handle the simplex constraints in \ref{program:opc-lb}. For simplicity of notation, we will denote the scaled packing and covering constraints specified in \cref{lem:packingCoveringRO} as $\Bpackingjt = \alpha \Bpackingj$ and $\Bcoveringjt=\beta\Bcoveringj$. There are two main modifications:

\textbf{Modification 1: Load Balancing request construction.}
Their \ro{} algorithm (Algorithm 2 in their paper) reduces the packing-covering LP problem to a generalized load-balancing instance with possibly negative loads (for which they give a \ro{} algorithm, see their Theorem 3.1) of the following form: Each request at time $t\in[n]$ for $2m$ resources can take a decision $\decisionvar\in\decisionset \cup \{\phi\}$, where $\phi$ denotes the null decision, and has (in our notation) load of the form:
the first $m$ resources $j\in[m]$ have loads $\loadatj = \nicefrac{\packingatj}{\Bpackingjt}$ and $\loadatthisj{\phi}=0$ for all $\decisionvar\in\decisionset$, and the last $m$ resources have loads $\loadatthisj[t,j+m]{\decisionvar} = \nicefrac{2}{n} - \nicefrac{\coveringatj}{\Bcoveringjt}$ and $\loadatthisj[t,j+m]{\phi}=\nicefrac{2}{n}$ for all $\decisionvar\in\decisionset$.

In our setting, since we do not assume the option of null requests, we modify the instance to have requests for $2m+1$ resources, where each request is of the form:
\begin{align}\label{eq:loadRequestsPCReduction}
    \textstyle\loadatj = \frac{\packingatj}{\Bpackingjt},
    \quad
    \textstyle\loadatthisj[t,j+m]{\decisionvar} = \frac{2}{n} - \frac{\coveringatj}{\Bcoveringjt},
    \quad
    \textstyle\loadatthisj[t,2m+1]{\decisionvar} = \frac{1}{n} 
    \quad
    \forall j\in[m], \decisionvar \in \decisionset,
\end{align}
Under this construction, notice that the following analogue of their Claim 4.2 (the first step in the proof of their Theorem 4.1) holds:
\begin{Claim}[Analogue of Claim 1, \cite{GM-MOR16}]\label{claim:loadBalancingReductionMakespan}
    Consider any Packing-Covering requests for which \ref{program:opc-lb} is feasible. Let \eqref{eq:loadRequestsPCReduction} be the associated (possibly negative) load requests, and $\lambda^*$ be the optimal makespan of \ref{program:olb-lb}.
    Then, $\lambda^* = 1$.
\end{Claim}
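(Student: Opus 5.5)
The plan is to prove $\lambda^* = 1$ by establishing the two inequalities $\lambda^*\ge 1$ and $\lambda^*\le 1$ separately. Both use the simplex constraint $\sum_{\decisionvar\in\decisionset[t]}\xip[t]=1$ of \ref{program:olb-lb}. Throughout, ``makespan'' means the maximum over all $2m+1$ coordinates of the (possibly negative) total load.

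\textbf{Lower bound.} We use the auxiliary resource $2m+1$, which is exactly where the missing null action $\phi$ is compensated. Every decision $\decisionvar\in\decisionset[t]$ places load $\loadatthisj[t,2m+1]{\decisionvar}=\nicefrac{1}{n}$ on it. So for any feasible $(\xip[t])$ of \ref{program:olb-lb}, the load on resource $2m+1$ is
\[
\textstyle\sum_{t\in\requests}\sum_{\decisionvar\in\decisionset[t]}\tfrac1n\,\xip[t]=\sum_{t\in\requests}\tfrac1n=1 .
\]
Hence every fractional solution has makespan at least $1$, and $\lambda^*\ge 1$.

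\textbf{Upper bound.} Let $(\xip[t])$ be a feasible solution of \ref{program:opc-lb} with constraints $\tilde{B}^{\mathsf p}_j=\alpha\Bpackingj$ and $\tilde{B}^{\mathsf c}_j=\beta\Bcoveringj$; it exists by assumption. It also satisfies the constraints of \ref{program:olb-lb}, since both programs impose the simplex constraint $\sum_{\decisionvar}\xip[t]=1$. We bound its load on each coordinate:
\begin{itemize}
\item \emph{Packing coordinate $j\le m$:} the load is $\sum_{t}\sum_{\decisionvar}\packingatj[t]\xip[t]/\tilde{B}^{\mathsf p}_j\le 1$, by the packing constraint.
\item \emph{Covering coordinate $j+m$:} using $\sum_{\decisionvar}\xip[t]=1$, the load is
\[
\textstyle 2-\frac{1}{\tilde{B}^{\mathsf c}_j}\sum_{t}\sum_{\decisionvar}\coveringatj[t]\xip[t]\le 2-1=1,
\]
by the covering constraint.
\item \emph{Coordinate $2m+1$:} the load is exactly $1$.
\end{itemize}
Thus this solution has makespan at most $1$, so $\lambda^*\le 1$.

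Combining the two bounds gives $\lambda^*=1$. The only delicate point is the covering coordinates. In \cite{GM-MOR16} the constant $\nicefrac2n$ is paid by every action including $\phi$. Here there is no $\phi$, so the equality constraint $\sum_{\decisionvar}\xip[t]=1$ is what makes the constant terms sum to exactly $2$. The added coordinate $2m+1$ then supplies the matching lower bound of $1$, which in the original construction came from the null action.
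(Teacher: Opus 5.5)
Your proposal is correct and matches the paper's proof: the auxiliary coordinate $2m+1$ carries load exactly $1$ under every feasible solution of \ref{program:olb-lb}, and any feasible solution of \ref{program:opc-lb} has load at most $1$ on all other coordinates. You also write out the covering-coordinate computation (the constants $\nicefrac{2}{n}$ sum to exactly $2$ by the simplex equality), which the paper leaves implicit.
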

\begin{proof}
Let $(\xip)_{t\in[n], \decisionvar\in\decisionset}$ be any feasible solution to \ref{program:opc-lb}. Then, by feasibility of the solution and the definition of load balancing requests in \eqref{eq:loadRequestsPCReduction}, $\sum_{t\in[n]}\sum_{\decisionvar\in\decisionset} \loadatj \xip \leq 1$ for all $j\in [2m]$.
Moreover, every feasible solution to \ref{program:olb-lb} $\xprimeip$ (which, in particular, includes the solution $\xip$ to \ref{program:opc-lb}) further satisfies:
$\sum_{t\in[n]}\sum_{\decisionvar\in\decisionset} \loadatthisj[t,2m+1]{\decisionvar} \xprimeip
= \sum_{t\in[n]}\sum_{\decisionvar\in\decisionset} \frac{1}{n} \xprimeip
= 1.$
Thus, noting that the makespan is defined as the largest load (i.e., not the largest \emph{magnitude} load), we conclude that the optimal makespan of this instance must be $\lambda^* = 1$.
\end{proof}
It is also easy to verify that their Claim 2 (the second step in the proof of their Theorem 4.1) continues to hold under the modified loads from \eqref{eq:loadRequestsPCReduction}.
\begin{Claim}[Claim 2, \cite{GM-MOR16}]\label{claim:loadBalancingReductionWellBounded}
    Given any feasible Packing-Covering instance \ref{program:opc-lb}, the corresponding Load Balancing instance with requests \eqref{eq:loadRequestsPCReduction} is $(M, 2)$-well-bounded for $M^{-1}=\nicefrac{1}{2}\min_{j\in[m]} \Bpackingjt \wedge \Bcoveringjt = \Omega(\eps^{-2}\log(\nicefrac{m}{\delta}))$ (as in Definition 2 from \cite{GM-MOR16}), i.e., every load balancing requests $\loadatj \in [-M, M]$ and, for each $j\in[2m+1]$ either $\loadatj \geq -\nicefrac{2\lambda^*}{n}$ for all $t\in[n], \decisionvar\in\decisionset$ or $\loadatj \leq \nicefrac{2\lambda^*}{n}$ for all $t\in[n], \decisionvar\in\decisionset$, where $\lambda^*$ is the optimal value of \ref{program:olb-lb} with requests from \eqref{eq:loadRequestsPCReduction}.
\end{Claim}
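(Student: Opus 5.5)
The plan is to verify the two conditions of $(M,2)$-well-boundedness directly from the explicit loads in \eqref{eq:loadRequestsPCReduction}, coordinate block by coordinate block. \Cref{claim:loadBalancingReductionMakespan} gives $\lambda^*=1$, so the sign condition reduces to checking that each coordinate $j$ satisfies either $\loadatj\ge -\nicefrac{2}{n}$ for all $t,\decisionvar$, or $\loadatj\le\nicefrac{2}{n}$ for all $t,\decisionvar$. Throughout, $M=2/\min_{j}(\Bpackingjt\wedge\Bcoveringjt)$.

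The first step is to record the preliminary fact $\Bcoveringjt\le n$ for every $j$. Feasibility of \ref{program:opc-lb} requires $\sum_t\sum_{\decisionvar}\coveringatj\xip\ge\Bcoveringjt$. Since $\coveringatj\in[0,1]$ and $\sum_{\decisionvar}\xip=1$, the left side is at most $n$. Combining this with the definition of $M$ gives $\nicefrac{2}{n}\le \nicefrac{2}{\Bcoveringjt}\le M$.

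Next I would handle the three blocks separately.
\begin{itemize}
\item \textbf{Packing coordinates} $j\in[m]$. Here $\loadatj=\packingatj/\Bpackingjt\in[0,1/\Bpackingjt]\subseteq[0,M/2]$. These loads are nonnegative, hence $\ge -\nicefrac{2}{n}$.
\item \textbf{Covering coordinates} $j+m$. Here the load is $\nicefrac{2}{n}-\coveringatj/\Bcoveringjt$, which lies in $[-1/\Bcoveringjt,\nicefrac{2}{n}]$. The lower end is at least $-M/2$, and the upper end is at most $M$ by the preliminary fact. These loads are all $\le\nicefrac{2}{n}=\nicefrac{2\lambda^*}{n}$, so this block satisfies the second alternative of the sign condition.
\item \textbf{Extra coordinate} $2m+1$. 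The load is identically $\nicefrac{1}{n}$, which lies in $[0,M]$ and is nonnegative.
\end{itemize}
So every load lies in $[-M,M]$ and every coordinate satisfies one of the two sign alternatives. Finally, the large-budget hypothesis of \cref{lem:packingCoveringRO} gives $M^{-1}=\Omega(\eps^{-2}\log(\nicefrac{m}{\delta}))$ directly.

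There is no real obstacle here. The only point needing care is the bound $\nicefrac{2}{n}\le M$, which relies on covering feasibility forcing $\Bcoveringjt\le n$. One should also note that removing the null action $\phi$ only shrinks the set of load vectors, so the bounds above cover all remaining actions.
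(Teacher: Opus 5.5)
Your proposal is correct and follows the paper's proof essentially step for step. It checks the load ranges block by block (packing, covering, and the extra coordinate $2m+1$), uses covering feasibility to get $\Bcoveringjt \le n$ and hence $\nicefrac{2}{n}\le M$, and invokes $\lambda^*=1$ from \cref{claim:loadBalancingReductionMakespan} to reduce the sign condition to the thresholds $\pm\nicefrac{2}{n}$.
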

\begin{proof}
    By definition of the load balancing requests \eqref{eq:loadRequestsPCReduction}, and by assumption that $\Bpackingjt = \Omega(\eps^{-2}\log(\nicefrac{m}{\delta}))$ and $\packingatj\in[0,1]$, it follows that, for each $j\in[m]$, $\loadatj = \nicefrac{\packingatj}{\Bpackingjt} \in [0, M]$. Further, since $\Bcoveringjt = \Omega(\eps^{-2}\log(\nicefrac{m}{\delta}))$ and $\coveringatj\in[0,1]$, it follows that for each $j\in[m]$, $\loadatthisj[t,j+m]{\decisionvar} = \nicefrac{2}{n} - \nicefrac{\coveringatj}{\Bcoveringjt} \in [-M, \nicefrac{2}{n}]$. Further, $\loadatthisj[t,2m+1]{\decisionvar} = \nicefrac{1}{n}$. Note also that, without loss, $n\geq \Bcoveringjt$ for all $j\in[m]$, since otherwise \ref{program:opc-lb} cannot be feasible, and thus $\nicefrac{2}{n} \leq \nicefrac{2}{\Bcoveringjt} \leq M$. Thus, since $\lambda^*=1$ by \cref{claim:loadBalancingReductionMakespan}, the claim follows.
\end{proof}

\textbf{Modification 2: No rescaling of the multiplicative weights update solution.}
The final step of their \ro{} Packing-Covering algorithm (Phase 3 in \cite[Algorithm 2]{GM-MOR16}) rescales the solution $\xprimeip$ returned by their expertLB procedure by a multiplicative factor $(1-\calO(\eps))$ in order to guarantee that the packing constraints are satisfied \emph{exactly}. Without this rescaling, however, the solution still satisfies the packing constraints $(1+\eps)\Bpackingjt$. Thus, the appropriate modification to their Algorithm 2 is the following:

Run their Online Generalized Load-Balancing \ro{} algorithm (Algorithm 1, expertLB, with the multiplicative weights update algorithm from their Theorem 2.1) on the generalized load balancing requests from \eqref{eq:loadRequestsPCReduction}, with scale parameter $M = 2\max_{j\in[m]} (\Bpackingjt \wedge \Bcoveringjt)^{-1} = \calO(\nicefrac{\eps^{2}}{\log(\nicefrac{m}{\delta})})$ and error parameter $\eps'=c_1 \eps$ (for a sufficiently large universal constant $c_1$) to obtain a solution $\xip$ to \ref{program:olb-lb} online, and take decision $\decisiononline=\decisionvar$ with probability $\xip$ for the Packing-Covering requests.

\textbf{Concluding the proof.}
With this modification to their algorithm, and using the modified load balancing instance from \eqref{eq:loadRequestsPCReduction} and \cref{claim:loadBalancingReductionMakespan,claim:loadBalancingReductionWellBounded}, the proof follows easily. Indeed, by their Theorem 3.1, and since the instance is $(M,2)$-well-bounded by \cref{claim:loadBalancingReductionWellBounded} and $\lambda^* = 1 \geq \nicefrac{3 M \log(\nicefrac{2m}{\delta})}{(\eps')^2}$ by \cref{claim:loadBalancingReductionMakespan} and choice of $M,\eps'$, the decisions returned by their Algorithm 1 satisfies, with probability at least $1-\nicefrac{\delta}{2}$,
\begin{align*}
    \textstyle\forall j\in[2m+1]:
    \sum_{t\in[n]}\sum_{\decisionvar\in\decisionset} \loadatj \xip - \sum_{t\in[n]}\eps' |\sum_{\decisionvar\in\decisionset}\loadatj \xip|
    \leq 1+2 c_2 \eps',
\end{align*}
where $\lambda^*$ is the optimal value of \ref{program:olb-lb} for requests from \eqref{eq:loadRequestsPCReduction} and $c_2$ is a positive universal constant. Plugging in the request definitions from \eqref{eq:loadRequestsPCReduction} and using the fact that $\sum_{\decisionvar\in\decisionset}\xip = 1$, it follows immediately that $\sum_{t\in[n]}\sum_{\decisionvar\in\decisionset} \packingatj \xip \leq (1+\calO(\eps))\Bpackingjt$ and $\sum_{t\in[n]}\sum_{\decisionvar\in\decisionset} \coveringatj\xip \geq (1-\calO(\eps)) \Bcoveringjt$ for all $j\in[m]$.
Since the algorithm selects decision $\decisiononline = \decisionvar$ with probability $\xip$, and since the fractional decisions $\xip$ do not depend on this rounding step, Bernstein's inequality (\cref{cor:concentrationWithoutReplacement}) then implies that, with probability at least $1-\nicefrac{\delta}{2}$, $\sum_{t\in[n]} \packingonlinej \leq (1+\calO(\eps))\Bpackingjt$ and $\sum_{t\in[n]} \coveringonlinej \geq (1-\calO(\eps))\Bcoveringjt$ for all $j\in[m]$ (since $\Bpackingjt \wedge \Bcoveringjt = \Omega(\eps^{-2}\log(\nicefrac{m}{\delta}))$). Thus, by rescaling $\eps$ appropriately, the claimed result follows. 
\subsection{Proof of \cref{lem:optMakespanConcentration}}
\label{sec:loadBalancingConcentration}
In the \preview{} setting, each request is sampled independently from a distribution $\D_t$, so the solution to \ref{program:olb-lb} is random. Thus, it will be convenient to instead reason about the following fluid LP relaxation of \ref{program:olb-lb}:

\begin{equation}\tag{$\LPlbrelax$}\label{program:olb-lb-relax}
    \begin{aligned}
        \optRlbfl ~:=~ &\text{minimize}  &&\textstyle \| \sum_{t\in[n]} \E_{\online_t}[\sum_{\decisionvar\in\decisionset} \loadat \xlp] \|_{\max}\\
        &\text{s.t. }\forall t\in [n], \online_t\in\typespace,  && \textstyle\sum_{\decisionvar \in \decisionset}\xlp[t] = 1\\
        &\forall t\in[n], \online_t\in\typespace, \decisionvar \in \decisionset, && \xlp \geq 0
    \end{aligned}
\end{equation}
By construction of the fluid LP \ref{program:olb-lb-relax}, notice that 
\begin{align*}
    \optRlbfl \leq \E[\optRlb],
\end{align*}
since, if $(\xstarip)_{t\in[n], \decisionvar\in\decisionset}$ is an optimal solution to \ref{program:olb-lb}, then $\xlpat{\online} = \E[\xstarip \mid \online_t = \online]$ is feasible for \ref{program:olb-lb-relax} with value $\|\sum_{t\in[n]} \E_{\online_t}[\sum_{\decisionvar \in \decisionset}\loadat \xstarip] \|_{\max} \leq \E[\|\sum_{t\in[n]} \sum_{\decisionvar \in \decisionset}\loadat \xstarip \|_{\max}] = \E[\optRlb]$.

Further, taking $(\xstarlpat{\online})_{t\in[n],\online\in\typespace,\decisionvar\in\decisionset}$ to be an optimal solution to \ref{program:olb-lb-relax} and $\tilde{L}_{t,j} = \sum_{\decisionvar \in \decisionsetsample} \loadsampleatj \xstarlpat{\sample_t}$ to be the corresponding load on resource $j$ corresponding to sample $\sample_t$, we have that
$\optRlb[\samples] \leq \max_{j\in[m]}\sum_{t\in\samples} \tilde{L}_{t,j}$. Now, since $\tilde{L}_{t,j} \in [0,1]$ by feasibility, and since $\samples$ is a uniformly random size-$pn$ subset of $[n]$ by the \preview{} model, we may apply Bernstein's inequality for sampling without replacement (\cref{cor:concentrationWithoutReplacement}) to obtain that, for any $\eta,\delta\in(0,1)$, with probability at least $1-\nicefrac{\delta}{4}$,
\begin{align*}
    \textstyle \optRlb[\samples] 
    \leq \max_{j\in[m]}\sum_{t\in\samples} \tilde{L}_{t,j} \leq 3\eta^{-1}\log(\nicefrac{8 m}{\delta}) 
    + p (1+\eta) \max_{j\in[m]}\sum_{t\in[n]} \tilde{L}_{t,j}.
\end{align*}
Finally, since $\sample_t,\online_t\sim\D_t$, and hence also the loads $\tilde{L}_{t,j}$, are sampled independently for every $t\in[n]$ in the \preview{} model, and $\tilde{L}_{t,j} \in [0,1]$, Bernstein's inequality (specifically, we use the version in \cref{cor:concentrationWithoutReplacement}) implies that, with probability $1-\nicefrac{\delta}{4}$,
\begin{align*}
    \max_{j\in[m]} \sum_{t\in[n]} \tilde{L}_{t,j} 
    \leq 3\eta^{-1}\log(\nicefrac{8 m}{\delta}) + (1+\eta)\max_{j\in[m]} \sum_{t\in[n]} \E[\tilde{L}_{t,j}]
    = 3\eta^{-1}\log(\nicefrac{8 m}{\delta}) + (1+\eta)\optRlbfl.
\end{align*}
Combining these three inequalities and setting $\eta = \nicefrac{\eps}{3}$, we conclude, with probability at least $1-\nicefrac{\delta}{2}$, 
\begin{align}\label{eq:lbOptConcUpper}
    \optRlb[\samples]
    \leq 21\eps^{-1} \log(\nicefrac{8m}{\delta}) 
    + p(1+\eps) \optRlbfl
    \leq 21\eps^{-1} \log(\nicefrac{8m}{\delta}) 
    + p(1+\eps) \E[\optRlb],
\end{align}
which establishes the upper bound.

For the lower bound, by integrating \eqref{eq:lbOptConcUpper} (with $p=1$, i.e., $\samples = [n]$), we have that, denoting $D = 21\eta^{-1}\log(8m) + (1+\eta)\optRlbfl$ for $\eta\in(0,1)$,
\begin{align*}
    \textstyle\E[\optRlb] 
    \leq D + \int_0^\infty \pr[\optRlb > D + t] dt
    &\textstyle\leq D + \int_0^\infty \exp(-\nicefrac{\eta t}{21}) dt\\
    &\textstyle= 21\eta^{-1}(\log(8m) + 1) + (1+\eta)\optRlbfl
\end{align*}
Now, by the minimax theorem, the dual characterization of \ref{program:olb-lb} and \ref{program:olb-lb-relax} is
\begin{align*}
    \textstyle\optRlb[\samples] = \max_{y\in\Delta_m} \sum_{t\in\samples} g_t(y, \sample_t)
    \quad\text{and}\quad
    \optRlbfl = \max_{y\in\Delta_m} \sum_{t\in[n]} \E_{\sample_t}[g_t(y, \sample_t)],
\end{align*}
where $\triangle_m = \{y \in \R^m_{\geq 0} : \sum_{j\in[m]} y_j = 1\}$ and $g_t(y, \sample_t) = \min_{ \xipdot \in \triangle_{|\decisionsetsample|}} \langle y, \sum_{\decisionvar \in \decisionsetsample} \allocsampleat \xip \rangle \in [0,1]$.
Thus, taking $y^*$ to be an optimal dual solution to \ref{program:olb-lb-relax}, and since $\samples$ is a uniformly random sample of $[n]$ of size $\nsamples = p n$, by Bernstein's inequality for sampling without replacement (\cref{cor:concentrationWithoutReplacement}), for any $\eta,\delta\in(0,1)$, with probability at least $1-\nicefrac{\delta}{4}$,
\begin{align*}
    \textstyle\optRlb[\samples] 
    \geq \sum_{t\in\samples} g_t(y^*, \sample_t)
    \geq -3\eta^{-1}\log(\nicefrac{8}{\delta}) + (1-\eta) p \sum_{t\in[n]} g_t(y^*, \sample_t),
\end{align*}
and since $(\sample_t)_{t\in[n]}$ are sampled independently, by Bernstein's inequality (\cref{cor:concentrationWithoutReplacement}), with probability at least $1-\nicefrac{\delta}{4}$,
\begin{align*}
    \textstyle\sum_{t\in[n]} g_t(y^*, \sample_t)
    \geq -3\eta^{-1}\log(\nicefrac{8}{\delta}) + (1-\eta) \sum_{t\in[n]} \E_{\sample_t}[g_t(y^*,\sample_t)]
    = -3\eta^{-1}\log(\nicefrac{8}{\delta}) + (1-\eta) \optRlbfl.
\end{align*}
Combining these inequalities and choosing $\eta=\nicefrac{\eps}{3}$, we conclude that, with probability at least $1-\nicefrac{\delta}{2}$,
\begin{align*}
    \optRlb[\samples]
    \geq - 18\eps^{-1} \log(\nicefrac{8}{\delta}) + p (1-\nicefrac{\eps}{3})^2 \optRlbfl
    \geq -144 \eps^{-1}\log(\nicefrac{8 m}{\delta}) + p (1-\eps)\E[\optRlb].
\end{align*}

{\small
\bibliographystyle{alpha}
\bibliography{ref.bib}
}

\appendix
\end{document}